\documentclass[11pt]{article}
\usepackage[utf8]{inputenc}
\usepackage{style}
\usepackage{framed}
\usepackage{nicefrac}
\usepackage[parfill]{parskip}
\usepackage{setspace}
\usepackage{hyperref}
\usepackage{mdwlist}
\usepackage{mdframed}
\usepackage[ruled,linesnumbered]{algorithm2e}
\usepackage[utf8]{inputenc}
\usepackage{style}
\usepackage{mdframed}
\usepackage{mdwlist}
\usepackage{nicefrac}
\usepackage[parfill]{parskip}
\usepackage{setspace}
\usepackage[colorlinks]{hyperref}
\usepackage[ruled,linesnumbered]{algorithm2e}
\usepackage{amsmath}
\usepackage{amssymb}
\usepackage{complexity}

\newcommand{\ug}{\text{\sc Unique-Games}}

\newcommand{\Ex}{\E}
\newcommand{\cH}{\mathcal{H}}
\renewcommand{\cP}{{\cal P}}

\allowdisplaybreaks

\newcommand{\defeq}{\overset{\rm def}{=}}
\usepackage{microtype}

\usepackage{thm-restate}

\usepackage{boxedminipage}

\renewcommand{\ug}{{\sc Unique~Games}}

\newcommand{\Inf}[2]{{\sf Inf}_{#1}\left[#2\right]}
\newcommand{\non}{\nonumber}

\newcommand{\cE}{{\mathcal{E}}}

\newcommand{\wh}[1]{\widehat{#1}}

\newcommand{\THRESH}{{\mathcal{THRESH}}}
\newcommand{\bv}{\mathbf{v}}
\newcommand{\br}{\mathbf{r}}

\newcommand{\one}{\mathbbm{1}}

\newcommand{\ccut}{~\bar{\sf Cut}_\rho}

\newcommand{\bg}{\bar{g}}
\newcommand{\dsp}{\displaystyle}
\renewcommand{\R}{\mathbbm{R}}
\newcommand{\bmu}{\bar{\mu}}
\renewcommand{\cS}{\mathcal{S}}
\newcommand{\bh}{\bar{h}}
\newcommand{\lGamma}{\underline{\Gamma}}
\allowdisplaybreaks

\title{On the Approximability of Max-Cut on $3$-Colorable Graphs and Graphs with Large Independent Sets\thanks{Companion code: \url{https://github.com/nenghuang/max-cut-independent-set}}}
\author{
	Suprovat Ghoshal\thanks{Indiana University. \texttt{\small supghos@iu.edu}} \qquad\qquad 
	Neng Huang\thanks{University of Michigan. \texttt{\small nengh@umich.edu}} \qquad\qquad 
	Euiwoong Lee\thanks{University of Michigan. Supported in part by NSF award CCF-2236669. \texttt{\small euiwoong@umich.edu}} \and 
	Konstantin Makarychev\thanks{Northwestern University. Supported by NSF awards CCF-1955351 and EECS-2216970. \qquad\texttt{\small konstantin@northwestern.edu}} \and 
	Yury Makarychev\thanks{Toyota Technological Institute at Chicago. Supported by NSF awards CCF-1955173 and ECCS-2216899. \texttt{\small yury@ttic.edu}}}
\begin{document}

\begin{titlepage}
\maketitle
\begin{abstract}
	Max-Cut is a classical graph-partitioning problem where given a graph $G = (V,E)$, the objective is to find a cut $(S,S^c)$ which maximizes the number of edges crossing the cut. In a seminal work, Goemans and Williamson gave an $\alpha_{GW} \approx 0.87856$-factor approximation algorithm for the problem, which was later shown to be tight by the work of Khot, Kindler, Mossel, and O'Donnell. Since then, there has been a steady progress in understanding the approximability at even finer levels, and a fundamental goal in this context is to understand how the structure of the underlying graph affects the approximability of the Max-Cut problem. \\

In this work, we investigate this question by exploring how the chromatic structure of a graph affects the Max-Cut problem. In particular, it is known that $2$-colorable graphs are perfect instances of Max-Cut and they are polynomial time solvable, and various works have developed robust variants of this observation in the 99\%-regime, i.e., when the graph is nearly $2$-colorable. In this work, we study this question in the 1\%-regime instead, where the graph has significantly weaker structure. Specifically, we consider the settings where the graph is {\em $3$-colorable} or contains a {\em large independent set} (in terms of its volume), and we ask if Max-Cut is easier to approximate in such instances. \\

Our main contributions in this context are as follows: \\[-10pt]
\begin{itemize}
	\item We show Max-Cut is $\alpha_{GW}$-hard to approximate for $3$-colorable graphs. \\[-10pt]
	\item We identify a natural threshold $\alpha^*$ such that the following holds. 
	\begin{itemize}
		\item For graphs which contain an independent set of size up to $\alpha^*$, Max-Cut continues to be $\alpha_{GW}$-factor hard to approximate.
		\item For any graph that contains an independent set of size $> \alpha^*$, there exists an efficient $>\alpha_{GW}$-approximation algorithm for Max-Cut. \\
	\end{itemize} 
\end{itemize}

Our hardness results are derived using various analytical tools and novel variants of the Majority-Is-Stablest theorem, which might be of independent interest. Our algorithmic results are based on a novel SDP relaxation, which is then rounded and analyzed using interval arithmetic. Interestingly, our setting with the promise of an unknown independent set serves as a natural example of valued PCSPs recently proposed by Barto, Butti, Kazda, Viola, and Živný.

\end{abstract}
\end{titlepage}

\tableofcontents
\newpage
\section{Introduction}

Max-Cut is a classical combinatorial optimization problem of longstanding interest in both theory and practice. Formally, in Max-Cut, one is given a graph $G = (V,E)$, and the objective is to find a cut $(S,S^c)$ which maximizes $|E[S,S^c]|$, i.e., the number of edges crossing the cut. It arises as a natural computational task in various areas such as operations research~\cite{FIA11social}, statistical physics~\cite{JKM18}, machine learning~\cite{JW08graphical}, among many others. 

As a result, the complexity of the Max-Cut problem has been studied extensively from various perspectives, as evidenced by the enormous literature on it. It was one of Karp's original $21$ \NP-hard problems~\cite{Karp75}, which motivated the study of polynomial-time approximation algorithms for Max-Cut. In a landmark breakthrough, Goemans and Williamson~\cite{GW94} gave an approximation algorithm  with ratio $\alpha_{GW} \approx 0.87856$  based on semi-definite programming (SDP). This approximation guarantee was then shown to be tight by Khot, Kindler, Mossel, and O'Donnell~\cite{KKMO07}, who showed an $(\alpha_{GW} + \epsilon)$-hardness for approximating Max-Cut~\cite{KKMO07} for any constant $\epsilon > 0$, assuming the Unique Games Conjecture (UGC)~\cite{Khot02a}. In an even further refinement of these results, O'Donnell and Wu~\cite{OW08} and Raghavendra~\cite{Rag08} independently gave UGC-based optimal algorithms and hardness results for the entire approximation curve of Max-Cut.
 
Despite these tight approximation bounds for Max-Cut, our quest to understand the landscape of approximation for Max-Cut (and other optimization problems in general) still remains far from complete as we strive to obtain a more fine-grained view of its complexity. In this regard, a broad ongoing effort revolves around understanding how the structure of the input graph affects the approximability of Max-Cut. For instance, Max-Cut has been shown to admit improved approximation ratios for various classes structured instances, such as bounded-degree graphs~\cite{FKL02, hsieh2023approximating}, graphs with large spectral gap~\cite{AKKSTV08,Kol10,MM10} or low threshold rank~\cite{BRS11,GS11}, graphs with a linear number of edge-disjoint triangles~\cite{GLR25, ADKL25}, dense graphs~\cite{AKK95}, interval graphs~\cite{ADKL25}, and graphs with a low-rank SDP solution~\cite{AZ05}. These lines of work serve to bridge the gap between existing worst-case approximation guarantees for the problem and improved guarantees one can achieve when the instances have more structure.

One such natural source of graph structure for Max-Cut comes from the chromatic number. It is well known that perfectly complete Max-Cut instances are bipartite (i.e., 2-colorable), and Max-Cut is polynomial-time solvable on such graphs. Furthermore, if a graph is nearly bipartite (up to edge or vertex deletions), then one can efficiently recover a near-optimal cut~\cite{CMM06,Tre09,ACMM05,GL21}. 
This makes it natural to ask whether 3-colorability, the next simplest chromatic promise, can help us do better than Goemans-Williamson algorithm.
\begin{itemize}
	\item[$\triangleright$] {\it Can we get better than $\alpha_{GW}$-approximation for Max-Cut on $3$-colorable graphs?} 
\end{itemize}

Furthermore, bipartite graphs, as well as graphs with bounded degree, contain  independent sets of linear size. As mentioned earlier, both classes admit better than $\alpha_{GW}$-approximation for Max-Cut. More generally, having large independent sets is a natural structural condition closely related to the condition of having bounded chromatic number. This motivates us to ask whether graphs with large independent sets are easier instances for Max-Cut. There is, however, one caveat: one can always pad a graph with isolated vertices to increase the size of the maximum independent set. For this reason, we instead measure the size of an independent set by the fraction of edges incident on it, or equivalently by its \emph{volume}, defined as the sum of (weighted) degrees of its vertices. This leads us to ask:
\begin{itemize}
	\item[$\triangleright$] {\it Can we get better than $\alpha_{GW}$-approximation for Max-Cut on graphs that contain an independent set of large volume?} 
\end{itemize}

\subsection{Our Results}

In this paper, we address the questions stated above and obtain both positive and negative results. Before we proceed, let us introduce some notation that will be used for stating them. Throughout the paper we will use $\rho^*$ to denote the value
\[
\rho^* := \argmin_{\rho \in [-1,1]}\frac{2\arccos \rho}{\pi(1 - \rho)} \approx -0.68915,
\]
i.e., it is the critical inner-product for which the Goemans-Williamson hyperplane rounding achieves the worst-case performance, and let
\[
\alpha_{GW} := \frac{2\arccos \rho^*}{\pi(1 - \rho^*)} \approx 0.87856
\]
denote the corresponding worst-case approximation ratio. In the hardness results stated below, our YES and NO cases always distinguish between when the optimal max-cut is of size at least $c^*:= \frac{1 - \rho^*}{2}\approx 0.8445$ and the case when the optimal max-cut is of size at most $s^* := \frac{\arccos \rho^*}{\pi} \approx 0.742$, where $c^*$ and $s^*$ are the SDP value and the (expected) rounded value on an edge whose vectors have inner-product $\rho^*$. 

We begin with our first result, which says that when the input graph is $3$-colorable, Max-Cut problem is hard to approximate beyond the standard $\alpha_{GW}$ factor.

\begin{theorem}					\label{thm:3-col}
	Assuming UGC\footnote{See Section \ref{sec:ugc} for a formal description of the Unique Games Conjecture (UGC).}, the following holds for any small constant $\eta > 0$. Given a weighted $3$-colorable graph $G = (V,E,w)$ it is NP-hard to distinguish between the following cases:
	\begin{itemize}
		\item {\bf YES Case}.There exists a set $S \subseteq V$ such that $w(S,S^c) \geq \frac{1 - \rho^*}{2}-\eta$. 
		\item {\bf NO Case}. The Max-Cut of $G$ is at most $\frac{\arccos \rho^*}{\pi} + \eta$.
	\end{itemize}
	Furthermore, the hardness result holds even when the underlying $3$-coloring is known to the algorithm.
\end{theorem}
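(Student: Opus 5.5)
We prove the statement by a KKMO-style reduction from Unique Games, with a dictatorship test that is forced to produce a $3$-colorable graph. In the standard KKMO test the long code on $\{-1,1\}^R$ is tested with $\rho^*$-correlated pairs, and the resulting graph has odd cycles of every length. We therefore change the base space. Each long-code coordinate takes values in $\{0,1,2\}$, the three colors. The test is built from a distribution $\mu$ on pairs $(a,b)\in\{0,1,2\}^2$ with $a\neq b$ always, so every edge joins vertices whose (coordinate-$i$) colors differ. The YES-case cut is obtained from a map $\sigma:\{0,1,2\}\to\{-1,1\}$ applied to the dictator coordinate.

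\textbf{Construction.} Fix a map $\sigma$, say $\sigma(0)=1$ and $\sigma(1)=\sigma(2)=-1$. Choose weights on the ordered pairs $a\neq b$ so that two properties hold. First, $\Pr[\sigma(a)\neq\sigma(b)]=\frac{1-\rho^*}{2}$. Second, after projecting via $\sigma$ (or via suitable noise), the correlation seen by low-degree/low-influence functions is at most $\rho^*$. One natural choice mixes the ``$0$ vs.\ $\{1,2\}$'' edges with a fraction of ``$1$ vs.\ $2$'' edges. The latter are uncut by the dictator cut and act as the noise that lowers the correlation.

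\textbf{Reduction and coloring.} Take a UG instance with label set $[R]$. Give each vertex $v$ a long code $f_v:\{0,1,2\}^R\to\{-1,1\}$ and apply the test through the UG permutations, using folding/averaging as in KKMO. Each vertex $(v,x)$ of the final graph needs a color. Coordinatewise coloring fails across different $v$, so we first reduce from a UG variant in which every constraint graph vertex has a designated coordinate. Alternatively we make the graph bipartite between UG sides and add a $3$-coloring layer. Concretely, I would reduce from UG on a bipartite label-cover where edges go between $(u,x)$ and $(w,y)$ with $x_{\pi(i)}\neq y_{i}$ for all $i$. This is $3$-colorable as long as the color can be read from a fixed coordinate: for example, color $(v,x)$ by $x_1$ after aligning the permutations so that coordinate $1$ is fixed. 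That alignment is a known technicality (we may assume all permutations fix one label by padding the label set). Since $x_{\pi(i)}\neq y_i$ for every $i$, the coloring $(v,x)\mapsto x_{j}$ is proper, and it is explicit, which gives the ``coloring known'' clause.

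\textbf{Completeness and soundness.} Completeness is immediate: the labeling $\ell$ gives the cut $f_v(x)=\sigma(x_{\ell(v)})$, which cuts a $\frac{1-\rho^*}{2}$ fraction minus the UG error. Soundness is the main obstacle. We need a Majority-Is-Stablest analogue for the correlated space $\mu$ on $\{0,1,2\}^2$: if $f$ has no influential coordinates, then
\[
\Pr[f(x)\neq f(y)]\le \frac{\arccos\rho^*}{\pi}+\eta .
\]
I would use Mossel's invariance principle for correlated spaces. This reduces to Gaussian noise stability, where the relevant quantity is the maximal correlation of $\mu$ restricted to mean-zero functions, together with the negative-side Borell bound. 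The delicate point is that $\rho^*$ is negative, so the quantity to control is the most negative correlation of $\mu$. We must choose $\mu$ so that its most negative eigenvalue equals exactly $\rho^*$, while the cut fraction under $\sigma$ still equals $\frac{1-\rho^*}{2}$. These two requirements are compatible only if $\sigma$ attains the extremal eigenvector, and tuning the weights of $\mu$ to achieve this is where I expect the real work to lie. Once the invariance bound holds, decoding influential coordinates into a good UG labeling is standard.
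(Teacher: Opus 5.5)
Your plan has a genuine gap at the $3$-coloring step, and this is where the difficulty of the theorem lies. If the colors live in the permuted long-code coordinates, an edge $\{(v_1,\pi_{v_1,u}(x)),(v_2,\pi_{v_2,u}(y))\}$ gets endpoint colors $x(\pi_{v_1,u}(1))$ and $y(\pi_{v_2,u}(1))$. These are read off \emph{different} coordinates, so the guarantee $x_k\neq y_k$ for each $k$ says nothing about them. No other coloring rescues this:

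\begin{itemize}
\item With the full off-diagonal support you propose, the edges with $v_1=v_2$ turn each block into a copy of $K_3^{\times R}$.
\item The only proper $3$-colorings of $K_3^{\times R}$ are dictators (Greenwell--Lov\'asz).
\item Properness across blocks then forces $\pi_{v_1,u}(\ell(v_1))=\pi_{v_2,u}(\ell(v_2))$ for all $v_1,v_2\in N(u)$, i.e.\ a labeling satisfying every UG constraint.
\end{itemize}

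So in the NO case your graph is not $3$-colorable, whereas the theorem needs $3$-colorable instances in both cases. Padding the label set is not a harmless technicality. If every $\pi_{v,u}$ fixes an extra label $0$, then $\ell\equiv 0$ satisfies every constraint. Concretely, $f_v(x)=\sigma(x_0)$ (the same for all $v$) is a consistent cut achieving the full completeness value in the NO case. Making the graph bipartite between the two UG sides is also fatal, since Max-Cut is trivial on bipartite graphs.

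Independently, the single-coordinate test cannot be tuned as you hope, for any weights; noise that preserves $a\neq b$ just gives another such distribution. A nonconstant $\sigma:\{0,1,2\}\to\{\pm1\}$ isolates one color $k$. Since $a\neq b$ always, if $\sigma$ cuts a $p$ fraction then $\Pr[a=k]=p/2$. So $\sigma$ has mean $p-1\neq 0$ and correlation exactly $\lambda=-p/(2-p)$. A $\pm1$ function with mean $m$ and correlation $\lambda$ cuts $\frac{(1-\lambda)(1-m^2)}{2}$, so your two requirements clash: least eigenvalue $\rho^*$ forces $p\le\frac{2\rho^*}{\rho^*-1}=\alpha^*\approx 0.816<\frac{1-\rho^*}{2}$. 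Worse, the permutation-invariant low-influence threshold $\mathrm{sign}\bigl(\sum_i\sigma(x_i)-R(p-1)\bigr)$ cuts $\frac{\arccos\lambda}{\pi}-o(1)$ in the NO case. Against dictator completeness $p$, the gap ratio is therefore $\frac{(1+|\lambda|)^2}{4|\lambda|}\cdot\frac{2\arccos\lambda}{\pi(1-\lambda)}\geq 0.89>\alpha_{GW}$.

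The paper sidesteps both issues by keeping the color as an unpermuted tag. Vertices are $(v,i,x)$ with $i\in[3]$ and $x\in\{0,1\}^R$, and edges join only distinct tags, carrying a $\rho^*$-correlated Boolean pair. Coloring by $i$ is then proper in both cases, and completeness is the usual dictator. The price, and the idea missing from your plan, is that each $v$ now carries three tables with possibly different means. There is no folding, and the tag cut $(V_i,V_j\cup V_k)$ already gets $\frac23$. The paper handles this with a tripartite Majority-Is-Stablest. Mossel's bound, applied to each pair, reduces soundness to maximizing $\frac13\sum_{i<j}\mathsf{Cut}_\rho(\mu_i,\mu_j)$ over $[0,1]^3$. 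Its only interior critical point is $(\frac12,\frac12,\frac12)$ when $\rho\neq-\frac12$, and its boundary maximum is $\frac23$. Since $\rho^*<-\frac12$, the value is $\frac{\arccos\rho^*}{\pi}$.
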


In the above theorem, for a subset $S \subseteq V$, the notation $w(S,S^c)$ denotes the weight of the edges crossing the cut $(S,S^c)$ in $G$.
The above result stands in sharp contrast to the $2$-colorable setting where Max-Cut is polynomial time solvable. The next result shows that a similar hardness holds when the graph contains an independent set with large relative volume. 

\begin{theorem}				\label{thm:iset-2}
	Assuming UGC, the following holds for any small constant $\eta > 0$. Let $G = (V,E,w)$ be a weighted graph that contains an independent set $I \subseteq V$ satisfying
	\[
	w(I,I^c) = \frac{\arccos \rho^*}{\pi} - \eta.
	\] 
	Then given such a graph, it is NP-hard to distinguish between the following cases:
	\begin{itemize}
		\item {\bf YES Case}.There exists a set $S \subseteq V$ such that $w(S,S^c) \geq \frac{1 - \rho^*}{2}-\eta$.
		\item {\bf NO Case}. The Max-Cut of $G$ is at most $\frac{\arccos \rho^*}{\pi} + \eta$.
	\end{itemize}
	Moreover, the hardness holds even when the independent set $I$ is also provided as input to the algorithm.
\end{theorem}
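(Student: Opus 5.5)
We prove Theorem~\ref{thm:iset-2} with a KKMO-style reduction from \ug{}, modifying the long-code test so that the test graph has a prescribed independent set of volume $\frac{\arccos\rho^*}{\pi}-\eta$. Each vertex $v$ of the Unique Games instance is replaced by a cloud $\{\pm1\}^R$ (the long code of its label). We add one extra ``dummy'' vertex class that forms the independent set $I$. The graph is a convex combination of two tests.

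\textbf{Noise test (weight $1-\beta$).} Pick an edge $(u,v)$ of the \ug{} instance, a random $x$, and $y$ that is $\rho^*$-correlated with $x$ (after permuting by the edge's constraint). Connect $x$ in $u$'s cloud to $y$ in $v$'s cloud.

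\textbf{Star test (weight $\beta$).} Connect a designated independent set $I$ to the rest of the graph.

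The natural choice is to make $I$ a set of hypercube points of one cloud type. Then every edge touching $I$ goes to the other type, and the cut $(I,I^c)$ has weight exactly the volume fraction of edges incident on $I$. The weight $\beta$ and the distribution of star edges are tuned so that this fraction is $\frac{\arccos\rho^*}{\pi}-\eta$. A concrete option is a bipartite-type test: use two copies (left/right) of each long code, take $I$ to be a union of ``left'' vertices, and never create left--left edges. We then check that the noise edges still give dictator cuts of value $\frac{1-\rho^*}{2}$ and NO-case value $\frac{\arccos\rho^*}{\pi}$.

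\textbf{Completeness.} Given a good labeling, take the dictator cut $x\mapsto x_{\ell(v)}$ in each cloud. A noise edge is cut with probability $\frac{1-\rho^*}{2}$. The star edges must be designed so that the dictator cut also cuts a $\frac{1-\rho^*}{2}$ fraction of them. If star edges pair $x$ with a $\rho^*$-correlated $y$, this holds automatically. So the star test should itself be a $\rho^*$-noise test whose left endpoint lies in $I$. The extra structure is then only that $I$ is independent, which we arrange by making the left endpoints of all edges in some sub-distribution come from a separate copy with no internal edges.

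\textbf{Soundness.} Given a cut whose functions $f_v$ have no influential coordinates, average over the cloud. The cut value then becomes an expression like $\E[\text{cut}] = \frac12 - \frac12\langle f,T_{\rho^*} g\rangle$ for the relevant pairs of functions, possibly with different functions on the two sides. By Majority-Is-Stablest, or its two-function variant (Borell's isoperimetry for $\rho<0$), each term is at most $\frac{\arccos\rho^*}{\pi}+o(1)$. This is the step where we expect the novel Majority-Is-Stablest variants mentioned in the abstract to be needed. The reason is that imposing an independent set of large volume skews the edge distribution, so the star part pairs functions of different biases and degrees, or involves a three-party correlation. We would first show that the noise operator seen by each vertex class is a $\rho$-correlated Gaussian with $\rho\ge\rho^*$ in the worst case. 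We would then bound each term by Borell with arbitrary means, $\Pr[\mathrm{sgn}\ne]\le \frac{\arccos\rho}{\pi}$, using the fact that $\frac{\arccos\rho}{\pi}$ is maximized at $\rho=\rho^*$ under our parameter choice.

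\textbf{Decoding.} If the cut value exceeds $\frac{\arccos\rho^*}{\pi}+\eta$, standard influence decoding gives a good \ug{} labeling, which yields the NO case.

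The main obstacle is making the volume of $I$ as large as $\frac{\arccos\rho^*}{\pi}$ while keeping the soundness at $s^*$. The threshold is exactly where a trivial cut (put $I$ on one side) already achieves the soundness value, which indicates the construction is tight.
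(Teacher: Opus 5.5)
Your construction, once it settles (two copies of each long code, $\rho^*$-correlated left--right edges of total weight $\beta=\frac{\arccos\rho^*}{\pi}-\eta$ with $I$ the union of left copies, and $\rho^*$-correlated right--right edges of weight $1-\beta$), is essentially the paper's gadget. Your completeness and decoding steps are fine.

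The gap is in soundness. Fix a good $u$ with averaged functions $g_{u,1}$ (left) and $g_{u,2}$ (right), of means $\mu_1,\mu_2$. The invariance principle bounds the cut value by
\[
\beta\,{\sf Cut}_{\rho^*}(\mu_1,\mu_2)+(1-\beta)\,{\sf Cut}_{\rho^*}(\mu_2,\mu_2).
\]
The right--right term is indeed at most $\frac{\arccos\rho^*}{\pi}$ for every $\mu_2$, by convexity of $x\mapsto\Lambda_{\rho^*}(x,x)$. Your claim that the left--right term is also bounded this way, via ``Borell with arbitrary means'', is false. For $\rho<0$, two halfspaces of different measures can cut far more than $\frac{\arccos\rho}{\pi}$. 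For example, ${\sf Cut}_{\rho^*}(0,1)=1$, realized by the zero-influence pair $g_{u,1}\equiv 1$, $g_{u,2}\equiv 0$. Nor can you reduce to balanced functions by taking odd parts, since the two copies carry different functions.

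A sanity check exposes the problem: your argument never uses $\beta\le\frac{\arccos\rho^*}{\pi}$. Yet the quasirandom cut $(I,I^c)$ has value exactly $\beta$, so your argument would \emph{prove} soundness $\frac{\arccos\rho^*}{\pi}$ even for $\beta>\frac{\arccos\rho^*}{\pi}$, where it is false. Your remark about correlations $\rho\ge\rho^*$ is also beside the point, since every edge here has correlation exactly $\rho^*$.

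What is missing is a bound that couples the two terms. Skewing $\mu_1$ against $\mu_2$ gains on the edges at $I$, but it pushes $\mu_2$ away from $1/2$, which costs on the right--right edges. You must show this trade-off never beats $\max\{\frac{\arccos\rho^*}{\pi},\beta\}$. This is the paper's Lemma~\ref{lem:iset-3}, which maximizes the displayed expression over $[0,1]^2$ as follows:
\begin{itemize}
\item The first-order condition in $\mu_1$ forces $\Phi^{-1}(\mu_2)=\rho^*\,\Phi^{-1}(\mu_1)$.
\item Along this curve, stationarity in $\mu_2$ becomes a one-variable equation $\beta G(At)+2(1-\beta)G(Bt)=0$. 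Here $G(z)=\Phi(z)-\Phi(-z)$, $t=\Phi^{-1}(\mu_2)$, and $A<0<B$ are constants depending on $\rho^*$.
\item Monotonicity of $G(|A|t)/G(Bt)$, which follows from convexity of $G/G'$ on $(0,\infty)$, shows that any other stationary point is a local minimum of the cut.
\item This leaves only the balanced point $(1/2,1/2)$, of value $\frac{\arccos\rho^*}{\pi}$, and the boundary points $(0,1)$ and $(1,0)$, of value $\beta$.
\end{itemize}
That is exactly where the hypothesis on the volume of $I$ enters. It is the ingredient your proposal needs in place of the termwise bound.
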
 

We make some brief observations regarding the above result. As mentioned previously, in the setting of the above theorem (as well as the theorems stated below), we are measuring the size of the independent set via the fraction of edges that are incident on the independent set; up to renormalization, this is equivalent to the volume of the independent set, which is a more natural way of measuring the size of a set when the graph is irregular. Now, the above theorem states that the $\alpha_{GW}$-factor hardness also holds for instances which contain independent sets of relative volume at most $s^*/2 \approx 0.371$. Furthermore, the above result is tight in the following sense: if the independent set had more than $s^*$-fraction of edges incident on it, then since the independent set $I$ is known to the algorithm, it can trivially achieve better than $\alpha_{GW}$-approximation by simply returning the cut $(I,I^c)$. 

On the other hand, when the independent set is not revealed to the algorithm, we can prove $\alpha_{GW}$-hardness for instances which contain even larger independent sets. We formally state this result in the following theorem.

\begin{restatable}{rethm}{hardnessiset}		\label{thm:i-set}
Assuming UGC, the following holds for any small constant $\eta > 0$. Let $\alpha^* = \frac{2\rho^*}{\rho^* - 1} \approx 0.81597$. Then, given a weighted graph $G = (V,E,w)$ it is NP-hard to distinguish between:
\begin{itemize}
	\item {\bf YES Case}. There exists $I \subseteq V$ such that $I$ is an independent set and $w(I,I^c) \geq \alpha^* - \eta$. Moreover, there exists a set $S \subseteq V$ such that $w(S,S^c) \geq \frac{1 - \rho^*}{2} - \eta$.
	\item {\bf NO Case}. The Max-Cut of $G$ is at most $\frac{\arccos \rho^*}{\pi} + \eta$.
\end{itemize}
\end{restatable}

In the above theorem, the quantity $\alpha^* \approx 0.81597$ actually corresponds to a threshold which captures the tradeoff between the eigenvalues and the size of the independent set for a natural family of gadgets; we point the readers to Section \ref{sec:iset-overview} for a discussion which explains the origins of this threshold in more detail. We also give a sharp converse to the above theorem, in which we show that if a graph indeed contains a independent set which has more than $\alpha^*$-fraction of edges incident on it, then we can achieve better than $\alpha_{GW}$-approximation for Max-Cut.

\begin{theorem}\label{thm:algo}
    Let $G = (V, E)$ be a graph. Assume that there exists $I \subseteq V$ such that $I$ is an independent set and $w(I,I^c) > \alpha^*$ where $\alpha^* \approx 0.81597$, then we can find a set $S \subset V$ in polynomial time such that $w(S,S^c) > \frac{\arccos \rho^*}{\pi}$.
\end{theorem}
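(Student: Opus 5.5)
Throughout, normalize the weights so that $w(E)=1$. The algorithm is the one hinted at in the abstract: solve a strengthened SDP that encodes the promise of an unknown independent set, then round it with Goemans--Williamson-style hyperplane rounding, with an extra bias or threshold step.

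\textbf{The relaxation.} Introduce unit vectors $v_i$ for $i\in V$, together with a special unit vector $v_0$ representing membership in $I$. For each vertex let $x_i=\langle v_0,v_i\rangle$, where $x_i$ is meant to behave like $2\cdot\mathbf{1}[i\in I]-1$. For every edge $ij$ we add the independence constraint, which is the vector form of $(1+x_i)(1+x_j)=0$:
\[
1+\langle v_0,v_i\rangle+\langle v_0,v_j\rangle+\langle v_i,v_j\rangle=0 .
\]
We also add the volume constraint
\[
\sum_{ij\in E}w_{ij}\,\frac{2+x_i+x_j}{2}\ \ge\ \alpha^*+\delta .
\]
The objective is $\max\sum_{ij}w_{ij}\frac{1-\langle v_i,v_j\rangle}{2}$.

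\emph{Feasibility.} Pick a cut $S$ with $w(S,S^c)\ge s^*$. Since $w(I,I^c)>\alpha^*\ge s^*$, the cut $(I,I^c)$ qualifies. For the integral solution we could take $v_0$ together with $v_i=\pm v_0$ according to $I$, but we also want the cut vectors in play. So we take a two-dimensional integral solution $v_i=(a_i,b_i)$, with $a_i$ encoding $I$ and $b_i$ encoding $S$. This is feasible, and its value is at least the max-cut value.

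We may assume the max-cut is at most $s^*+o(1)$, since otherwise the Goemans--Williamson rounding already returns a cut exceeding $s^*$. (If it does not, the SDP value is small and the argument below still applies.)

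\textbf{Per-edge analysis.} For an edge, write $\rho=\langle v_i,v_j\rangle$, $a=x_i$, $b=x_j$, so the independence constraint reads $\rho=-1-a-b$. The edge's contribution to the volume is $(2+a+b)/2=(1-\rho)/2$. So the volume constraint says precisely that the SDP value is at least $\alpha^*$; that is, the SDP cut value is forced above $\alpha^*$.

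Now consider plain hyperplane rounding. On an edge it earns $\arccos(\rho)/\pi$, which is concave in $\frac{1-\rho}{2}$ over the relevant range. So when the average of $\frac{1-\rho}{2}$ exceeds $\alpha^*$, the total is at least $\min$ over distributions of $\rho$ with $\mathbb{E}[(1-\rho)/2]\ge\alpha^*$. The bad distributions mix the inner products $\rho^*$ and $-1$. The threshold $\alpha^*=2\rho^*/(\rho^*-1)$ is exactly where the line through the point $(c^*,s^*)$ and the point $(1,1)$ meets value $s^*$... but this is not quite right, since mixing with $\rho=-1$ gives a value above $s^*$. So the real worst case must involve the extra variables $a,b$, and hyperplane rounding alone does not suffice.

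\textbf{Combining two roundings.} The natural fix is to use $v_0$ as well. We will run a mix of two procedures: with some probability $p$ output the threshold cut $\{i:x_i>\tau\}$ (or a cut rounded along $v_0$), and otherwise output the hyperplane cut on the $v_i$. On an edge, the threshold cut is good exactly when $a$ and $b$ have opposite signs, and the independence constraint makes $\rho$ negative precisely when $a+b$ is large.

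The per-edge guarantee then takes the form
\[
F(a,b)=p\cdot\Pr[\text{threshold cuts }ij]+(1-p)\,\frac{\arccos(-1-a-b)}{\pi} .
\]
We need to show $F(a,b)\ge s^*+\lambda\left(\frac{2+a+b}{2}-\alpha^*\right)+\epsilon$ for suitable $\lambda$ and $p$, over the entire feasible region of $(a,b)$; the region is determined by positive semidefiniteness of the $3\times3$ Gram matrix. Summing this inequality over edges and applying the volume constraint gives a cut of weight greater than $s^*$.

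The value $\alpha^*$ should emerge as the point where the best linear lower bound on $\arccos$ is tangent at $\rho^*$: that line is $s^*+\frac{s^*}{c^*}(\cdot)$, and when it is evaluated appropriately alongside the independence-set contribution, it equals zero gain exactly at $\alpha^*$.

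\textbf{Main obstacle.} The hard step is verifying this two-variable inequality. It is a non-convex inequality in two variables with a slack that is small near $\alpha^*$. Following the abstract, I would first pick $p,\tau,\lambda$ numerically, and then prove the inequality rigorously by interval arithmetic: subdivide the feasible $(a,b)$ region into boxes, and bound $F$ on each box using monotonicity of $\arccos$ and the Gaussian CDF. Boxes near the tight point $\rho=\rho^*$, $a=b$ need a separate Taylor argument.
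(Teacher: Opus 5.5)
\textbf{The rounding you propose cannot beat $s^*$, so the per-edge inequality you plan to certify is false.} Take the feasible edge configuration with $a=b=\alpha^*-1$ and $\langle v_i,v_j\rangle=-1-a-b=1-2\alpha^*\approx-0.632$. It is realized by $v_i=(\alpha^*-1)v_0+\sqrt{\alpha^*(2-\alpha^*)}\,u_i$ with $u_i\perp v_0$ and $\langle u_i,u_j\rangle=\rho^*$. Its volume contribution is exactly $\alpha^*$, so your right-hand side there equals $s^*+\epsilon$.

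Both endpoints have the same bias. Hence the threshold cut along $v_0$ separates nothing, and any rule based on the biases alone separates them with probability at most $1/2$. Hyperplane rounding of the raw vectors cuts with probability $\arccos(1-2\alpha^*)/\pi\approx 0.718<s^*\approx 0.742$. So every mixture falls below $s^*$ at this point, for every choice of $p,\tau,\lambda$, and no interval-arithmetic computation can certify the inequality.

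A larger volume margin does not help either. The same failure occurs at bias $\alpha-1$ for every $\alpha<c^*$, since $\arccos(1-2\alpha)/\pi<s^*$ there. If the SDP solution has this shape on every edge, the algorithm itself, not only the analysis, returns less than $s^*$.

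\textbf{The missing idea is to project out $v_0$ before the hyperplane step.} Write $b_i=-x_i$ (the paper's convention) and let $v_i^\perp$ be the normalized component of $v_i$ orthogonal to $v_0$. The independence constraint gives $\langle v_i^\perp,v_j^\perp\rangle=-\sqrt{\frac{(1-b_i)(1-b_j)}{(1+b_i)(1+b_j)}}$. At uniform bias $b_0=1-\alpha$ this equals $-\alpha/(2-\alpha)=\rho_\alpha$ (Lemma~\ref{lem:dist}), which hits $\rho^*$ exactly at $\alpha=\alpha^*$. This identity, not a tangent line to $\arccos$, is the origin of $\alpha^*$. An unbiased hyperplane on the $v_i^\perp$ earns exactly $s^*$ in the configuration above and strictly more once $\alpha>\alpha^*$.

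The paper therefore rounds with the $\THRESH^-$ scheme: $i\in S$ iff $\langle\mathbf r,v_i^\perp\rangle\ge\Phi^{-1}(t(b_i))$, with a piecewise-linear $t$ satisfying $t(b_0)=1/2$. It then certifies $s(b_i,b_j)\ge p\frac{b_i+b_j}{2}+q$ (your $\lambda=-p$) with equality at $(b_0,b_0)$, and this tightness forces $p,q$ (Proposition~\ref{proposition:pq}).

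So you should also drop your $+\epsilon$. No $\THRESH^-$ rounding can exceed $s^*$ at the critical configuration, by convexity of $x\mapsto\Lambda_{\rho^*}(x,x)$, and the strict gain must come from the excess volume. Near $(b_0,b_0)$ the paper's check uses positive definiteness of the Hessian, and elsewhere strict interval bounds, after scaling the SDP value to within $10^{-13}$ of $\alpha^*$. Your summation step is exactly Lemma~\ref{lem:affine_sound} and carries over once the rounding is replaced.

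Two smaller errors:
\begin{itemize}
\item Your two-dimensional ``integral solution'' is neither unit-norm nor feasible. It is also unnecessary: under the independence constraint the objective equals the volume, and $v_i=\pm v_0$ according to $I$ already has value $w(I,I^c)$.
\item Goemans--Williamson is not guaranteed to exceed $s^*$ just because the max-cut does. That needs SDP value above $c^*$.
\end{itemize}
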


\subsection{Discussion}

Our first two results, Theorem \ref{thm:3-col} and Theorem \ref{thm:iset-2}, show that under the promise that the graph is $3$-colorable, or it contains a somewhat large independent set, the Max-Cut problem is still hard to approximate beyond the $\alpha_{GW}$-factor, even when the $3$-coloring or the independent set is revealed to the algorithm. In the case where the underlying independent set is hidden, we show that the $\alpha_{GW}$-hardness continues to hold for graphs with even larger (hidden) independent sets (Theorem \ref{thm:i-set}). Finally, as a sharp converse to Theorem \ref{thm:i-set}, in Theorem \ref{thm:algo} we show that if the underlying graph contains an independent set that is slightly larger than the given threshold $\alpha^* \approx 0.81597$, one can get better than $\alpha_{GW}$-approximation. 

We point out that Theorem \ref{thm:algo} also serves as a weak converse to Theorem \ref{thm:3-col} in the following sense: in Theorem \ref{thm:3-col}, the underlying $3$-coloring of the hard instances produced by the reduction is perfectly balanced, that is, the three color classes all have equal size. On the other hand, if the input graph admits a highly unbalanced 3-coloring where one of the color classes is incident on more than an $\alpha^*$-fraction of edges, then Theorem \ref{thm:algo} yields an  approximation ratio better than $\alpha_{GW}$. 

Overall, our results explore the approximability of Max-Cut in regimes where the input graph has much weaker structure than being bipartite. We conclude by suggesting an interesting connection between the setting of Theorems~\ref{thm:i-set} and~\ref{thm:algo} and the recently introduced \emph{Valued PCSP} framework~\cite{barto2024algebraic}. Informally, in a valued PCSP, we are given two ways to evaluate a MAX CSP instance, such that the existence of a good solution in the first evaluation method implies the existence of a good solution in the second. Our setting seems a natural example of valued PCSPs since we may think of each edge as one (valued) Boolean constraint $P$, where in the independent set case we have $P(0, 0) = 0, P(1, 0) = P(0, 1) = 1, P(1, 1) = -\infty$ (where 1 means that the variable is in the independent set, so the endpoints can't both be 1), and in the MAX CUT case we have $P(0, 0) = (1, 1) = 0, P(1, 0) = P(0, 1) = 1$. It would be interesting to explore whether our analytic methods in this paper can be applied more broadly to valued PCSPs, or in the other direction, whether the algebraic tools developed in~\cite{barto2024algebraic} can shed more light on problems discussed in this paper. 

\subsection{Related Works}

We round off this section with a brief discussion on some additional related works. The related problems of coloring a graph with small number of colors, and that of finding large independent sets, have been studied quite extensively in the literature~\cite{KMS98,DRM06,DS10,KT17,GS20,BG21}. In particular, the current state-of-the-art for coloring $3$-colorable graphs is by Kawarabayashi, Thorup, and Yoneda~\cite{KTY24}, who gave an efficient algorithm that can color such graphs using $\tilde{O}(n^{0.19747})$-colors\footnote{A recent preprint~\cite{bansal2026improved} announced an improved algorithm which uses $O(n^{0.19539})$ colors.}. There has also been recent progress on finding near-optimal colorings and independent sets in the setting when the underlying graph is a one-sided expander~\cite{BHK25,BRYS25}, or has low-threshold rank~\cite{Hsieh25coloring}. Another interesting related work is that of Nakajima and Živný~\cite{NZ25}, which explores the interplay between graph-cut problems and graph coloring using the lens of Promise CSPs; in particular, they show that given a Max-Cut instance $G$, one can efficiently find a $3$-partition which cuts ${\sf SdpVal}(G)$ fraction of edges in $G$, where ${\sf SdpVal}(G)$ is the optimal value of the standard SDP relaxation for Max-Cut on $G$. 

\section{Technical Overview}

In this section, we give a technical overview of our results. As is standard, our hardness results are derived using PCP-based gadget reductions: they follow the standard template of first designing an appropriate {\em dictatorship test} (i.e., an inner-verifier), which when combined with \ug~as the outer-verifier\footnote{See Section \ref{sec:ugc} for a formal description of the \ug~problem.} immediately yields the corresponding hardness. Formally, dictatorship tests in our reductions are defined in the following way. Let $\Omega$ be a finite set and $\mu$ be a distribution on $\Omega \times \Omega$. Then a $(c,s)$-dictatorship test is a weighted graph $G^R = (\Omega^R, \Omega^R \times \Omega^R, \mu^R)$, where $R$ is a large integer, such that:
\begin{itemize}
	\item The vertex set of $G^R$ is $\Omega^R$, and the edges in $\Omega^R \times \Omega^R$ are weighted according to the $R$-wise product measure $\mu^R$.
	\item {\bf Completeness}. There exists a (dictator) set $S^* \subset \Omega^R$ that cuts at least $c$-fraction of edges in $G^R$.
	\item {\bf Soundness}. Every quasirandom set\footnote{A set $S \subset \Omega^R$ is said to be quasirandom if for every $j \in [R]$ the $j^{th}$ influence of the indicator function $\one_S$ is small.} $S \subset \Omega^R$ cuts at most $s$-fraction of edges in $G^R$.
\end{itemize} 

We refer to $c$ and $s$ as the completeness and soundness parameters of the dictatorship test gadget. By standard techniques~\cite{KKMO07,Rag08}, one can use a $(c,s)$-dictatorship test to immediately show a $c$ vs. $s$ hardness for Max-Cut assuming UGC. Henceforth, we will just focus our discussion on the design and analysis of the dictatorship gadgets for our hardness reductions.

\subsection{Review of KKMO's $\alpha_{GW}$-hardness}

We begin with a quick review of \cite{KKMO07}'s UGC-based $\alpha_{GW}$-hardness for Max-Cut\footnote{Readers who are familiar with \cite{KKMO07}'s techniques for proving the $\alpha_{GW}$-hardness may choose to skip ahead to Section \ref{sec:3col-overview}.}. As mentioned above, the key component of the $\alpha_{GW}$ is an appropriately-chosen dictatorship test gadget, which is the $\rho^*$-correlated noisy hypercube $H^R_{\rho^*}$. The vertices of $H^R_{\rho^*}$ is $\{0,1\}^R$, i.e., the set of all $R$-length Boolean strings. The edge weight on a pair of vertices $(x,y) \in \{0,1\}^R \times \{0,1\}^R$ is the probability that it is sampled by the procedure described in Figure \ref{fig:corr}. 

\begin{figure}[ht!]
\begin{mdframed}
	\begin{itemize}
		\item Sample the string $x$ uniformly from $\{0,1\}^R$.
		\item Sample $y$ by sampling each bit independently from the following distribution:
		\[
		y(i) = 
		\begin{cases}
			x(i) & \text{ with probability}~\frac{1 + \rho^*}{2} \\
			1 - x(i) & \text{ with probability}~\frac{1 - \rho^*}{2}.
		\end{cases}
		\]
	\end{itemize}
\end{mdframed}	
\caption{Sampling a $\rho^*$-Correlated Pair}
\label{fig:corr}
\end{figure}
In other words, a random draw of an edge from $H^R_{\rho^*}$ corresponds to sampling a pair of $\rho^*$-correlated strings from $\{0,1\}^R$. The completeness of the gadget can be established easily; consider the dictator cut $S = \{x \in \{0,1\}^R : x(1) = 1\}$. It can be easily seen that:
\[
\Pr_{(x,y) \sim H^R_{\rho^*}}\Big[\text{$S$ cuts $(x,y)$}\Big]
 = \Pr_{(x,y) \sim H^R_{\rho^*}} \Big[x(1) \neq y(1)\Big] = \frac{1 - \rho^*}{2} = c^*,
\]
i.e., the set $S$ cuts $\frac{1 - \rho^*}{2}$-fraction of edges in $H^R_{\rho^*}$. For arguing soundness, one needs to show that every quasirandom cut has (relatively) small number of edges crossing it. To that end, as a first step, a folding argument can be used to show that it suffices to consider the case where the subsets are balanced (see Footnote \ref{fnt:balanced}). Then, one can use the {\em Majority-Is-Stablest} theorem (stated below), to bound the maximum cut that can be obtained by balanced quasirandom sets.

\begin{theorem}[Majority is Stablest~\cite{MOO10} restated]			\label{thm:mis}
	For every balanced quasirandom set $S \subseteq \{0,1\}^n$, we have
	\[
	\Pr_{(x,y) \sim H^R_{\rho^*}}\Big[\text{$S$ cuts $(x,y)$}\Big] \leq \frac{\arccos \rho^*}{\pi} + o(1) = s^* + o(1).
	\] 
\end{theorem}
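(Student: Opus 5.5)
This is the Majority-Is-Stablest theorem of~\cite{MOO10}, specialized to the noise rate $\rho^* < 0$; here it is cited rather than proved. If I were to prove it, I would go through the invariance principle. Let $f = \one_S : \{0,1\}^R \to \{0,1\}$, so that $\E[f] = 1/2$ by balancedness and every influence $\Inf{i}{f}$ is small by quasirandomness. The cut fraction is
\[
\Pr[f(x)\neq f(y)] = \tfrac12 - 2\,\mathrm{Stab}_{\rho^*}\bigl(f - \tfrac12\bigr),
\]
so it suffices to lower bound the noise stability $\mathrm{Stab}_{\rho^*}(g) = \E[g(x)g(y)]$ of $g = f - 1/2$, which is a $[-1/2,1/2]$-valued function with mean zero. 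Because $\rho^*$ is negative, I would reduce to the positive-correlation case: $\mathrm{Stab}_{\rho^*}(g) = \mathrm{Stab}_{|\rho^*|}(g^{\rm odd}) - \mathrm{Stab}_{|\rho^*|}(g^{\rm even})$. The even part contributes at most its own $|\rho^*|$-stability, so the whole task becomes upper bounding $\mathrm{Stab}_{|\rho^*|}$ of the odd part. That part is again bounded, mean zero, and has low influences, and it is the one the Gaussian bound controls.

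\textbf{Step 1: smoothing.} Replace $g$ by $T_{1-\gamma}g$ for a small $\gamma$. This changes the stability by at most $O(\gamma)$-type terms. It also kills the high-degree Fourier mass, so the remaining function has bounded low-degree influences.

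\textbf{Step 2: invariance.} Apply the invariance principle of~\cite{MOO10}. Since every influence is small, the multilinear polynomial $T_{1-\gamma}g$ evaluated on Boolean inputs and on Gaussian inputs has nearly the same distribution. In particular, the noise stability is preserved up to $o(1)$, and the Gaussian version is close in $L_2$ to a function with values in $[-1/2,1/2]$. This gives a Gaussian function $h$ with $\E[h]\approx 0$ and range $[-1/2,1/2]$ whose Gaussian $\rho$-correlated stability matches that of $g$ up to $o(1)$.

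\textbf{Step 3: Borell's isoperimetric inequality.} Among Gaussian functions with a fixed mean and range $[0,1]$, halfspaces minimize disagreement under $\rho$-correlated Gaussians, in the appropriate direction of $\rho$. For balanced halfspaces the disagreement probability is exactly $\arccos\rho/\pi$ by Sheppard's formula. Combining this with the odd/even reduction above bounds the cut fraction by $\arccos(\rho^*)/\pi + o(1)$.

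The main obstacle is Step~2. The invariance principle needs a hypercontractive bound for low-degree polynomials, and the truncation to the bounded range must be handled through an $L_2$ distance to $[-1/2,1/2]$. It also requires careful bookkeeping of the $\gamma$-smoothing errors, so that they vanish as the influence bound tends to zero. Within this paper, all of this is taken as given from~\cite{MOO10}.
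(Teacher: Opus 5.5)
Your outline is the standard proof of the result that the paper quotes from \cite{MOO10} without proving it: reduce the negative-correlation case to the odd part, then apply smoothing, invariance, and Borell's inequality at correlation $|\rho^*|$. The plan is right, but the even/odd identity you wrote has the sign reversed, and the key reduction does not follow from it as written.

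Since $\mathrm{Stab}_{\rho}(g)=\sum_{T}\rho^{|T|}\widehat{g}(T)^2$ and $\rho^*=-|\rho^*|$, the even levels enter with a plus sign and the odd levels with a minus sign:
\[
\mathrm{Stab}_{\rho^*}(g)=\mathrm{Stab}_{|\rho^*|}(g^{\rm even})-\mathrm{Stab}_{|\rho^*|}(g^{\rm odd})\;\ge\;-\mathrm{Stab}_{|\rho^*|}(g^{\rm odd}).
\]
You may discard the even part because its contribution is a nonnegative sum of squares. A \emph{lower} bound on $\mathrm{Stab}_{\rho^*}(g)$ needs exactly that; the point is not that the even part is bounded above. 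With your version of the identity, a lower bound on $\mathrm{Stab}_{\rho^*}(g)$ would instead require a lower bound on $\mathrm{Stab}_{|\rho^*|}(g^{\rm odd})$. That is false in general, since $g^{\rm odd}$ can vanish for balanced low-influence $g$. So the reduction to Majority-Is-Stablest would not go through.

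With the sign corrected, the cut fraction is at most $\tfrac12+2\,\mathrm{Stab}_{|\rho^*|}(g^{\rm odd})$. The function $2g^{\rm odd}$ takes values in $[-1,1]$, has mean zero, and satisfies $\Inf{i}{g^{\rm odd}}\le\Inf{i}{g}$. Majority-Is-Stablest at correlation $|\rho^*|$ then gives
\[
\tfrac12+\tfrac12\bigl(1-\tfrac{2}{\pi}\arccos|\rho^*|\bigr)+o(1)=\tfrac{\arccos\rho^*}{\pi}+o(1).
\]
Balancedness is not even needed on this route.

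A minor point on Step~2: for a multilinear polynomial the noise stability is \emph{identical} on Boolean and Gaussian inputs. What invariance actually supplies is that the Gaussian version is $L_2$-close to a $[-1/2,1/2]$-valued function, which lets you truncate before applying Borell.
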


The completeness and soundness guarantees taken together imply that Max-Cut is hard to approximate beyond $\alpha_{GW} = s^*/c^* \approx 0.87856$.

\subsection{Max-Cut on $3$-Colorable Graphs}			\label{sec:3col-overview}

Now, to establish Theorem \ref{thm:3-col}, we would like to show that same $\alpha_{GW}$-hardness holds for tripartite graphs as well. This naturally motivates us to look at a tripartite variant of noisy-cube gadget, namely the tripartite noisy-cube graph $\cG^R_{\rho^*}(V,E)$, which is defined formally below.

{\bf Vertex Set}. For every $i \in \{1,2,3\}$, we create a copy of the vertex set $V_i := \{(i,x) : x \in \{0,1\}^R\}$. Let $V = \cup_{i \in [3]} V_i$.

{\bf Edge Set }. The distribution on the edges is given by the following stochastic process:

\begin{itemize}
	\item Sample a random pair of indices $(i,j)$ from $\{1,2,3\}$.
	\item Sample a $\rho^*$-correlated pair $(x,y) \in \{0,1\}^R \times \{0,1\}^R$ as in Figure \ref{fig:corr}.
	\item Output the edge $\{(i,x),(j,y)\}$.
\end{itemize} 

In other words, in the tripartite noisy-cube, we essentially have $3$ copies of the hypercube vertex-set $\{0,1\}^R$, and we have $\rho^*$-correlated edges going across each of the three pairs $V_i,V_j$. Let us now briefly analyze the properties of this gadget. Clearly, in $\cG^R_{\rho^*}$, each of the sets $V_1,V_2,V_3$ is an independent set, and hence the graph $\cG^R_{\rho^*}$ is $3$-colorable. Furthermore, when this gadget is combined with \ug~problem as the outer-verifier, the final hard instance output by our reduction will also inherit the $3$-colorability of $\cG^R_{\rho^*}$.

Moreover, the completeness of the gadget can be argued as before -- we can define an analogous dictator set $S := \cup_{i \in [3]}\{(i,x): x(1) = 1\}$ -- and using identical arguments as above, one can show that this again cuts $\frac{1 - \rho^*}{2}$-fraction of the edges in $\cG^R_{\rho^*}$. Now, for the soundness guarantee, we would again like to use Majority-Is-Stablest (Theorem \ref{thm:mis}) to conclude that quasirandom sets in $\cG^R_{\rho^*}$ cut at most $\frac{\arccos \rho^*}{\pi}$-fraction of edges. Unfortunately, it turns out that we cannot apply Theorem \ref{thm:mis} as is in our setting, as there are several key differences in the behavior of quasirandom sets between $H^R_{\rho^*}$ and $\cG^R_{\rho^*}$ that our analysis would need to address:

\begin{itemize}
	\item Firstly, note that the partite structure of $\cG^R_{\rho^*}$ can induce degenerate cuts due to which the completeness-soundness guarantees may become trivial for certain values of $\rho$. In particular, $\cG^R_{\rho^*}$ admits degenerate partitions of the form $(V_i,V_j \cup V_k)$ for distinct $i,j,k$, which cut $2/3$ fraction of edges. While these cuts do not by themselves violate the intended soundness, they also do not rule out the possibility that some other variant of these cuts could exploit the partite structure to break the soundness guarantee.
	\item Furthermore, since in this setting one has the freedom to choose different cuts within the three sets $V_1,V_2,V_3$, one cannot assume a priori that an optimal cut is balanced\footnote{Specifically, the soundness analysis for $H^R_{\rho^*}$ in \cite{KKMO07} begins with an argument which shows that it suffices to just work with the {\bf odd part} of the input function indicating the set, which by definition is balanced. Such an argument cannot be carried out in the setting of $\cG^R_{\rho^*}$ due to the fact that one has freedom to choose three different subsets within $V_1,V_2,V_3$.\label{fnt:balanced}}.
\end{itemize}

To address the above issues, we develop a tripartite variant of Majority-Is-Stablest, which is a key technical contribution of this work. We describe the lemma below:

\begin{lemma}[Informal Version of Lemma \ref{lem:main}]				\label{lem:stab-inf}
	Let $\rho \in (-1,0]$. For any quasirandom function $f:V_1 \cup V_2 \cup V_3 \to \{0,1\}$, we have
	\[
	\Pr_{e \sim \cG^R_{\rho^*}}\Big[\text{$f$ cuts $e$}\Big] \leq \frac{\arccos \rho^*}{\pi} + o(1)
	\] 
	Moreover, the RHS is attained (up to $o(1)$-term) by $f(i,x) := {\sf MAJ}(x)$, where ${\sf MAJ}$ is the majority function.
\end{lemma}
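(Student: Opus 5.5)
Write $f_i := f(i,\cdot):\{0,1\}^R\to\{0,1\}$ for $i\in[3]$, and let $\mu_i := \E[f_i]$. The gadget picks a uniformly random unordered pair $\{i,j\}$ and a $\rho^*$-correlated pair $(x,y)$. So
\[
\Pr_{e}\big[f \text{ cuts } e\big] \;=\; \frac13\sum_{i<j}\Pr\big[f_i(x)\neq f_j(y)\big] \;=\; \frac13\sum_{i<j}\Big(\mu_i+\mu_j - 2\,\E\big[f_i(x)f_j(y)\big]\Big).
\]
Each cross term $\E[f_i(x)f_j(y)] = \langle f_i, T_{\rho^*} f_j\rangle$ is a two-function noise stability with negative correlation. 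The plan is to bound every cross term from below by a Gaussian quantity, then optimize the resulting three-variable Gaussian expression.

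\textbf{Step 1 (invariance).} Since $f$ is quasirandom, each $f_i$ has small low-degree influences. I apply the two-function version of Majority-Is-Stablest / invariance (MOO10) in its ``reverse'' form for $\rho<0$: for $\rho\in(-1,0]$ and low-influence $f,g$ with means $\mu,\nu$,
\[
\langle f, T_\rho g\rangle \;\ge\; \underline\Gamma_\rho(\mu,\nu) - o(1),
\]
where $\underline\Gamma_\rho(\mu,\nu)=\Pr[G_1\le \Phi^{-1}(\mu),\, G_2\ge -\Phi^{-1}(\nu)]$ for $\rho$-correlated Gaussians. In words, the extremal configuration is a halfspace paired with the opposite halfspace. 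This inequality follows from the standard $\rho\ge0$ statement (MOO10 / Borell) by replacing $g$ with $1-g(-\cdot)$. The consequence is
\[
\Pr[f \text{ cuts } e] \le \frac13\sum_{i<j}\Big(\mu_i+\mu_j-2\underline\Gamma_{\rho^*}(\mu_i,\mu_j)\Big)+o(1) \;=\; \frac13\sum_{i<j}\Pr\big[\mathbf 1\{G_1\le t_i\}\neq \mathbf 1\{G_2\le t_j\}\big]+o(1).
\]
The thresholds here are $t_i=\Phi^{-1}(\mu_i)$, and within each pair the halfspaces are aligned (``same direction'') relative to the negatively correlated Gaussians.

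\textbf{Step 2 (optimization).} It remains to show that for all $t_1,t_2,t_3\in[-\infty,\infty]$,
\[
\frac13\sum_{i<j} D(t_i,t_j)\le \frac{\arccos\rho^*}{\pi}, \qquad D(s,t)=\Pr\big[\mathbf 1\{G_1\le s\}\neq\mathbf 1\{G_2\le t\}\big],
\]
with equality at $t_1=t_2=t_3=0$, which is the majority cut. I expect this to be the main obstacle. The reason is that the degenerate choices (e.g.\ $t_1=+\infty$, $t_2=t_3=-\infty$, giving $2/3$) must be shown to be beaten, and the two-variable bound $D(s,t)\le \arccos(\rho^*)/\pi$ fails in general. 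For instance, $D(+\infty,-\infty)=1$, so I cannot argue pair by pair.

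My first approach to Step 2 is a convexity/symmetrization argument. Write $D(s,t)=\Phi(s)+\Phi(t)-2\Phi_2(s,t;\rho^*)$, where $\Phi_2$ is the bivariate normal CDF. Then the objective is
\[
\tfrac23\sum_i\Phi(t_i)-\tfrac23\sum_{i<j}\Phi_2(t_i,t_j).
\]
I will first handle the boundary cases, where some $t_i=\pm\infty$: the objective then collapses to one-dimensional functions that can be bounded by hand, numerically below $0.742$. For interior critical points I will use the first-order condition
\[
\phi(t_i)\Big(1-\sum_{j\ne i}\Phi\Big(\tfrac{t_j-\rho^* t_i}{\sqrt{1-\rho^{*2}}}\Big)\Big)=0.
\]
This forces $\sum_{j\neq i}\Phi\big((t_j-\rho^*t_i)/\sqrt{1-\rho^{*2}}\big)=1$ for every $i$. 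I then aim to show this system forces either all $t_i$ to be equal, or $t$ to be symmetric about $0$ in a way that reduces to the single-variable function $t\mapsto D(t,t)$, which is maximized at $t=0$ because $\rho^*<0$. If the analytic argument stalls, I will fall back on a rigorous interval-arithmetic verification over a compact box of thresholds together with tail estimates, in the spirit of the paper's algorithmic analysis.

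**Step 3 (tightness).** Taking $f(i,x)=\mathsf{MAJ}(x)$ for every $i$, each pair is cut with probability $\Pr[\mathsf{MAJ}(x)\ne\mathsf{MAJ}(y)]\to\arccos(\rho^*)/\pi$ by Sheppard's formula. Since $\mathsf{MAJ}$ has influences $O(1/\sqrt R)$, this shows the bound is attained up to $o(1)$.
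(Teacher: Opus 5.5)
Your Steps 1 and 3 follow the paper's route. Per pair, Mossel's two-function bound (Theorem~\ref{thm:stab}) gives $\Pr[f_i(x)\neq f_j(y)]\le \mu_i+\mu_j-2\Lambda_{\rho^*}(\mu_i,\mu_j)+o(1)={\sf Cut}_{\rho^*}(\mu_i,\mu_j)+o(1)$, and ${\sf MAJ}$ with Sheppard's formula gives tightness. One correction in Step 1. As you literally define it, $\underline{\Gamma}_\rho(\mu,\nu)$ uses $\rho$-correlated Gaussians ($\rho<0$) with opposite halfspaces, and that equals $\Lambda_{|\rho|}(\mu,\nu)$. This is too large: at $\mu=\nu=1/2$ it exceeds $1/4$, while ${\sf MAJ}$ gives $\frac14+\frac{\arcsin\rho}{2\pi}<\frac14$. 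The correct lower bound is $\Lambda_\rho(\mu,\nu)$: opposite halfspaces under $|\rho|$-correlation, or equivalently aligned halfspaces under $\rho$-correlation. That is what your function $D$ actually computes, so nothing downstream changes.

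The genuine gap is Step~2. You leave it as an aim with an interval-arithmetic fallback, and the proposed reduction to ``all $t_i$ equal, or symmetric about $0$'' is not justified.

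\emph{The missing idea.} Each of your stationarity equations has exactly two summands, and $\Phi(a)+\Phi(b)=1$ iff $a=-b$. The paper phrases this as $G(z)=\Phi(z)-\Phi(-z)$ being odd and strictly increasing. So the conditions read $K_{ij}+K_{ik}=0$ with $K_{ij}=(t_j-\rho^* t_i)/\sqrt{1-\rho^{*2}}$. That is the \emph{linear} system $t_j+t_k=2\rho^* t_i$ for all $\{i,j,k\}=\{1,2,3\}$.

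\emph{Interior.} The system's matrix is $J-(1+2\rho^*)I$, with eigenvalues $2-2\rho^*$ and $-(1+2\rho^*)$ (twice). Both are nonzero because $\rho^*\neq-1/2$. Hence $t=0$ is the unique interior critical point, with value $\arccos(\rho^*)/\pi$. No one-variable reduction or numerics is needed.

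\emph{Boundary.} The faces are two-dimensional, not one-dimensional, but they are easy. If $\mu_1=0$, the sum over the three pairs is $2(\mu_2+\mu_3)-2\Lambda_{\rho^*}(\mu_2,\mu_3)$. Since $\Lambda_{\rho}(a,b)\ge a+b-1$, this is at most $2\min\{\mu_2+\mu_3,1\}\le 2$. The face $\mu_1=1$ follows from the symmetry $\mu\mapsto 1-\mu$. (The paper instead uses the partial derivative $-1+G(\cdot)$ to show boundary maximizers are permutations of $(0,1,\mu)$.) So the boundary contributes at most $2/3$ after averaging, and $2/3<\arccos(\rho^*)/\pi$ is exactly where $\rho^*<-1/2$ enters.
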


We point out that our actual lemma is more general, and completely characterizes the tripartite noisy-cube gadget for all $\rho \in (-1,0]$. 
Establishing the above forms the technical core of our soundness analysis for Theorem \ref{thm:3-col}, and we give a brief outline of its proof in Section \ref{sec:stab-inf}. The above combined with the completeness guarantee of $\cG^R_{\rho^*}$ immediately establishes the properties needed to prove Theorem \ref{thm:3-col}.

\begin{remark}
	It is interesting to compare the above tripartite isoperimetry question to other multipartite variants of Noise-Cube/Gaussian space isoperimetry questions that arise naturally in the context of other graph-cut problems. A particularly relevant problem in this context is the {\em Plurality-Is-Stablest (PIS)} conjecture~\cite{KKMO07,MOO10} where the objective is to characterize the quasirandom $q$-partite cuts that maximize the fraction of edges going across the sets (for e.g., see the Max-$q$-Cut problem~\cite{Frieze-Jerrum}). Making progress on PIS has been quite challenging, and in particular, it has been shown that even the $q = 3$ case behaves quite differently from $q = 2$ (i.e., MIS)~\cite{HMN16}. Hence, it is a bit surprising that the generalization of MIS considered in our paper admits a relatively clean resolution.
\end{remark}

\subsection{Max-Cut with Independent Sets I}

Now we turn our attention to Theorem~\ref{thm:iset-2}. For this reduction, we create a family of gadgets $\{\cG^{\alpha}\}_{\alpha \in [0,1]}$, which are a variation of the tripartite noisy-cube gadget; we describe the gadget informally below. For any $\alpha \in [0,1]$, consider the graph $\cG^{\alpha}$ which is defined as follows:

{\bf Vertex Set}. The vertex set of the graph is $V_1 \cup V_2$ where each $V_i = \{(i,x):x \in \{0,1\}^R\}$ is a copy of the vertex set of the hypercube.

{\bf Edge Weights}. The edges in $\cG^\alpha$ are defined as follows: with probability $\alpha$, we draw a $\rho^*$-correlated pair across $V_1 \times V_2$, and with probability $1 - \alpha$, we draw a $\rho^*$-correlated pair in $V_2 \times V_2$.

Given the above gadget, we can easily verify the following properties:
\begin{itemize}
	\item $V_1$ is an independent set in $\cG^\alpha$. Furthermore, exactly $\alpha$-fraction of edges cross the cut $(V_1,V_2)$.
	\item The dictator set consisting of the strings $\{(i,x): x(1) = 1,i = 1,2\}$ cut $c^* = \frac{1 - \rho^*}{2}$ fraction of edges.
\end{itemize}

As before, establishing the soundness guarantee again requires more work. Following the proof of Lemma \ref{lem:stab-inf} with some appropriate modifications, we are able to show the following guarantee:
\begin{itemize}
	\item If $\alpha \leq s^*$, then quasirandom functions cut at most $s^* + o(1)$-fraction of edges in $\cG^\alpha$.
	\item If $\alpha > s^*$, we end up in the degenerate case where the cut $(V_1,V_2)$ -- which cuts exactly $\alpha$-fraction of the edges -- itself achieves the optimal cut value for quasirandom cuts.
\end{itemize}
Hence by choosing $\alpha = s^*$, we get a gadget which can be used to establish Theorem \ref{thm:iset-2}.

\subsection{Max-Cut with Independent Sets II}               \label{sec:iset-overview}

Finally, let us discuss the key ideas for proving Theorem \ref{thm:i-set}. As a first step, it is useful to take a moment to observe the qualitative differences between Theorem \ref{thm:i-set} and Theorem \ref{thm:iset-2}. Note that in the setting  of Theorem \ref{thm:iset-2}, the setting $\alpha = s^*$ represents a natural limit on the soundness guarantees of the family of instances $\{\cG^\alpha\}_{\alpha}$. This is due to the following reason: since the independent set is known to the algorithm, setting $\alpha > s^*$ immediately breaks the soundness guarantee as the algorithm can just output the cut corresponding to the planted partition $(V_1,V_2)$. Therefore, in order to increase the fraction of edges incident on the independent set while maintaining the same soundness guarantee, we will need a gadget that can ``hide'' the independent set from the algorithm. 

To that end, let us forget the task of proving hardness of Max-Cut for now, and revisit gadgets which can hide independent sets effectively. Such combinatorial objects have been studied in previous works~~\cite{KR08,BK09}, and they are broadly defined in the following way. For any $\alpha \in [0,1]$, consider the following joint distribution $\mu_{\alpha}$ over the Boolean random variables:
\begin{figure}[ht!]
	\begin{center}
		\begin{tabular}{| c | c | c |}
			\hline
			$x$ & $y$ & Probability \\ 
			\hline
			$0$ & $0$ & $1 - \alpha$ \\  
			$0$ & $1$ & $\alpha/2$ \\
			$1$ & $0$ & $\alpha/2$ \\
			$1$ & $1$ & $0$ \\    
			\hline
		\end{tabular}
	\end{center}
	\caption{The Distribution $\mu_{\alpha}$}
	\label{fig:dist}
\end{figure}

Let $\cI^R_{\alpha}$ be the weighted graph on $\{0,1\}^R$, where edges in $\{0,1\}^R \times \{0,1\}^R$ are distributed as the $R$-fold product measure $\mu^{R}_{\alpha}$. Due to the definition of distribution $\mu_{\alpha}$, the graph $\cI^R_{\alpha}$ will satisfy the following properties:
\begin{itemize}
	\item The set $I := \{x \in \{0,1\}^R : x(1) = 1\}$ is an independent set in $\cI^R_{\alpha}$.
	\item Exactly $\alpha$-fraction of edges are incident on $I$ in $\cI^R_{\alpha}$.
\end{itemize}
Furthermore, the spectral profile of $\cI^R_{\alpha}$ ensures that quasirandom functions cannot certify the existence of large independent sets in $\cI^R_{\alpha}$. Due to the above properties, these gadgets have been quite useful for showing hardness results for the {\sc Max Independent Set} problem~\cite{KR08,BK09}. However, since our objective is to show hardness results for Max-Cut, how do we re-purpose the above gadget to fit our requirements?

The answer to this question is: we simply consider the tensor-product of the $\rho^*$-noisy cube $H^R_{\rho^*}$, and the above gadget $\cI^R_{\alpha}$, i.e., $H^R_{\rm fin} = H^R_{\rho^*} \otimes \cI^R_{\alpha}$. Due to the product structure, the completeness properties of the two gadgets are inherited simultaneously by $H^R_{\rm fin}$ i.e., one can immediately show: 
\begin{itemize}
	\item There exists an independent set $I \subset H^R_{\rm fin}$ that is incident on an $\alpha$-fraction of the edges in $H^R_{\rm fin}$.
	\item There exists a dictator cut $S \subset H^R_{\rm fin}$ which cuts $\frac{1 - \rho^*}{2}$ fraction of edges.
\end{itemize} 

However, as one would expect, arguing the soundness guarantee is trickier as our gadget now is a composition of two different combinatorial objects, with potentially conflicting behaviors. Specifically, this results in the following issue in the soundness analysis: if the parameter $\alpha$ is not chosen carefully, one can use the correlation structure of $\cI^R_{\alpha}$ to cheat and certify a large Max-Cut in $H^R_{\rm fin}$ using quasirandom functions. For instance, if we set $\alpha$ to be too large, say $\alpha = 1 - \epsilon$, then we can check that the graph $\cI^R_{\alpha}$ is almost bipartite, and this almost-bipartiteness is inherited by $\cH^R_{\rm fin}$ as well, thus breaking its soundness guarantee. On the other hand, we would like $\alpha$ to be as large as possible since we want to improve on the independent-set size guarantee of Theorem \ref{thm:iset-2}. 

So how do we choose the value of $\alpha$? The key idea here is that we want to choose $\alpha$ in a way such that its correlation structure of $\cI^R_{\alpha}$ doesn't overwhelm the structure of $H^R_{\rho^*}$. In particular, consider the base graph $H_{\rm fin} = H_{\rho^*} \otimes \cI_{\alpha}$ on $\{0,1\}^2$. One can verify that its eigenvalues are $\{1, \rho(\mu_\alpha),\rho^*,\rho(\mu_\alpha)\cdot\rho^*\}$ where $\rho(\mu_{\alpha})$ is the correlation among the variables in $\mu_\alpha$. Now, if we want the largest quasirandom cuts in our gadget to be determined entirely by the structure of $H^R_{\rho^*}$, we need to ensure that the smallest eigenvalue of the base graph $H_{\rm fin}$ is $\rho^*$. This in turn can be ensured by choosing $\alpha$ such that $|\rho(\mu_{\alpha})| \leq |\rho^*|$. Using elementary computations, we can check $\alpha \approx 0.81597$ is the value\footnote{We point the readers to Corollary \ref{corr:dist} for the precise choice of our parameters and more details regarding this step.} for which $|\rho(\mu_{\alpha})|$ matches $|\rho^*|$, and hence we use this setting of $\alpha$ for our reduction.

Having ensured the required eigenvalue conditions, we now proceed to give a brief outline of the soundness analysis of the reduction. Firstly, by analyzing the correlation structure of the test distribution, we first argue that quasirandom functions can only use the correlation structure of $H^R_{\rho^*}$ to certify the largest cuts. We then use Mossel's Majority-Is-Stablest theorem for general domains~\cite{Mossel10} to bound the soundness value of the gadget. While this step largely follows \cite{KKMO07}'s soundness analysis, there are still a few differences which we highlight below:

\begin{itemize}
	\item Since our reduction needs to maintain perfect completeness, we cannot introduce additional noise in the test to dampen the high-degree terms in the test. Instead, this is handled by the observation that since our test distribution is connected, we can introduce {\em implicit noise} in the soundness analysis. 
	
	\item Furthermore, in our dictatorship test, as the long code tables are indexed over $(\{0,1\}\times\{0,1\})^R$ as opposed to $\{0,1\}^R$, a priori, we cannot begin the soundness analysis with the assumption that the long code table is balanced. This issue is instead handled using a {\em convexity argument} after Majority-Is-Stablest is applied.
\end{itemize}

\subsection{Overview of the Algorithm}

We now briefly discuss the key ideas underlying our algorithm for Theorem \ref{thm:algo}. Recall that in this setting, we are given a graph $G = (V,E)$ such that it contains an independent set $I \subseteq V$ satisfying $|E[I,I^c]| \geq \alpha|E|$ where $\alpha = \alpha^* + \epsilon$ for some constant $\epsilon > 0$. Now, note that the cut $|E[I,I^c]|$ itself is quite large, so if we could output that cut, or some close approximation to it, we might be able to get better than $\alpha_{GW}$-approximation. For this, our first idea is to compete with respect to the ``planted cut'' $(I,I^c)$, and ignore all other maximum cuts altogether. 

Now, how do we find a (good enough approximation) to the planted cut\footnote{We call the cut which has the large independent set on one side as the planted cut.}? To that end, let us recall the $\{0,1\}$-SDP relaxation for Max-Cut:
\[
\begin{array}{ll}
	\text{max} & \frac{1}{|E|}\sum_{\{i, j\} \in E} \|u_i - u_j\|^2_2 \\
	\text{s.t.} & \forall i \in V, 0 \leq \|u_i\|^2_2 \leq 1\\
\end{array}
\]
Now, since we want to target the cut corresponding to $(I,I^c)$, we want to further enforce the constraint that any feasible integral solution to the above program must correspond to an independent set. In SDP relaxations, the standard way to enforce this is to add the orthogonality constraint $\langle u_i,u_j \rangle = 0$, for every edge $\{i,j\} \in E$. Subject to this constraint, we see that for any edge $\{i,j\} \in E$, the corresponding term in the objective simplifies to $\|u_i - u_j\|^2_2 = \|u_i\|^2_2 + \|u_j\|^2_2$. As is standard, we also add a $0$-vector $u_0$ to align the SDP relaxation. This result in the following SDP relaxation:
\[
\begin{array}{ll}
	\text{max} & \frac{1}{|E|}\sum_{\{i, j\} \in E} \|u_i\|^2 + \|u_j\|^2_2 \\
	\text{s.t.} & \forall i \in V, 0 \leq \|u_i\|^2_2 \leq 1\\
				& \forall i \in V, \langle u_i,u_0 \rangle = \|u_i\|^2_2 \\
				& \forall \{i,j\} \in E, \langle u_i,u_j \rangle = 0
\end{array}
\]
Overall, this final relaxation can be thought of as a combination of the $\{0,1\}$-version of the Goemans-Williamson relaxation, and the Lovasz-Theta relaxation for finding orthogonal embeddings of graphs. 

{\bf Rounding}. Now, as a next step, we solve the above SDP relaxation, at which point we are left with a vector SDP solution, which still needs to be rounded to a good integral solution. To that end, we employ the $\THRESH^-$ family of rounding schemes, proposed by Lewin, Livnat, and Zwick~\cite{lewin2002improved}. In particular, for a threshold function $t:[0,1]\to[0,1]$, the $\THRESH^-$ rounding scheme is defined as follows. For every $i \in V$, let $u^\perp_i$ denote the (normalized) component of $u_i$ that is orthogonal to $u_0$, and let $\mu_i = \langle u_i,u_0 \rangle$. Then, we do the following: 

\begin{itemize}
	\item Sample a random Gaussian vector $g \sim N(0,1)^n$.
	\item Let $S := \{i \in V : \langle g,u^\perp_i \rangle \leq \Phi^{-1}(t(\mu_i))\}$.
	\item Output the cuts $(S,S^c)$. 
\end{itemize}

While this family of schemes is incredibly versatile, it comes with the complication that for most non-trivial choices of the threshold function $t(\cdot)$, the approximation guarantee doesn't admit a closed form solution, and rigorous verification of its guarantees can be quite complicated.
To that end, most of the work goes into designing the ``right'' threshold function for which we can achieve the desired guarantee.

{\bf Design of Threshold Function}. In order to design our threshold function, our first key observation is to observe that given an edge $(i,j)$ such that $\mu_i = \mu_j = \alpha/2$, our rounding function must simulate hyperplane rounding on that edge i.e., $t(\alpha/2) = 1/2$. Furthermore, assuming certain natural first-order conditions on the soundness-curve of this rounding scheme, we are able to deduce certain additional controls points for the threshold function. Our final choice of the threshold function is a piece-wise linear curve that interpolates the values at these control points (see Table \ref{table:t}).

{\bf Analysis}. Given the above rounding scheme, we then proceed to give a rigorous analysis of its approximation guarantee. This is done by  identifying carefully chosen certificates that (a) imply better than $\alpha_{GW}$-approximation guarantee of the rounding scheme and (b) can be verified using interval arithmetic. We identify these certificates using a careful analysis of the derivatives the soundness-curve. We then verify these certificates using interval arithmetic, which in turn establishes the approximation guarantees claimed in Theorem \ref{thm:algo}.

\subsection{Proof Sketch of Lemma \ref{lem:stab-inf}}		\label{sec:stab-inf}

Recall that in the context of this lemma our goal is to characterize the quasirandom functions $f:V_1 \cup V_2 \cup V_3 \to \{0,1\}$ which cut the maximum number of edges in the tripartite noise-graph $\cG^R_{\rho^*}$. For ease of notation, throughout this section we will denote $\rho := \rho^*$. Our proof proceeds along the following sequence of steps outlined below.

{\bf Step 1: Arithmetization}. Let us fix such an $f$, and for every $i \in \{1,2,3\}$, let $f_i:V_i \to \{0,1\}^R$ be the cut corresponding to the $i^{th}$-partition. Now, the edges of $\cG^R_{\rho}$ are distributed as a $\rho$-correlated pair of strings $(x,y)$ that goes across $(V_i,V_j)$ for a random pair $(i,j)$. Hence, we can express the fraction of edges cut by $f$ in $\cG^R_{\rho}$ as:
\begin{equation}			\label{eqn:e1}
\Pr_{e \sim \cG^R_{\rho}}\Big[\textit{$e$ is cut}\Big]
= \Ex_{(i,j) \sim [3]}\Ex_{x \underset{\rho}{\sim} y}\Big[{\sf Cut}(f_i(x),f_j(y))\Big].
\end{equation}
where ${\sf Cut}(a,b) := a + b - 2ab$ is the arithmetization of the cut-function over $\{0,1\}$, and $(i,j) \sim [3]$ denotes the draw of a random pair of indices from $\{1,2,3\}$.

{\bf Step 2: Moving to Gaussian Space}. Since $f$ is quasirandom, the individual sets $f_1,f_2,f_3$ are quasirandom as well. Hence, using the {\it two-function Gaussian stability bound}~\cite{DRM06,MOO10}, we can upper bound the RHS of \eqref{eqn:e1} as:
\begin{equation}			\label{eqn:e2}
	\Ex_{(i,j) \sim [3]}\Ex_{x \underset{\rho}{\sim} y}\Big[{\sf Cut}(f_i(x),f_j(y))\Big]
	\leq \Ex_{(i,j) \sim [3]}\Ex_{x \underset{\rho}{\sim} y}\Big[{\sf Cut}_\rho(\mu_i,\mu_j)\Big]  + o(1),
\end{equation}	
where $\mu_i = \Ex[f_i]$ is the volume of the cut within the $i^{th}$ partition, and ${\sf Cut}_\rho(\cdot,\cdot)$ denotes the $\rho$-correlated Gaussian cut function\footnote{Formally, ${\sf Cut}_\rho(a,b) = \Pr_{g_1 \underset{\rho}{\sim} g_2}\left[h_{a}(g_1) \neq h_{b}(g_2)\right]$ where $h_a,h_b$ are halfspaces of Gaussian measure $a$ and $b$ respectively.}. Hence, we can shift our objective to that of characterizing the maximizers of the following optimization problem in the Gaussian space:
\begin{equation}			\label{eqn:e3}
	\max_{(\mu_1,\mu_2,\mu_3) \in [0,1]^3} \Ex_{(i,j) \sim [3]} \Big[{\sf Cut}_\rho(\mu_i,\mu_j)\Big],
\end{equation}

In particular, the above corresponds to the following tripartite isoperimetric problem in the Gaussian space: consider the tripartite graph $\cQ_\rho$ on $(\mathbb{R})_1 \cup (\mathbb{R})_2 \cup (\mathbb{R})_3$, where edges correspond to a $\rho$-correlated Gaussians on $(\mathbb{R})_i \times (\mathbb{R})_j$ for a random pair of indices $(i,j)$. In this graph, we would like to identify halfspaces\footnote{Again, here $h_\mu$ denotes the halfspace on $\mathbb{R}$ of Gaussian measure $\mu$.} $h_{\mu_1},h_{\mu_2},h_{\mu_3}$ such that the set $(h_{\mu_1} \cup h_{\mu_2} \cup h_{\mu_3})$ cuts the maximum weight of edges in $\cQ_\rho$. Identifying these optimal thresholds $\mu_1,\mu_2,\mu_3$ forms the technical core of our proof, and we briefly outline the argument below.

Let $\bmu^* := (\mu^*_1,\mu^*_2,\mu^*_3)$ be an optimizer for the above problem. We will proceed by considering two cases, based on whether $\mu^*$ lies in the interior of $[0,1]^3$, or its surface.

{\bf Case 1: $\bmu^*$ is in interior}. For brevity,  for the rest of the description we will denote  $F(\mu_1,\mu_2,\mu_3) = \Ex_{(i,j) \sim [3]}[{\sf Cut}_\rho(\mu_i,\mu_j)]$. Since $\bmu^*$ is in the interior, that implies that all the partial derivatives of $F$ must evaluate to $0$ at $\bmu^*$. Next, by manipulating the partial derivative expressions, we observe that we can write the partial derivative w.r.t. $\mu_1$ (and similarly, w.r.t. $\mu_2,\mu_3$) as:
\[
\frac{\partial F}{\partial \mu_1} = G(K(\mu_1,\mu_2)) + G(K(\mu_1,\mu_3)),
\]
for some functions $G$ and $K$ which satisfy the following:
\begin{itemize}
	\item $G:\mathbb{R} \to \mathbb{R}$ is an odd and strictly increasing function.
	\item $K:\mathbb{R}^2 \to \mathbb{R}$ is defined as\footnote{$\Phi^{-1}(\cdot)$ is the inverse of the Gaussian CDF function.}:
	\[
		K(x,y) := \frac{\Phi^{-1}(y) - \rho\cdot\Phi^{-1}(x)}{\sqrt{1 - \rho^2}}.
	\]
\end{itemize}
Now, by using the first order optimality conditions $\frac{\partial F}{\partial \mu_i} = 0$ for $i = 1,2,3$, and using the properties of the function $G$, we can further deduce that: 
\[
K(\mu_i,\mu_j) + K(\mu_j,\mu_k) = 0,	\qquad\qquad \forall i \neq j \neq k \in \{1,2,3\}.
\]
The above, along the with definition of $K(\cdot,\cdot)$ yields the following set of linear equations:
\begin{equation}			
	\begin{bmatrix}
		-2 \rho & 1 & 1 \\
		1 & -2\rho & 1 \\
		1 & 1 & - 2 \rho 
	\end{bmatrix}
	\begin{bmatrix}
		\Phi^{-1}(\mu_1)  \\  \Phi^{-1}(\mu_2)  \\ \Phi^{-1}(\mu_3)
	\end{bmatrix}
	= 
	\begin{bmatrix}
		0  \\  0  \\ 0
	\end{bmatrix}
	.
\end{equation}
For $\rho \in (-1,0) \setminus\{-1/2\}$, the above coefficient matrix is non-singular, and hence the only solution to the above system corresponds to $\Phi^{-1}(\mu_i) = 0$ for all $i = 1,2,3$, which in turn implies that $\mu_1 = \mu_2 = \mu_3 = 1/2$. In this case, the standard computation yields that the corresponding optimal value is 
\[
F(\bmu^*) = {\sf Cut}_\rho(1/2,1/2) = \frac{\arccos \rho}{\pi}.
\]

{\bf Case (ii): $\bmu^*$ is on the surface}. In this case, following a similar and relatively simpler argument we can show that all the optimizers of $F$ on the boundary must be (up to permutations) of the form $(0,1,x)$ for any $x \in (0,1)$, i.e., these triples correspond to the degenerate cuts where two of the partitions are placed on the opposite sides, and the partition within the third set can be chosen arbitrarily. Furthermore, for any such degenerate cut, the optimal value is $F(\bmu^*) = 2/3$.

{\bf Putting Things Together}.  The proof is concluded by observing that for $\rho = \rho^* = -0.689$, the maximizer from the interior, i.e., $(1/2,1/2,1/2)$, yields a larger value of the objective, which is $(\arccos \rho^*)/\pi$, which concludes the proof.

\begin{remark}[Proof for general $-1 < \rho \leq 0$]
	Our proof of Lemma \ref{lem:stab-inf} for general values of $\rho$ proceeds exactly as above; the only difference is that in the end we identify that for all $\rho < -1/2$, the balanced cuts $(1/2,1/2,1/2)$ from Case (i) yields a larger value, whereas for $\rho \in [-1/2,0]$, the degenerate cuts from Case (ii) yield a larger cut-value.
\end{remark}

\section{Preliminaries}

We introduce some basic notation and concepts that will be used in the paper. For an integer $i$, we will use $[i]$ to denote the set $\{1,2,\ldots,i\}$. For a finite set $\Omega$, and a vector $x \in \Omega^R$, we will use the notation $x(i)$ to denote the $i^{th}$ entry of $x$. We will also define the action of a permutation on a vector. Given a vector $x \in \Omega^R$ and a permutation $\pi:[R] \to [R]$, we use $\pi(x)$ to denote the vector $(x(\pi(1)),x(\pi(2)),\ldots,x(\pi(R)))$.

{\bf Weighted Graphs}. We will often work with weighted graphs, for which we setup some conventions. Throughout, given a weighted graph $G = (V,E,w)$, we will assume that the weight function $w:E \to \mathbbm{R}_{\geq 0}$ is non-negative and it sums to $1$ i.e., $w$ induces a distribution on the set of edges $E$. Hence, drawing an edge $e \sim G$ corresponds to sampling an edge from the distribution induced by the weight function. With this convention, for any cut $(S,S^c)$ of $G$, the quantity $w(S,S^c)$ represents the weight of the edges crossing the cut. Furthermore, if $G$ is a Max-Cut instance, then we use ${\sf opt}(G)$ to denote the maximum weight of a cut in $G$ i.e.,
\[
{\sf opt}(G) := \max_{S \subseteq V} w(S,S^c).
\]

\subsection{Fourier Analysis}

We review some elementary Fourier analytic concepts and tools from \cite{OD14} that will be used in the analysis of our reductions. Given a finite set $\Omega$, and a measure $\mu$ on $\Omega$, we refer to the pair $(\Omega;\mu)$ as a finite probability space. We use $L_2(\Omega;\mu)$ to denote the space of real-valued functions $f:\Omega \to \mathbbm{R}$ that square-integrable with respect to measure $\mu$. The space $L_2(\Omega;\mu)$ is endowed with the natural inner product
\[
\langle f, g \rangle \defeq \Ex_{\omega \sim \mu}\Big[f(\omega)g(\omega)\Big]
\]
It is well-known that $L_2(\Omega; \mu)$ admits an orthonormal basis $\{\phi_0 \equiv 1, \phi_1,\ldots,\phi_{\ell - 1}\}$ with $\ell:= |\Omega|$, known as the {\em Fourier basis}. Using this basis, one can express any function $f \in L_2(\Omega;\mu)$ as
\[
f = \sum_{i = 0}^{\ell - 1} \wh{f}(i)\phi_i,
\]
where $\{\wh{f}(i)\}_i$ are called the Fourier coefficients of $f$. 

{\bf Product Spaces}. Given a probability space $(\Omega;\mu)$, for any $R > 1$, one can consider its $R$-fold product $(\Omega^R;\mu^R)$. It is easy to check that then the set of functions $\{\phi_\beta\}_{\beta \in \mathbbm{Z}^R_{\ell}}$ form a Fourier basis for $L_2(\Omega^R;\mu^R)$, where for any $\beta$, we have $\phi_\beta(x) := \prod_{i \in [R]} \phi_{\beta(i)}(x(i))$. Consequently, any function $f \in L_2(\Omega^R;\mu^R)$ admits the Fourier expansion:
\[
f = \sum_{\beta \in \mathbbm{Z}^R_{q}} \wh{f}(\beta)\phi_{\beta}, 
\]
where $\{\wh{f}(\beta)\}_{\beta}$ are the Fourier coefficients of $f$. We recall a couple of useful Fourier analytic facts.

\begin{fact}
	For any function $f \in L_2(\Omega^R;\mu^R)$,
	\begin{itemize}
		\item $\wh{f}(0) = \Ex_{x \sim \mu^R}[f(x)]$.
		\item {\bf Parseval's Identity}.
		\[
		\sum_{\beta \in \mathbbm{Z}^R_{q}}\wh{f}(\beta)^2 = \|f\|^2_2,
		\]
		where $\|\cdot\|^2_2$ norm is defined in $L_2(\Omega^R;\mu^R)$. 
	\end{itemize}
\end{fact}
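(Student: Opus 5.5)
Both items follow from the orthonormality of the product basis $\{\phi_\beta\}_{\beta}$, so I would establish that first. For $\beta,\gamma \in \mathbbm{Z}^R_{\ell}$ (the index set written $\mathbbm{Z}^R_q$ in the statement, with $q=\ell=|\Omega|$), the coordinates of $x \sim \mu^R$ are independent. The inner product therefore factorizes:
\[
\langle \phi_\beta,\phi_\gamma\rangle = \Ex_{x\sim\mu^R}\Big[\prod_{i\in[R]}\phi_{\beta(i)}(x(i))\,\phi_{\gamma(i)}(x(i))\Big] = \prod_{i\in[R]}\langle \phi_{\beta(i)},\phi_{\gamma(i)}\rangle_{L_2(\Omega;\mu)} = \prod_{i\in[R]}\one[\beta(i)=\gamma(i)],
\]
where the last equality uses the orthonormality of the single-coordinate basis $\{\phi_0,\ldots,\phi_{\ell-1}\}$. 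This product equals $\one[\beta=\gamma]$, so the family is orthonormal.

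Next I would argue that the family spans $L_2(\Omega^R;\mu^R)$. An orthonormal family is linearly independent, and here it has $\ell^R$ members. Assuming $\mu$ has full support on $\Omega$ (as is implicit in $L_2(\Omega;\mu)$ having an $\ell$-element orthonormal basis), the space $L_2(\Omega^R;\mu^R)$ is the space of all real functions on $\Omega^R$, which has dimension $\ell^R$. Hence the family is a basis, and the expansion $f=\sum_\beta \wh f(\beta)\phi_\beta$ holds with $\wh f(\beta)=\langle f,\phi_\beta\rangle$.

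For the first item, note that $\phi_0(x)=\prod_i \phi_0(x(i))\equiv 1$, where the $0$ on the left is the all-zeros index. Therefore $\wh f(0)=\langle f,\phi_0\rangle=\Ex_{x\sim\mu^R}[f(x)]$.

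For Parseval, I would expand by bilinearity and apply orthonormality:
\[
\|f\|_2^2=\Big\langle \sum_\beta \wh f(\beta)\phi_\beta,\sum_\gamma \wh f(\gamma)\phi_\gamma\Big\rangle=\sum_{\beta,\gamma}\wh f(\beta)\wh f(\gamma)\one[\beta=\gamma]=\sum_\beta \wh f(\beta)^2 .
\]

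There is no real obstacle. The only points needing care are the factorization of the inner product, which relies on the product structure of $\mu^R$, and the dimension count, which is where full support of $\mu$ is used.
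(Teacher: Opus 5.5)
Your proof is correct. The paper gives no proof of its own and simply recalls this fact from \cite{OD14}; your argument is the standard one used there: orthonormality of the product basis from independence of coordinates under $\mu^R$, a dimension count using full support of $\mu$, then $\phi_{\mathbf{0}}\equiv 1$ and bilinearity. Your remark that full support is already implicit in the paper's assumption of an $\ell$-element orthonormal basis for $L_2(\Omega;\mu)$ is correct.
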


{\bf Influence}. For a function $f \in L_2(\Omega^R;\mu^R)$, the influence of its $i^{th}$ coordinate, denoted by $\Inf{i}{f}$ is its expected conditional variance along the $i^{th}$-variable, i.e.,
\[
\Inf{i}{f} = \Ex_{x_{\neq i} \sim \{0,1\}^{[R]\setminus \{i\}}}\left[{\sf Var}_{x_i}\Big[f(x_i,x_{\neq i})\Big]\right].
\]

{\bf Correlated Pairs and Noise Operator}. For a vector $x \in \{0,1\}^R$, and a parameter $\rho \in [-1,1]$, a $\rho$-correlated copy of $x$ is a random vector $y \sim \{0,1\}^R$ which is drawn from the following distribution. For every $i \in [R]$, we sample the bit $y(i)$ independently as follows:
\[
y(i) = 
\begin{cases}
	x(i)	& \text{ with probability } \frac{1 + \rho}{2},	\\
	1 - x(i)	& \text{ with probability } \frac{1 - \rho}{2}
\end{cases}
\]
We use the notation $y \underset{\rho}{\sim} x$ denote that $y$ is a $\rho$-correlated copy of $x$. For any $\rho \in [-1,1]$, we let $T_\rho$ denote the {\em $\rho$-correlated noise operator}, which is a stochastic operator on $L_2(\{0,1\}^R)$ defined as follows. For any function $f:\{0,1\}^R \to \mathbbm{R}$, the function $T_\rho f:\{0,1\}^R \to \mathbbm{R}$ is defined as
\[
T_\rho f(x) := \Ex_{y \underset{\rho}{\sim} x}\big[f(y)\big].
\]

\subsection{Gaussian Noise Stability } \label{sec:corr-def}

For $\rho \in [-1,1]$, a pair of Gaussian random variables $(g_1,g_2)$ is said to be {\bf $\rho$-correlated} if 
\[
(g_1,g_2) \sim N\left(
\begin{bmatrix}
	0 \\ 0
\end{bmatrix},
\begin{bmatrix}
	1 & \rho \\
	\rho & 1 
\end{bmatrix}
\right)
\] 
We will frequently use the notation $g_1 \underset{\rho}{\sim} g_2$ to denote that $(g_1,g_2)$ are $\rho$-correlated. 

Now, for $\rho \in [-1,1]$ and  $\mu_1,\mu_2 \in [0,1]$, we use $\Lambda_\rho(\mu_1,\mu_2)$ to denote the Gaussian quadrant probability function, which is defined as
\begin{equation}        \label{eqn:def:lambda-rho}
\Lambda_\rho(\mu_1,\mu_2) := \Pr_{g_1 \underset{\rho}{\sim} g_2}\Big[g_1 \leq \Phi^{-1}(\mu_1), g_2 \leq \Phi^{-1}(\mu_2)\Big],
\end{equation}
where $\Phi^{-1}(\cdot)$ denotes the inverse Gaussian CDF function. The following properties of $\Lambda_\rho$ are well-known.

\begin{fact}[\cite{austrin-sat-full}]	\label{fact:convex}
	For any $\rho \in (-1,1)$, the map $x \mapsto \Lambda_\rho(x,x)$ is convex.
\end{fact}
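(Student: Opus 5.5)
The plan is to reparametrize by the threshold $t=\Phi^{-1}(x)$, compute the first derivative of $x\mapsto\Lambda_\rho(x,x)$ in closed form, and check that this derivative is nondecreasing in $x$; a differentiable function with nondecreasing derivative is convex. Fix $\rho\in(-1,1)$ and write
\[
L(t):=\Pr_{g_1 \underset{\rho}{\sim} g_2}\big[g_1\le t,\ g_2\le t\big],
\]
so that $\Lambda_\rho(x,x)=L(\Phi^{-1}(x))$ for $x\in(0,1)$. Let $\phi$ denote the standard Gaussian density.

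\textbf{Step 1: differentiate in $t$.} By symmetry of the pair $(g_1,g_2)$, moving the common threshold $t$ contributes two equal boundary terms. Each term is the density of $g_1$ at $t$ times the conditional probability that $g_2\le t$ given $g_1=t$. Since $g_2 \mid g_1=t \sim N(\rho t,1-\rho^2)$, this gives
\[
L'(t)=2\,\phi(t)\,\Phi\!\left(\frac{t-\rho t}{\sqrt{1-\rho^2}}\right)=2\,\phi(t)\,\Phi(c\,t),\qquad c:=\sqrt{\tfrac{1-\rho}{1+\rho}}>0 .
\]

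\textbf{Step 2: chain rule.} Since $\frac{dt}{dx}=1/\phi(\Phi^{-1}(x))$, we get
\[
\frac{d}{dx}\Lambda_\rho(x,x)=2\,\Phi\big(c\,\Phi^{-1}(x)\big).
\]

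\textbf{Step 3: monotonicity.} Both $\Phi$ and $\Phi^{-1}$ are strictly increasing, and $c>0$. Hence the derivative in Step 2 is strictly increasing on $(0,1)$, so the map is convex on $(0,1)$.

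\textbf{Step 4: endpoints.} The map extends continuously to $[0,1]$, taking the value $0$ at $x=0$ and $1$ at $x=1$. Convexity on the open interval together with continuity at the endpoints gives convexity on all of $[0,1]$.

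The only point requiring care is the boundary-term computation in Step 1. It is justified by writing $L(t)=\int_{-\infty}^{t}\phi(s)\,\Phi\big((t-\rho s)/\sqrt{1-\rho^2}\big)\,ds$ and differentiating under the integral. This yields the boundary term $\phi(t)\Phi(ct)$ plus an integral equal to $\Pr[g_1\le t,\ g_2\in dt]/dt$, which by symmetry is the same quantity.
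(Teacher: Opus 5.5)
Your proof is correct. The paper itself gives no proof of this fact; it only points to Austrin (below Corollary D.6 of \cite{austrin-sat-full}). Your Steps 1--2 are exactly what the paper's own Lemma~\ref{lem:dv} gives when applied to both coordinates along the diagonal, as in the chain-rule computation~\eqref{eqn:iset-dv3}. That yields $\frac{d}{dx}\Lambda_\rho(x,x) = 2\Phi\bigl(\sqrt{\tfrac{1-\rho}{1+\rho}}\,\Phi^{-1}(x)\bigr)$, which is strictly increasing for $\rho\in(-1,1)$. Your endpoint extension and the differentiation under the integral sign are also fine, since the $t$-derivative of the integrand is dominated by a constant multiple of $\phi(s)$.
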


One proof of the above fact can be found below Corollary D.6 in \cite{austrin-sat-full}.

\begin{lemma}[Sheppard's Formula]      \label{lem:shep}
	For any $\rho \in [-1,1]$ we have
	\begin{align*}
    \Pr_{g_1 \underset{\rho}{\sim} g_2}\Big[{\rm sign}(g_1) \neq {\rm sign}(g_2)]  = \frac{\arccos \rho}{\pi}.
	\end{align*}
    Consequently, we also have:
    \[
    1 - 2\Lambda_{\rho}\left(\frac12,\frac12\right) = \frac{\arccos \rho}{\pi}.
    \]
\end{lemma}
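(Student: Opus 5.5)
The statement to prove is Sheppard's Formula (Lemma \ref{lem:shep}). The plan is to use rotational invariance of the standard Gaussian in $\mathbbm{R}^2$. I write the correlated pair as inner products with unit vectors, identify the event $\{{\rm sign}(g_1) \neq {\rm sign}(g_2)\}$ with a random line separating those vectors, and then deduce the second identity from the first by symmetry.

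\textbf{Step 1 (representation).} Let $\theta = \arccos \rho \in [0,\pi]$. Choose unit vectors $a, b \in \mathbbm{R}^2$ with $\langle a, b\rangle = \cos\theta = \rho$, and let $g \sim N(0, I_2)$. Then $(\langle g,a\rangle, \langle g,b\rangle)$ is jointly Gaussian with mean zero, unit variances and covariance $\rho$. It therefore has the same law as a $\rho$-correlated pair $(g_1,g_2)$ as defined in Section \ref{sec:corr-def}, and it suffices to compute $\Pr[{\rm sign}\langle g,a\rangle \neq {\rm sign}\langle g,b\rangle]$.

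\textbf{Step 2 (angle computation).} Since $g$ is rotationally invariant, its direction $g/\|g\|$ is uniform on the circle, and $g \neq 0$ almost surely. The signs of $\langle g,a\rangle$ and $\langle g,b\rangle$ differ exactly when the line $g^\perp$ separates $a$ from $b$. Ties, where $\langle g,a\rangle = 0$ or $\langle g,b\rangle = 0$, have probability zero. Parametrize the direction of $g$ by an angle $\phi$ that is uniform on $[0,2\pi)$. The set of directions $u$ with $\langle u,a\rangle > 0 > \langle u,b\rangle$ is an arc of length $\theta$, and the symmetric set with the two inequalities reversed is also an arc of length $\theta$. The total probability is therefore $2\theta/(2\pi) = \arccos\rho/\pi$. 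The degenerate cases $\rho = \pm 1$ are consistent with this value: for $\rho = 1$ we have $g_2 = g_1$ almost surely, and for $\rho = -1$ we have $g_2 = -g_1$ almost surely.

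\textbf{Step 3 (the $\Lambda_\rho$ identity).} Because $\Phi^{-1}(1/2) = 0$, we have $\Lambda_\rho(1/2,1/2) = \Pr[g_1 \le 0, g_2 \le 0]$. The map $(g_1,g_2) \mapsto (-g_1,-g_2)$ preserves the joint law, so $\Pr[g_1 > 0, g_2 > 0] = \Lambda_\rho(1/2,1/2)$ as well, where null boundary events are ignored. The event that the signs agree is the disjoint union of these two events. Hence $\Pr[\text{signs differ}] = 1 - 2\Lambda_\rho(1/2,1/2)$, and combining this with Step 2 gives the second identity.

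None of these steps is a real obstacle. The only point that needs care is the measure-zero handling in Step 2, namely the events $g = 0$ and $\langle g,a\rangle = 0$, together with the endpoints $\rho = \pm 1$; I will dispose of these explicitly as in Step 2.
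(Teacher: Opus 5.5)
Your proof is correct and complete. The paper gives no proof of this lemma and simply quotes Sheppard's formula as a classical fact, so there is no argument of its own to compare against. Your Steps 1--2 are the standard random-hyperplane proof: you realize the pair as $(\langle g,a\rangle,\langle g,b\rangle)$ with $\langle a,b\rangle=\rho$ and use rotational invariance to obtain two arcs of length $\arccos\rho$. These steps also cover $\rho=\pm 1$ directly, since $a=\pm b$ is allowed. Your Step 3 is exactly the case $\mu_1=\mu_2=\tfrac12$ of the paper's Lemma~\ref{lem:cut-def}, combined with $\Phi^{-1}(\tfrac12)=0$ and the symmetry $(g_1,g_2)\mapsto(-g_1,-g_2)$.
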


\subsection{Correlation and Noise Stability}		

Let $(\Omega^2; \mu)$ denote a finite probability space corresponding to a joint distribution on a pair of variables, where the variables take values from $\Omega$, and their joint distribution is specified using the measure $\mu$ on $\Omega^2$. Then the correlation of $(\Omega^2; \mu)$ is defined as
\[
\rho(\Omega^2; \mu) := \max_{\substack{f,g:\Omega \to \mathbbm{R} \\ \Ex f = \Ex g = 0 \\ \|f\|_2,\|g\|_2 \leq 1}} \Ex_{(x,y) \sim \mu}\big[f(x)g(y)\big].
\] 

{\bf Bounds for Low-Influence functions}. For the soundness analysis, we shall rely on the following noise stability bound.
\begin{theorem}[Theorem 1.14~\cite{Mossel10}]			\label{thm:stab}
	Let $(\Omega \times \Omega; \mu)$ be a finite probability space corresponding to a joint distribution over $\Omega \times \Omega$. For $i \in \{1,2\}$, let $f_i \in L_2(\Omega^R; \mu_i^R)$ be a function satisfying $\|f_i\|_\infty \leq 1$, where $\mu_i$ is the margin of $\mu$ on the $i$-th coordinate. Furthermore, suppose 
	\[
	\max_{a \in \{1,2\}}\max_{j \in [R]} \Inf{j}{f_a} \leq \tau.
	\]
	Let $\rho := \rho(\Omega^2; \mu)$. Then,
	\begin{equation}			\label{eqn:bound}
	\Ex_{(x,y) \sim \mu^{R}}\left[f_1(x)f_2(y)\right] \geq \Lambda_{-|\rho|}(\Ex f_1, \Ex f_2) - c(\tau).
	\end{equation}
    where $c(\tau)$ is a increasing function of $\tau$ satisfying $c(0) = 0$.
\end{theorem}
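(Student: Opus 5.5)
This is Mossel's theorem. The proof I would give follows the invariance-principle template of \cite{MOO10,Mossel10}, in three moves: smoothing, transfer to Gaussian space, and Borell's isoperimetric inequality. Throughout I read the hypothesis as $f_i:\Omega^R\to[0,1]$, since otherwise $\Lambda$ evaluated at $\Ex f_i$ makes no sense. The strategy is to lower bound $\Ex_{\mu^R}[f_1(x)f_2(y)]$ by the same quantity for smoothed versions of $f_1,f_2$, and then to compare those with Gaussian functions.

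\textbf{Step 1: implicit noise.} Write $\Ex_{\mu^R}[f_1(x)f_2(y)]=\langle f_1, Tf_2\rangle$, where $T=\otimes_j T^{(j)}$ is the Markov operator of $\mu^R$. Expand $f_1,f_2$ in their Efron--Stein decompositions. By the definition of $\rho=\rho(\Omega^2;\mu)$, the operator $T$ maps the degree-$S$ component of $f_2$ into the degree-$S$ space and shrinks its norm by at least $\rho^{|S|}$. Since $\rho<1$ (the space is connected), I would then replace $f_a$ by $T_{1-\gamma}f_a$ for a small $\gamma$. The error is bounded by
\[
\sum_S \big(\rho^{|S|}-(\rho(1-\gamma)^2)^{|S|}\big)\|f_{1,S}\|\|f_{2,S}\|
\;\le\; \max_d \rho^d\big(1-(1-\gamma)^{2d}\big),
\]
which is $o_\gamma(1)$. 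After this, the smoothed functions have exponentially decaying tails, so they are effectively of low degree, and their influences are still at most $\tau$.

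\textbf{Step 2: invariance.} Choose orthonormal ensembles for the two marginals and write each $T_{1-\gamma}f_a$ as a multilinear polynomial $P_a$. Then replace the ensembles by Gaussian ensembles $(\mathbf{G},\mathbf{H})$ that have the same covariance structure, coordinate by coordinate. The cross-covariance between the two ensembles has singular values at most $\rho$. The multidimensional invariance principle of \cite{Mossel10} (low influence plus low effective degree) then shows two things:
\begin{itemize}
\item $\Ex[P_1(\mathbf{G})P_2(\mathbf{H})]$ is within $o_\tau(1)$ of the discrete quantity;
\item $P_a(\mathbf{G})$ is $L_2$-close to its truncation $\tilde P_a=\min(1,\max(0,P_a))$, because $P_a$ takes values in $[0,1]$ on the discrete side.
\end{itemize}
The truncated functions $\tilde P_a$ take values in $[0,1]$ and have means within $o(1)$ of $\Ex f_a$.

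\textbf{Step 3: Borell.} Diagonalize the cross-covariance through a joint rotation of each side. This realizes $(\mathbf{G},\mathbf{H})$ as a collection of independent coordinate pairs with correlations $\rho_k\in[-\rho,\rho]$. I then apply a further noise operator on one side to map each pair to exactly $-|\rho|$-correlated coordinates. This is legitimate because Gaussian noise commutes through the covariance and a pair with correlation $\rho_k$ is a noised copy of one with correlation $\pm|\rho|$, so we may write $\tilde P_2 = U\tilde P_2'$ for a $[0,1]$-valued $\tilde P_2'$. Borell's theorem then gives $\Ex[\tilde P_1(g)\tilde P_2(g')]\ge\Lambda_{-|\rho|}(\Ex\tilde P_1,\Ex\tilde P_2)$. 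Finally, I would use the uniform continuity of $\Lambda$ to replace the means by $\Ex f_a$, collecting every error into $c(\tau)$ after choosing $\gamma=\gamma(\tau)\to0$.

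I expect Step 2 to be the main obstacle. The invariance principle must hold for two correlated ensembles at once, and the truncation step must be controlled. The route I would try first is the hypercontractivity of the smoothed polynomials combined with a Lipschitz test function $\psi(a,b)$ that agrees with $ab$ on $[0,1]^2$ and penalizes the distance to $[0,1]$, as in \cite{MOO10}.
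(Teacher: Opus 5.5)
Your argument is correct in outline, but it takes a different route from the paper's. The paper does not reprove anything. It takes Mossel's Theorem 1.14 as a black box, in the form $\Ex[f_1f_2]\ge\underline{\Gamma}_\rho(\Ex f_1,\Ex f_2)-c(\tau)$ with $\underline{\Gamma}_\rho(\delta_1,\delta_2)=\Pr[g_1\le\Phi^{-1}(\delta_1),\,g_2\ge\Phi^{-1}(1-\delta_2)]$. It then only checks the identity $\underline{\Gamma}_\rho(\delta_1,\delta_2)=\Lambda_{-\rho}(\delta_1,\delta_2)$, by replacing $g_2$ with $-g_2$ and using $\Phi^{-1}(1-\delta)=-\Phi^{-1}(\delta)$.

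You instead reconstruct Mossel's proof itself: implicit noise, invariance plus truncation, then Borell. The paper's sign flip is absorbed into your Step 3, where Borell is used at correlation $-|\rho|$. The paper's route buys brevity. Yours makes visible hypotheses that the paper's statement glosses over:
\begin{itemize}
\item the functions must be $[0,1]$-valued, as you correctly read;
\item $c(\tau)$ depends on $\rho$ and on the smallest atom of the marginals;
\item $\rho<1$ is needed for Step 1. You assert this, but it is not a hypothesis of the statement. The case $\rho=1$ is trivial anyway, since $\Lambda_{-1}(a,b)=\max(0,a+b-1)$ and $f_1f_2\ge f_1+f_2-1$ pointwise.
\end{itemize}

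Two points in your write-up need repair.

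\textbf{The noise identity in Step 3 is stated backwards.} You cannot in general write $\tilde P_2=U\tilde P_2'$ with $\tilde P_2'$ a $[0,1]$-valued function. What the fact ``a $\rho_k$-correlated pair is a noised copy of a $(-|\rho|)$-correlated pair'' actually gives is
\[
\Ex[\tilde P_1(\mathbf{G})\tilde P_2(\mathbf{H})]=\Ex[\tilde P_1(\mathbf{G})\,(U\tilde P_2)(\mathbf{H}')].
\]
Here $(\mathbf{G},\mathbf{H}')$ are coordinatewise $-|\rho|$-correlated, and $U$ is a product of Ornstein--Uhlenbeck operators with parameters $-\rho_k/|\rho|\in[-1,1]$; negative parameters include a reflection. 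The function $U\tilde P_2$ is $[0,1]$-valued with the same mean, and Borell is applied to the pair $(\tilde P_1,U\tilde P_2)$.

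\textbf{The first bullet of Step 2 is an exact identity, not an approximation.} The bilinear form of two multilinear polynomials depends only on coordinatewise second moments, so $\Ex[P_1(\mathbf{X})P_2(\mathbf{Y})]=\Ex[P_1(\mathbf{G})P_2(\mathbf{H})]$. Hence you need no two-ensemble invariance principle. The only invariance input is the one-function statement that $P_a(\mathbf{G})$ is $L_2$-close to $[0,1]$. The step you flag as the main obstacle is therefore easier than you expect.
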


We point out that Theorem 1.14 in \cite{Mossel10} doesn't exactly give the bound in the form stated above in \eqref{eqn:bound}; however it is an immediate corollary of the bound stated in \cite{Mossel10}. We include an explanation of how Theorem 1.14 in \cite{Mossel10} implies \eqref{eqn:bound} in Appendix \ref{sec:thm-stab}. 

{\bf Adding Implicit Noise}. Some of our dictatorship tests have to satisfy perfect completeness; and hence for our soundness analysis, we will need to introduce implicit-noise. Specifically, we shall need the following lemma from \cite{Mossel10}.

\begin{lemma}		\label{lem:noise}
	Let $(\Omega_1 \times \Omega_2 \cdots \times \Omega_k;\mu)$ be a probability space with $\rho(\prod^k_{i = 1}\Omega_i;\mu) = \rho < 1$. Consider the product space $((\Omega_1)^R \times \cdots (\Omega_k)^R, \mu^{\otimes R})$. Let $f_1,\ldots,f_k$ be functions where $f_i:\Omega^R_i \to \mathbbm{R}$ are functions satisfying ${\rm Var}[f_i] \leq 1$ for every $i \in [k]$. Then for every $ \epsilon > 0$, there exists $\gamma = \gamma(\rho,\epsilon)$ such that
	\[
	\left|\Ex_{(x_1,\ldots,x_k) \sim \mu^R} \left[\prod_{i \in [k]} f_i(x_i)\right]
	- \Ex\left[\prod_{i \in [k]} T_{1 - \gamma}f_i(x_i)\right]\right| \leq k \epsilon.
	\]
\end{lemma}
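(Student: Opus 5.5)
The plan is a hybrid (telescoping) argument: replace the functions $f_i$ by $T_{1-\gamma}f_i$ one at a time, and bound the error at each step by $\epsilon$ using the fact that correlation $\rho<1$ damps high-degree parts geometrically. Write $T := T_{1-\gamma}$ and
\[
\Ex\Big[\prod_{i} f_i(x_i)\Big] - \Ex\Big[\prod_i Tf_i(x_i)\Big] = \sum_{i=1}^k \Ex\Big[\Big(\prod_{j<i} Tf_j(x_j)\Big)\,(f_i - Tf_i)(x_i)\,\Big(\prod_{j>i} f_j(x_j)\Big)\Big].
\]
It then suffices to show that each summand has absolute value at most $\epsilon$. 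That gives the bound $k\epsilon$.

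I will assume the normalization used in the applications, namely $\|f_i\|_\infty \le 1$. (The stated ${\rm Var}[f_i]\le 1$ alone does not control the $L_2$ norm of a product, so some boundedness is needed.) Since $T$ is an averaging operator, $\|Tf_j\|_\infty\le 1$ as well. Fix $i$ and let $g(x_{-i})$ be the product of the other $k-1$ factors, so $\|g\|_2\le 1$. Let $h := f_i - Tf_i$.

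Let $U$ be the conditional expectation operator from functions of $x_{-i}\in\prod_{j\ne i}\Omega_j^R$ to functions of $x_i\in\Omega_i^R$. The summand equals $\langle h, Ug\rangle$. Because $\mu^{\otimes R}$ is a product measure, $U$ is a tensor power of the one-coordinate operator, and that operator has correlation at most $\rho(\Omega_i,\prod_{j\neq i}\Omega_j;\mu)\le\rho$. It follows that $U$ maps the Efron--Stein component $g^{=S}$ to a function of $x_i$ supported on component $S$, with
\[
\|Ug^{=S}\|_2 \le \rho^{|S|}\|g^{=S}\|_2 .
\]
Meanwhile $h^{=S} = (1-(1-\gamma)^{|S|})f_i^{=S}$, and in particular $h^{=\emptyset}=0$. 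By orthogonality of components and Cauchy--Schwarz,
\[
|\langle h,Ug\rangle| \le \sum_{S\neq\emptyset}\big(1-(1-\gamma)^{|S|}\big)\rho^{|S|}\,\|f_i^{=S}\|_2\|g^{=S}\|_2 \le \max_{d\ge 1}\big(1-(1-\gamma)^d\big)\rho^d .
\]
Now choose $d_0$ with $\rho^{d_0}\le\epsilon$, and then $\gamma$ small enough that $1-(1-\gamma)^{d_0}\le\epsilon$. For $d\ge d_0$ the first factor is at most $1$ and the second at most $\epsilon$; for $d<d_0$ the first factor is at most $\epsilon$. So the maximum is at most $\epsilon$, and $\gamma$ depends only on $\rho$ and $\epsilon$, as the statement requires.

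The main obstacle is the contraction step $\|Ug^{=S}\|\le\rho^{|S|}\|g^{=S}\|$. It has two parts. First, one must check that $\rho$ of the whole space upper bounds the correlation between $\Omega_i$ and the remaining coordinates jointly, since $\rho(\prod\Omega_i;\mu)$ is defined as the maximum over such splits. Second, one must tensorize the one-coordinate bound, using that the one-coordinate operator maps mean-zero functions to mean-zero functions with norm at most $\rho$, so that each coordinate in $S$ contributes a factor of $\rho$. This is standard (Mossel's Proposition 2.12 in \cite{Mossel10}), and I would cite or reprove it via the tensor-product decomposition.
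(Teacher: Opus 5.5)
Your proof is correct, and it is essentially the standard argument: the paper does not prove this lemma itself but imports it from \cite{Mossel10}, where it is established by this same telescoping over hybrids, the $\rho^{|S|}$ contraction of Efron--Stein components under the tensorized conditional-expectation operator, and a choice of $\gamma$ made after fixing a degree cutoff $d_0$. Your added hypothesis $\|f_i\|_\infty\le 1$ is genuinely needed for $k\ge 3$: with only $\mathrm{Var}[f_i]\le 1$, shifting every $f_i$ by a large constant makes the error grow without bound through the pairwise-correlation terms. For $k=2$ the variance bound suffices because $f_i - T_{1-\gamma}f_i$ is mean-zero, and the paper only uses the lemma with $k=2$ and $[0,1]$-valued functions, so your extra assumption costs nothing.
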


\subsection{The Unique Games Conjecture}           \label{sec:ugc}

All our hardness results are based on reduction from the \ug~problem, which is defined below.

\begin{definition}[Unique Games]
	A \ug~instance $\cG(U,V,[R],E,\{\pi_{v,u}\}_{(u,v) \in E})$ is characterized by a left-regular\footnote{Without loss of generality, we can work with left-regular instances of \ug, for e.g., see Lemma 3.3.~\cite{KR08}.} bipartite graph with left and right vertex sets $U$ and $V$, and an edge set $E \subseteq U \times V$. A labeling of the vertices in $\cG$ is a map $\sigma:U \cup V \to [R]$ which assigns the vertices of $\cG$ values from the label set $[R]$. Each edge $(u,v) \in E$ is associated with a {\bf bijection} constraint $\pi_{v,u}:[R] \to [R]$. 
	
	A labeling  $\sigma: U \cup V \to [R]$ is said to satisfy edge $(u,v)$ if $\pi_{v,u}(\sigma(v)) = \sigma(u)$. The value of the \ug~instance $\cG$, denoted by ${\sf val}(\cG)$, is simply the maximum fraction of constraints in $\cG$, that can be satisfied by any labeling $\sigma:U \cup V \to [R]$.   
	
	For any $c,s \in (0,1)$ such that $c \geq s$, the $(c,s)$-\ug~problem is a decision problem where given a \ug~instance $\cG$, the objective is to distinguish between:
	\begin{itemize}
		\item {\bf Yes Case}. ${\sf val}(G) \geq c$.
		\item {\bf No Case}. ${\sf val}(G) \leq s$.
	\end{itemize}
\end{definition}

We now state Khot's Unique Games Conjecture, which is the starting point for all our reductions.

\begin{conjecture}[The Unique Games Conjecture~\cite{Khot02a}]         \label{conj:ugc}
		There exists constants $\epsilon_0,\delta_0 \in (0,1)$ such that the following holds. For every $\epsilon \in (0,\epsilon_0], \delta \in (0,\delta_0]$ there exists $R = R(\epsilon,\delta)$ such that the $(1 - \epsilon,\delta)$-\ug~problem on instances with label set $[R]$ is \NP-hard.
\end{conjecture}

\subsection{Miscellaneous}

We record the following useful observation.

\begin{lemma}				\label{lem:eigval}
	Let $\mu$ be a joint distribution over $\{0,1\}$-valued random variables $X,Y$ such that $\Ex[X] = \Ex[Y] = p$. Let $\{0,1\}_{p}$ denote the $p$-biased measure on $\{0,1\}$. Let $\rho$ denote the correlation between variables $X$ and $Y$. Let $(\phi_0 \equiv 1,\phi_1)$ denote the Fourier basis for $L_2(\{0,1\}_p)$. Then,
	\[
	\Ex_{(X,Y) \sim \mu}\Big[\phi_1(X)\phi_1(Y)\Big] = \rho.
	\]
\end{lemma}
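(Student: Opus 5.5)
The plan is to reduce the claim to a single one-dimensional computation in the Fourier basis of $L_2(\{0,1\}_p)$. First pin down the basis function $\phi_1$ explicitly. Since $L_2(\{0,1\}_p)$ is two-dimensional and $\phi_1$ is orthogonal to $\phi_0 \equiv 1$, it must satisfy $\Ex[\phi_1] = 0$ and $\Ex[\phi_1^2] = 1$. These two conditions force, up to sign,
\[
\phi_1(x) = \frac{x - p}{\sqrt{p(1-p)}}.
\]
The sign does not matter, because $\phi_1$ appears twice in the expectation. We assume $p \in (0,1)$ so that the space is nondegenerate.

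Next, compute the target quantity directly:
\[
\Ex_{(X,Y)\sim\mu}\big[\phi_1(X)\phi_1(Y)\big] = \frac{\Ex[(X-p)(Y-p)]}{p(1-p)} = \frac{{\rm Cov}(X,Y)}{\sqrt{{\rm Var}(X)\,{\rm Var}(Y)}}.
\]
The last equality uses that both marginals are $p$-biased, so ${\rm Var}(X) = {\rm Var}(Y) = p(1-p)$. This is exactly the Pearson correlation of $X$ and $Y$.

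The only real step is to check that this Pearson correlation agrees with the meaning of ``correlation'' intended in the statement. This is where the sign needs care, since $\rho(\Omega^2;\mu)$ is defined as a maximum and is therefore nonnegative.

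For $\{0,1\}$-valued variables, any mean-zero $f$ with $\|f\|_2 \le 1$ has the form $f = a\phi_1$ with $|a| \le 1$, and likewise $g = b\phi_1$ with $|b| \le 1$. Then
\[
\Ex[f(X)g(Y)] = ab\,\Ex[\phi_1(X)\phi_1(Y)],
\]
so the maximum in the definition of $\rho(\Omega^2;\mu)$ equals $|\Ex[\phi_1(X)\phi_1(Y)]|$. Hence the computed quantity is the signed correlation, and its absolute value is $\rho(\Omega^2;\mu)$. In the later applications (for example, $\mu_\alpha$, where it is negative), $\rho$ should be read as this signed Pearson correlation.

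I expect no step to be an obstacle beyond stating this sign convention clearly.
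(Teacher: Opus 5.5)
Your proposal is correct and follows the same route as the paper. Both take $\phi_1(x) = (x-p)/\sqrt{p(1-p)}$ and compute $\Ex[\phi_1(X)\phi_1(Y)] = (\Ex[XY]-p^2)/(p(1-p))$, which is the Pearson correlation. Your added remark is also accurate: $\rho$ here is the signed correlation, and its absolute value is the nonnegative quantity $\rho(\Omega^2;\mu)$. This matches how the paper uses the lemma, for instance with $\rho^*<0$ and $\rho_\alpha<0$ in Lemma~\ref{lem:corr}.
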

\begin{proof}
	It is known that (e.g., see Definition 8.39~\cite{OD14}) the non-constant basis function can be express as $\phi_1(X) := \frac{X - p}{\sqrt{p(1 - p)}}$. Hence, 
	\[ 
        \Ex_{(X,Y)}\left[\phi_1(X)\phi_1(Y)\right] = \frac{\Ex_{(X,Y)}[XY] - p^2}{p(1 - p)} = \rho. 
	\]
\end{proof}

\section{Hardness Reduction for Theorem \ref{thm:3-col}}

We describe the reduction from the \ug~problem in the figure below:

\begin{figure}[ht!]
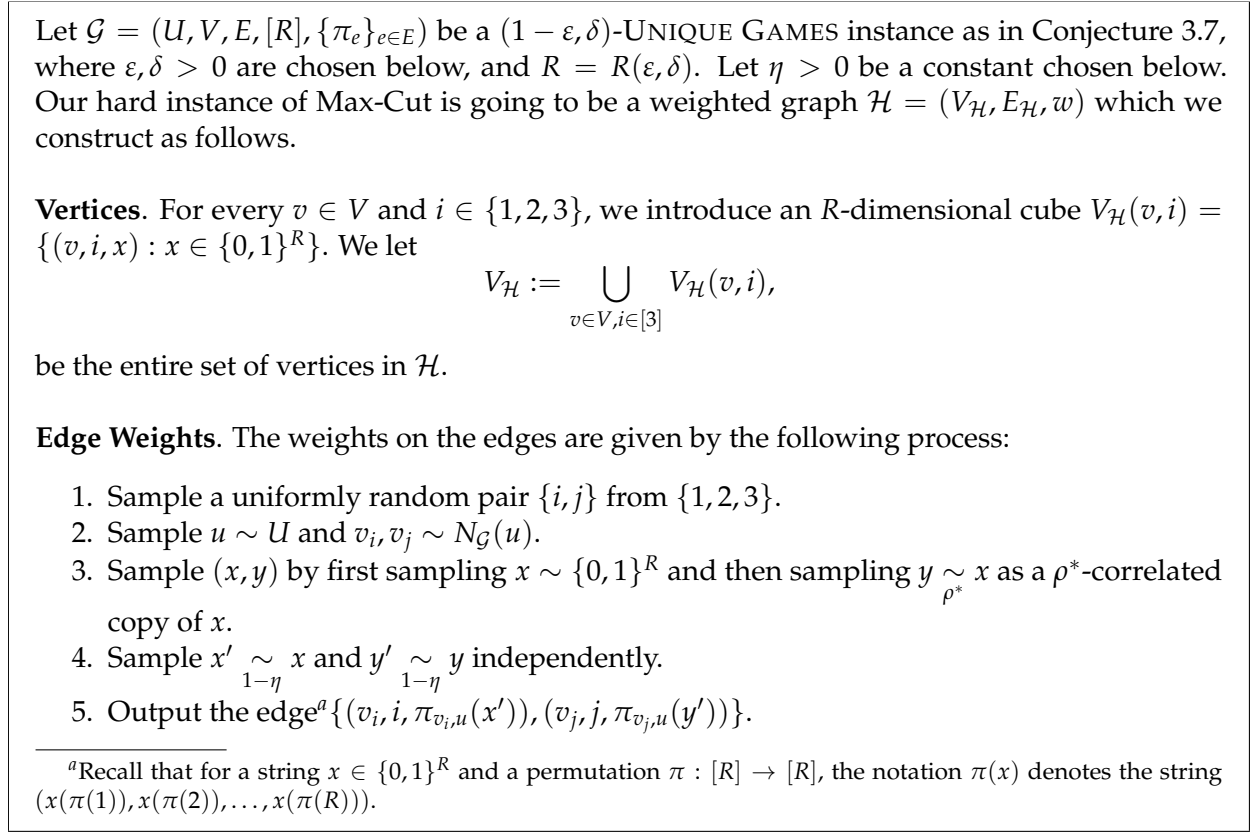

	\begin{mdframed}
		Let $\cG = (U,V,E,[R],\{\pi_e\}_{e \in E})$ be a $(1 - \epsilon,\delta)$-\ug~instance as in Conjecture \ref{conj:ugc}, where $\epsilon,\delta > 0$ are chosen below, and $R = R(\epsilon,\delta)$. Let $\eta > 0$ be a constant chosen below. Our hard instance of Max-Cut is going to be a weighted graph $\cH = (V_\cH,E_{\cH},w)$ which we construct as follows. \\
		
		{\bf Vertices}. For every $v \in V$ and $i \in \{1,2,3\}$, we introduce an $R$-dimensional cube $V_{\cH}(v,i) = \{(v,i,x) : x \in \{0,1\}^R\}$. We let 
		\[
		V_{\cH} := \bigcup_{v \in V, i \in [3]} V_{\cH}(v,i),
		\]
		be the entire set of vertices in $\cH$. \\
		
		{\bf Edge Weights}. The weights on the edges are given by the following process: \\[-5pt]
		
		\begin{enumerate}
			\item Sample a uniformly random pair $\{i,j\}$ from $\{1,2,3\}$.
			\item Sample $u \sim U$ and $v_i,v_j \sim N_\cG(u)$. 
			\item Sample $(x,y)$ by first sampling $x \sim \{0,1\}^R$ and then sampling $y \underset{\rho^*}{\sim} x$ as a $\rho^*$-correlated copy of $x$.
			\item Sample $x' \underset{1 - \eta}{\sim} x$ and $y' \underset{1 - \eta}{\sim} y$ independently.
			\item Output the edge\footnote{{Recall that for a string $x \in \{0,1\}^R$ and a permutation $\pi:[R] \to [R]$, the notation $\pi(x)$ denotes the string $(x(\pi(1)),x(\pi(2)),\ldots,x(\pi(R)))$.}}$\{(v_i,i,\pi_{v_i,u}(x')),(v_j,j,\pi_{v_j,u}(y'))\}$.
		\end{enumerate}
	\end{mdframed}
	\caption{Reduction for Theorem \ref{thm:3-col}}
	\label{fig:3-col}
\end{figure}

{\bf Parameters of the Reduction}. We assign the following values to the parameters. 
\begin{itemize}
\item Let $\eta$ be as in Theorem \ref{thm:3-col}.
\item Let $\tau$ be such that $c(\tau),c_1(\tau) \leq \eta/10$, where $c(\tau),c_1(\tau)$ are as in Theorem \ref{thm:stab} and Lemma \ref{lem:noise} respectively. 
\item Let $\epsilon_0,\delta_0$ be the constants from Conjecture \ref{conj:ugc}. We set $\epsilon = \min\{\eta,\epsilon_0\}$ and and $\delta \in (0,\delta_0)$ is set such that $\eta'(\delta,\tau,\eta) \leq \eta$ (in Lemma \ref{lem:inf-dec}).
\end{itemize}

\begin{claim}		\label{cl:3-col}
	The graph $\cH$ output by the reduction in Figure \ref{fig:3-col} is $3$-Colorable.
\end{claim}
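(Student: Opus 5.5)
The plan is to exhibit an explicit proper $3$-coloring of $\cH$: color each vertex by its middle coordinate. That is, define $c: V_{\cH} \to \{1,2,3\}$ by $c(v,i,x) := i$. The color classes are then $C_i := \bigcup_{v \in V} V_{\cH}(v,i)$ for $i \in [3]$, and these partition $V_{\cH}$ by construction.

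It remains to check that every edge with positive weight joins two different color classes. By the sampling procedure in Figure \ref{fig:3-col}, every such edge is output in Step 5 as $\{(v_i,i,\pi_{v_i,u}(x')),(v_j,j,\pi_{v_j,u}(y'))\}$. Here $\{i,j\}$ is the pair drawn in Step 1, and it is a uniformly random pair of \emph{distinct} indices from $\{1,2,3\}$. So the first endpoint lies in $C_i$ and the second in $C_j$ with $i \neq j$.

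The choices made in Steps 2--4 (the vertex $u$, the neighbors $v_i,v_j$, the strings $x,y,x',y'$, and the permutations) only affect the first and third coordinates of the endpoints. They never affect the color index. Hence no edge in the support of $w$ has both endpoints in the same $C_i$, so each $C_i$ is an independent set and $c$ is a proper $3$-coloring.

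This argument also gives the ``furthermore'' part of Theorem \ref{thm:3-col}: the coloring can be read off directly from the vertex labels, so it is known to the algorithm. There is no real obstacle here. The only point needing care is that Step 1 samples an unordered pair of distinct indices rather than allowing $i = j$, and that point is exactly what the argument relies on.
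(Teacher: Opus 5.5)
Your proposal is correct and is essentially the paper's proof: the paper also takes the classes $V_\cH(i) = \bigcup_{v \in V} V_\cH(v,i)$ for $i \in [3]$ as the color classes and observes that each is an independent set. Your explicit remark that Step~1 draws a pair of distinct indices is exactly the fact the paper's one-line argument uses implicitly.
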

\begin{proof}
	Note that the sets $V_\cH(i) := \cup_{v}V_\cH(v,i)$ for $i = 1,2,3$ form a $3$-partition of $V_{\cH}$, and for every $i$,  $V_{\cH}(i)$ is an independent set. Hence the claim follows.
\end{proof}

The following lemmas state the completeness and soundness properties of the above reduction.

\begin{lemma}		\label{lem:comp-3col}
	If ${\sf val}(\cG) \geq 1 - \epsilon$. Then there exists a set $S \subseteq V_\cH$ which cuts at least $\frac{1 - \rho^*}{2} - 4\eta$-fraction of edges in $\cH$.
\end{lemma}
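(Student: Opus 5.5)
The plan is the standard dictator-cut argument, using a good labeling of the \ug~instance. Since ${\sf val}(\cG) \geq 1 - \epsilon$, fix a labeling $\sigma: U \cup V \to [R]$ satisfying at least a $(1-\epsilon)$-fraction of the constraints. Define the cut
\[
S := \{(v,i,x) \in V_\cH : x(\sigma(v)) = 1\},
\]
so each cube $V_\cH(v,i)$ is cut according to the dictator on coordinate $\sigma(v)$, the same coordinate for all three copies $i \in [3]$. We then lower bound the probability that an edge sampled as in Figure \ref{fig:3-col} crosses $(S,S^c)$.

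\textbf{Good event.} Fix the sampled pair $\{i,j\}$ and $u \in U$, with $v_i, v_j \sim N_\cG(u)$ drawn independently. By left-regularity, each of $(u,v_i)$ and $(u,v_j)$ is marginally a uniformly random edge of $\cG$. Hence, by a union bound, both are satisfied by $\sigma$ with probability at least $1 - 2\epsilon$. On this event,
\[
\pi_{v_i,u}(\sigma(v_i)) = \pi_{v_j,u}(\sigma(v_j)) = \sigma(u) =: \ell.
\]
Unwinding the convention $\pi(x) = (x(\pi(1)),\ldots,x(\pi(R)))$, the relevant bit of the first endpoint $\pi_{v_i,u}(x')$ at coordinate $\sigma(v_i)$ equals $x'(\ell)$. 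Likewise, the relevant bit of the second endpoint equals $y'(\ell)$. This index bookkeeping, deciding whether $\pi$ or $\pi^{-1}$ appears, is the only place requiring care. Whichever convention is used, it is consistent for both endpoints, so both reduce to the same coordinate $\ell$ of the underlying pair. I expect this to be the main (mild) obstacle. If the convention goes the other way, one simply defines $S$ using $\pi$ applied the appropriate way, and the argument is unchanged.

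\textbf{Cut probability on the good event.} The edge is cut iff $x'(\ell) \neq y'(\ell)$. We have $x(\ell) \neq y(\ell)$ with probability $\frac{1-\rho^*}{2}$. Moreover, $x'(\ell) = x(\ell)$ except with probability $\eta/2$ (a $(1-\eta)$-correlated bit flips with probability $\eta/2$), and similarly for $y'$. By a union bound,
\[
\Pr[x'(\ell) \neq y'(\ell)] \geq \frac{1-\rho^*}{2} - \eta.
\]

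\textbf{Combining.} Multiplying by the good-event probability,
\[
w(S,S^c) \geq (1-2\epsilon)\left(\frac{1-\rho^*}{2} - \eta\right) \geq \frac{1-\rho^*}{2} - \eta - 2\epsilon \geq \frac{1-\rho^*}{2} - 3\eta,
\]
using $\epsilon \leq \eta$. This is at least the claimed $\frac{1-\rho^*}{2} - 4\eta$. The choice of pair $\{i,j\}$ plays no role, since all three copies use the same dictator coordinate.
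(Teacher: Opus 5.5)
Your proposal is correct and follows the paper's proof essentially verbatim: the same dictator cut $S=\{(v,i,x): x(\sigma(v))=1\}$, the same union bound over the two \ug~edges $(u,v_i),(u,v_j)$ (probability at least $1-2\epsilon$ that both are satisfied), and the same reduction to $\Pr[x'(\sigma(u))\neq y'(\sigma(u))]$. Your noise accounting (a loss of $\eta$, since each bit flips with probability $\eta/2$) is slightly tighter than the paper's $2\eta$, and with $\epsilon\le\eta$ both arguments comfortably give the stated $\frac{1-\rho^*}{2}-4\eta$.
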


\begin{lemma}		\label{lem:sound-3col}
	Suppose ${\sf val}(\cG) \leq \delta$. Then, every set $S \subseteq V_{\cH}$ cuts at most $\frac{\arccos \rho^*}{\pi} + 4\eta$-fraction of edges in $\cH$.
\end{lemma}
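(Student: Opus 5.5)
We prove the soundness statement by contraposition, following the standard KKMO template. Fix $S \subseteq V_\cH$ and, for each $v \in V$ and $i \in [3]$, let $f_{v,i}:\{0,1\}^R \to \{0,1\}$ be the indicator of $S$ restricted to the cube $V_\cH(v,i)$. For $u \in U$ define the averaged functions
\[
g_{u,i}(x) := \Ex_{v \sim N_\cG(u)}\big[f_{v,i}(\pi_{v,u}(x))\big] .
\]
Since ${\sf Cut}(a,b) = a+b-2ab$ is bilinear, and $v_i,v_j$ are sampled independently given $u$, the fraction of edges cut is
\[
\Ex_{\{i,j\}}\Ex_{u}\Ex_{x \underset{\rho^*}{\sim} y}\Big[{\sf Cut}\big(T_{1-\eta}g_{u,i}(x),\,T_{1-\eta}g_{u,j}(y)\big)\Big].
\]
This uses $\Ex[T_{1-\eta} g_{u,i}] = \Ex[g_{u,i}] =: \mu_{u,i}$ and the fact that the noise on $x'$ and $y'$ is applied independently. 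We call $u$ \emph{good} if every $T_{1-\eta}g_{u,i}$, $i \in [3]$, has all influences at most $\tau$.

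For good $u$ we bound the inner expectation. Expanding ${\sf Cut}$ and applying Theorem \ref{thm:stab} to the pair $(T_{1-\eta}g_{u,i},T_{1-\eta}g_{u,j})$ under the $\rho^*$-correlated distribution, whose correlation is $|\rho^*|$, gives
\[
\Ex\big[g\,g'\big] \geq \Lambda_{\rho^*}(\mu_{u,i},\mu_{u,j}) - c(\tau).
\]
Hence the cut contribution is at most ${\sf Cut}_{\rho^*}(\mu_{u,i},\mu_{u,j}) + 2c(\tau)$, where ${\sf Cut}_\rho(a,b) = a+b-2\Lambda_\rho(a,b)$. Averaging over the pair $\{i,j\}$ reduces the problem to the tripartite Gaussian optimization \eqref{eqn:e3}. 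We would then invoke the tripartite Majority-Is-Stablest bound, Lemma \ref{lem:main} (the formal version of Lemma \ref{lem:stab-inf}), which states that for $\rho = \rho^*$,
\[
\max_{\bmu \in [0,1]^3} F(\bmu) = \max\left\{\frac{\arccos\rho^*}{\pi},\, \frac23\right\} = \frac{\arccos\rho^*}{\pi}.
\]
So each good $u$ contributes at most $s^* + \eta/5$.

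For bad $u$ we bound the contribution trivially by $1$. It then suffices to show that if more than an $\eta$ fraction of $u$ are bad, the Unique Games instance has value greater than $\delta$. This is the content of Lemma \ref{lem:inf-dec}, and the argument is the standard decoding. If $T_{1-\eta}g_{u,i}$ has a coordinate $k$ with influence above $\tau$, then:
\begin{itemize}
\item by convexity of influence under averaging, a $\tau/2$ fraction of the neighbors $v$ satisfy $\Inf{\pi^{-1}(k)}{T_{1-\eta}f_{v,i}} \geq \tau/2$;
\item each noisy function has at most $O(1/(\eta\tau))$ coordinates with influence above $\tau/2$, since its total influence is bounded by $O(1/\eta)$.
\end{itemize}
Labeling every vertex randomly from these candidate sets, taking the union over $i$ (a constant factor), satisfies an $\Omega(\eta\tau^3\eta^2)$ fraction of constraints. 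Choosing $\delta$ below this quantity gives the contradiction.

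Summing the two cases, the cut fraction is at most $s^* + \eta/5 + \eta \leq s^* + 4\eta$.

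The main obstacle is the Gaussian optimization step, Lemma \ref{lem:main}. It needs the boundary and interior case analysis sketched in Section \ref{sec:stab-inf}, and, for the interior case, a verification that the derivative of $F$ in $\mu_1$ really factors as $G(K(\mu_1,\mu_2)) + G(K(\mu_1,\mu_3))$ with $G$ odd and strictly increasing. That step uses $\partial_a \Lambda_\rho(a,b) = \Phi\big(K(a,b)\big)$. Everything else in the soundness argument is routine.
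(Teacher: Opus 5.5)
Your proposal is correct and follows the paper's proof essentially step for step. You arithmetize the cut via the averaged functions $g_{u,i}$, use influence decoding (Lemma \ref{lem:inf-dec}) to show that all but an $\eta$-fraction of $u$ are good, and for good $u$ apply Theorem \ref{thm:stab} pairwise and then the tripartite Gaussian bound (Lemma \ref{lem:main}, via Lemma \ref{lem:str-main}) at $\rho=\rho^*$, where $\arccos\rho^*/\pi > 2/3$; the only differences are cosmetic: you keep $T_{1-\eta}$ outside $g_{u,i}$, so the influence condition defining good $u$ matches exactly the functions Theorem \ref{thm:stab} is applied to, and you sketch the decoding inline rather than citing Lemma \ref{lem:dec}.
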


Before we prove the above lemmas, let us use them to conclude the proof of Theorem \ref{thm:3-col}.

\begin{proof}[Proof of Theorem \ref{thm:3-col}]
	Let $\cG$ be a $(1 - \epsilon,\delta)$-\ug~instance, where $\epsilon,\delta$ are chosen as above. Then using the reduction in Figure \ref{fig:3-col}, we can construct a Max-Cut instance $\cH = (V_\cH,E_\cH,w_{\cH})$ in polynomial time such that:
	\begin{itemize}
		\item The graph $\cH$ is $3$-colorable (Claim \ref{cl:3-col}).
		\item If ${\sf val}(\cG) \geq 1 - \epsilon$, then the optimal Max-Cut in $\cH$ has value at least $c^* - 4\eta$ (Lemma \ref{lem:comp-3col}).
		\item If ${\sf val}(\cG) \leq \delta$, then the optimal Max-Cut in $\cH$ has value at most $s^* + 4\eta$ (Lemma \ref{lem:sound-3col}).
	\end{itemize}
	Since from Conjecture \ref{conj:ugc}, the $(1 - \epsilon,\delta)$-\ug~problem is \NP-hard, this implies that it is \NP-hard to distinguish between whether ${\sf opt}(\cH) \geq c^* - 4\eta$ or ${\sf opt}(\cH) \leq s^* + 4\eta$, which proves Theorem \ref{thm:3-col}.
\end{proof}

\subsection{Proof of Lemma \ref{lem:comp-3col}}			

Suppose $\cG$ admits a labeling $\sigma: U \cup V \to [R]$ which satisfies at least $(1 - \epsilon)$-fraction of constraints. We construct a cut $(S,S^c)$ of $\cH$ as follows. For every $v \in V$ and $i \in \{1,2,3\}$, let us define
\[
S_{v,i} := \{(v,i,x): x(\sigma(v)) = 1\},
\]
and let $S := \cup_{v \in V}\cup_{i \in [3]} S_{v,i}$. We now bound the fraction of edges cut by $(S,S^c)$. To that end, let us fix a choice of $u,v_i,v_j$ such that $\sigma$ satisfies the edges $(u,v_i),(u,v_j)$. Then the fraction of edges cut conditioned on the choice of $u,v_i,v_j$ can be bounded as:
\begin{align*}
	&\Ex_{\{i,j\}}\Pr_{(x',y')}\Big[\one_{S}(v_i,i,\pi_{v_i,u}(x')) \neq \one_S(v_j,j,\pi_{v_j,u}(y'))\Big] \\
	&= \Ex_{\{i,j\}}\Pr_{(x',y')}\Big[\big(\pi_{v_i,u}\circ x'\big)(\sigma(v_i)) \neq \big(\pi_{v_j,u} \circ y'\big)(\sigma(v_j))\Big] \\
	&= \Ex_{\{i,j\}}\Pr_{(x',y')}\Big[x'(\sigma(u)) \neq y'(\sigma(u))\Big]		\tag{Since $\pi_{v_i,u}(\sigma(v_i)) = \pi_{v_j,u}(\sigma(v_j)) = \sigma(u)$} \\
	&\geq \Ex_{\{i,j\}}\Pr_{(x',y')}\Big[x(\sigma(u)) \neq y(\sigma(u))\Big] - 2\eta \\
	&\geq \frac{1 - \rho^*}{2} - 2\eta.
\end{align*}
I.e., conditioned on such a choice of $(u,v_i,v_j)$, the set $S$ cuts at least $\frac{1 - \rho^*}{2} - 2\eta$ fraction of edges. Furthermore, since $\cG$ is left-regular, by a union bound argument it follows that the event ``$\sigma$ satisfies the edge $(u,v_i),(u,v_j)$'' occurs with probability at least $1 - 2\epsilon$ independent of the choice of $i,j,x',y'$. Hence, we can conclude that the fraction of edges cut by $S$ is at least 
\[
\frac{1 - \rho^*}{2} - 2\eta - 2\epsilon.
\]

\subsection{Proof of Lemma \ref{lem:sound-3col}}

Suppose the optimal value of $\cG$ is at most $\delta$. Let $f:V_\cH \to \{0,1\}$ be the indicator corresponding to a subset of $V_\cH$. For every choice $v \in V$ and $i \in \{1,2,3\}$, we will find it convenient to define the function $f_{v,i}:\{0,1\}^R \to \{0,1\}$ as $f_{v,i}(x) = f(v,i,x)$.
Let us also define the function ${\sf Cut}:[0,1]^2 \to [0,1]$ as
\[
{\sf Cut}(x,y) := 1 - xy - (1 - x)(1 - y).
\]
In particular, note that if $x,y$ are Boolean-valued then ${\sf Cut}(x,y) = \one(x \neq y)$. We proceed by arithmetizing the fraction of edges cut by $S$ as 
\begin{align}
	\Pr_{e \sim E_{\cH}}\Big[\text{e is cut by S}\Big]
	&= \Ex_{i,j}\Ex_{u}\Ex_{v_i,v_j \sim N_{\cG}(u)}\Pr_{(x',y')}\Big[f(v_i,i,\pi_{v_i,u}(x')) \neq f(v_j,j,\pi_{v_j,u}(y'))\Big] 
	\non \\
	&= \Ex_{i,j}\Ex_{u}\Ex_{v_i,v_j \sim N_{\cG}(u)}\Ex_{(x',y')}
	\Big[{\sf Cut}\Big(f_{v_i,i}(\pi_{v_i,u}(x')),f_{v_j,j}(\pi_{v_j,u}(y'))\Big)\Big] 		\non \\
	&= \Ex_{i,j}\Ex_{u,v_i,v_j}\Ex_{(x,y)}\Ex_{(x',y')|(x,y)}\Big[{\sf Cut}\Big(f_{v_i,i}(\pi_{v_i,u}(x')),f_{v_j,j}(\pi_{v_j,u}(y'))\Big)\Big]		\non  \\
	&= \Ex_{i,j}\Ex_{u,v_i,v_j}\Ex_{(x,y)}\Big[{\sf Cut}\Big(T_{1 - \eta}f_{v_i,i}(\pi_{v_i,u}(x)),T_{1 - \eta}f_{v_j,j}(\pi_{v_j,u}(y))\Big)\Big]  		\non \\
	&=  \Ex_{i,j}\Ex_{u \sim U}\Ex_{(x,y)}\Big[{\sf Cut}\Big(g_{u,i}(x),g_{u,j}(y)\Big)\Big],			\label{eqn:sound-1}
\end{align}
where for every $i \in [3]$ and $u \in U$, we define $g_{u,i}:\{0,1\}^R \to [0,1]$ to be the averaged function
\begin{equation}
	g_{u,i}(x) := \Ex_{v \sim N_{\cG}(u)}\Big[T_{1 - \eta}f_{v,i}\Big(\pi_{v,u}(x)\Big)\Big].
\end{equation}

\subsubsection{Main Technical Step}

Let us define $U_{\rm good} \subset U$ as
\[
U_{\rm good} := \left\{u \in U : \max_{i \in [3]}\max_{\ell \in [R]} \Inf{\ell}{T_{1 - \eta}g_{u,i}} \leq \tau \right\}.
\]

The following lemma shows that if $\cG$ is a NO instance, then most of the vertices in $U$ are good.
\begin{lemma}				\label{lem:inf-dec}
	Suppose ${\sf val}(\cG) \leq \delta$. Then $|U_{\rm good}| \geq (1 - \eta')|U|$, where $\eta' = \eta'(\delta,\tau,\eta)$.
\end{lemma}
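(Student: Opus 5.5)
We argue by contrapositive, following the standard KKMO decoding. Suppose more than an $\eta'$-fraction of $u \in U$ are bad. We will build a labeling of $\cG$ whose expected value exceeds $\delta$, once $\eta'$ is chosen as a suitable function of $(\delta,\tau,\eta)$. For each bad $u$ there is an index $i_u \in [3]$ and a coordinate $\ell \in [R]$ with $\Inf{\ell}{T_{1-\eta}g_{u,i_u}} > \tau$.

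\textbf{Step 1: transfer influence to the neighbors.} By definition, $g_{u,i}$ is the average over $v \sim N_\cG(u)$ of $T_{1-\eta}f_{v,i}\circ\pi_{v,u}$. Since influence is convex in the function, we get
\[
\tau < \Inf{\ell}{T_{1-\eta}g_{u,i_u}} \le \Ex_{v \sim N_\cG(u)}\Big[\Inf{\pi_{v,u}^{-1}(\ell)}{T_{1-\eta}^2 f_{v,i_u}}\Big] \le \Ex_{v}\Big[\Inf{\pi_{v,u}^{-1}(\ell)}{T_{1-\eta} f_{v,i_u}}\Big].
\]
The indexing of the permuted coordinate has to be matched to the paper's convention $\pi(x)$, but a bijection just relabels coordinates. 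Averaging then gives that at least a $\tau/2$-fraction of the neighbors $v$ satisfy $\Inf{\pi_{v,u}^{-1}(\ell)}{T_{1-\eta}f_{v,i_u}} \ge \tau/2$.

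\textbf{Step 2: define candidate label sets.} For each $v \in V$, let
\[
L(v) := \bigcup_{i\in[3]}\{j : \Inf{j}{T_{1-\eta}f_{v,i}} \ge \tau/2\}.
\]
For each $u \in U$, let $L(u) := \{\ell : \Inf{\ell}{T_{1-\eta}g_{u,i}} \ge \tau \text{ for some } i\}$. Because $f_{v,i}$ and $g_{u,i}$ are $[0,1]$-valued, the standard bound $\sum_j \Inf{j}{T_{1-\eta}h} \le \max_k k(1-\eta)^{2(k-1)}\,{\rm Var}[h] = O(1/\eta)$ applies. Taking the union over the three indices $i$, this gives $|L(v)|,|L(u)| \le O(1/(\eta\tau))$.

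\textbf{Step 3: random decoding.} Assign each vertex a uniformly random label from its set, or an arbitrary label if the set is empty. Fix a bad $u$ with its $\ell \in L(u)$. With probability $\ge \tau/2$ over $v \sim N(u)$, the label $\pi_{v,u}^{-1}(\ell)$ lies in $L(v)$. Both endpoints then pick the matching pair with probability $\Omega((\eta\tau)^2)$. Hence the expected fraction of satisfied edges is at least $\eta'\cdot(\tau/2)\cdot\Omega(\eta^2\tau^2)$. Here we use left-regularity, so that sampling $u \sim U$ and then $v \sim N(u)$ is uniform over edges.

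We choose $\delta$ (equivalently, we define $\eta'(\delta,\tau,\eta)$) so that this quantity exceeds $\delta$ whenever the bad fraction exceeds $\eta'$. For example, take $\eta' := C\delta/(\eta^2\tau^3)$, which tends to $0$ as $\delta\to 0$, consistent with the parameter setting. This contradicts ${\sf val}(\cG)\le\delta$. The only mildly delicate step is Step 1: verifying the convexity of influence and the noise-operator bookkeeping (a $T_{1-\eta}$ inside and one outside) under the coordinate permutation. The rest is routine.
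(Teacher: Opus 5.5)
Your argument is correct and is essentially the paper's. The paper invokes the folklore influence-decoding lemma (Lemma \ref{lem:dec}) for a single family $\{f_{v,i}\}_{v}$ and then pigeonholes over $i\in[3]$ to get $\eta'\le 24\sqrt{\delta}/(\tau^2\eta^2)$, whereas you inline the standard KKMO list-decoding behind that lemma and handle the three classes at once by taking the union of the candidate lists over $i$. Your explicit step $\Inf{j}{T_{1-\eta}^2 f_{v,i}}\le \Inf{j}{T_{1-\eta} f_{v,i}}$ also makes precise a point the paper leaves implicit, namely its passage from $T_{1-\eta}g_{u,i}$ to $g_{u,i}$.
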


The above lemma is established using standard influence-decoding arguments, for the sake of completeness we include a proof in Section \ref{sec:dec}. The next lemma bounds the expectation term for each choice of $u \in U_{\rm good}$. 

\begin{lemma}				\label{lem:g-bound}
	For every $u \in U_{\rm good}$, we have
	\[
	\Ex_{i \neq j}\Ex_{x\underset{\rho^*}{\sim} y}\left[{\sf Cut}(g_{u,i}(x),g_{u,j}(y))\right] \leq \frac{\arccos \rho^*}{\pi} + c(\tau),
	\]
	where $c(\tau)$ is an increasing function of $\tau$ satisfying $c(0) = 0$.
\end{lemma}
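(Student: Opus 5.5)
Fix $u \in U_{\rm good}$ and write $g_i := g_{u,i}$, $\mu_i := \Ex[g_i]$. The plan has three parts: reduce the Boolean-cube quantity to a Gaussian one using Theorem \ref{thm:stab}, solve the resulting three-variable optimization problem, and evaluate at $\rho^*$.

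\textbf{Step 1 (reduction to Gaussian space).} Since ${\sf Cut}(a,b) = a + b - 2ab$, for each pair $i \neq j$ we have
\[
\Ex_{x\underset{\rho^*}{\sim} y}\big[{\sf Cut}(g_i(x),g_j(y))\big] = \mu_i + \mu_j - 2\,\Ex\big[g_i(x)g_j(y)\big].
\]
The functions $g_i$ are $[0,1]$-valued, but the influence condition in $U_{\rm good}$ is stated for $T_{1-\eta}g_i$, not $g_i$. To bridge this, I would use that $g_i$ already contains a $T_{1-\eta}$ factor, which lets the noise be split between the two copies. The correlation of the $\rho^*$-correlated pair is $|\rho^*|$, so Theorem \ref{thm:stab} gives
\[
\Ex\big[g_i(x)g_j(y)\big] \geq \Lambda_{\rho^*}(\mu_i,\mu_j) - c(\tau).
\]
If extra noise has to be moved around, Lemma \ref{lem:noise} can absorb the resulting error. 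Averaging over pairs then reduces the claim to showing
\[
F(\mu_1,\mu_2,\mu_3) := \Ex_{i\neq j}\big[\mu_i+\mu_j-2\Lambda_{\rho^*}(\mu_i,\mu_j)\big] \leq \frac{\arccos\rho^*}{\pi} \quad \text{for all } (\mu_1,\mu_2,\mu_3)\in[0,1]^3.
\]
Note that each summand is exactly ${\sf Cut}_\rho(\mu_i,\mu_j)$.

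\textbf{Step 2 (the optimization, interior maximizers).} This step is the main obstacle, and I would follow the outline in Section \ref{sec:stab-inf}. For a maximizer in the open cube I would compute the partial derivatives. Using
\[
\frac{\partial}{\partial\mu_1}\Lambda_\rho(\mu_1,\mu_2) = \Phi\big(K(\mu_1,\mu_2)\big),
\]
which comes from conditioning on $g_1 = \Phi^{-1}(\mu_1)$, we get
\[
\frac{\partial}{\partial\mu_1}{\sf Cut}_\rho(\mu_1,\mu_2) = 1 - 2\Phi\big(K(\mu_1,\mu_2)\big) =: G\big(K(\mu_1,\mu_2)\big),
\]
with $G$ odd and strictly monotone. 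The stationarity conditions are then $G(K(\mu_i,\mu_j)) + G(K(\mu_i,\mu_k)) = 0$, so oddness and injectivity of $G$ give $K(\mu_i,\mu_j) = -K(\mu_i,\mu_k)$. Writing $t_i = \Phi^{-1}(\mu_i)$, this becomes $t_j + t_k - 2\rho t_i = 0$ for each $i$. The associated matrix has eigenvalues $2-2\rho$ (once) and $-1-2\rho$ (twice), both nonzero at $\rho^*\approx-0.689$. Hence $t = 0$, i.e. $\mu_i = 1/2$ for all $i$, and Lemma \ref{lem:shep} gives $F = \arccos\rho^*/\pi$ there.

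\textbf{Step 3 (boundary maximizers and conclusion).} For boundary maximizers, suppose some coordinate, say $\mu_1$, is $0$ or $1$. Then ${\sf Cut}_\rho(0,\mu) = \mu$ and ${\sf Cut}_\rho(1,\mu) = 1-\mu$. In that case $F$ becomes
\[
\tfrac13\big[\text{linear terms in }\mu_2,\mu_3\big] + \tfrac13\,{\sf Cut}_\rho(\mu_2,\mu_3).
\]
I would optimize the remaining two variables the same way: an interior stationary point in $(\mu_2,\mu_3)$ is handled by a similar derivative computation, and further boundary cases are finite checks. The outcome should be that the boundary value is at most $2/3$, attained at $(0,1,x)$. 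Since $2/3 < \arccos\rho^*/\pi \approx 0.742$, the maximum is $\arccos\rho^*/\pi$, and adding back $2c(\tau)$ (rescaled into $c(\tau)$) proves the lemma. The delicate points I expect are verifying the derivative formula for $\Lambda_\rho$, and making sure the influence hypothesis on $T_{1-\eta}g_i$ suffices when Theorem \ref{thm:stab} is applied.
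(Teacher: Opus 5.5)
Your plan is correct and is essentially the paper's argument. The paper proves this lemma in one line by invoking Lemma~\ref{lem:main}, and that lemma's proof is exactly your Steps 1--3: Theorem~\ref{thm:stab} pair by pair, the linear system of Lemma~\ref{lem:dv-int} for interior stationary points, and the boundary analysis of Lemma~\ref{lem:int-val}. Your boundary step is even immediate from ${\sf Cut}_\rho(\mu_2,\mu_3)\le\min\{\mu_2+\mu_3,\,2-\mu_2-\mu_3\}$. The paper also glosses over the influence mismatch you flag. The clean fix is your first idea made precise: since $g_{u,i}=T_{1-\eta}h$ for a $[0,1]$-valued $h$, splitting the Fourier mass at level $d$ gives $\Inf{\ell}{g_{u,i}}\le(1-\eta)^{-2d}\tau+(1-\eta)^{2d}=O_\eta(\sqrt{\tau})$, so Theorem~\ref{thm:stab} applies to $g_{u,i}$ directly. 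Routing through Lemma~\ref{lem:noise} instead would leave an additive $O(\eta)$ error that does not vanish as $\tau\to 0$; that is harmless for soundness but not the stated form.
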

\begin{proof}
	 Note that since $u \in U_{\rm good}$, using the definition of $U_{\rm good}$ it follows that
	 \[
	 \max_{i \in [3]}\max_{\ell \in [R]}\Inf{\ell}{T_{1 - \eta}g_{u,i}} \leq \tau.
	 \]
	 Hence, instantiating Lemma \ref{lem:main} with $g_i = g_{v,i}$ for every $i \in \{1,2,3\}$ yields the bound
	 \[
	 \Ex_{i \neq j}\Ex_{x\underset{\rho^*}{\sim}y}\left[{\sf Cut}(g_{u,i}(x),g_{u,j}(y))\right] \leq \frac{\arccos \rho^*}{\pi} + c(\tau).
	 \]	
\end{proof}
Using the above, we can now continue with the soundness analysis. Continuing from \eqref{eqn:sound-1}, we get that
\begin{align}
	\Ex_{u \sim U}\Ex_{i \neq j}\Ex_{x,y}\Big[{\sf Cut}\left(g_{u,i}(x),g_{u,j}(y)\right)\Big]		
	&\overset{1}{\leq} \Ex_{u \sim U_{\rm good}} \Ex_{i \neq j}\Big[{\sf Cut}(g_{u,i}(x),g_{u,j}(y))\Big] + \eta' 	\non\\
	&\overset{2}{\leq} \frac{\arccos \rho^*}{\pi} + c(\tau) + \eta'			\non \\		
	&\overset{3}{\leq} \frac{\arccos \rho^*}{\pi} + 2\eta			
	\label{eqn:3col-sound}
\end{align}
where in step $1$ we use the facts that $|U_{\rm good}| \geq (1 - \eta')|U|$ using Lemma \ref{lem:inf-dec}, and in step $2$ we use the bound from Lemma \ref{lem:main}, and step $3$ follows from our choice of parameters. This concludes the proof of Lemma \ref{lem:sound-3col}.

\section{Tripartite Noise Stability Bound}

In this section, we prove the tripartite variant of Majority-Is-Stablest, as stated in the lemma below.

\begin{lemma}			\label{lem:main}
	There exists $\tau_0 \in [0,1]$ such that the following holds for any $\tau \in (0,\tau_0]$ and $\rho \in (-1,0]$. Let $g_1,g_2,g_3:\{0,1\}^R \to [0,1]$ be functions satisfying
	\begin{equation}			\label{eqn:inf-bound}
	\max_{i \in [3]}\max_{\ell \in [R]} \Inf{\ell}{g_{i}} \leq \tau.
	\end{equation}
	Then,
	\[
	\Ex_{i \neq j} \Ex_{x \underset{\rho}{\sim} y}\left[{\sf Cut}\Big(g_{i}(x),g_j(y)\Big)\right]
	\leq 
	\begin{cases}
	\frac{\arccos \rho}{\pi} + c(\tau)	& \text{if} -1 < \rho < -1/2, \\
	\frac23	& \text{if} -1/2 \leq \rho \leq 0,	
	\end{cases}
	\]
	where $c(\tau)$ is an increasing continuous function of $\tau$ satisfying $c(0) = 0$. 
	
	Moreover, the following holds:
	\begin{itemize}
		\item When $\rho \in (-1,-1/2)$, the maximum is attained by the assignment $g_i := {\sf MAJ}$ for every $i \in \{1,2,3\}$.
		\item When $\rho \in [-1/2,0]$, the maximum is attained by assignments of the form $(g_1,g_2,g_3)$ where $g_i \equiv 0$,$g_j \equiv 1$, and $g_k:\{0,1\}^R \to \{0,1\}$ is an arbitrary function, for any ordering of $i,j,k \in \{1,2,3\}$
	\end{itemize}
\end{lemma}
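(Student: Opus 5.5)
The proof has two parts. First we reduce to a three-variable optimization problem using Theorem \ref{thm:stab}. Then we solve that problem through first-order conditions, as sketched in Section \ref{sec:stab-inf}.

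\textbf{Reduction to thresholds.} Write ${\sf Cut}(a,b) = a + b - 2ab$. For $i\neq j$, set $\mu_i = \Ex[g_i]$. Then
\[
\Ex_{x\underset{\rho}{\sim}y}\big[{\sf Cut}(g_i(x),g_j(y))\big] = \mu_i + \mu_j - 2\,\Ex\big[g_i(x)g_j(y)\big].
\]
The correlation of the $\rho$-correlated bit pair is $|\rho|$. Since $\rho\le 0$, we have $-|\rho| = \rho$, and applying Theorem \ref{thm:stab} to $g_i,g_j$ (which are bounded by $1$ and have low influence) gives
\[
\Ex[g_i(x)g_j(y)] \geq \Lambda_\rho(\mu_i,\mu_j) - c(\tau).
\]
Hence the objective is at most $F(\bmu) + 2c(\tau)$, where
\[
F(\mu_1,\mu_2,\mu_3) := \Ex_{i\neq j}\big[{\sf Cut}_\rho(\mu_i,\mu_j)\big], \qquad {\sf Cut}_\rho(a,b) := a + b - 2\Lambda_\rho(a,b).
\]
It remains to show $\max_{[0,1]^3} F = \max\{\arccos\rho/\pi,\ 2/3\}$, and to check that this maximum equals $\arccos\rho/\pi$ exactly when $\rho<-1/2$. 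The tightness claims are direct. Majority on all three parts gives value $\arccos\rho/\pi$ by Sheppard's formula (Lemma \ref{lem:shep}) and the central limit theorem. The assignment $(0,1,g_k)$ cuts every edge between parts $i$ and $j$, and exactly one of the two edge types involving part $k$, for a total of $1/3 + 1/3 = 2/3$.

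\textbf{Interior critical points.} Write $a=\Phi^{-1}(\mu_1)$ and $b=\Phi^{-1}(\mu_2)$. Differentiating,
\[
\partial_{\mu_1}\Lambda_\rho(\mu_1,\mu_2) = \Pr\big[g_2\le b \,\big|\, g_1=a\big] = \Phi\big(K(\mu_1,\mu_2)\big),
\]
so $\partial_{\mu_1}{\sf Cut}_\rho(\mu_1,\mu_2) = 1 - 2\Phi(K(\mu_1,\mu_2))$. This is $G(K)$ for the odd, strictly monotone function $G(t) = 1 - 2\Phi(t)$. The condition $\partial_{\mu_1}F = 0$ therefore reads $G(K_{12}) + G(K_{13}) = 0$, which forces $K_{12} = -K_{13}$. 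This yields linear relations among $a_i = \Phi^{-1}(\mu_i)$: from $\partial_{\mu_1}F=0$ we get $a_2 + a_3 - 2\rho a_1 = 0$, and cyclically for the other two coordinates. The coefficient matrix is $J - (1+2\rho)I$, with eigenvalues $2 - 2\rho$ and $-1-2\rho$ (the latter with multiplicity two). It is nonsingular for $\rho\in(-1,0]\setminus\{-1/2\}$, so the only interior critical point is $a=0$, that is, $\bmu = (1/2,1/2,1/2)$, with value $\arccos\rho/\pi$.

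At $\rho=-1/2$ the kernel is $\{a_1+a_2+a_3=0\}$, which gives a family of critical points. I would handle this case either by continuity in $\rho$ of $\max F$, or by checking directly that $F\le 2/3$ on this family. The direct check should work since $\arccos(-1/2)/\pi = 2/3$.

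\textbf{Boundary analysis (the main obstacle).} Suppose some coordinate is extreme, say $\mu_3 = 0$; the case $\mu_3=1$ is symmetric. Then ${\sf Cut}_\rho(\mu_i,0) = \mu_i$, so
\[
3F = {\sf Cut}_\rho(\mu_1,\mu_2) + \mu_1 + \mu_2 .
\]
Maximizing over $\mu_1,\mu_2$, the first-order condition in the interior of the face is $G(K_{12}) + 1 = 0$. This is impossible, since $G > -1$. The same holds for the mirror-image condition, so the maximum lies on the edges of the face. On those edges, $\mu_1=1$ gives
\[
{\sf Cut}_\rho(1,\mu_2) + 1 + \mu_2 = (1-\mu_2) + 1 + \mu_2 = 2,
\]
so $F = 2/3$ regardless of $\mu_2$, which is the degenerate family $(0,1,x)$. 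The remaining edges ($\mu_1 = 0$, etc.) give smaller values by direct inspection.

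The main delicacy is the endpoint behaviour of $K$ and $\Phi^{-1}$ at $0$ and $1$, together with the case $\rho=-1/2$. I would treat both by continuity of $F$ on the compact cube. Finally, $\arccos\rho/\pi > 2/3$ if and only if $\rho<-1/2$, which gives the case split and the extremizers in the statement.
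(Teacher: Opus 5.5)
Your proposal is correct and follows essentially the paper's proof. The stability bound (Theorem~\ref{thm:stab}) reduces the problem to maximizing $\Ex_{i\neq j}{\sf Cut}_\rho(\mu_i,\mu_j)$ over $[0,1]^3$, and the interior first-order conditions give the linear system in $\Phi^{-1}(\mu_i)$ that forces $(1/2,1/2,1/2)$. On the boundary, the sign of the partial derivatives forces the degenerate family $(0,1,x)$. Your single use of the stability bound via ${\sf Cut}(a,b)=a+b-2ab$ is equivalent to the paper's two uses on $g_ig_j$ and $\bg_i\bg_j$.

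Two remarks. First, your treatment of $\rho=-1/2$ covers a case that the paper's Lemma~\ref{lem:dv-int} simply excludes. Either route works: continuity in $\rho$, or the fact that $F$ is constant on the connected critical set $\sum_i\Phi^{-1}(\mu_i)=0$. Second, like the paper's, your argument gives $\frac23+2c(\tau)$ rather than the error-free $\frac23$ for $\rho\in[-1/2,0]$. The error term appears unavoidable: at $\rho=-1/2$, ${\sf MAJ}$ on finitely many coordinates seems to cut strictly more than $\frac23$ (e.g.\ $45/64$ on $3$ bits), and this persists as the number of coordinates grows.
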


The above lemma gives a complete characterization (up to $o(1)$-error terms) of the maximum-cut value that can be obtained using pseudorandom cuts for the tripartite noisy cube gadget, for all $\rho \in [-1,0]$. In particular, it states that for the case when $\rho < -1/2$, then the Majority function, {\sf MAJ}, attains the largest cut value $\frac{\arccos \rho^*}{\pi}$, which is the same as the standard noisy-cube gadget (Theorem \ref{thm:stab}). However, when $\rho \geq -1/2$, then the cut is no longer maximized by the ${\sf Maj}$ function. Instead, for this range of $\rho$, the underlying tripartite structure of the gadget takes over, and the cut-value of the gadget is maximized by degenerate cuts where two of the three copies of the sets are placed on opposite sides i.e., $g_i \equiv 1$ and $g_j \equiv 0$, and the third set, represented by $g_k$, can be chosen arbitrarily. 

\subsection{Preliminaries}

Let us begin by setting up some useful notation. Recall that for any $x,y \in [0,1]$, we define
\[
\Lambda_\rho(x,y) = \Pr_{g_1 \underset{\rho}{\sim} g_2}\Big[g_1 \leq \Phi^{-1}(x), g_2 \leq \Phi^{-1}(y)\Big].
\]
Furthermore, for a fixed $\rho \in [-1,1]$, we define the {\em $\rho$-correlated Gaussian Cut function}, denoted by ${\sf Cut}_\rho:[0,1]^2 \to [0,1]$, in the following way:
\[
{\sf Cut}_\rho(\mu_1,\mu_2) := \Pr_{g_1 \underset{\rho}{\sim} g_2} \Big[\one_{H_{\mu_1}}(g_1) \neq \one_{H_{\mu_2}}(g_2)\Big],
\]
where $H_\mu := \{x \in \R : x \leq \Phi^{-1}(\mu) \}$ is the halfspace of Gaussian measure $\mu$. In other words, ${\sf Cut}_\rho(\mu_1,\mu_2)$ is the probability of a pair of $\rho$-correlated Gaussian variables getting cut by halfspaces of measures $\mu_1$ and $\mu_2$ respectively. We will find it convenient to work with the following formula for ${\sf Cut}_\rho(\cdot,\cdot)$ stated in the lemma below.
\begin{lemma}			\label{lem:cut-def}
	For any $\mu_1,\mu_2 \in [0,1]$ we have 
	\[
	{\sf Cut}_{\rho}(\mu_1,\mu_2) = 1 - \Lambda_\rho(\mu_1,\mu_2) - \Lambda_{\rho}(1 - \mu_1,1 - \mu_2).
	\]
\end{lemma}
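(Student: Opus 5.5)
This identity follows from complementation and inclusion--exclusion. Fix $\rho$ and $\mu_1,\mu_2\in[0,1]$, and let $(g_1,g_2)$ be $\rho$-correlated standard Gaussians. The event $\one_{H_{\mu_1}}(g_1)\neq \one_{H_{\mu_2}}(g_2)$ is the complement of the disjoint union of two events:
\[
A := \{g_1\in H_{\mu_1},\, g_2\in H_{\mu_2}\}, \qquad B := \{g_1\notin H_{\mu_1},\, g_2\notin H_{\mu_2}\}.
\]
Hence ${\sf Cut}_\rho(\mu_1,\mu_2) = 1-\Pr[A]-\Pr[B]$, and it remains to identify $\Pr[A]$ and $\Pr[B]$ with values of $\Lambda_\rho$.

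By the definition \eqref{eqn:def:lambda-rho}, $\Pr[A] = \Pr[g_1\le \Phi^{-1}(\mu_1),\, g_2\le \Phi^{-1}(\mu_2)] = \Lambda_\rho(\mu_1,\mu_2)$ directly. For $B$, we have $B = \{g_1 > \Phi^{-1}(\mu_1),\, g_2 > \Phi^{-1}(\mu_2)\}$. We use the symmetry $(g_1,g_2)\mapsto(-g_1,-g_2)$, which preserves the joint law of a $\rho$-correlated pair, together with the identity $-\Phi^{-1}(\mu) = \Phi^{-1}(1-\mu)$. These give
\[
\Pr[B] = \Pr\big[-g_1 < \Phi^{-1}(1-\mu_1),\, -g_2 < \Phi^{-1}(1-\mu_2)\big] = \Lambda_\rho(1-\mu_1,1-\mu_2).
\]
Strict versus non-strict inequalities do not matter here, since the boundary lines have probability zero when $|\rho|<1$.

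The only care needed is at the degenerate values. If $\mu\in\{0,1\}$, we interpret $\Phi^{-1}(0)=-\infty$ and $\Phi^{-1}(1)=+\infty$, so that $H_0=\emptyset$ and $H_1=\R$; the same computation then goes through verbatim. If $|\rho|=1$, the pair is supported on a line ($g_2=\pm g_1$), but the events $A$ and $B$ still have the stated probabilities, because $\{g_1=t\}$ is null for each fixed $t$. Substituting $\Pr[A]$ and $\Pr[B]$ into the complement identity yields ${\sf Cut}_\rho(\mu_1,\mu_2) = 1-\Lambda_\rho(\mu_1,\mu_2)-\Lambda_\rho(1-\mu_1,1-\mu_2)$, as claimed. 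The proof has no real obstacle; the only point requiring attention is this bookkeeping at the boundary.
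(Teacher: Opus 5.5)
Your proof is correct and is essentially the paper's argument: the paper writes ${\sf Cut}(a,b) = 1 - ab - (1-a)(1-b)$ for the halfspace indicators and takes expectations, which is your decomposition into the complement of the disjoint events $A$ and $B$, written in arithmetized form. You additionally spell out the step the paper leaves implicit, namely using the symmetry $(g_1,g_2)\mapsto(-g_1,-g_2)$ and $\Phi^{-1}(1-\mu) = -\Phi^{-1}(\mu)$ to identify the ``both outside'' term with $\Lambda_\rho(1-\mu_1,1-\mu_2)$, and you also handle the boundary values of $\mu_1,\mu_2$ and $\rho$.
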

\begin{proof}
	Note that by definition, for any $a,b \in \{0,1\}$, we have ${\sf Cut}(a,b) = a + b - 2ab$. Hence, we can write
	\begin{align*}
		&\Ex_{g_1 \underset{\rho}{\sim} g_2}\Big[{\sf Cut}\left(H_{\mu_1}(g_1),H_{\mu_2}(g_2)\right)\Big] \\
		&= \Ex_{g_1 \underset{\rho}{\sim} g_2}\Big[H_{\mu_1}(g_1) + H_{\mu_2}(g_2) - 2H_{\mu_1}(g_1)H_{\mu_2}(g_2)\Big] \\
		&= \Ex_{g_1 \underset{\rho}{\sim} g_2}\Big[1 - H_{\mu_1}(g_1)H_{\mu_2}(g_2) - (1 - H_{\mu_1}(g_1))(1 - H_{\mu_2}(g_2))\Big] \\
		&= 1 - \Ex_{g_1 \underset{\rho}{\sim} g_2}\Big[H_{\mu_1}(g_1)H_{\mu_2}(g_2)\Big] - \Ex_{g_1 \underset{\rho}{\sim} g_2}\Big[(1 - H_{\mu_1})(g_1)(1 - H_{\mu_2})(g_2)\Big] \\
		&= 1 - \Lambda_\rho(\mu_1,\mu_2) - \Lambda_{\rho}(1 - \mu_1,1 - \mu_2).
	\end{align*}
\end{proof}

We now state a lemma which is the main technical tool in proving Lemma \ref{lem:main}.

\begin{lemma}				\label{lem:str-main}
	The following holds:
	\[
	\max_{\mu_1,\mu_2,\mu_3 \in [0,1]} \Ex_{i \neq j}\Big[{\sf Cut}_\rho(\mu_i,\mu_j)\Big] = 
	\begin{cases}
		\frac{\dsp \arccos \rho}{\dsp \pi}	&\mbox{ if } -1 < \rho < -1/2,		\\
		\frac{\dsp 2}{\dsp 3}	&\mbox{ if } -1/2 \leq \rho \leq 0.
	\end{cases}
	\]
	Furthermore, the following holds:
	\begin{itemize}
		\item When $\rho \in (-1,-1/2)$, the maxima is uniquely attained at $\mu_1 = \mu_2 = \mu_3 = 1/2$.
		\item When $\rho \in [-1/2,0]$, the maxima is attained by assignments of the form $\mu_i = 0,\mu_j = 1$ and $\mu_k \in [0,1]$ where $\{i,j,k\} = \{1,2,3\}$. 
	\end{itemize}
\end{lemma}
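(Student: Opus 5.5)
We maximize $F(\mu_1,\mu_2,\mu_3) := \frac13\sum_{\{i,j\}}{\sf Cut}_\rho(\mu_i,\mu_j)$ over the compact cube $[0,1]^3$. A maximizer exists by continuity, and we classify it by whether it lies in the interior or on the boundary.

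\emph{Derivatives.} Using Lemma \ref{lem:cut-def}, ${\sf Cut}_\rho(x,y) = 1-\Lambda_\rho(x,y)-\Lambda_\rho(1-x,1-y)$. The standard formula is
\[
\partial_x \Lambda_\rho(x,y) = \Pr\big[g_2 \le \Phi^{-1}(y) \,\big|\, g_1=\Phi^{-1}(x)\big] = \Phi\big(K(x,y)\big), \qquad K(x,y) := \frac{\Phi^{-1}(y)-\rho\,\Phi^{-1}(x)}{\sqrt{1-\rho^2}}.
\]
Since $K(1-x,1-y)=-K(x,y)$, this gives
\[
\partial_x{\sf Cut}_\rho(x,y) = -\Phi(K)+\Phi(-K) = 1-2\Phi\big(K(x,y)\big) =: -G\big(K(x,y)\big),
\]
where $G(t)=2\Phi(t)-1$ is odd and strictly increasing. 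Hence
\[
\partial_{\mu_i}F = -\tfrac13\big[G(K(\mu_i,\mu_j))+G(K(\mu_i,\mu_k))\big].
\]

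\emph{Interior case.} At an interior maximizer all partials vanish. Oddness and injectivity of $G$ give $K(\mu_i,\mu_j)=-K(\mu_i,\mu_k)$ for each $i$. Writing $a_i=\Phi^{-1}(\mu_i)$, this reads $a_j+a_k-2\rho a_i=0$ for all $i$, i.e.\ $(J-(1+2\rho)I)a=0$ with $J$ the all-ones matrix. The eigenvalues of $J-(1+2\rho)I$ are $2-2\rho$ and $-(1+2\rho)$ (twice). For $\rho\in(-1,0]\setminus\{-1/2\}$ the matrix is nonsingular, so $a=0$ and $\mu=(1/2,1/2,1/2)$. By Lemma \ref{lem:shep} the value there is $\arccos\rho/\pi$.

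For $\rho=-1/2$ the solutions form a plane $a_1+a_2+a_3=0$. I would handle this case directly: by Lemma \ref{lem:shep}-type reasoning or continuity in $\rho$, the value on this plane is at most $2/3$. Indeed, $F$ is continuous in $\rho$, and the bound $\max F\le \max(\arccos\rho/\pi,2/3)$ passes to the limit, with $\arccos(-1/2)/\pi=2/3$. This settles the value at $\rho=-1/2$; characterizing the maximizers there needs only the attainment claim.

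\emph{Boundary case.} This is the main obstacle, since one must check every face. Suppose $\mu_1\in\{0,1\}$, and by the symmetry $\mu\mapsto 1-\mu$ take $\mu_1=0$. Then ${\sf Cut}_\rho(0,y)=y$, so
\[
F = \tfrac13\big[\mu_2+\mu_3+{\sf Cut}_\rho(\mu_2,\mu_3)\big].
\]
The function $h(y,z)=y+z+{\sf Cut}_\rho(y,z)$ satisfies $\partial_y h = 1-G(K(y,z)) > 0$ whenever $K$ is finite, so it is increasing in each coordinate. Thus at a maximizer we may push one coordinate up to $1$, say $\mu_2=1$, and then $h(1,z)=1+z+(1-z)=2$, giving $F=2/3$ regardless of $\mu_3$. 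Therefore every boundary maximizer has value $2/3$, attained exactly by the triples $(0,1,x)$ up to permutation. Strict monotonicity shows no other boundary point achieves $2/3$, though for the statement only attainment is needed.

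\emph{Comparison.} The maximum is $\max\{\arccos\rho/\pi,\,2/3\}$, where the interior value is only available when $(1/2,1/2,1/2)$ is critical, which it always is. Since $\arccos\rho/\pi > 2/3$ iff $\rho<-1/2$, we get the first branch with $\mu=(1/2,1/2,1/2)$ as the unique maximizer (the boundary value $2/3$ is strictly smaller there). For $\rho\in[-1/2,0]$ we get $2/3$, attained by the degenerate triples.
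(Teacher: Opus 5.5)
Your proof is correct and follows essentially the same route as the paper. Both use the same derivative $\partial_x{\sf Cut}_\rho=-G(K)$ and the same nonsingular $3\times 3$ system to pin interior critical points to $(1/2,1/2,1/2)$, and on the face $\mu_1=0$ both use the sign of $1-G(K)$; you turn that sign into a monotonicity bound $h(y,z)\le h(1,z)=2$, where the paper derives a contradiction from stationarity. Your continuity-in-$\rho$ argument for $\rho=-1/2$ also covers a case that the paper's proof explicitly excludes (it assumes $\rho\neq -1/2$) yet still asserts in the statement.
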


In other words, the above lemma characterizes the optimal choice of hyperplanes which achieves the maximum cut in the ``Gaussian analogue'' of the tripartite gadget. In particular, it says that if $\rho < -1/2$, then the optimal cut corresponds to choosing balanced thresholds in each of the three partitions. And as before, if $\rho \geq -1/2$, then the optimal choice corresponds to the degenerate cut where two out of the three thresholds are put in opposite sides, and the the threshold for the third set is arbitrary. We defer the proof of the above lemma to Section \ref{sec:str-main} for now, and proceed with proving Lemma \ref{lem:main}.

\subsection{Proof of Lemma \ref{lem:main}}

Now we are ready to prove Lemma \ref{lem:main}. For every $g_i$, let us denote $\bg_i := 1 - g_i$ and let $\mu_i = \Ex_{x \sim \{0,1\}^R}\big[g_i(x)\big]$. Then using the definition of ${\sf Cut}(\cdot,\cdot)$ we get that:
\begin{align*}
	\Ex_{i \neq j}\Ex_{x \underset{\rho}{\sim} y}\Big[{\sf Cut}\big(g_i(x),g_j(y)\big)\Big]
	&= \Ex_{i \neq j}\Ex_{x \underset{\rho}{\sim} y}\Big[1 - g_i(x)g_j(y) - \bg_i(x)\bg_j(y)\Big]	\\
	&= 1 - \Ex_{i \neq j}\Ex_{x \underset{\rho}{\sim} y}\Big[g_i(x)g_j(y)\Big] - \Ex_{i \neq j}\Ex_{x \underset{\rho}{\sim} y}\Big[\bg_i(x)\bg_j(y)\Big].
\end{align*}
Now fix a pair $(i,j)$. Then, due to the setting of this lemma we have
\[
\max_{j \in [R]}\Inf{j}{g_a} \leq \tau
\]
for every $a \in \{1,2,3\}$. Therefore, using Theorem \ref{thm:stab} we have
\[
\Ex_{x \underset{\rho}{\sim} y} \Big[g_i(x)g_j(y)\Big]
\geq \Lambda_\rho\left(\Ex[g_i],\Ex[g_j]\right) - c(\tau) = \Lambda_\rho(\mu_i,\mu_j) - c(\tau).
\]
By an identical argument, we also get that
\[
\Ex_{x \underset{\rho}{\sim} y} \Big[\bg_i(x)\bg_j(y)\Big] 
\geq \Lambda_\rho\Big(\Ex[\bg_i],\Ex[\bg_j]\Big) - c(\tau)
= \Lambda_\rho\big(1 - \mu_i,1 - \mu_j\big) - c(\tau).
\]
Putting together the bounds, for every fixed choice of $i,j$ we get that
\begin{align*}
	\Ex_{x \underset{\rho}{\sim} y}\Big[{\sf Cut}(g_i(x),g_j(y))\Big] 
	&\leq 1 - \Lambda_\rho\big(\mu_i,\mu_j\big) - \Lambda_\rho\big(1 - \mu_i,1 - \mu_j\big) + 2 c(\tau)  	\\
	&= {\sf Cut}_\rho(\mu_i,\mu_j) + 2c(\tau).
\end{align*}

Hence, applying the above bound for every fixed choice of $i,j$ we get that
\begin{equation}			\label{eqn:tri-1}
	\Ex_{i,j}\Ex_{x \underset{\rho}{\sim} y}\Big[{\sf Cut}(g_i(x),g_j(y))\Big]
	\leq \Ex_{i,j}\Big[{\sf Cut}_\rho\big(\mu_i,\mu_j\big)\Big] + 2 c(\tau).
\end{equation}
Finally, using Lemma \ref{lem:str-main}, we get the bounds:
\begin{equation}			\label{eqn:tri-2}
	\Ex_{i \neq j}\Big[{\sf Cut}_\rho(\mu_i,\mu_j)\Big] \leq
	\begin{cases}
		\frac{\dsp \arccos \rho}{\dsp \pi} & \mbox{ if } -1 < \rho \leq -1/2,			\\
		\frac{\dsp 2}{\dsp 3} & \mbox { if } -1/2 < \rho \leq 0.
	\end{cases}
\end{equation} 

Combining the bounds from \eqref{eqn:tri-1} and \eqref{eqn:tri-2} yields the bound:
\[
\Ex_{i \neq j} \Ex_{x \underset{\rho}{\sim} y}\left[{\sf Cut}\Big(g_{i}(x),g_j(y)\Big)\right]
\leq 
\begin{cases}
	\frac{\arccos \rho}{\pi} + c(\tau)	& \text{if} -1 < \rho < -1/2, \\
	\frac23	& \text{if} -1/2 \leq \rho \leq 0,	
\end{cases}
\]
which establishes the first part of the lemma. For the second part of the lemma, we observe the following:
\begin{itemize}
	\item Using Sheppard's Formula (Lemma \ref{lem:shep}), the assignment $g_1 = g_2 = g_3 = {\sf MAJ}$ achieves the cut
	\[
	\Ex_{i\neq j}\Ex_{x \underset{\rho}{\sim} y}\left[{\sf Cut}(g_i(x),g_j(y))\right]
	= \Ex_{x \underset{\rho}{\sim} y}\left[{\sf Cut}(g_i(x),g_j(y))\right]
	= \frac{\arccos \rho}{\pi} \pm o_R(1).
	\]
	\item Any choice of functions $(g_i,g_j,g_k)$ where $g_i \equiv 1$, $g_j \equiv 0$, and $g_k:\{0,1\}^R \to \{0,1\}$ is an arbitrary function satisfying \eqref{eqn:inf-bound} achieves a cut of value $2/3$.
\end{itemize}
Furthermore, in both cases (choosing $R$ large enough), the above choice of assignments $(g_1,g_2,g_3)$ satisfy \eqref{eqn:inf-bound} \footnote{In particular, it is known that all the influences of the Majority function are $\Theta(1/\sqrt{R})$, for e.g., see Section 5.2 of \cite{OD14}. Furthermore, the influences of constant functions are trivially $0$.}. Hence, the claim follows.

\section{Proof of Lemma \ref{lem:str-main}}            \label{sec:str-main}

In this section, we prove Lemma \ref{lem:str-main}. Let us introduce some additional useful notations which will be used in the proof. Throughout this section, let us denote
\[
F_{\rm cut}(\mu_1,\mu_2,\mu_3) := \sum_{\substack{i,j \in [3] \\ i < j}} {\sf Cut}_\rho(\mu_i,\mu_j).
\]
Furthermore, we will also define
\[
\ccut(x,y) := 1 - {\sf Cut}_\rho(x,y),
\]
i.e., $\ccut(x,y) = \Lambda_\rho(x,y) + \Lambda_\rho(1 - x,1 - y)$. Finally, for any $\mu_1,\mu_2,\mu_3 \in [0,1]$, let us define
\[
F(\mu_1,\mu_2,\mu_3) := \sum_{\substack{i,j \in [3] \\ i < j}} \ccut(\mu_i,\mu_j).
\]
Note that we have the set of maximizers of $F_{\rm cut}$ are precisely the set of minimizers of $F$, i.e.,
\[
\argmax_{\mu_1,\mu_2,\mu_3 \in [0,1]} F_{\rm cut}(\mu_1,\mu_2,\mu_3) = \argmin_{\mu_1,\mu_2,\mu_3 \in [0,1]} F(\mu_1,\mu_2,\mu_3),
\]
and hence it suffices to characterize the minima(s) of $F$ instead\footnote{We work with $F$ instead of $F_{\rm cut}$ for the most part, since $F$ doesn't carry the extra constant terms.}. To that end, we will show the following:
\begin{itemize}
	\item For $\rho \in (-1,0]$ with $\rho \neq -1/2$, the balanced-cuts assignment $(1/2,1/2,1/2)$ is the only stationary point of $F$ in $(0,1)^3$ (Lemma \ref{lem:dv-int}).
	\item On the boundary of $[0,1]^3$, the extreme values of $F$ are attained by triples of the form $(0,1,\mu)$ where $\mu \in [0,1]$, and all permutations thereof (Lemma \ref{lem:int-val}).
\end{itemize}
We now proceed to establish the above in the subsequent sections.

\subsection{Interior Stationary Points}

The main result of this section is the following lemma which characterizes the interior stationary points of $F$.
\begin{lemma}			\label{lem:dv-int}
	Suppose $\rho \in (-1,0]$ such that $\rho \neq -1/2$. Then the only stationary point of $F(\mu_1,\mu_2,\mu_3)$ in $(0,1)^3$ is $\mu_1 = \mu_2 = \mu_3 = 1/2$.
\end{lemma}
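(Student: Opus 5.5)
The plan is to compute $\nabla F$ explicitly, show that the first-order conditions force the quantities $K(\mu_i,\mu_j)$ to cancel in pairs, and observe that this turns into a linear system in $a_i := \Phi^{-1}(\mu_i)$ which is nonsingular exactly when $\rho \neq -1/2$.

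\textbf{Step 1: derivative of $\Lambda_\rho$.} Fix a point of $(0,1)^3$, so every $a_i$ is finite and $F$ is smooth there. Write $\Lambda_\rho(x,y) = \Pr[g_1 \le \Phi^{-1}(x),\, g_2 \le \Phi^{-1}(y)]$. Differentiate in $x$ by conditioning on $g_1 = \Phi^{-1}(x)$ and using $\frac{d}{dx}\Phi^{-1}(x) = 1/\phi(\Phi^{-1}(x))$. This gives
\[
\frac{\partial}{\partial x}\Lambda_\rho(x,y) = \Phi\big(K(x,y)\big), \qquad K(x,y) := \frac{\Phi^{-1}(y) - \rho\,\Phi^{-1}(x)}{\sqrt{1-\rho^2}} .
\]
Since $\Phi^{-1}(1-t) = -\Phi^{-1}(t)$, the same formula applied to the second term gives
\[
\frac{\partial}{\partial x}\Lambda_\rho(1-x,1-y) = -\Phi\big(K(1-x,1-y)\big) = -\Phi\big(-K(x,y)\big).
\]
Hence
\[
\frac{\partial}{\partial x}\ccut(x,y) = \Phi(K(x,y)) - \Phi(-K(x,y)) = G\big(K(x,y)\big), \qquad G(t) := 2\Phi(t) - 1 .
\]
The function $G$ is odd and strictly increasing, so in particular it is injective. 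By symmetry of $\ccut$, the partial derivative of $F$ with respect to $\mu_i$ is
\[
\frac{\partial F}{\partial \mu_i} = G\big(K(\mu_i,\mu_j)\big) + G\big(K(\mu_i,\mu_k)\big), \qquad \{i,j,k\} = \{1,2,3\} .
\]

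\textbf{Step 2: reduce to a linear system.} At a stationary point each partial derivative vanishes. Oddness and injectivity of $G$ then give $K(\mu_i,\mu_j) = -K(\mu_i,\mu_k)$ for each $i$. Multiplying through by $\sqrt{1-\rho^2} > 0$ (valid since $\rho > -1$), this reads
\[
a_j + a_k = 2\rho\, a_i \qquad \text{for each } i .
\]
In matrix form this is $M a = 0$, where $M = J - (1+2\rho) I$ has diagonal entries $-2\rho$ and off-diagonal entries $1$, and $J$ is the all-ones matrix. The eigenvalues of $M$ are $2 - 2\rho$, on the all-ones vector, and $-1 - 2\rho$, with multiplicity two on the sum-zero subspace. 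For $\rho \in (-1,0]$ the first eigenvalue is positive, and the second is nonzero precisely because $\rho \neq -1/2$. So $M$ is invertible, $a = 0$, and therefore $\mu_1 = \mu_2 = \mu_3 = 1/2$.

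\textbf{Step 3: the converse.} At $(1/2,1/2,1/2)$ all $K$ values are $0$ and $G(0) = 0$, so this point is indeed stationary.

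The only step needing care is Step 1, namely the derivative identity for $\Lambda_\rho$. I would verify it by writing $\Lambda_\rho(x,y) = \int_{-\infty}^{\Phi^{-1}(x)} \phi(s)\,\Phi\!\big((\Phi^{-1}(y) - \rho s)/\sqrt{1-\rho^2}\big)\,ds$ and applying the fundamental theorem of calculus together with the chain rule; the factor $\phi(\Phi^{-1}(x))$ from the upper limit cancels against the derivative of $\Phi^{-1}$. Everything after that is finite-dimensional linear algebra.
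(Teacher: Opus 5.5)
Your proposal is correct and is essentially the paper's proof. It uses the same derivative formula $\partial_x \Lambda_\rho(x,y)=\Phi(K(x,y))$, the same reduction of each first-order condition to $K(\mu_i,\mu_j)+K(\mu_i,\mu_k)=0$ via oddness and strict monotonicity of $G$, and the same $3\times 3$ linear system in $\Phi^{-1}(\mu_i)$. Your explicit eigenvalues $2-2\rho$ and $-1-2\rho$ make precise the nonsingularity for $\rho\neq -1/2$, which the paper asserts without computation, and your Step 3 converse is a harmless extra check.
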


We prove the above lemma in the next couple of subsections.

\subsection{Setting Up}

For convenience, throughout this section we shall denote $t(x) := \Phi^{-1}(x)$, where recall that $\Phi^{-1}(\cdot)$ is the inverse Gaussian CDF function. The following expression for the partial derivatives of $\Lambda_\rho$ is well-known.

\begin{lemma}			\label{lem:dv}
	For any $x,y \in [0,1]$, we have
	\[
	\frac{\partial \Lambda_\rho(x,y)}{\partial x}  = \Phi\left(\frac{t(y) - \rho t(x)}{\sqrt{1 - \rho^2}}\right).
	\]
\end{lemma}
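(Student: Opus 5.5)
We need to prove $\partial_x \Lambda_\rho(x,y) = \Phi\big((t(y)-\rho t(x))/\sqrt{1-\rho^2}\big)$ for $\rho \in (-1,1)$ (for $x,y\in(0,1)$; at the endpoints the formula is read as a limit). The plan is to write $\Lambda_\rho$ as a one-dimensional integral by conditioning on $g_1$, and then differentiate with the chain rule.

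\textbf{Step 1: integral representation.} Write $g_2 = \rho g_1 + \sqrt{1-\rho^2}\, z$ with $z \sim N(0,1)$ independent of $g_1$. Conditioning on $g_1 = s$ gives
\[
\Lambda_\rho(x,y) = \int_{-\infty}^{t(x)} \phi(s)\, \Pr_z\!\left[\rho s + \sqrt{1-\rho^2}\, z \leq t(y)\right] ds = \int_{-\infty}^{t(x)} \phi(s)\, \Phi\!\left(\frac{t(y) - \rho s}{\sqrt{1-\rho^2}}\right) ds,
\]
where $\phi$ is the standard Gaussian density.

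\textbf{Step 2: differentiate.} The integrand is continuous in $s$, so the fundamental theorem of calculus and the chain rule give
\[
\frac{\partial \Lambda_\rho}{\partial x} = \phi(t(x))\, \Phi\!\left(\frac{t(y)-\rho t(x)}{\sqrt{1-\rho^2}}\right) \cdot t'(x).
\]
Since $t = \Phi^{-1}$, we have $t'(x) = 1/\phi(t(x))$, which is valid for $x \in (0,1)$. The factors $\phi(t(x))$ cancel, leaving exactly the claimed expression.

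\textbf{Step 3: endpoints.} For $x$ or $y$ in $\{0,1\}$ we use the conventions $t(0) = -\infty$ and $t(1) = +\infty$ and interpret both sides as limits; in all later applications only interior points are used anyway. The only point needing care is that $|\rho|<1$, so the conditional decomposition in Step 1 is legitimate; that is why the identity is stated for $\rho\in(-1,1)$. There is no real obstacle beyond this.
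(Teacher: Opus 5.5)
Your proposal is correct and follows the paper's proof essentially step for step: the paper also writes $g_2 = \rho g_1 + \sqrt{1-\rho^2}\,a$ to obtain $\Lambda_\rho(x,y) = \int_{-\infty}^{t(x)} \Phi\bigl(\frac{t(y)-\rho a}{\sqrt{1-\rho^2}}\bigr)\phi(a)\,da$, then applies the Leibniz rule with $t'(x) = 1/\phi(t(x))$ so that the factors $\phi(t(x))$ cancel. Your remarks on requiring $|\rho|<1$ and reading the endpoint cases as limits are sensible details that the paper leaves implicit.
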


We defer the proof of the lemma to Section \ref{sec:proof-dv}. Using the above, we get the following expression for the derivative of $\ccut(\cdot,\cdot)$ which is a key tool in the proof.

\begin{lemma}			\label{lem:dv-1}
	For any $x,y \in [0,1]$, we have
	\[
	\frac{\partial \ccut(x,y)}{\partial x} 
	= \Phi\left(\frac{t(y) - \rho t(x)}{\sqrt{1 - \rho^2}}\right) - \Phi\left(-\frac{t(y) - \rho t(x)}{\sqrt{1 - \rho^2}}\right).
	\]
\end{lemma}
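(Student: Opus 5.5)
This follows directly from Lemma~\ref{lem:dv} and the definition $\ccut(x,y) = \Lambda_\rho(x,y) + \Lambda_\rho(1-x,1-y)$. We differentiate the two terms separately. The first term is handled by Lemma~\ref{lem:dv} as stated:
\[
\frac{\partial \Lambda_\rho(x,y)}{\partial x} = \Phi\!\left(\frac{t(y)-\rho t(x)}{\sqrt{1-\rho^2}}\right).
\]

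For the second term, set $x' = 1-x$ and $y' = 1-y$. The chain rule together with Lemma~\ref{lem:dv} gives
\[
\frac{\partial}{\partial x}\Lambda_\rho(1-x,1-y) = -\Phi\!\left(\frac{t(1-y)-\rho\, t(1-x)}{\sqrt{1-\rho^2}}\right).
\]
Next we use the symmetry of the standard Gaussian, $\Phi^{-1}(1-z) = -\Phi^{-1}(z)$, i.e.\ $t(1-z) = -t(z)$. The argument of $\Phi$ then becomes $-\frac{t(y)-\rho t(x)}{\sqrt{1-\rho^2}}$. Adding this to the first term yields exactly the claimed expression.

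The only delicate point is the boundary, where $x$ or $y$ lies in $\{0,1\}$ and $t$ takes the values $\pm\infty$. Here we read the formula with the conventions $\Phi(+\infty)=1$ and $\Phi(-\infty)=0$, and interpret derivatives at the endpoints one-sidedly. These are the same conventions under which Lemma~\ref{lem:dv} is stated on $[0,1]$, so they pass through the computation unchanged. We also assume $|\rho|<1$, so that $\sqrt{1-\rho^2}>0$; this matches the range $\rho \in (-1,0]$ used in this section. No step here is a genuine obstacle: the substitution identity $t(1-z) = -t(z)$ is the key observation, and the rest is the chain rule.
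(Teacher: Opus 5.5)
Your proof is correct and is exactly the paper's argument: split $\ccut(x,y)$ into $\Lambda_\rho(x,y)+\Lambda_\rho(1-x,1-y)$, apply Lemma~\ref{lem:dv} directly to the first term and with the chain rule (picking up the factor $-1$) to the second, and then use $t(1-z)=-t(z)$ to turn the second argument into the negative of the first. Your boundary remark goes slightly beyond the paper, which does not discuss endpoints at all. Note, though, that the conventions $\Phi(+\infty)=1$, $\Phi(-\infty)=0$ do not settle the corners $(x,y)\in\{(0,1),(1,0)\}$ when $\rho<0$, where $t(y)-\rho t(x)$ has the form $\infty-\infty$; these points are never used later.
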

\begin{proof}
	Using Lemma \ref{lem:dv} and the definition of $\ccut$ we have:
	\begin{align*}
		\frac{\partial \ccut(x,y)}{\partial x} 
		&= \frac{\partial \Lambda_\rho(x,y)}{\partial x}  + \frac{\partial \Lambda_\rho(1 - x,1 - y)}{\partial x} \\
		&= \Phi\left(\frac{t(y) - \rho t(x)}{\sqrt{1 - \rho^2}}\right) + \frac{d (1 - x)}{dx} \cdot \Phi\left(\frac{t(1 - y) - \rho t(1 - x)}{\sqrt{1 - \rho^2}}\right) \\
		&= \Phi\left(\frac{t(y) - \rho t(x)}{\sqrt{1 - \rho^2}}\right) - \Phi\left(\frac{t(1 - y) - \rho t(1 - x)}{\sqrt{1 - \rho^2}}\right) \\
		&= \Phi\left(\frac{t(y) - \rho t(x)}{\sqrt{1 - \rho^2}}\right) - \Phi\left(- \frac{t(y) - \rho t(x)}{\sqrt{1 - \rho^2}}\right),
	\end{align*}
	where the last step is due to $\Phi^{-1}(1 - z) = - \Phi^{-1}(z)$ for all $z \in [0,1]$.
\end{proof}

For ease of notation, let us also define
\[
K(x,y) := \frac{t(y) - \rho t(x)}{\sqrt{1 - \rho^2}},
\]
and 
\[
G(z) := \Phi(z) - \Phi(-z).
\]
We use the above notation to express the partial derivatives of the function $F$.

\begin{lemma}			\label{lem:dv-2}
	For any $\mu_1,\mu_2,\mu_3 \in [0,1]$ we have
	\[
	\frac{\partial  F(\mu_1,\mu_2,\mu_3)}{\partial \mu_1} = G(K(\mu_1,\mu_2)) + G(K(\mu_1,\mu_3)).
	\]
	Similarly, $\frac{\partial F}{\partial \mu_2},\frac{\partial F}{\partial \mu_3}$ admit analogous formulas.
\end{lemma}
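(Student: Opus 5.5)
The formula follows from the linearity of differentiation together with Lemma \ref{lem:dv-1}. By definition,
\[
F(\mu_1,\mu_2,\mu_3) = \ccut(\mu_1,\mu_2) + \ccut(\mu_1,\mu_3) + \ccut(\mu_2,\mu_3),
\]
and the last summand does not depend on $\mu_1$, so its partial derivative in $\mu_1$ vanishes. The first summand is handled directly: Lemma \ref{lem:dv-1} with $(x,y) = (\mu_1,\mu_2)$ gives
\[
\frac{\partial \ccut(\mu_1,\mu_2)}{\partial \mu_1} = \Phi(K(\mu_1,\mu_2)) - \Phi(-K(\mu_1,\mu_2)) = G(K(\mu_1,\mu_2)),
\]
using the definitions of $K$ and $G$. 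The second summand is treated in exactly the same way with $(x,y) = (\mu_1,\mu_3)$, which yields $G(K(\mu_1,\mu_3))$. Adding the two contributions gives the claimed expression.

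For $\partial F/\partial \mu_2$ and $\partial F/\partial \mu_3$ I would first observe that $\ccut$ is symmetric in its arguments, that is, $\ccut(x,y) = \ccut(y,x)$. This holds because $\Lambda_\rho(x,y) = \Lambda_\rho(y,x)$, which in turn follows from the symmetry of the joint law of a $\rho$-correlated Gaussian pair $(g_1,g_2)$ under swapping the two coordinates. With symmetry in hand, $\ccut(\mu_1,\mu_2) = \ccut(\mu_2,\mu_1)$, and similarly for the other pairs, so every term of $F$ containing $\mu_2$ (respectively $\mu_3$) can be written with that variable in the first slot. Applying Lemma \ref{lem:dv-1} as above then gives
\[
\frac{\partial F}{\partial \mu_2} = G(K(\mu_2,\mu_1)) + G(K(\mu_2,\mu_3)), \qquad \frac{\partial F}{\partial \mu_3} = G(K(\mu_3,\mu_1)) + G(K(\mu_3,\mu_2)).
\]

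The only point needing care is the boundary of $[0,1]^3$. When some $\mu_i \in \{0,1\}$, the quantity $t(\mu_i)$ equals $\pm\infty$, so these identities should be read in the interior $(0,1)^3$, or as one-sided limits in the extended sense. This causes no difficulty for the application, since Lemma \ref{lem:dv-int} only concerns stationary points in $(0,1)^3$. Beyond this remark, no step here poses a real obstacle.
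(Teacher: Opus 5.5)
Your proposal is correct and follows the paper's proof: split $F$ into its three pairwise terms, apply Lemma \ref{lem:dv-1} to the two terms containing $\mu_1$, and rewrite the result using the definitions of $K$ and $G$. Your symmetry argument for $\partial F/\partial\mu_2$ and $\partial F/\partial\mu_3$ simply fills in the paper's ``similar arguments''. One correction to your boundary remark: the paper also applies this formula in Claim \ref{cl:dv} with $\mu_1 = 0$, so the extended reading with $G(\pm\infty)=\pm 1$ is actually needed there, not just the interior version.
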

\begin{proof}
	Using the definition of $F$, we get 
	\begin{equation}		\label{eqn:dv-1a}
		\frac{\partial F(\mu_1,\mu_2,\mu_3)}{\partial \mu_1} 
		= \frac{\partial \ccut(\mu_1,\mu_2)}{\partial \mu_1} + \frac{\partial \ccut(\mu_1,\mu_3)}{\partial \mu_1} 	\\
	\end{equation}	
	Now, applying Lemma \ref{lem:dv-1}, we can evaluate the first term as
	\begin{align}
		\frac{\partial \ccut(\mu_1,\mu_2)}{\partial \mu_1}
		&= \Phi\left(\frac{t(\mu_2) - \rho t(\mu_1)}{\sqrt{1 - \rho^2}}\right) - \Phi\left(-\frac{t(\mu_2) - \rho t(\mu_1)}{\sqrt{1 - \rho^2}}\right)
		\non    \\
		&= \Phi\left(K(\mu_1,\mu_2)\right) - \Phi\left(-K(\mu_1,\mu_2)\right)		\non \\
		&= G(K(\mu_1,\mu_2)),			\label{eqn:dv-1b}
	\end{align}
	where the second step uses the definition of $K(\cdot,\cdot)$ and the third step uses the definition of $G(\cdot)$. By a similar argument, we also have
	\begin{equation}			\label{eqn:dv-1c}
	\frac{\partial \ccut(\mu_1,\mu_3)}{\partial \mu_1} = G(K(\mu_1,\mu_3)).
	\end{equation}
	Plugging in the expressions from \eqref{eqn:dv-1b} and \eqref{eqn:dv-1c} into \eqref{eqn:dv-1a} we immediately get that
	\[
	\frac{\partial F(\mu_1,\mu_2,\mu_3)}{\partial \mu_1} = G(K(\mu_1,\mu_2)) + G(K(\mu_1,\mu_3)),
	\]
	as claimed by the lemma. The partial derivatives for $\mu_2,\mu_3$ can be derived using similar arguments.	
\end{proof}

{\bf Some Useful Properties of $G(\cdot)$}. We also state and prove some useful properties of the function $G(\cdot)$.

\begin{observation}			\label{obs:dv-1}
	The function $G(\cdot)$ is odd and strictly increasing on $\R$.
\end{observation}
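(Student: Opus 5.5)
We need to prove $G(z) = \Phi(z) - \Phi(-z)$ is odd and strictly increasing on $\R$. The plan is a direct computation from the definition of $G$, using only elementary properties of the Gaussian CDF $\Phi$.

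\emph{Oddness.} Substituting $-z$ into the definition gives
\[
G(-z) = \Phi(-z) - \Phi(z) = -\big(\Phi(z) - \Phi(-z)\big) = -G(z)
\]
for every $z \in \R$. This step needs nothing beyond the definition.

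\emph{Monotonicity.} I would differentiate. Since $\Phi$ is differentiable with derivative the standard Gaussian density $\phi(z) = \frac{1}{\sqrt{2\pi}}e^{-z^2/2}$, the chain rule gives
\[
G'(z) = \phi(z) + \phi(-z) = 2\phi(z).
\]
Here the second equality uses that $\phi$ is even. Since $\phi(z) > 0$ for every $z \in \R$, we get $G'(z) > 0$ everywhere. Hence $G$ is strictly increasing on $\R$ by the mean value theorem.

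An alternative avoids derivatives. Because $\Phi$ is strictly increasing, $z \mapsto \Phi(z)$ is strictly increasing and $z \mapsto -\Phi(-z)$ is also strictly increasing. Their sum $G$ is therefore strictly increasing.

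There is no real obstacle. The only point worth recording is the consequence used later: since $G$ is an odd bijection onto its image, $G(a) + G(b) = 0$ holds if and only if $a = -b$. This follows because $G(a) = -G(b) = G(-b)$ and $G$ is injective. This is exactly the form in which the observation will be applied to the stationarity conditions in Lemma \ref{lem:dv-2}.
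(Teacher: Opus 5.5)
Your proof is correct and essentially the paper's own argument. Oddness is read off from the definition, and strict monotonicity follows because $\Phi$ is strictly increasing; your derivative computation $G'(z)=2\phi(z)>0$ is an equally valid variant, and your closing remark that $G(a)+G(b)=0$ if and only if $a=-b$ is exactly the paper's next observation.
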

\begin{proof}
	Since $G(z) = \Phi(z) - \Phi(-z)$, $G$ is odd by definition. Furthermore, since $\Phi(z)$ is strictly increasing in $z$, it follows that $G(z) = \Phi(z) - \Phi(-z)$ is strictly increasing in $z$ as well.
\end{proof}

\begin{observation}			\label{obs:dv-2}
	A pair $z_1,z_2 \in \R$ satisfies $G(z_1) + G(z_2) = 0$ if and only if $z_1 + z_2 = 0$.
\end{observation}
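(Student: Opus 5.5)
The plan is to derive this directly from Observation \ref{obs:dv-1}, namely that $G$ is odd and strictly increasing on $\R$. Both directions of the equivalence should follow in a line each.

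For the \emph{if} direction, suppose $z_1 + z_2 = 0$, so that $z_2 = -z_1$. Since $G$ is odd, $G(z_2) = G(-z_1) = -G(z_1)$, which gives $G(z_1) + G(z_2) = 0$.

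For the \emph{only if} direction, suppose $G(z_1) + G(z_2) = 0$. Oddness of $G$ lets us rewrite this as $G(z_1) = -G(z_2) = G(-z_2)$. Because $G$ is strictly increasing, it is injective, so $z_1 = -z_2$, that is, $z_1 + z_2 = 0$.

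There is essentially no obstacle here. The only point worth flagging is that injectivity comes from strict monotonicity, and strict monotonicity holds because $\Phi$ is strictly increasing on all of $\R$. Consequently $G(z) = \Phi(z) - \Phi(-z) = 2\Phi(z) - 1$ is strictly increasing as well.
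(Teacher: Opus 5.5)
Your proposal is correct and follows essentially the same route as the paper: both derive the equivalence from Observation \ref{obs:dv-1}, using oddness to rewrite $-G(z_2) = G(-z_2)$ and strict monotonicity (injectivity, or equivalently invertibility of $G$) to conclude $z_1 = -z_2$. You spell out the ``if'' direction separately, which the paper folds into a single chain of equivalences, but the argument is the same.
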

\begin{proof}
	Since $G(\cdot)$ is strictly increasing (Observation \ref{obs:dv-1}), the inverse function $G^{-1}$ is well-defined. Hence $G(z_1) + G(z_2) = 0$ if and only if 
	\[
	z_1 = G^{-1}(-G(z_2)) = G^{-1}(G(- z_2)) = -z_2,
	\]	
	where the middle inequality uses the fact that $G$ is odd (Observation \ref{obs:dv-1}).
\end{proof}
 
\subsection{Proof of Lemma \ref{lem:dv-int}}

Now we are ready to prove Lemma \ref{lem:dv-int}.

\begin{proof}[Proof of Lemma \ref{lem:dv-int}]
	For a choice of  $i \in \{1,2,3\}$. Then using Lemma \ref{lem:dv-2} we get that
	\[
	\frac{\partial F(\mu_1,\mu_2,\mu_3)}{\partial \mu_i}
	= G(K(\mu_i,\mu_j)) + G(K(\mu_i,\mu_k))
	\]
	where $j,k\neq i$. Now if $(\mu_1,\mu_2,\mu_3)$ is a stationary point of $F$, it follows that 
	\[
	0 = \frac{\partial F(\mu_1,\mu_2,\mu_3)}{\partial \mu_i} = G(K(\mu_i,\mu_j)) + G(K(\mu_i,\mu_k)),
	\]
	for all $i \in \{1,2,3\}$. But then using Observation \ref{obs:dv-2}, this implies that
	\[
	K(\mu_i,\mu_j) + K(\mu_i,\mu_k) = 0,
	\]
	for every $i \neq j \neq k$. Now let us take the constraint $K(\mu_1,\mu_2) + K(\mu_1,\mu_3) = 0$. Using the definition of $K(\cdot,\cdot)$, this implies 
	\[
	\frac{t(\mu_2) - \rho t(\mu_1)}{\sqrt{1 - \rho^2}} + \frac{t(\mu_3) - \rho t(\mu_1)}{\sqrt{1 - \rho^2}} = 0,
	\]
	which for $\rho \neq \pm 1$ implies that
	\[
	t(\mu_2) + t(\mu_3) - 2\rho \cdot t(\mu_1) = 0.
	\]
	Similarly, using the conditions $\frac{\partial F}{\partial \mu_2} = 0$ and $\frac{\partial F}{\partial \mu_3} = 0$, we also get the constraints: 
	\[
	t(\mu_1) + t(\mu_3) - 2\rho \cdot t(\mu_2) = 0
	\qquad\qquad
	\textnormal{and}
	\qquad\qquad
	t(\mu_1) + t(\mu_2) - 2\rho \cdot t(\mu_3) = 0.
	\]
	Hence if $(\mu_1,\mu_2,\mu_3)$ is a stationary point of $F$, then $t_i := t(\mu_i)$ for $i = 1,2,3$ must satisfy the constraints:
	\begin{equation}			\label{eqn:eq-term}
	\begin{bmatrix}
		-2 \rho & 1 & 1 \\
		1 & -2\rho & 1 \\
		1 & 1 & - 2 \rho 
	\end{bmatrix}
	\begin{bmatrix}
		t_1  \\  t_2  \\ t_3
	\end{bmatrix}
	= 
	\begin{bmatrix}
		0  \\  0  \\ 0
	\end{bmatrix}
	.
	\end{equation}
	Note that when $\rho \neq -1/2$, the coefficient matrix in the LHS is full-rank. Hence $t_1 = t_2 = t_3 = 0$ is the only solution to the above system, which implies that we have $\mu_i = \Phi(t_i) = 1/2$ for every $i \in \{1,2,3\}$.
\end{proof}

\subsection{Maximizing $F_{\rm cut}(\cdot)$ at the boundary of $[0,1]^3$}

Let $\cB \subset [0,1]^3$ denote the boundary of $[0,1]^3$ i.e.,
\[
\cB = \left\{(\mu_1,\mu_2,\mu_3) \in [0,1]^3~\Big|~\exists~i \in [3]~\text{\it s.t. } \mu_i \in \{0,1\}\right\}.
\]
The following lemma characterizes the maximizers of $F_{\rm cut}$ on the boundary of $[0,1]^3$.

\begin{lemma}			\label{lem:int-val}
	The set of maximizers of the function $F_{\rm cut}$ within $\cB$ is exactly the set
	\[
	\cS^* := \left\{(\mu_1,\mu_2,\mu_3) \in [0,1]^3~\Big|~ \exists~i,j \in [3] \text{ s.t. } \mu_i = 0,\mu_j = 1\right\},
	\]
	and the corresponding maximum value is $2$.  
\end{lemma}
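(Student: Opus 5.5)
We need to show that on the boundary $\cB$, $F_{\rm cut}\le 2$, with equality exactly on $\cS^*$. Here $F_{\rm cut}$ is the sum of three Cut terms, each at most $1$.

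By the symmetry of $F_{\rm cut}$ under permuting coordinates, it suffices to treat points with $\mu_1\in\{0,1\}$. Moreover, replacing every $\mu_i$ by $1-\mu_i$ leaves ${\sf Cut}_\rho$ invariant, by Lemma \ref{lem:cut-def} or simply by complementing all the halfspaces. So we may assume $\mu_1=0$.

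With $\mu_1=0$ the halfspace $H_0$ is empty. Then ${\sf Cut}_\rho(0,\mu)=\Pr[g_2\in H_\mu]=\mu$, which one can also read off Lemma \ref{lem:cut-def}: $\Lambda_\rho(0,\mu)=0$ and $\Lambda_\rho(1,1-\mu)=1-\mu$. Hence
\[
F_{\rm cut}(0,\mu_2,\mu_3)=\mu_2+\mu_3+{\sf Cut}_\rho(\mu_2,\mu_3).
\]

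To bound the last term, write ${\sf Cut}_\rho(\mu_2,\mu_3)=\Pr[A\triangle B]$ with $A=\{g_1\in H_{\mu_2}\}$ and $B=\{g_2\in H_{\mu_3}\}$. Then
\[
\Pr[A\triangle B]\le \Pr[A^c]+\Pr[B^c]=(1-\mu_2)+(1-\mu_3),
\]
because $A\triangle B\subseteq A^c\cup B^c$, as $A\cap B$ is disjoint from the symmetric difference. It follows that $F_{\rm cut}\le 2$.

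For the equality case, note that equality forces $\Pr[A\cap B]=0$ and $\Pr[A^c\cap B^c]=0$. That is, $\Lambda_\rho(\mu_2,\mu_3)=0$ and $\Lambda_\rho(1-\mu_2,1-\mu_3)=0$. For $\rho\in(-1,1)$ the bivariate Gaussian has full support. So $\Lambda_\rho(a,b)=0$ holds if and only if $a=0$ or $b=0$, since halfspaces of positive measure contain open sets and their product has positive probability. This gives two conditions: ($\mu_2=0$ or $\mu_3=0$) and ($\mu_2=1$ or $\mu_3=1$). Together with $\mu_1=0$, this means some coordinate equals $1$, so the point lies in $\cS^*$.

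For the converse, take any point of $\cS^*$ with, say, $\mu_i=0$, $\mu_j=1$, and $\mu_k$ arbitrary. Its value is ${\sf Cut}_\rho(0,1)+{\sf Cut}_\rho(0,\mu_k)+{\sf Cut}_\rho(1,\mu_k)=1+\mu_k+(1-\mu_k)=2$.

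The main subtlety is the equality characterization, which relies on full support and so requires $\rho\neq\pm1$. This is fine since $\rho\in(-1,0]$. Everything else is an elementary union bound.
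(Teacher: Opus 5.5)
Your route differs from the paper's and is more elementary, but one claim in your equality step is false. Equality in $\Pr[A\triangle B]\le\Pr[A^c\cup B^c]\le\Pr[A^c]+\Pr[B^c]$ forces only $\Pr[A^c\cap B^c]=0$. The slack in the first inequality is $\Pr[(A^c\cup B^c)\setminus(A\triangle B)]=\Pr[A^c\cap B^c]$, not $\Pr[A\cap B]$, and the union bound loses that same quantity. Indeed, you have the exact identity
\[
F_{\rm cut}(0,\mu_2,\mu_3)=2-2\Lambda_\rho(1-\mu_2,1-\mu_3).
\]
Your extra condition $\Lambda_\rho(\mu_2,\mu_3)=0$, which gives ``$\mu_2=0$ or $\mu_3=0$'', fails at $(0,1,\tfrac12)\in\cS^*$, where $\Lambda_\rho(1,\tfrac12)=\tfrac12$. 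As written, your equality characterization therefore contradicts your own converse computation. Your conclusion only uses the correct condition $\Lambda_\rho(1-\mu_2,1-\mu_3)=0$, i.e.\ $\mu_2=1$ or $\mu_3=1$. Deleting the spurious half repairs the proof.

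The paper argues variationally instead. It takes a maximizer on $\cB$ with $\mu_1=0$ (existence is implicit by compactness) and rules out $(0,0,0)$ directly. If some $\mu_2\in(0,1)$, it applies the first-order condition along the face,
$\partial F/\partial\mu_2=-1+G\bigl((t(\mu_3)-\rho t(\mu_2))/\sqrt{1-\rho^2}\bigr)=0$, which forces $\mu_3=1$. It then checks $F_{\rm cut}(0,1,\mu_3)=2$. That approach reuses the derivative formulas from the interior analysis (Lemma \ref{lem:dv-2}), keeping the two parts of the proof uniform. Your union-bound identity gives the bound and the equality case at once, with no calculus and no existence argument. It also never uses $\rho\le 0$, so it holds for every $\rho\in(-1,1)$.
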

\begin{proof}
	Let $\bmu = (\mu_1,\mu_2,\mu_3)$ be a maximizer of $F_{\rm cut}$ within $\cB$. Since $\bmu \in \cB$, without loss of generality we may assume\footnote{Otherwise, we can work with $1 - \bmu = (1 - \mu_1,1 - \mu_2,1 - \mu_3)$, since $F_{\rm cut}(1 - \bmu) = F_{\rm cut}(\bmu)$.} that $\mu_1 = 0$. We shall need the following useful claim.
	
	\begin{claim}				\label{cl:dv}
		Suppose $\bmu$ is a maximizer of $F_{\rm cut}$ within $\cB$ such that $\mu_1 = 0$. Then we must have $\max\{\mu_2,\mu_3\} = 1$.
	\end{claim}
	\begin{proof}
		Suppose for contradiction, let us assume that $\max\{\mu_2,\mu_3\} < 1$. We consider two cases.
		
		{\bf Case (i)}. Suppose $\mu_2 = \mu_3 = 0$. Then note that $F_{\rm cut}(\bmu) = 0$; however note that that the maximum value of $F_{\rm cut}$ in $\cB$ is strictly positive (for e.g., $F_{\rm cut}(0,1,0) = 2$), and hence $\bmu = (0,0,0)$ cannot be a maximizer of $F_{\rm cut}$.
		
		{\bf Case (ii)}. Suppose $\max\{\mu_2,\mu_3\} > 0$. Then without loss of generality, we may assume that $\mu_2 \in (0,1)$. Since $\bmu$ is a maximizer of $F_{\rm cut}$ (and hence, a minimizer of $F$) within $\cB$, it follows that $\bmu$ satisfies:
		\[
		\frac{\partial F(\mu_1,\mu_2,\mu_3)}{\partial \mu_2} = 0.
		\] 
		On the other hand, note that using Lemma \ref{lem:dv-2}, denoting $L = \sqrt{1 - \rho^2}$ we have
		\begin{align*}
			&\frac{\partial F(\mu_1,\mu_2,\mu_3)}{\partial \mu_2}  \\
			&= \frac{\partial \ccut(\mu_1,\mu_2)}{\partial \mu_2} + \frac{\partial \ccut(\mu_3,\mu_2)}{\partial \mu_2} \\
			&= \Phi\left(\frac{t(0) - \rho t(\mu_2)}{L}\right) - \Phi\left(-\frac{t(0) - \rho t(\mu_2)}{L}\right)
			+ \Phi\left(\frac{t(\mu_3) - \rho t(\mu_2)}{L}\right) - \Phi\left(-\frac{t(\mu_3) - \rho t(\mu_2)}{L}\right) 	\\
			&= -1 + G\left(\frac{t(\mu_3) - \rho t(\mu_2)}{L}\right). 
		\end{align*}
		Now, if $\frac{\partial F}{\partial \mu_2} = 0$, then the above computation implies that
		\[
		G\left(\frac{t(\mu_3) - \rho t(\mu_2)}{L}\right) = 1.
		\]
		Furthermore, since $G$ is strictly increasing and $\rho \leq 0$, the above can happen only when either $\mu_2 = 0$ or $\mu_3 = 1$. Note that the former cannot happen (since $\mu_2 > 0$ by assumption), which leads to conclusion that $\mu_3 = 1$. This in turn again contradicts our assumption that $\max\{\mu_2,\mu_3\} < 1$. 
		
		In summary, since both cases (i) and (ii) yield a contradiction, our claim follows.
	\end{proof}
	
	Now proceeding with the proof of Lemma \ref{lem:int-val}, given that $\bmu$ is a maximizer of $F_{\rm cut}$ within $\cB$ and $\mu_1 = 0$, the above lemma implies that either $\mu_2 = 1$ or $\mu_3 = 1$. Let us assume $\mu_2 = 1$. Then, note that for any $\mu'_3 \in [0,1]$ we have
	\begin{align*}
		F_{\rm cut}(\mu_1,\mu_2,\mu'_3) 
		&= {\sf Cut}_\rho(0,1) + {\sf Cut}_\rho(0,\mu'_3) + {\sf Cut}_\rho(1,\mu'_3) \\
		&= 1 + \mu'_3 + (1 - \mu'_3) \\
		&= 2,
	\end{align*}
	i.e., $F_{\rm cut}$ attains the maximum value $2$ for all choices of $\mu'_3 \in [0,1]$. Now to conclude the proof, observe that $F_{\rm cut}$ is invariant under the permutation of its three inputs, and hence all permutations of $(0,1,\mu'_3)$ with $\mu'_3 \in [0,1]$ are also maximizers of $F_{\rm cut}$. Since the set $\cS^*$ is precisely the set of all such triples $(\mu_1,\mu_2,\mu_3)$, this concludes the proof.
	
\end{proof}

\subsection{Proof of Lemma \ref{lem:str-main}}

Let $\rho \in (-1,0]$ be such that $\rho \neq -1/2$. Let $(\mu^*_1,\mu^*_2,\mu^*_3)$ be the maximizer of $F_{\rm cut}(\cdot)$ over $[0,1]^3$. We consider two cases. 

{\bf Case (i)}: Suppose $(\mu^*_1,\mu^*_2,\mu^*_3) \in (0,1)^3$ is an interior point. Then $(\mu^*_1,\mu^*_2,\mu^*_3)$ must be a stationary point of $F$ and hence using Lemma \ref{lem:dv-2}, we have that $\mu^*_i = 1/2$ for $i = 1,2,3$. This implies that:
\[
F_{\rm cut}(\mu^*_1,\mu^*_2,\mu^*_3) 
= \sum_{i < j}{\sf Cut}_\rho\left(\frac12,\frac12\right)
= 3\cdot\frac{\arccos \rho}{\pi}.
\]

{\bf Case (ii)}: If $(\mu^*_1,\mu^*_2,\mu^*_3)$ is not an interior point, then using Lemma \ref{lem:int-val} we must have
\[
F_{\rm cut}(\mu^*_1,\mu^*_2,\mu^*_3) = 2,
\]
and it is attained by all assignments of the form $\mu^*_i = a$,  $\mu^*_j = 1- a$ and $\mu^*_k \in [0,1]$ with $a \in \{0,1\}$.

{\bf Finishing the proof.} Putting the above cases together we get that
\[
F_{\rm cut}(\mu^*_1,\mu^*_2,\mu^*_3)
= 3\cdot\max\left\{\frac{\arccos \rho}{\pi},\frac{2}{3}\right\}.
\]
In particular, if $\rho < -1/2$, then the maxima is achieved by the assignment $(1/2,1/2,1/2)$ and its corresponding value is $3(\arccos \rho)/\pi$. On the other hand, if $\rho \in [1/2,0]$, the maxima is achieved by assignments of the form given by case (ii), and the corresponding value is $2$. This concludes the proof of Lemma \ref{lem:str-main}.

\section{Proof of Theorem \ref{thm:i-set}}

In this section, we prove Theorem \ref{thm:i-set}, which we restate here for convenience.

\hardnessiset*

\subsection{Distributions for the Reduction}

In our reduction, we will rely on two distributions which we shall define here. Let $\mu_1$ be the joint distribution over $\{0,1\} \times \{0,1\}$ such that each bit is uniformly distributed, and the pair of bits is $\rho^*$-correlated, where $\rho^* \approx -0.68915$ is the critical correlation from the Max-Cut analysis.

For the second distribution, fixing a parameter $\alpha \in [0,1]$, let us define $\mu_{2,\alpha}$ as the following joint distribution over a pair of variables:

\begin{equation}
	\mu_{2,\alpha}(x,y) = 
	\begin{cases}
		1 - \alpha & \text{ if } x = y = 0 \\
		\alpha/2   & \text{ if } x = 1, y = 0 \\
		\alpha/2   & \text{ if } y = 1, x = 0 \\
		0 		   & \text{ if } x = y = 1.	
	\end{cases}
\end{equation}

We shall drop the indexing by $\alpha$ and denote $\mu_{2,\alpha}$ as $\mu_2$ whenever $\alpha$ is clear from context.  Finally let us define the measure $\mu: = \mu_1 \otimes \mu_2$.

{\bf Some Useful Properties}. We list some useful properties of the distribution $\mu_{2,\alpha}$.

\begin{lemma}				\label{lem:dist}
	Fixing $\alpha \in [0,1]$, let $(x,y) \sim \mu_{2,\alpha}$. Then,
	\begin{itemize}
		\setlength\itemsep{0.8em}
		\item $\Ex_{\mu_{2,\alpha}}[x] = \Ex_{{\mu_{2,\alpha}}}[y] = \frac{\alpha}{2}$.
		\item $\Pr_{(x,y) \sim \mu_{2,\alpha}}\big[x \neq y \big] = \alpha$.
		\item $\rho_{\alpha} := {\rm Corr}_{\mu_{2,\alpha}}(x,y) = \frac{-\alpha}{2 - \alpha}$.
	\end{itemize}
\end{lemma}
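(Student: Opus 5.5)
The plan is to read off all three items directly from the table defining $\mu_{2,\alpha}$. Each item is an elementary expectation, so no earlier machinery is needed beyond Lemma \ref{lem:eigval}, which identifies the correlation with the Pearson correlation of the two bits.

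\emph{Marginals.} The only outcome with $x=1$ is $(1,0)$, which has mass $\alpha/2$. Hence $\Ex[x]=\alpha/2$. By the symmetry $x\leftrightarrow y$ of the table, $\Ex[y]=\alpha/2$ as well.

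\emph{Disagreement probability.} The event $x\neq y$ consists of exactly the outcomes $(1,0)$ and $(0,1)$. Their total mass is $\alpha/2+\alpha/2=\alpha$.

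\emph{Correlation.} Both marginals are $p$-biased with $p=\alpha/2$, so the setting of Lemma \ref{lem:eigval} applies. I will therefore compute the Pearson correlation
\[
\frac{\Ex[xy]-p^2}{p(1-p)}.
\]
Since the outcome $(1,1)$ has mass zero, $\Ex[xy]=0$. The numerator is then $-\alpha^2/4$, and the denominator is $\frac{\alpha}{2}\cdot\frac{2-\alpha}{2}=\frac{\alpha(2-\alpha)}{4}$. Dividing gives $-\alpha/(2-\alpha)$.

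One point should be made explicit: ${\rm Corr}$ here means this Pearson correlation. For Boolean variables it coincides with the maximal correlation $\rho(\Omega^2;\mu)$, because $L_2(\{0,1\}_p)$ restricted to mean-zero functions is spanned by the single basis function $\phi_1$. That is precisely why Lemma \ref{lem:eigval} applies.

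There is no real obstacle. The only care needed is the degenerate case $\alpha=0$, where $p(1-p)=0$ and the correlation formula is vacuous or taken by convention. For $\alpha\in(0,1]$ the computation above is valid.
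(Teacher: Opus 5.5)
Your proof is correct and is the paper's argument: the first two items are read off the table, and the correlation is the Pearson quantity $(\Ex[xy]-p^2)/(p(1-p))$ with $\Ex[xy]=0$, which gives $-\alpha/(2-\alpha)$. One minor inaccuracy in your side remark: the maximal correlation $\rho(\Omega^2;\mu)$ equals $|\rho_\alpha|$, not $\rho_\alpha$, since replacing $g$ by $-g$ flips the sign; this has no effect on the lemma itself.
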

\begin{proof}
	The first two items follow immediately from the definition of $\mu_{2,\alpha}$. For the second item, we have
	\[
	{\rm Corr}_{\mu_{2,\alpha}}(x,y) 
	= \frac{\Ex [xy] - \Ex[x]\Ex[y]}{\sigma(x)\sigma(y)} 
	= \frac{0 - \frac{\alpha^2}{4}}{\frac{\alpha}{2}\left(1 - \frac{\alpha}{2}\right)} 
	= \frac{-\frac{\alpha}{2}}{1 - \frac{\alpha}{2}} 
	= \frac{-\alpha}{2 - \alpha}
	\]
\end{proof}

As a corollary, we get the following:

\begin{corollary}				\label{corr:dist}
	The following holds:
	\begin{itemize}
		\item $\rho_{\alpha}$ is a decreasing function $\alpha$ satisfying $\rho_0 = 0$.
		\item Let $\alpha^* = \frac{2\rho^*}{\rho^* - 1} \approx 0.81597$. Then $\rho_{\alpha^*} = \rho^*$.
	\end{itemize}
	Consequently, for every $\alpha \in [0,\alpha^*]$ we have $\rho_{\alpha} \in [\rho^*,0]$.
\end{corollary}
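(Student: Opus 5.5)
The plan is to prove everything by direct computation from the closed form $\rho_\alpha = \frac{-\alpha}{2-\alpha}$ in Lemma \ref{lem:dist}.

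\textbf{Monotonicity.} For $\alpha \in [0,1]$, differentiate $\rho_\alpha$ with respect to $\alpha$:
\[
\frac{d\rho_\alpha}{d\alpha} = \frac{-(2-\alpha) - \alpha}{(2-\alpha)^2} = \frac{-2}{(2-\alpha)^2} < 0 .
\]
So $\rho_\alpha$ is strictly decreasing on $[0,1]$. Substituting $\alpha = 0$ gives $\rho_0 = 0$.

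\textbf{The value at $\alpha^*$.} Solve $\frac{-\alpha}{2-\alpha} = \rho^*$ for $\alpha$. Clearing denominators gives $-\alpha = 2\rho^* - \rho^*\alpha$, that is, $\alpha(\rho^* - 1) = 2\rho^*$. Since $\rho^* \neq 1$, this yields $\alpha = \frac{2\rho^*}{\rho^*-1} = \alpha^*$. The only remaining check is that $\alpha^* \in [0,1]$, so that $\mu_{2,\alpha^*}$ is a genuine distribution and the monotonicity on $[0,1]$ applies.
\begin{itemize}
\item Because $\rho^* \in (-1,0)$, the numerator $2\rho^*$ and the denominator $\rho^*-1$ are both negative, so $\alpha^* > 0$.
\item The inequality $\alpha^* \le 1$ is equivalent, after multiplying by the negative number $\rho^*-1$, to $2\rho^* \ge \rho^* - 1$, i.e.\ $\rho^* \ge -1$, which holds.
\end{itemize}
Plugging in $\rho^* \approx -0.68915$ gives the numerical value $\alpha^* \approx 0.81597$.

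\textbf{The consequence.} Since $\rho_\alpha$ is decreasing on $[0,\alpha^*]$, it takes values between $\rho_{\alpha^*} = \rho^*$ and $\rho_0 = 0$. Hence $\rho_\alpha \in [\rho^*, 0]$ for every $\alpha \in [0,\alpha^*]$.

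None of these steps is a real obstacle. The only point needing care is confirming $\alpha^* \in [0,1]$ from the sign of $\rho^*$, which is settled above.
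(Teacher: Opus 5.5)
Your proposal is correct and follows essentially the same route as the paper: both arguments read monotonicity and $\rho_0 = 0$ directly off the closed form $\rho_\alpha = \frac{-\alpha}{2-\alpha}$ from Lemma \ref{lem:dist}, obtain $\rho_{\alpha^*} = \rho^*$ by algebra (you solve for $\alpha$, the paper substitutes $\alpha^*$), and then combine the two facts. Your explicit derivative computation and the check that $\alpha^* \in [0,1]$ are details the paper leaves implicit, and both are correct.
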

\begin{proof}
	Firstly, note that using the third item of Lemma \ref{lem:dist}, we immediately get that $\alpha \mapsto \rho_{\alpha}$ is a decreasing function for $\alpha \geq 0$, and $\rho_0 = 0$. Again, substituting $\alpha = \alpha^*$ in the third item of Lemma \ref{lem:dist} we get that
	\[
	\rho_{\alpha^*} = \frac{-\frac{2\rho^*}{\rho^* - 1}}{2 - \frac{2\rho^*}{\rho^* - 1}} = \rho^*.
	\]
	The last statement now follows immediately by combining the first two items.
\end{proof}

\subsection{Reduction From \ug}

Now we are ready to describe our reduction from the \ug~problem to Max-Cut.

\begin{figure}[ht!]
	\begin{mdframed}
		Let $\cG(U,V,E,[R],\{\pi_e\}_{e \in E})$ be a $(1 - \epsilon,\delta)$-\ug~instance from Conjecture \ref{conj:ugc}, where the parameters $\epsilon,\delta > 0$ are chosen below the figure. Let $\eta,\alpha > 0$ be constants that are fixed below. We define a new Max-Cut instance $\cH = (V_{\cH},E_{\cH},w_{\cH})$, whose vertices and edges are defined as follows. \\[-5pt]
		
		{\bf Vertex Set}. For every $v \in V$, We introduce a set of vertices
		\[
		V_{\cH,v}:= \left\{(v,x,y) : (x,y) \in \{0,1\}^R\times\{0,1\}^R\right\}.
		\]
		Then we let $V_{\cH} := \cup_{v \in V} V_{\cH,v}$ be the union of all these sets. \\[-5pt]
		
		{\bf Edge Weights}. The weights $w_{\cH}$ on the edges are defined using the following process:
		\vspace{5pt}
		\begin{enumerate}
			\setlength\itemsep{0.5em}
			\item Sample $u \sim U$, and neighbors $v_1,v_2 \sim N_\cG(u)$.
			\item Sample $(x_1,x_2) \sim \mu^{\otimes R}_1$, and $(y_1,y_2) \sim \mu^{\otimes R}_{2,\alpha}$. 
			\item Output\footnote{Here, for a string $(x,y) \in (\{0,1\}\times \{0,1\})^R$, and a permutation $\pi:[R] \to [R]$, the string $\pi(x,y)$ is given by $((x_{\pi(1),y_{\pi(1)}}),(x_{\pi(2)},y_{\pi(2)}),\ldots,(x_{\pi(R)},y_{\pi(R)}))$.} the edge $\left\{\big(v_1,\pi_{v_1,u}(x_1,y_1)\big),\big(v_2,\pi_{v_2,u}(x_2,y_2)\big)\right\}$.
		\end{enumerate}
	\end{mdframed}
	\caption{Reduction for Theorem \ref{thm:i-set}}
	\label{fig:is-red1}
\end{figure}

{\bf Parameters of the Reduction}. We use assign the following values to the parameters.
\begin{itemize}
	\item Let $\alpha \in \left[\frac{2\rho^*}{\rho^* - 1},0\right]$.
	\item Let $\eta$ be as in Theorem \ref{thm:i-set}.
	\item Let $\tau$ be such that $c(\tau),c_1(\tau) \leq \eta/10$, where $c(\tau),c_1(\tau)$ are as in Theorem \ref{thm:stab} and Lemma \ref{lem:noise}. 
	\item We set $\epsilon = \eta/4$ and $\delta = \tau^2\eta^2/8$.
\end{itemize}

We now state the completeness and soundness guarantees of the above reduction.

\begin{lemma}[Completeness]		\label{lem:comp-iset}
	Suppose ${\sf val}(\cG) \geq 1 - \epsilon$. Then,
	\begin{itemize}
		\item There exists a set $S \subseteq V_{\cH}$ such that $w_\cH(S,S^c) \geq \frac{1 - \rho^*}{2} - 2\eta$.
		\item There exists a set $I \subseteq V_{\cH}$ such that $I$ is an independent set in $\cH$ and $w_{\cH}(I,I^c) \geq \alpha - 2\eta$.
	\end{itemize}
\end{lemma}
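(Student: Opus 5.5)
Fix a labeling $\sigma:U\cup V\to[R]$ satisfying at least a $(1-\epsilon)$-fraction of the \ug~constraints. Both sets will be dictator-type sets read off the label $\sigma(v)$ in each block $V_{\cH,v}$. For the cut we take
\[
S:=\{(v,x,y): x(\sigma(v))=1\},
\]
which reads the $\mu_1$-coordinate. For the independent set we take
\[
I:=\{(v,x,y): y(\sigma(v))=1\},
\]
which reads the $\mu_{2,\alpha}$-coordinate.

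As in the proof of Lemma \ref{lem:comp-3col}, condition on the choice of $(u,v_1,v_2)$. Call it good if $\sigma$ satisfies both $(u,v_1)$ and $(u,v_2)$. Since $\cG$ is left-regular and $v_1,v_2$ are each marginally a uniformly random neighbour of a uniformly random $u$, a union bound shows this happens with probability at least $1-2\epsilon$.

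On a good choice, $\pi_{v_a,u}(\sigma(v_a))=\sigma(u)$ for $a=1,2$. Hence coordinate $\sigma(v_a)$ of the permuted string $\pi_{v_a,u}(x_a,y_a)$ equals coordinate $\sigma(u)$ of $(x_a,y_a)$, using the same identity manipulation as in Lemma \ref{lem:comp-3col}. So whether the output edge is cut by $S$ is exactly the event $x_1(\sigma(u))\neq x_2(\sigma(u))$. Under $\mu_1$ this has probability $\frac{1-\rho^*}{2}$, because that single coordinate pair is a $\rho^*$-correlated uniform pair. This gives
\[
w_\cH(S,S^c)\geq (1-2\epsilon)\tfrac{1-\rho^*}{2}\geq \tfrac{1-\rho^*}{2}-2\epsilon\geq \tfrac{1-\rho^*}{2}-2\eta .
\]

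Similarly, on a good choice the edge crosses $(I,I^c)$ exactly when $y_1(\sigma(u))\neq y_2(\sigma(u))$. By Lemma \ref{lem:dist} this has probability $\alpha$, so $w_\cH(I,I^c)\geq \alpha-2\epsilon\geq\alpha-2\eta$ by the choice $\epsilon=\eta/4$.

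The step needing care is that $I$ must be a genuine independent set, with no exceptional edges, even though a $2\epsilon$ fraction of the choices $(u,v_1,v_2)$ are bad. Suppose an edge $\{(v_1,\pi_{v_1,u}(x_1,y_1)),(v_2,\pi_{v_2,u}(x_2,y_2))\}$ has both endpoints in $I$. If $v_1,v_2$ are both correctly labelled with respect to $u$, this forces $y_1(\sigma(u))=y_2(\sigma(u))=1$, which has probability $0$ under $\mu_{2,\alpha}$, so good choices produce no violating edges. Bad choices could in principle produce edges inside $I$.

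To handle this, I would first delete from $\cG$ all constraints not satisfied by $\sigma$ (at most an $\epsilon$ fraction) and renormalize the edge distribution before running the construction, or equivalently argue on the modified instance. Then every sampled triple is good, so $I$ is exactly independent, and both weight bounds still hold up to an $O(\epsilon)$ loss.

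Alternatively, one could shrink $I$ by removing its vertices incident to edges arising from bad triples. But those edges are spread over all of $V_{\cH,v}$, so this cleanup is not obviously cheap. I would therefore go with deleting the unsatisfied constraints and renormalizing, which is standard since the completeness parameter only needs to hold up to $\eta$.
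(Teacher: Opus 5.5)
\textbf{The independence step has a genuine gap.} Your sets $S$ and $I$, the union bound over good triples, and both weight computations are exactly the paper's argument.

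On independence you are more careful than the paper, which simply asserts $w_{\cH}(I,I)=0$ from $\mu_{2,\alpha}(1,1)=0$. You are right that this covers only triples with $\ell_1:=\pi_{v_1,u}(\sigma(v_1))$ equal to $\ell_2:=\pi_{v_2,u}(\sigma(v_2))$. When $\ell_1\neq\ell_2$, the bits $y_1(\ell_1)$ and $y_2(\ell_2)$ are independent with mean $\alpha/2$, so each such triple puts weight proportional to $\alpha^2/4$ inside $I$.

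Your repair, however, is not valid. The reduction is a fixed map $\cG\mapsto\cH$ that never sees $\sigma$, and the lemma must exhibit an independent set in that very $\cH$.
\begin{itemize}
\item Pruning the unsatisfied constraints produces a different instance, one that is perfectly satisfiable. Unique Games with perfect completeness is solvable in polynomial time, so no hardness can rest on such instances.
\item The graph built from the pruned instance is $\cH$ with exactly the offending edges deleted. An independent set there is not independent in $\cH$. So ``arguing on the modified instance'' is not equivalent; it discards the difficulty rather than resolving it.
\item Your fallback of shrinking $I$ also fails, for the reason you suspect. A vertex of $I\cap V_{\cH,v_1}$ lies on an edge inside $I$ coming from a bad triple as soon as $y_1(\ell_2)=0$, which holds for a $(1-\alpha/2)$-fraction of $I\cap V_{\cH,v_1}$. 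So in effect whole blocks must be discarded.
\item The kept blocks $V'$ must then satisfy: for every $u$, all neighbours of $u$ in $V'$ agree. Equivalently, some relabeling of $U$ satisfies every constraint incident to $V'$. Nothing in ``value at least $1-\epsilon$'' guarantees such a $V'$ of large measure: with large degrees, unsatisfied constraints can touch almost every $v\in V$.
\end{itemize}

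\textbf{The fix needs a stronger completeness guarantee.} This is why independent-set and vertex-cover reductions (e.g.\ \cite{KR08,BK09}) start from the Khot--Regev strengthening of the UGC. Its NO case is the usual one, so your soundness analysis is unaffected. In the YES case there is a labeling $\sigma$ and a set $V'\subseteq V$ on the long-code side such that:
\begin{itemize}
\item $V'$ has measure at least $1-\epsilon$ under the law of $v\sim N_{\cG}(u)$, $u\sim U$;
\item every constraint incident to $V'$ is satisfied by $\sigma$.
\end{itemize}
Define $I:=\{(v,x,y): v\in V',\ y(\sigma(v))=1\}$. For any $u$ and $v_1,v_2\in N_{\cG}(u)\cap V'$ we get $\pi_{v_1,u}(\sigma(v_1))=\sigma(u)=\pi_{v_2,u}(\sigma(v_2))$. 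Now $\mu_{2,\alpha}(1,1)=0$ genuinely rules out every edge inside $I$, and blocks outside $V'$ contribute nothing to $I$. Hence $I$ is independent, and
\[
w_{\cH}(I,I^c)\ \geq\ \Pr[v_1,v_2\in V']\cdot\alpha\ \geq\ (1-2\epsilon)\alpha .
\]
The set $S$ and its analysis can stay exactly as you have them.
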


\begin{lemma}[Soundness]		\label{lem:sound-iset}
	Suppose ${\sf val}(\cG) \leq \delta$. Then, for every subset $S \subseteq V_{\cH}$ we have 
	\[
		w_{\cH}(S,S^c) \leq \frac{\arccos \rho^*}{\pi} + 2\eta.
	\]
\end{lemma}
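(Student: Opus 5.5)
We prove Lemma~\ref{lem:sound-iset} following the template of Lemma~\ref{lem:sound-3col}. Fix $f:V_\cH\to\{0,1\}$ and write $f_v(x,y)=f(v,x,y)$ on the product space $(\{0,1\}\times\{0,1\})^R$. The single-coordinate measure there is the marginal $\nu$ of $\mu=\mu_1\otimes\mu_2$, namely uniform $\otimes$ $\{0,1\}_{\alpha/2}$. Using ${\sf Cut}(a,b)=1-ab-(1-a)(1-b)$ and the fact that $v_1,v_2$ are independent given $u$, we arithmetize:
\[
w_\cH(S,S^c)=\Ex_{u}\Ex_{(z_1,z_2)\sim\mu^{\otimes R}}\Big[{\sf Cut}\big(g_u(z_1),g_u(z_2)\big)\Big],
\qquad
g_u(z):=\Ex_{v\sim N_\cG(u)} f_v(\pi_{v,u}(z)).
\]
Since this test has no explicit noise, we insert implicit noise with Lemma~\ref{lem:noise}. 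The distribution $\mu$ on the base space $\Omega^2$ ($\Omega=\{0,1\}^2$) is connected, so its correlation is $<1$. This lets us replace each $g_u$ (and $1-g_u$) by $T_{1-\gamma}g_u$ at additive cost $c_1(\tau)$-type error, which is at most $\eta/10$.

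Next we define $U_{\rm good}$ as the set of $u$ for which every coordinate influence of $T_{1-\gamma}g_u$ is at most $\tau$. The standard influence-decoding argument, as in Lemma~\ref{lem:inf-dec}, shows that $|U_{\rm good}|\ge(1-\eta')|U|$; otherwise decoding a label from the top-degree-bounded influential coordinates of $g_u$ and of $f_v$ would satisfy more than $\delta$ of the constraints. For $u\in U_{\rm good}$, we apply Theorem~\ref{thm:stab} twice, once to $(h,h)$ and once to $(1-h,1-h)$ with $h=T_{1-\gamma}g_u$. The relevant correlation is $\rho(\Omega^2;\mu)$.

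The main step is to show that this correlation equals $\max(|\rho^*|,|\rho_\alpha|)=|\rho^*|$. The correlation of a tensor product is the maximum of the correlations of the factors. Concretely, the Fourier basis of $\nu$ is $\{1,\phi,\psi,\phi\psi\}$, and by Lemma~\ref{lem:eigval} the operator associated with $\mu$ is diagonal in this basis, with eigenvalues $\rho^*$, $\rho_\alpha$, and $\rho^*\rho_\alpha$. By Corollary~\ref{corr:dist}, $\alpha\le\alpha^*$ gives $|\rho_\alpha|\le|\rho^*|$, so the correlation is $|\rho^*|$. Writing $m_u=\Ex h$, we obtain
\[
\Ex\,{\sf Cut}(h(z_1),h(z_2))\le 1-\Lambda_{\rho^*}(m_u,m_u)-\Lambda_{\rho^*}(1-m_u,1-m_u)+2c(\tau).
\]

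The remaining obstacle is that $m_u$ need not equal $1/2$, since no folding is available here. We handle this with convexity. By Fact~\ref{fact:convex}, the map $x\mapsto\Lambda_{\rho^*}(x,x)+\Lambda_{\rho^*}(1-x,1-x)$ is convex and symmetric about $1/2$, so it is minimized at $x=1/2$. There its value is $1-\arccos\rho^*/\pi$ by Sheppard's formula (Lemma~\ref{lem:shep}). Hence each good $u$ contributes at most $\arccos\rho^*/\pi+2c(\tau)$. Bad $u$ contribute at most $\eta'$ in total. Summing the errors (implicit noise, $c(\tau)$, and $\eta'$) under the chosen parameters gives the bound $\arccos\rho^*/\pi+2\eta$.

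I expect the most delicate points to be the following. First, the influence decoding must work with the implicit noise $\gamma$, which depends on $\rho(\mu)<1$ rather than on an explicit $\eta$. Second, Theorem~\ref{thm:stab} must be applied with the correct non-uniform marginal $\nu$.
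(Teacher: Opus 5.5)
Your proposal is correct and takes essentially the same route as the paper: arithmetize via the averaged functions $g_u$, add implicit noise with Lemma~\ref{lem:noise}, show the correlation of $\mu_1\otimes\mu_{2,\alpha}$ equals $\max\{|\rho^*|,|\rho_\alpha|\}=|\rho^*|$ via the tensor Fourier basis (as in Lemma~\ref{lem:corr}), apply Theorem~\ref{thm:stab} to $h$ and $1-h$, remove the dependence on the mean of $h$ by convexity of $x\mapsto\Lambda_{\rho^*}(x,x)$ (Fact~\ref{fact:convex}), and charge the bad $u$'s to influence decoding. The only cosmetic difference is that you insert the implicit noise before defining $U_{\rm good}$, whereas the paper packages the noise and stability steps together in Lemma~\ref{lem:struct}.
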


As earlier, before we prove the above lemmas, we use them to prove Theorem \ref{thm:i-set}.

\begin{proof}[Proof of Theorem \ref{thm:i-set}]
	Fix constants $\eta > 0$ and $\alpha \in [0,\alpha^*]$. Let $\cG$ be a $(1 - \epsilon,\delta)$-\ug~instance as in Conjecture \ref{conj:ugc} where $\epsilon,\delta$ are chosen as above as functions of $\eta$. Then using the reduction in Figure \ref{fig:is-red1}, we can construct a Max-Cut instance $\cH = (V_\cH,E_{\cH},w_{\cH})$ such that the following holds:
	\begin{itemize}
		\item If $\cG$ is a YES instance, then ${\sf opt}(\cH) \geq c^* - 2\eta$. Moreover, $\cH$ contains an independent set $I \subseteq V_{\cH}$ such that $w_{\cH}(I,I^c) \geq \alpha - 2\eta$.
		\item If $\cG$ is a NO instance, then ${\sf opt}(\cH) \leq s^* + 2\eta$.
	\end{itemize}
	
	Furthermore, for a fixed constant $\eta > 0$, the reduction in Figure \ref{fig:is-red1} runs in time $O_{\eta}({\rm poly}(|U|,|V|))$. Since assuming UGC, the $(1 - \epsilon,\delta)$-\ug~problem is \NP-hard, it follows that it is $\NP$-hard to distinguish between the the cases:
	
	{\bf YES Case}. There exists an independent set $I \subseteq V_{\cH}$ such that $w_{\cH}(I,I^c) \geq \alpha - 2\eta$, and ${\sf opt}(\cH) \geq c^* - 2\eta$.
	
	{\bf NO Case}. ${\sf opt}(\cH) \leq s^* + 2\eta$.

	This concludes the proof of Theorem \ref{thm:i-set}.
\end{proof}

\subsection{Proof of Lemma \ref{lem:comp-iset}}

We state the completeness guarantee in the lemma below.

\begin{proof}
	Let $\sigma: U \cup V \to [R]$ be an optimal labeling of $\cG$ which satisfies at least $1 - \epsilon$ fraction of constraints in $\cG$. We define the sets 
	\[
	S := \Big\{(v,x,y) \in V_{\cH} : x(\sigma(v)) = 1 \Big\},
	\]
	and
	\[
	I := \Big\{(v,x,y) \in V_{\cH} : y(\sigma(v)) = 1 \Big\},
	\]
	where recall that $V_{\cH}$ is the vertex set of $\cH$. Let $\cE$ denote the event that for a random draw of the vertices $u,v_1,v_2$, the edges $(u,v_1)$ and $(u,v_2)$ are both satisfied by the labeling $\sigma$. Now to prove the first item, we observe that:
	\begin{align*}
		w_{\cH}(S,S^c) 
		&= \Ex_{u}\Pr_{(v_i,x_i,y_i)_{i = 1,2}}\Big[\one_S(v_1,\pi_{v_1,u}(x_1,y_1)) \neq \one_S(v_2,\pi_{v_2,u}(x_2,y_2))\Big]  \\
		&= \Ex_{u}\Pr_{(v_i,x_i,y_i)_{i = 1,2}}\Big[x_1(\pi_{v_1,u} \circ \sigma(v_1)) \neq x_2(\pi_{v_2,u} \circ \sigma(v_2))\Big] \\
		&\overset{1}{\geq} (1 - 2\epsilon) \Ex_{u,v_1,v_2|\cE} \Pr_{(x_i,y_i)_{i = 1,2}}\Big[x_1(\sigma(u)) \neq x_2(\sigma(u))\Big] \\
		&= (1 - 2\epsilon)\cdot\frac{1 - \rho^*}{2} \\
		&\geq \frac{1 - \rho^*}{2} - 2\epsilon.
	\end{align*}
	where in step $1$, we use the observations that (a) $\Pr[\cE] \geq 1 - 2\epsilon$, and (b) conditioned on event $\cE$, we have $\pi_{v_1,u}(\sigma(v_1)) = \pi_{v_2,u}(\sigma(v_2)) = \sigma(u)$. This establishes the first item of the lemma. 
		
	Now for the second item, observe that since $\mu_{2,\alpha}(1,1) = 0$, it follows that $w_{\cH}(I,I) = 0$, i.e., $I$ is an independent set. Furthermore, using arguments identical to those of the first item, we can show that $w_{\cH}(I,I^c) \geq \alpha - 2\epsilon$.
\end{proof}

\subsection{Correlation Analysis}

Before we proceed to the soundness analysis of the reduction, we will need to establish some useful properties of the correlation structure of the distribution $\mu$. Recall from Section \ref{sec:corr-def} that given a joint probability space $(\Omega_1 \times \Omega_2,\mu)$, its correlation is defined as 
\[
\rho(\Omega_1,\Omega_2;\mu) = \max_{\substack{f,g:\Omega \to \mathbbm{R} \\ \Ex f = \Ex g = 0 \\ \|f\|_2,\|g\|_2 \leq 1}} \Ex_{(\bx,\by) \sim \mu} \Big[f(\bx)g(\by)\Big].
\]

For our analysis, we shall instantiate $\Omega := \{0,1\}^2$ and we let $\mu:= \mu_1 \otimes \mu_{2,\alpha}$ denote the joint distribution over $\{0,1\}^2 \times \{0,1\}^2$ corresponding to the test, and we shall need the following bound for the correlation of $(\Omega^2,\mu)$.

\begin{lemma}			\label{lem:corr}
	For every $\alpha \in [0,1]$ we have
	\[
	\rho(\Omega^2,\mu) = \max\{|\rho^*|,|\rho_{\alpha}|\}.
	\]
	In particular, when $\alpha \leq \alpha^*$, we have $\rho(\Omega^2,\mu) \leq |\rho^*|$.
\end{lemma}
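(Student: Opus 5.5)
We will compute the maximal correlation of the product space by an explicit orthonormal basis (singular value) argument, using that $\mu = \mu_1 \otimes \mu_{2,\alpha}$ is a tensor product of two joint distributions on $\{0,1\}\times\{0,1\}$, each with identical marginals on the two coordinates.

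\textbf{Step 1: single-factor Markov operators.} For $\mu_1$ the marginals are uniform. Take the Fourier basis $\{1,\chi\}$ of $L_2(\{0,1\}_{1/2})$. Since $\mu_1$ is $\rho^*$-correlated, the Markov operator $T_1 f(x) = \Ex[f(y)\mid x]$ satisfies $T_1 1 = 1$ and $T_1\chi = \rho^*\chi$.

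For $\mu_{2,\alpha}$ the marginals are $\{0,1\}_{\alpha/2}$. Take the basis $\{1,\phi_1\}$. Lemma \ref{lem:eigval} together with Lemma \ref{lem:dist} gives $\Ex[\phi_1(x)\phi_1(y)] = \rho_\alpha$. The space of mean-zero functions is one-dimensional, so $T_2\phi_1$ is a multiple of $\phi_1$, and the multiple is $\rho_\alpha$. (The degenerate cases $\alpha\in\{0,1\}$, where a marginal may be a point mass, need a one-line separate remark.)

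\textbf{Step 2: tensorize.} The marginal of $\mu$ on each side is the product measure on $\Omega=\{0,1\}^2$. Its orthonormal basis is $\{1,\chi,\phi_1,\chi\otimes\phi_1\}$, and the Markov operator is $T = T_1\otimes T_2$. This operator is diagonal in that basis with eigenvalues $1,\rho^*,\rho_\alpha,\rho^*\rho_\alpha$.

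For mean-zero $f,g$ with $\|f\|_2,\|g\|_2\le 1$, expand $f=\sum_\beta \hat f(\beta)\psi_\beta$ and $g=\sum_\beta \hat g(\beta)\psi_\beta$ over the three non-constant basis functions. Then
\[
\Ex_\mu[f(\bx)g(\by)] = \langle f, Tg\rangle = \sum_{\beta\neq 0}\lambda_\beta \hat f(\beta)\hat g(\beta).
\]
By Cauchy--Schwarz this is at most $\max_{\beta\ne0}|\lambda_\beta|$. Since $|\rho^*\rho_\alpha|\le\min\{|\rho^*|,|\rho_\alpha|\}$, this maximum equals $\max\{|\rho^*|,|\rho_\alpha|\}$.

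The bound is attained. Take $f=\chi$ and $g=-\chi$ (or $f=\phi_1$ and $g=-\phi_1$), using sign freedom in $g$. The correlation is therefore exactly $\max\{|\rho^*|,|\rho_\alpha|\}$, which also accounts for the definition taking a signed maximum.

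\textbf{Step 3: the particular case.} For $\alpha\le\alpha^*$, Corollary \ref{corr:dist} gives $\rho_\alpha\in[\rho^*,0]$, so $|\rho_\alpha|\le|\rho^*|$.

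The only mildly delicate point is Step 1's claim that each factor's Markov operator is diagonal in the Fourier basis, i.e.\ that the joint law is symmetric with equal marginals. This is immediate from the tables defining $\mu_1$ and $\mu_{2,\alpha}$. Everything else is linear algebra.
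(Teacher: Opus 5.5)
Your proof is correct and follows the paper's argument: expand $f,g$ in the tensor Fourier basis $\{\phi_a\gamma_b\}$, note that $\mu$ is diagonal there with eigenvalues $\rho^*,\rho_\alpha,\rho^*\rho_\alpha$ on the non-constant part, and finish with Cauchy--Schwarz and Parseval. You actually do slightly more than the paper, which writes out only the upper bound: you show the maximum is attained via $f=\chi,\ g=-\chi$ (or $\phi_1,-\phi_1$), and you treat the degenerate case $\alpha=0$. A small correction there: $\alpha=1$ is not degenerate, since its marginal is uniform and $\rho_1=-1$.
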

\begin{proof}
	Recall that by definition,
	\begin{equation}			\label{eqn:corr-def}
		\rho(\Omega^2,\mu) = \max_{\substack{\Ex f = \Ex g = 0  \\ \|f\|_2,\|g\|_2 = 1}} \Ex_{(\bx,\by) \sim \mu}\big[f(\bx)g(\by)\big].
	\end{equation}
	Let $\{\phi_0 \equiv 1,\phi_1\}$ be a Fourier basis for $L_2(\mu_1)$ and let $\{\gamma_0 \equiv 1, \gamma_1\}$ be a Fourier basis for $L_2(\mu_2)$. Then note that using Lemma \ref{lem:eigval} we have
	\begin{equation}			\label{eqn:eigval}
	\Ex_{(\bx,\by)}\left[\phi_1(x_1)\phi_1(y_1)\right] = \rho^*
	\qquad\text{and}\qquad
	\Ex_{(\bx,\by)}\left[\gamma_1(x_2)\gamma_1(y_2)\right] = \rho_{\alpha}.
	\end{equation}
	Furthermore, $\{\phi_{a}\cdot\gamma_{b}\}_{a,b \in \{0,1\}}$ is a Fourier basis for $L_2(\mu_1 \otimes \mu_2)$. Hence for any feasible choice of $f,g$ (as in \eqref{eqn:corr-def}), we can write:
	\begin{align*}
		&\Ex_{(\bx,\by) \sim \mu} \left[f(\bx)g(\by)\right] \\
		&= \Ex_{(\bx,\by) \sim \mu}\Bigg[ \bigg(\wh{f}(0,0) + \wh{f}(1,0)\phi_1(x_1) + \wh{f}(0,1)\gamma_1(x_2) + \wh{f}(1,1) \phi_1(x_1)\gamma_1(x_2)\bigg) \\[-5pt]
		&  \qquad\qquad\qquad \times \bigg(\wh{g}(0,0) + \wh{g}(1,0)\phi_1(y_1) +\wh{g}(0,1)\gamma_1(y_2) + \wh{g}(1,1) \phi_1(y_1)\gamma_1(y_2)\bigg)\Bigg] \\[-5pt]
		&\overset{1}{=} \wh{f}(1,0)\wh{g}(1,0)\Ex_{(x_1,y_1)}\Big[\phi_1(x_1)\phi_1(y_1) \Big]+ \wh{f}(0,1)\wh{g}(0,1)\Ex_{(x_2,y_2)} \Big[\gamma_1(x_2)\gamma_1(y_2)\Big] \\  
		&  \qquad\qquad\qquad + \wh{f}(1,1)\wh{g}(1,1)\Ex_{(x_1,x_2,y_1,y_2)}\Big[\phi_1(x_1)\phi_1(y_1)\gamma_1(x_2)\gamma_1(y_2)\Big] \\
		&\overset{2}{=} \wh{f}(1,0)\wh{g}(1,0) \cdot \rho^* + \wh{f}(0,1)\wh{g}(0,1)\cdot\rho_\alpha + \wh{f}(1,1)\wh{g}(1,1)\cdot \rho^* \cdot \rho_{\alpha} \\[6pt]
		&\leq \max\{|\rho^*|,|\rho_\alpha|\} \sum_{(a,b) \in \{0,1\}^2} |\wh{f}(a,b)||\wh{g}(a,b)| \\[-6pt]
		&\overset{3}{\leq} \max\{|\rho^*|,|\rho_\alpha|\} \cdot \|f\|_2\|g\|_2 \\
		&\leq \max\{|\rho^*|,|\rho_\alpha|\},
	\end{align*}
	where in the above computation, in step $1$ we use the fact that $\wh{f}(0,0) = \wh{g}(0,0) = 0$, step $2$ follows due to \eqref{eqn:eigval}, and step $3$ can be argued as follows:
	\[
	\sum_{(a,b) \in \{0,1\}^2} |\wh{f}(a,b)||\wh{g}(a,b)| 
	\leq \sqrt{\sum_{(a,b) \in \{0,1\}^2} \wh{f}(a,b)^2}\sqrt{\sum_{(a,b) \in \{0,1\}^2} \wh{g}(a,b)^2}
	= \|f\|_2\|g\|_2,
	\]
	where again the first step is using Cauchy-Schwarz, and the second step is using Parseval's identity. This establishes the first part of the lemma. The second part follows immediately by combining the bound from the first part with Corollary \ref{corr:dist}.
\end{proof}

\subsection{Proof of Lemma \ref{lem:sound-iset}}

Suppose $\cG$ is a NO instance, i.e., ${\sf val}(\cG) \leq \delta$. Let $S \subseteq V_{\cH}$ be a subset in $\cH$, and let $f:V_\cH \to \{0,1\}$ be the indicator function. Furthermore, recall that $V_{\cH} := V \times \Omega^R = V \times \Omega^R$, where $\Omega = \{0,1\}_{1/2} \times \{0,1\}_{\alpha/2}$. Hence, for every $v \in V$, we can define the function $f_v:\Omega^R \to \{0,1\}$, which denotes the assignment to the code table corresponding to $v \in V$, i.e, $f_v(x,y) \defeq f(v,x,y)$ for all $(x,y) \in \Omega^R$. Now observe that for any Boolean valued variables $a,b$ we have
\[
\one\big(a \neq b\big) = a + b - 2ab.
\]
Using the above observation, we can express the fraction of edges cut by the assignment $f$ in $\cH$ as:
\begin{align}
	&\Pr_{(\omega_1,\omega_2) \sim E_{\cH}}\Big[f(\omega_1) \neq f(\omega_2)\Big] 	\non\\
	&= \Ex_{u \sim U}\Ex_{v_1,v_2 \sim N_{\cG}(u)}\Ex_{(x_1,y_1,x_2,y_2)}\Big[\one\Big(f(v_1,\pi_{v_1,u}(x_1,y_1)) \neq f(v_2,\pi_{v_2,u}(x_2,y_2))\Big)\Big]  \non \\
	&= \Ex_{u \sim U}\Ex_{v_1,v_2 \sim N_{\cG}(u)}\Ex_{(x_1,y_1,x_2,y_2)}
	\left[\one\Big(f_{v_1}\Big(\pi_{v_1,u}(x_1,y_1)\Big) \neq f_{v_2}\Big(\pi_{v_2,u}(x_2,y_2)\Big)\Big)\right] 		\non\\
	&= \Ex_{u \sim U}\Ex_{v_1,v_2 \sim N_{\cG}(u)}\Ex_{(x_1,y_1,x_2,y_2)}
	\Big[f_{v_1}\Big(\pi_{v_1,u}(x_1,y_1)\Big) + f_{v_2}\Big(\pi_{v_2,u}(x_2,y_2)\Big)				\non \\ 
	&\qquad\qquad\qquad\qquad 
	-2f_{v_1}\Big(\pi_{v_1,u}(x_1,y_1)\Big)f_{v_2}\Big(\pi_{v_2,u}(x_2,y_2)\Big)\Big] 		\non\\
	&= \Ex_{u \sim U}\Ex_{(x_1,y_1,x_2,y_2)}
	\Big[g_u(x_1,y_1) + g_u(x_2,y_2) - 2g_u(x_1,y_1)g_u(x_2,y_2)\Big],				\label{eqn:rhs}
\end{align}

where for every $u \in U$, the function $g_u:\Omega^R \to [0,1]$ is the averaged function defined as 
\[
g_u(x,y) \defeq \Ex_{v \sim N_G(u)}\Big[f_{v}\Big(\pi_{v,u}(x,y)\Big)\Big].
\]

\subsubsection{Bound For Non-Influential Vertices}

To further bound the RHS (i.e, Eq. \eqref{eqn:rhs}) from the above computation, we will need the following key technical lemma.

\begin{lemma}				\label{lem:struct}
	Let $h:\Omega^R \to [0,1]$ be a function satisfying the condition:
	\[
	\max_{j \in [R]}\Inf{j}{T_{1 - \eta}h} \leq \tau.
	\]
	Then, we have the bound:  
	\[
	\Ex_{(x_i,y_i)_{i = 1,2}} \Big[h(x_1,y_1) + h(x_2,y_2) - 2h(x_1,y_1)h(x_2,y_2)\Big]
	\leq 1 - 2\Lambda_{\rho^*}(1/2,1/2) + 2c(\tau,\eta).
	\]
\end{lemma}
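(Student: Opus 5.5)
Writing $\mu_h := \Ex[h]$, and using that both $(x_1,y_1)$ and $(x_2,y_2)$ have the same marginal $\Omega^R$ distribution (uniform on $x$, $\alpha/2$-biased on $y$), the left-hand side equals
\[
1 - \Ex_{\mu^{\otimes R}}\big[h(x_1,y_1)h(x_2,y_2)\big] - \Ex_{\mu^{\otimes R}}\big[\bh(x_1,y_1)\bh(x_2,y_2)\big],
\]
where $\bh := 1-h$. It therefore suffices to lower bound each of the two correlation terms by a Gaussian quadrant probability and then optimize over $\mu_h$.

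\textbf{Step 1: pass to the smoothed function.} Since $h$ itself is not assumed to have low influences, only $T_{1-\eta}h$ is, I would first invoke Lemma \ref{lem:noise} with $k=2$. By Lemma \ref{lem:corr}, the correlation of $(\Omega^2,\mu)$ is at most $|\rho^*|<1$, so we can replace $\Ex[h(x_1,y_1)h(x_2,y_2)]$ by $\Ex[T_{1-\gamma}h(x_1,y_1)\,T_{1-\gamma}h(x_2,y_2)]$ up to an additive error $2\epsilon$, and similarly for $\bh$, noting that $T_{1-\gamma}\bh = 1 - T_{1-\gamma}h$. Here $\gamma=\gamma(|\rho^*|,\epsilon)$, and I would set things up so that $\gamma \le \eta$ and fold this loss into $c(\tau,\eta)$. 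To transfer the influence hypothesis from $T_{1-\eta}h$ to $T_{1-\gamma}h$, I would use monotonicity of influences under further noise, applying the hypothesis to the appropriately noised function. This matching of noise parameters is bookkeeping, but it is where the $\eta$ dependence in $c(\tau,\eta)$ enters.

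\textbf{Step 2: apply the invariance bound.} The smoothed functions take values in $[0,1]$, have small influences, and have means $\mu_h$ and $1-\mu_h$. Theorem \ref{thm:stab} with $\rho(\Omega^2;\mu)\le|\rho^*|$ gives
\[
\Ex[h\,h]\ \ge\ \Lambda_{-|\rho(\Omega^2;\mu)|}(\mu_h,\mu_h) - c(\tau).
\]
Since $\Lambda_{\rho}(a,a)$ is increasing in $\rho$, and $-\rho(\Omega^2;\mu) \ge \rho^*$, this is at least $\Lambda_{\rho^*}(\mu_h,\mu_h) - c(\tau)$. The same argument applied to $\bh$ yields $\Lambda_{\rho^*}(1-\mu_h,1-\mu_h) - c(\tau)$. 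The cut value is then at most
\[
1 - \Lambda_{\rho^*}(\mu_h,\mu_h) - \Lambda_{\rho^*}(1-\mu_h,1-\mu_h) + 2c(\tau,\eta).
\]

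\textbf{Step 3: the convexity argument.} This is where the balancedness issue is resolved. By Fact \ref{fact:convex}, the map $a\mapsto \Lambda_{\rho^*}(a,a)$ is convex. Hence
\[
\Lambda_{\rho^*}(\mu_h,\mu_h)+\Lambda_{\rho^*}(1-\mu_h,1-\mu_h)\ \ge\ 2\Lambda_{\rho^*}(1/2,1/2),
\]
by Jensen's inequality applied at the points $\mu_h$ and $1-\mu_h$. This gives exactly the claimed bound $1-2\Lambda_{\rho^*}(1/2,1/2)+2c(\tau,\eta)$.

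I expect Step 1 to be the main obstacle. One must align the implicit noise from Lemma \ref{lem:noise} with the $T_{1-\eta}$ appearing in the hypothesis, and ensure that the error terms remain a function $c(\tau,\eta)\to 0$. Steps 2 and 3 are direct applications of earlier results.
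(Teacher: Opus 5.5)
Your overall plan is the paper's proof: rewrite the integrand as $1-h(\omega_1)h(\omega_2)-\bh(\omega_1)\bh(\omega_2)$, add implicit noise with Lemma \ref{lem:noise}, apply Theorem \ref{thm:stab} using $\rho(\Omega^2;\mu)\le|\rho^*|$, and finish with convexity of $a\mapsto\Lambda_{\rho^*}(a,a)$ (Fact \ref{fact:convex}). Steps 2 and 3 are fine. Your appeal to monotonicity of $\Lambda_\rho(a,a)$ in $\rho$ is valid. The paper does not need it, because Lemma \ref{lem:corr} gives correlation exactly $|\rho^*|$ when $\alpha\le\alpha^*$.

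However, Step 1, which you identify as the crux, is set up in the wrong direction. Noise only \emph{decreases} influences. Moreover, $T_{1-\gamma}h = T_{(1-\gamma)/(1-\eta)}\,T_{1-\eta}h$ is a further-noised version of $T_{1-\eta}h$ only when $\gamma\ge\eta$.

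If you arrange $\gamma\le\eta$ as proposed, then $T_{1-\gamma}h$ is \emph{less} smoothed than $T_{1-\eta}h$. Monotonicity then gives $\Inf{j}{T_{1-\eta}h}\le\Inf{j}{T_{1-\gamma}h}$, which is the wrong inequality. The hypothesis says nothing about the influences of $T_{1-\gamma}h$, so Theorem \ref{thm:stab} cannot be applied to it.

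The fix is to require $\eta\le\gamma(|\rho^*|,\epsilon)$. Choose $\epsilon=\epsilon(\eta)$ with $\gamma(|\rho^*|,\epsilon(\eta))\ge\eta$ and $\epsilon(\eta)\to0$ as $\eta\to0$; this is possible since $\gamma(|\rho^*|,\epsilon)>0$ for each fixed $\epsilon>0$. The loss $2\epsilon(\eta)$ is then the $\eta$-dependent term in $c(\tau,\eta)$. With this correction your argument goes through.

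The paper avoids the influence transfer entirely. It applies Lemma \ref{lem:noise} with noise rate exactly $\eta$ and error $c'(\eta)$. This is legitimate because the error bound in Mossel's lemma is increasing in the noise rate, so any rate at most $\gamma(\rho,\epsilon)$ also works. Theorem \ref{thm:stab} is then applied directly to $T_{1-\eta}h$, whose influences are bounded by hypothesis.
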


\begin{proof}
	Throughout, let us denote $\omega_i := (x_i,y_i)$ for short. As a first step, we will find it more convenient to rewrite the term inside the expectation as
	\begin{align*}
		h(\omega_1) + h(\omega_2) - 2h(\omega_1)h(\omega_2)
		&= 1 - h(\omega_1)h(\omega_2) - \Big(1 - h(\omega_1)\Big)\Big(1 - h(\omega_2)\Big) \\
		&= 1 - h(\omega_1)h(\omega_2) - \bh(\omega_1)\bh(\omega_2),
	\end{align*}
	where $\bar{h}(\cdot) := 1 - h(\cdot)$. Hence, in terms of this notation, we want to bound,
	\begin{equation}				\label{eqn:rhs-1}
		1 - \Ex_{(\omega_1,\omega_2)}\Big[h(\omega_1)h(\omega_2) + \bh(\omega_1)\bh(\omega_2)\Big].
	\end{equation}
	Additionally, let us define $\delta_h := \Ex_{\omega \sim \Omega}[h(\omega)]$. Using Lemma \ref{lem:corr} and our choice of $\alpha$, we know that 
	\begin{equation}		\label{eqn:corr}
	\rho(\Omega^2;\mu) \leq \max\{|\rho_\alpha|,|\rho^*|\} = |\rho^*|.
	\end{equation}

	{\bf Introducing Implicit Noise}. Our first step is to introduce implicit noise into the expression. As observed in \eqref{eqn:corr}, note that we have $\rho(\Omega^2;\mu) \leq |\rho^*| < 1$. Hence, instantiating Lemma \ref{lem:noise} with $\Omega_1 = \Omega_2 = \Omega$, we get that 
	\begin{equation}		\label{eqn:noise}
	\Ex_{(\omega_1,\omega_2)}\left[h(\omega_1)h(\omega_2)\right]
	\geq \Ex_{(\omega_1,\omega_2)}\left[T_{1 - \eta}h(\omega_1)T_{1 - \eta}h(\omega_2)\right] - c'(\eta),
	\end{equation}
	where $c'(\eta) = c'(\eta,|\rho^*|)$ is the error-term from Lemma \ref{lem:noise}.

	{\bf Applying Noise Stability Bound}. Now we use shall use Theorem \ref{thm:stab} to bound the above RHS. Again, from \eqref{eqn:corr} we get that $\rho(\Omega^2;\mu) \leq |\rho^*|$, and in the setting of the lemma we have 
	\[
	\max_{j \in [R]}\Inf{j}{T_{1 - \eta}h} \leq \tau.
	\]
	Hence, the probability space $(\Omega^2;\mu)$ and the function $f:= T_{1 - \eta}h$ satisfies the premise of Theorem \ref{thm:stab}, and therefore applying Theorem \ref{thm:stab} we get the bound:
	\begin{align}			
		\Ex_{(\omega_1,\omega_2)}\big[T_{1 - \eta}h(\omega_1)T_{1 - \eta}h(\omega_2)\big] 
		&\overset{1}{\geq} \Lambda_{-|\rho^*|}\big(\Ex[h], \Ex[h]\big) - c(\tau)			\non\\
		&= \Lambda_{-|\rho^*|}\big(\delta_h,\delta_h\big) - c(\tau)				\non\\		
		&= \Lambda_{\rho^*}\big(\delta_h,\delta_h\big) - c(\tau).				\label{eqn:st-1}
	\end{align}
	where in step $1$, we also used the observation that $\Ex_{\omega}[T_{1 - \eta}h(\omega)] = \Ex_{\omega}[h(\omega)]$. Putting together \eqref{eqn:noise} and \eqref{eqn:st-1}, we get the overall bound
	\begin{equation}			\label{eqn:st-2}
		\Ex_{(\omega_1,\omega_2)}\big[h(\omega_1)h(\omega_2)\big] \geq \Lambda_{\rho^*}\big(\delta_h,\delta_h\big) - c(\tau) - c'(\eta).
	\end{equation}
	
	Repeating the above steps with the functions $\bh = 1 - h$, and we also get the bound,
	\begin{align}
		\Ex_{(\omega_1,\omega_2)}\Big[\bh(\omega_1)\bh(\omega_2)\Big] 
		&\geq \Lambda_{\rho^*}\big(\Ex[\bh], \Ex[\bh]\big) - c(\tau)		\non\\
		&= \Lambda_{\rho^*}\big(1 - \delta_h,1 - \delta_h\big) - c(\tau) - c'(\eta).			\label{eqn:st-3}
	\end{align}
	Plugging in the bounds from \eqref{eqn:st-2} and \eqref{eqn:st-3} into \eqref{eqn:rhs-1} we now get that:
	\begin{align*}
		\Ex_{(\omega_1,\omega_2)}\Big[h(\omega_1)h(\omega_2) + \bh(\omega_1)\bh(\omega_2)\Big] 
		&\geq \Lambda_{\rho^*}\big(\Ex[h],\Ex[h]\big) + \Lambda_{\rho^*}\big(\Ex[\bh],\Ex[\bh]\big) - 2c(\tau) - 2c'(\eta) \\
		&= \Lambda_{\rho^*}\big(\delta_h,\delta_h\big) + \Lambda_{\rho^*}\big(1 - \delta_h,1 - \delta_h\big) - 2c(\tau) - 2c'(\eta) \\
		&\geq 2\Lambda_{\rho^*}\left(\frac12,\frac12\right) - 2c(\tau) - 2c'(\eta),
	\end{align*}
	where the last step follows using the convexity of the map $x \mapsto \Lambda_{\rho^*}(x,x)$ (Fact \ref{fact:convex}). Plugging in the above bound into \eqref{eqn:rhs-1} gives us the bound, and denoting $c(\tau,\eta):= 2c(\tau) + 2c'(\eta)$, concludes the proof of the lemma.
\end{proof}

\subsubsection{Upper Bounding \eqref{eqn:rhs}}

Let us now continue with bounding \eqref{eqn:rhs}. To that end, let us define the set
\[
U_{\rm good} := \left\{u \in U~\Big|~ \max_{j \in [R]}\Inf{j}{T_{1 - \eta}g_u} \leq \tau\right\}.
\]
The following lemma is a standard step in the soundness analysis which states that if $\cG$ is a NO instance, then only a tiny fraction of vertices don't belong to $U_{\rm good}$.

\begin{lemma}		\label{lem:inf-dec1}
	If ${\sf val}(\cG) \leq \delta$, then $\Pr_{u \sim U}\Big[u \in U_{\rm good}\Big] \geq 1 - c_1(\delta,\tau,\eta)$.
\end{lemma}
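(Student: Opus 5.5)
We prove the contrapositive by the standard influence-decoding argument. Suppose $\Pr_{u}[u \notin U_{\rm good}] > c_1$. We then build a labeling of $\cG$ that satisfies more than $\delta$ fraction of constraints, which is a contradiction once $c_1 = c_1(\delta,\tau,\eta)$ is chosen suitably.

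The labeling comes from candidate label sets. For $v \in V$ let $L_v := \{j \in [R] : \Inf{j}{T_{1-\eta}f_v} \geq \tau/2\}$, and for $u \in U$ let $L_u := \{j : \Inf{j}{T_{1-\eta}g_u} \geq \tau\}$. Each $f_v$ takes values in $[0,1]$ on the product space $\Omega^R$, where $\Omega = \{0,1\}_{1/2}\times\{0,1\}_{\alpha/2}$. Hence the total influence of its noisy version is bounded:
\[
\sum_j \Inf{j}{T_{1-\eta}f_v} \leq \sum_{\beta} |\beta|(1-\eta)^{2|\beta|}\wh{f_v}(\beta)^2 \leq \frac{1}{2e\eta}\mathrm{Var}[f_v].
\]
The same holds for $g_u$. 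This uses the Fourier basis $\{\phi_a\gamma_b\}$ on $\Omega$, where $T_{1-\eta}$ acts coordinatewise and multiplies each non-constant basis element by at most $1-\eta$. Consequently $|L_v|, |L_u| \leq O(1/(\tau\eta))$.

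The key step links $g_u$ back to its neighbors. Since $g_u(\omega) = \Ex_{v \sim N_\cG(u)}[f_v(\pi_{v,u}(\omega))]$ and $T_{1-\eta}$ commutes with averaging and with coordinate permutations, the influence is convex in the function. Jensen's inequality, applied to the conditional variance, gives
\[
\Inf{j}{T_{1-\eta}g_u} \leq \Ex_{v \sim N_\cG(u)}\Big[\Inf{\pi_{v,u}^{-1}(j)}{T_{1-\eta}f_v}\Big].
\]
The index bookkeeping must follow the paper's convention for $\pi(x)$. So if $u \notin U_{\rm good}$ and $j \in L_u$, then by Markov at least a $\tau/2$ fraction of neighbors $v$ satisfy $\pi_{v,u}^{-1}(j) \in L_v$ (influences are at most $1$ since $f_v\in[0,1]$).

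To decode, assign each $u$ a uniformly random label from $L_u$ (arbitrary if $L_u$ is empty), and each $v$ a uniformly random label from $L_v$. For a bad $u$, a $\tau/2$ fraction of its edges $(u,v)$ have the matching label available in $L_v$. Such an edge is satisfied with probability at least $1/(|L_u||L_v|) = \Omega(\tau^2\eta^2)$. Using left-regularity, the expected fraction of satisfied edges is at least $c_1\cdot(\tau/2)\cdot\Omega(\tau^2\eta^2)$. Setting $c_1 := C\delta/(\tau^3\eta^2)$ for a suitable constant $C$ makes this exceed $\delta$, the desired contradiction. Then $c_1 \to 0$ as $\delta \to 0$ for fixed $\tau,\eta$, consistent with the parameter choice $\delta = \tau^2\eta^2/8$ up to adjusting constants.

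The main obstacle is the convexity/Jensen step on the non-Boolean product space $\Omega^R$ together with the total-influence bound there. To handle it, I would write the influence as $\sum_{\beta:\beta_j\neq 0}\wh{h}(\beta)^2$ in the tensor basis. This is a convex quadratic form in $h$, so Jensen applies to the average over $v$. I would also check that permuting coordinates of $(x,y)$ jointly is an isometry that permutes influences accordingly.
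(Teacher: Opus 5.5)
Your proposal is correct and is essentially the paper's approach: the paper obtains this lemma by invoking the folklore influence-decoding Lemma~\ref{lem:dec} (stated there for $\{0,1\}^R$, with a $\sqrt{\delta}$ bound), and you carry out that same argument directly on $\Omega^R$, obtaining $c_1=O(\delta/(\tau^3\eta^2))$. One slip in your closing remark: with $\delta=\tau^2\eta^2/8$ your $c_1$ is $\Theta(1/\tau)$, which is not small, so $\delta$ must be taken of order $\tau^3\eta^3$ (the paper's own parameter setting has the same defect).
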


We defer the proof of the above lemma to Appendix \ref{sec:dec} for now, and continue with the soundness analysis. We can further bound:
\begin{align*}
	&\Ex_{u \sim U}\Ex_{(x_i,y_i)_{i = 1,2}}\Big[g_u(x_1,y_1) + g_u(x_2,y_2) - 2g_u(x_1,y_1)g_u(x_2,y_2)\Big] \\
	&\leq \Ex_{u \sim U_{\rm good}} \Ex_{(x_i,y_i)}\Big[g_u(x_1,y_1) + g_u(x_2,y_2) - 2g_u(x_1,y_1)g_u(x_2,y_2)\Big] + c_1(\delta,\tau,\eta) \\
	&\leq 1 - 2\Lambda_{\rho^*}(1/2,1/2) + 2c(\tau,\eta) + c_1(\delta,\tau,\eta) \\
	&\leq \frac{\arccos \rho^*}{\pi} + 2\eta 
\end{align*}	
where the first step is due to Lemma \ref{lem:inf-dec1} and the second step is due to Lemma \ref{lem:struct}, and in the third step, the second term is bounded by $\eta$ due to our setting of parameters. Combining the above bound with \eqref{eqn:rhs}, we get that the
\begin{align*}
	\Pr_{e \sim \cH}\Big[\text{$f$ cuts $e$}\Big]
	&\leq \Ex_{u \sim U}\Ex_{(x_1,y_1,x_2,y_2)}
	\Big[g_u(x_1,y_1) + g_u(x_2,y_2) - 2g_u(x_1,y_1)g_u(x_2,y_2)\Big] \\
	&\leq \frac{\arccos \rho^*}{\pi} + 2 \eta.
\end{align*}
Since the above bound holds for any subset $S \subseteq V_{\cH}$, this concludes the proof of Lemma \ref{lem:sound-iset}.

\section{Proof of Theorem~\ref{thm:algo}}

    Let $G = (V, E)$ be a graph such that there exists an independent set $I \subseteq V$ satisfying $w(I,I^c) = \alpha > \alpha^*$. We aim to give an algorithm that produces a cut $(S, S^c)$ such that $w(S, S^c) > \frac{\arccos \rho^\ast}{\pi}$. 
    Consider the following integer program:
    \[
    \begin{array}{ll}
        \text{max} & \frac{1}{|E|}\sum_{\{i, j\} \in E} \frac{1 - X_iX_j}{2} \\
        \text{s.t.} & \forall \{i, j\} \in E,\, (1 - X_i)(1 - X_j) = 0 \\
        & \forall i \in V,\, X_i \in \{1, -1\}. 
    \end{array}
    \]
    A solution to the above program is feasible if and only $\{i \mid X_i = -1\}$ is an independent set. By our assumption on $G$, if we take $X_i = -1$ for $i \in I$ and $X_i = 1$ for $i \not\in I$, then $\sum_{\{i, j\} \in E} \frac{1 - X_iX_j}{2} \geq \alpha |E|$, so the optimum of this integer program is at least $\alpha$. We now relax this program to a semi-definite program in the natural way:
    \begin{figure}[h!]
    \[
    \begin{array}{ll}
        \text{max} & \frac{1}{|E|}\sum_{\{i, j\} \in E} \frac{1 - \bv_i \cdot \bv_j}{2} \\
        \text{s.t.} & \forall \{i, j\} \in E,\, (\bv_0 - \bv_i)(\bv_0 - \bv_j) = 0 \\
        & \forall i \in \{0\} \cup V,\, \|\bv_i\|^2 = 1. 
    \end{array}
    \]
    \caption{The semi-definite relaxation $\text{SDP}(G)$}\label{fig:sdp}
    \end{figure}

    The relaxation can only increase the objective value, so we can find in polynomial time a set of vectors $\{\bv_i \mid i \in \{0\} \cup V\}$ satisfying the SDP constraints such that $\frac{1}{|E|}\sum_{\{i, j\} \in E} \frac{1 - \bv_i \cdot \bv_j}{2} \geq \alpha$. We will give a rounding algorithm that obtains a large cut given such an SDP solution.
    
    For every $i,j \in V$, let us define $b_i = \bv_i \cdot \bv_0$ and $b_{ij} = \bv_i \cdot \bv_j$. It follows from the SDP constraints that for every edge $\{i, j\}$, $b_{ij} = -1 + b_i + b_j$, and therefore $b_i + b_j \geq 0$. 

    \begin{definition}\label{def:rho}
        For every $b_i, b_j \in [-1, 1]$, we define $\rho(b_i, b_j) = -\sqrt{\frac{(1 - b_i)(1 - b_j)}{(1 + b_i)(1 + b_j)}}$ if $(1 + b_i)(1 + b_j) \neq 0$, and 0 otherwise.
    \end{definition}

     For any $\rho \in [-1, 1]$, let $\varphi_\rho$ and $\Phi_\rho$ be the p.d.f. and c.d.f. of the 2-dimensional normal distribution with covariance matrix $\left(\begin{array}{cc}
        1 & \rho \\
        \rho & 1
    \end{array}\right)$. 

    The rounding algorithm that we use comes from the $\THRESH^-$ family, proposed by Lewin, Livnat, and Zwick~\cite{lewin2002improved}.
    \begin{definition}[$\THRESH^-$]
        Let $t: [-1, 1] \to [0, 1]$ be some continuous function. The $\THRESH^-$ rounding scheme specified by $t$ works as follows:
        \begin{itemize}
            \item Choose a standard Gaussian vector $\br$.
            \item For every $i$, set $i \in S$ if $\br \cdot \bv_i^\perp \geq \Phi^{-1}(t(b_i))$, and $i \in V \setminus S$ otherwise. Here $\bv_i^\perp = \frac{\bv_i - (\bv_i \cdot \bv_0) \bv_0}{\|\bv_i - (\bv_i \cdot \bv_0) \bv_0\|}$.
        \end{itemize} 
    \end{definition}
    Intuitively, a $\THRESH^-$ scheme generalizes Goemans-Williamson hyperplane rounding in two important ways: (1) we operate on $\bv_i^\perp$ instead of $\bv_i$; and (2) the hyperplane we choose may be shifted away from the origin and the amount of shifting (specified by $t$) depends on $b_i$.

    \begin{proposition}[Soundness]\label{prop:sound}
        The soundness on the edge $(b_i, b_j)$ achieved by the $\THRESH^-$ rounding scheme with threshold function $t: [-1, 1] \to [0, 1]$ is equal to
        \begin{equation}\label{eq:soundness}
            s(b_i, b_j) = 1 - \Phi_\rho(\Phi^{-1}(t(b_i)), \Phi^{-1}(t(b_j))) - \Phi_\rho(\Phi^{-1}(1 - t(b_i)), \Phi^{-1}(1 - t(b_j)))
        \end{equation}
        where $\rho = \rho(b_i, b_j)$.
    \end{proposition}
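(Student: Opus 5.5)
Fix an edge $\{i,j\}$ with $b_i = \bv_i\cdot\bv_0$ and $b_j = \bv_j\cdot\bv_0$. The plan is to compute the probability that the $\THRESH^-$ scheme separates $i$ and $j$. This reduces to a two-dimensional Gaussian computation once the inner product of $\bv_i^\perp$ and $\bv_j^\perp$ is identified. Write $\bv_i = b_i\bv_0 + \sqrt{1-b_i^2}\,\bv_i^\perp$, and similarly for $j$. Then
\[
b_{ij} = b_ib_j + \sqrt{(1-b_i^2)(1-b_j^2)}\,\bv_i^\perp\cdot\bv_j^\perp .
\]
The edge constraint of $\text{SDP}(G)$ gives $b_{ij} = -1 + b_i + b_j$. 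Substituting,
\[
-1+b_i+b_j-b_ib_j = -(1-b_i)(1-b_j),
\]
so
\[
\bv_i^\perp\cdot\bv_j^\perp = -\frac{(1-b_i)(1-b_j)}{\sqrt{(1-b_i)(1+b_i)(1-b_j)(1+b_j)}} = -\sqrt{\frac{(1-b_i)(1-b_j)}{(1+b_i)(1+b_j)}} = \rho(b_i,b_j),
\]
which matches Definition~\ref{def:rho}. This is valid whenever $|b_i|,|b_j|<1$.

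With $\rho = \rho(b_i,b_j)$, the pair $(g_1,g_2) := (\br\cdot\bv_i^\perp, \br\cdot\bv_j^\perp)$ is a $\rho$-correlated standard Gaussian pair, by rotational invariance of $\br$ and unit norm of the $\bv^\perp$'s. Set $\theta_i = \Phi^{-1}(t(b_i))$. Vertex $i \notin S$ exactly when $g_1 < \theta_i$, which has probability $t(b_i)$. The edge is cut exactly when the two events $\{g_1<\theta_i\}$ and $\{g_2<\theta_j\}$ disagree. So the cut probability is $1 - \Pr[g_1<\theta_i,\ g_2<\theta_j] - \Pr[g_1\ge\theta_i,\ g_2\ge\theta_j]$. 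The first probability is $\Phi_\rho(\Phi^{-1}(t(b_i)),\Phi^{-1}(t(b_j)))$. For the second, use $(g_1,g_2)\mapsto(-g_1,-g_2)$, which preserves the law, together with $-\Phi^{-1}(x) = \Phi^{-1}(1-x)$. This turns it into $\Phi_\rho(\Phi^{-1}(1-t(b_i)),\Phi^{-1}(1-t(b_j)))$, giving \eqref{eq:soundness}. Equivalently, this is the computation of Lemma~\ref{lem:cut-def}.

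The main obstacle is the degenerate cases, where $\bv_i^\perp$ is undefined or $\rho$ is defined by convention.
\begin{itemize}
\item If $b_i = 1$, then $\bv_i = \bv_0$. Taking $t(1)\in\{0,1\}$ (or the natural convention) makes the rounding deterministic. The formula with $\rho = 0$ then gives $t(b_i)+t(b_j)-2t(b_i)t(b_j)$, which is correct because the outcome for $i$ is independent of $j$.
\item If $b_i=-1$, the constraint $b_i+b_j\ge 0$ forces $b_j = 1$, so this falls under the previous case.
\end{itemize}
I would handle these by a short case check, or by adopting a convention for $\bv_i^\perp$ (any unit vector orthogonal to $\bv_0$ and independent of the rest), under which the formula holds with $\rho=0$. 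This matches the "0 otherwise" clause in Definition~\ref{def:rho} and the convention that $\Phi_\rho$ with infinite arguments degenerates appropriately.
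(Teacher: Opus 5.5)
Your proposal is correct and follows the same route as the paper, whose proof simply asserts that the separation probability equals the right-hand side of \eqref{eq:soundness}; you supply the justification it leaves implicit, namely that the edge constraint forces $\bv_i^\perp\cdot\bv_j^\perp=\rho(b_i,b_j)$, after which the computation is the two-dimensional Gaussian one from Lemma~\ref{lem:cut-def}. For the degenerate case $b_i=1$, Table~\ref{table:t} has $t(1)=q+0.0095\notin\{0,1\}$, so your option $t(1)\in\{0,1\}$ is not available. The convention that fits the paper is your other one: round $i$ using a fresh independent direction. This is what the paper implicitly uses when it sets $\rho(b,1)=0$ in Lemma~\ref{lem:sound_b_1}.
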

    \begin{proof}
        For every $\{i, j\} \in E$, the probability that $i$ and $j$ do not land on the same side of $(S, V \setminus S)$ is exactly the right hand side of \eqref{eq:soundness}.
    \end{proof}


    \begin{lemma}\label{lem:affine_sound}
        Let $\frac{1}{|E|}\sum_{\{i, j\} \in E} \frac{1 - b_{ij}}{2} = \alpha$, and let $b_0 = 1 - \alpha$. Assume that there is some $\THRESH^-$ rounding scheme $t:[-1, 1] \to [0, 1]$ whose soundness $s(b_i, b_j)$ satisfies the following properties:
        \begin{itemize}
            \item $s(b_i, b_j) \geq p \frac{b_i + b_j}{2} + q$ for every $b_i, b_j$ satisfying $b_i + b_j \geq 0$,
            \item $s(b_0, b_0) = p\cdot b_0 + q$,
            \item $t(b_0) = 1/2$.
        \end{itemize}
        Then 
        \[
            \E[w(S, V\setminus S)] \geq \frac{\arccos(\rho(b_0, b_0))}{\pi}.
        \]
        In particular, if in addition we also have $\alpha > \alpha^\ast$, then $\E[w(S, V\setminus S)] > \frac{\arccos(\rho^\ast)}{\pi}$.
    \end{lemma}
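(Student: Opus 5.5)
The plan is to use linearity of expectation to reduce the expected cut to an average of the per-edge soundness $s(b_i,b_j)$. We then lower bound that average by an affine function of the average of $\frac{b_i+b_j}{2}$, which the SDP constraints pin down to exactly $b_0$. The lower bound at $b_0$ is then evaluated with Sheppard's formula.

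\textbf{Step 1 (linearity and the affine bound).} By Proposition~\ref{prop:sound} and linearity of expectation,
\[
\E[w(S,V\setminus S)] = \frac{1}{|E|}\sum_{\{i,j\}\in E} s(b_i,b_j).
\]
We showed just before Definition~\ref{def:rho} that every edge satisfies $b_{ij}=-1+b_i+b_j$, and therefore $b_i+b_j\ge 0$. So the first hypothesis applies to every edge, and we get
\[
\E[w(S,V\setminus S)] \ge p\cdot\frac{1}{|E|}\sum_{\{i,j\}\in E}\frac{b_i+b_j}{2} + q.
\]

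\textbf{Step 2 (identifying the average).} For each edge the same identity gives $\frac{1-b_{ij}}{2} = 1-\frac{b_i+b_j}{2}$. Averaging over edges and using $\frac{1}{|E|}\sum \frac{1-b_{ij}}{2}=\alpha$, the average of $\frac{b_i+b_j}{2}$ equals $1-\alpha=b_0$. Hence the bound from Step~1 becomes $p b_0+q$, which equals $s(b_0,b_0)$ by the second hypothesis.

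\textbf{Step 3 (evaluating $s(b_0,b_0)$).} Since $t(b_0)=1/2$, both thresholds in \eqref{eq:soundness} are $\Phi^{-1}(1/2)=0$. With $\rho=\rho(b_0,b_0)$ this gives
\[
s(b_0,b_0)=1-2\Phi_\rho(0,0)=1-2\Lambda_\rho(1/2,1/2)=\frac{\arccos\rho}{\pi}
\]
by Lemma~\ref{lem:shep}. This proves the first claim.

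\textbf{Step 4 (the ``in particular'' claim).} Since $b_0=1-\alpha\in[0,1)$, we have $1+b_0>0$, so Definition~\ref{def:rho} gives
\[
\rho(b_0,b_0)=-\frac{1-b_0}{1+b_0}=-\frac{\alpha}{2-\alpha}=\rho_\alpha,
\]
the same correlation as in Lemma~\ref{lem:dist}. By Corollary~\ref{corr:dist}, $\rho_\alpha$ is strictly decreasing in $\alpha$ (the formula shows strictness) and $\rho_{\alpha^*}=\rho^*$. So $\alpha>\alpha^*$ gives $\rho_\alpha<\rho^*$. Because $\arccos$ is strictly decreasing on $[-1,1]$, it follows that $\arccos(\rho_\alpha)/\pi>\arccos(\rho^*)/\pi$.

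None of these steps is a real obstacle, since the substantive content lives in the hypotheses about $t$. The only point needing care is Step~2: the objective average must be converted into the average of $\frac{b_i+b_j}{2}$ exactly, with equality rather than an inequality. This is what allows the affine lower bound with an arbitrary-sign slope $p$ to be applied without any monotonicity assumption on $p$.
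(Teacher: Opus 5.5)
Your proof is correct and follows the paper's argument: you average the per-edge soundness, apply the affine bound edgewise, use $b_{ij} = b_i + b_j - 1$ to identify the average of $\frac{b_i+b_j}{2}$ as exactly $b_0$, and evaluate $s(b_0,b_0)$ via Sheppard's formula since $t(b_0)=1/2$. Your Step 4, which identifies $\rho(b_0,b_0) = -\frac{\alpha}{2-\alpha} = \rho_\alpha$ and combines Corollary~\ref{corr:dist} with the strict monotonicity of $\arccos$, correctly supplies the ``in particular'' claim that the paper leaves implicit.
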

    \begin{proof}
        We have 
        \begin{align*}
            \E[w(S, V\setminus S)] & = \frac{1}{|E|}\sum_{\{i, j\} \in E} s(b_i, b_j) \\ 
            & \geq \frac{1}{|E|}\sum_{\{i, j\} \in E} \left( p \cdot \frac{b_i + b_j}{2} + q \right)\\
            & =\frac{1}{|E|} \sum_{\{i, j\} \in E} \left( p \cdot \frac{1 + b_{ij}}{2} + q \right)\\
            & = p\cdot b_0 + q = s(b_0, b_0). 
        \end{align*}
        On the other hand, by Proposition~\ref{prop:sound}, since $t(b_0) = 1/2$, we have $s(b_0, b_0) = \frac{\arccos(\rho(b_0, b_0))}{\pi}$.
    \end{proof}
    
    By the above lemma, it is sufficient to find some $t: [-1, 1] \to [0, 1]$ and some $p, q \in \mathbb{R}$ such that $s(b_i, b_j) \geq p \frac{b_i + b_j}{2} + q$ for every feasible $(b_i, b_j)$. Furthermore, if $\frac{1}{|E|}\sum_{\{i, j\} \in E} \frac{1 - b_{ij}}{2} = \alpha$ and $b_0 = 1 - \alpha$, then we also want $t(b_0) = 1/2$, since in this case we may have $b_i = b_0$ for every $i \in V$ and the best thing to do when this happens is to use a hyperplane centered at the origin. It turns out that the choice of such $p$ and $q$ is fixed under these mild assumptions. 

    \begin{proposition}\label{proposition:pq}
        Let $b_0 \in (0, 1)$ and $s_0 = \frac{\arccos(\rho(b_0, b_0))}{\pi}$. Let $t: [-1, 1] \to [0, 1]$ be some function that is differentiable at $b_0$ with $t(b_0) = 1/2$. Let $p, q \in \mathbb{R}$ be two real numbers such that the following holds:
        \begin{itemize}
            \item $s(b_i, b_j) \geq p \cdot \frac{b_i + b_j}{2} + q$ for every $b_i, b_j \in [-1, 1]$,
            \item $s_0 = p \cdot b_0 + q$.
        \end{itemize}
        Then we have
        \begin{equation}\label{eq:pq}
            p = - \frac{2}{(1 + b_0)^2 \cdot \pi \cdot \sin(\pi \cdot s_0)}, \quad q = -p\cdot b_0 + s_0.
        \end{equation}
    \end{proposition}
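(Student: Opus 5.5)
The plan is a first-order (tangency) argument at the point $(b_0,b_0)$. Define the auxiliary function
\[
g(b_i,b_j) := s(b_i,b_j) - p\cdot\frac{b_i+b_j}{2} - q .
\]
By the first hypothesis $g \geq 0$ on $[-1,1]^2$. By the second hypothesis, together with $t(b_0)=1/2$, we have $g(b_0,b_0)=0$. To see this, note that Proposition \ref{prop:sound} with $t(b_0)=1/2$ gives $s(b_0,b_0) = 1-2\Phi_\rho(0,0) = \frac{\arccos \rho(b_0,b_0)}{\pi} = s_0$ by Sheppard's formula (Lemma \ref{lem:shep}). Since $b_0\in(0,1)$, the point $(b_0,b_0)$ lies in the interior of $[-1,1]^2$, so it is an interior minimizer of $g$. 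Provided $s$ is differentiable there, we get $\partial s/\partial b_i(b_0,b_0) = p/2$. This pins down $p$, and then $q = s_0 - p b_0$ follows directly from the second hypothesis.

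The remaining work is to compute $\partial s/\partial b_i$ at $(b_0,b_0)$ via the chain rule. Write $s = \Psi(\tau_i,\tau_j,\rho)$, where $\tau_k = \Phi^{-1}(t(b_k))$, $\rho=\rho(b_i,b_j)$, and
\[
\Psi(\tau_i,\tau_j,\rho) = 1-\Phi_\rho(\tau_i,\tau_j)-\Phi_\rho(-\tau_i,-\tau_j).
\]
The dependence of $s$ on $b_i$ enters through two channels.

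\emph{The threshold channel $\tau_i$.} At $\tau_i=\tau_j=0$ we have
\[
\partial_{\tau_i}\Psi = -\partial_1\Phi_\rho(0,0) + \partial_1\Phi_\rho(0,0) = 0 .
\]
So the contribution of $t'(b_0)$ vanishes. This is exactly why only differentiability of $t$ at $b_0$, and not its derivative value, is needed.

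\emph{The correlation channel $\rho$.} With $\tau=0$ we have $\Psi = \arccos(\rho)/\pi$, hence
\[
\partial_\rho \Psi = -\frac{1}{\pi\sqrt{1-\rho^2}} = -\frac{1}{\pi \sin(\pi s_0)},
\]
using $\sqrt{1-\rho^2}=\sin(\arccos\rho)$ and $\arccos\rho = \pi s_0$.

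For the derivative of $\rho$ itself, write $\rho(b_i,b_j) = -\sqrt{r(b_i)r(b_j)}$ with $r(b) = \frac{1-b}{1+b}$, so that $r'(b) = -\frac{2}{(1+b)^2}$. Then
\[
\frac{\partial\rho}{\partial b_i} = -\frac{\sqrt{r(b_j)}}{2\sqrt{r(b_i)}}\, r'(b_i),
\]
which at $b_i=b_j=b_0$ equals $\frac{1}{(1+b_0)^2}$. Multiplying the two factors gives
\[
\frac{p}{2} = -\frac{1}{(1+b_0)^2\,\pi\,\sin(\pi s_0)},
\]
which is \eqref{eq:pq}.

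The step needing the most care is justifying differentiability of $s$ at $(b_0,b_0)$, which the stationarity argument relies on. I would verify the following:
\begin{itemize}
\item Since $b_0\in(0,1)$, we have $\rho(b_0,b_0) = -\frac{1-b_0}{1+b_0}\in(-1,0)$, so $\rho(\cdot,\cdot)$ is smooth near $(b_0,b_0)$.
\item $\Phi_\rho(x,y)$ is smooth in $(x,y,\rho)$ for $|\rho|<1$.
\item $\Phi^{-1}\circ t$ is differentiable at $b_0$, because $t(b_0)=1/2$ lies in the interior of $(0,1)$ where $\Phi^{-1}$ is smooth.
\end{itemize}
With these facts, $s$ is differentiable at $(b_0,b_0)$ as a composition of differentiable maps. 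Since the minimum of $g$ is attained at an interior point, both one-sided directional derivatives vanish, which yields the gradient condition used above.
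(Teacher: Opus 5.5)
Your proposal is correct and uses essentially the same first-order tangency argument as the paper. The paper differentiates the diagonal restriction $b \mapsto s(b,b)$ at $b_0$ via Lemma~\ref{lem:partial_s}; there the threshold terms cancel because $\Phi(0)=1/2$, leaving $-2\varphi_\rho(0,0)\cdot\frac{2}{(1+b_0)^2}$, which it equates with $p$. You instead set a single partial derivative equal to $p/2$, which is the same computation by symmetry. You also spell out what the paper leaves implicit: that $(b_0,b_0)$ is an interior minimizer, that $s(b_0,b_0)=s_0$ by Sheppard's formula, and that $s$ is differentiable there when $t$ is only differentiable at $b_0$.
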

    \begin{proof}
        Let $s(b) \coloneqq s(b, b)$. By differentiability of $t$ at $b_0$ and Lemma~\ref{lem:partial_s}, we have
        \begin{equation*}
            \left.\frac{\partial}{\partial b}s(b) \right|_{b = b_0}= -2 \cdot \varphi_\rho(0, 0) \cdot \left.\frac{\partial}{\partial b}\rho(b, b) \right|_{b = b_0} = -2 \cdot \frac{1}{2\pi \sqrt{1 - \rho(b_0, b_0)^2}} \cdot \frac{2}{(1+b_0)^2}.
        \end{equation*}
        To satisfy the given conditions, it must be the case that
        \begin{equation*}
            p = \left.\frac{\partial}{\partial b}s(b) \right|_{b = b_0} = - \frac{2}{(1 + b_0)^2 \cdot \pi \cdot \sin(\pi \cdot s_0)}.
        \end{equation*}
        Here in the last equality we used the fact that $\cos(\pi\cdot s_0) = \rho(b_0, b_0)$ and $s_0 \in (0, 1)$. 
    \end{proof}

    For the choice $b_0 = b^\ast = 0.1840220...$ (which correspond to $\alpha = \alpha^\ast = 0.81593...$, we have $p = -0.6266938...$ and $q = 0.8573447...$. We now present a $\THRESH^-$ scheme for which the conditions for soundness in Proposition~\ref{proposition:pq} are achieved for every $b_0$ in a tiny neighborhood of $b^\ast$.

    Let us first describe the heuristics which we used to discover the $\THRESH^-$ scheme. We do so by looking at some special configurations.

    \begin{lemma}\label{lem:sound_-b_b}
        Consider the edge $(-b, b)$. We must have $q \leq t(b) + t(-b) \leq 2 - q$.
    \end{lemma}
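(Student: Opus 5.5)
The plan is to evaluate the soundness expression of Proposition~\ref{prop:sound} in closed form on the edge $(-b,b)$, and then read off what the first hypothesis of Lemma~\ref{lem:affine_sound} (equivalently of Proposition~\ref{proposition:pq}) says there. Here $b_i+b_j=0$, so the required lower bound is just $s(-b,b)\ge p\cdot 0+q=q$. The point is that this edge is degenerate: for $b\in(-1,1)$, Definition~\ref{def:rho} gives
\[
\rho(-b,b)=-\sqrt{\frac{(1+b)(1-b)}{(1-b)(1+b)}}=-1 .
\]
So the two Gaussians used by the rounding are perfectly anti-correlated, $g_2=-g_1$.

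First I will compute $\Phi_{-1}(x,y)$. Writing $g_2=-g_1=-g$,
\[
\Phi_{-1}(x,y)=\Pr[g\le x,\,-g\le y]=\Pr[-y\le g\le x]=\max\{0,\Phi(x)+\Phi(y)-1\}.
\]
Set $t_1=t(-b)$ and $t_2=t(b)$, and substitute $x=\Phi^{-1}(t_1)$, $y=\Phi^{-1}(t_2)$. The first $\Phi_\rho$ term in \eqref{eq:soundness} becomes $\max\{0,t_1+t_2-1\}$. For the second term, the arguments are $\Phi^{-1}(1-t_i)$, so it becomes $\max\{0,(1-t_1)+(1-t_2)-1\}=\max\{0,1-t_1-t_2\}$. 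At most one of these two terms is nonzero, hence
\[
s(-b,b)=1-\bigl|t(b)+t(-b)-1\bigr| .
\]
This matches the intuition that the rounding reduces to two complementary thresholds on a single Gaussian.

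Second, the condition $s(-b,b)\ge q$ becomes $|t(b)+t(-b)-1|\le 1-q$. This is exactly $q\le t(b)+t(-b)\le 2-q$, which is the claim. We have $q\le 1$ here, since $q\approx 0.857$, so the interval is nonempty.

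The only delicate point is the endpoint $b=\pm1$, where Definition~\ref{def:rho} sets $\rho=0$ instead of $-1$. I would handle it in one of two ways: either restrict the statement to $|b|<1$ (the relevant range for designing $t$), or note that $t$ is continuous and pass to the limit $b\to\pm1$ in the inequality just derived, using closedness of the interval $[q,2-q]$. I expect no real obstacle. The main care needed is the case analysis of the $\max\{0,\cdot\}$ terms, together with the sign convention of $\THRESH^-$: because $\Phi_\rho$ is symmetric under complementing both thresholds, that convention does not affect the sum.
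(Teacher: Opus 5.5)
Your proposal is correct and follows the paper's proof essentially verbatim. Both arguments use $\rho(-b,b)=-1$ and $\Phi_{-1}(x,y)=\max\{0,\Phi(x)+\Phi(y)-1\}$ to obtain $s(-b,b)=1-|t(b)+t(-b)-1|\ge q$.

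Your extra care at $b=\pm1$ is valid and addresses a point the paper glosses over. There, Definition~\ref{def:rho} actually gives $\rho=0$, whereas the paper asserts $\rho(-b,b)=-1$ on all of $[-1,1]$. The endpoint can be settled by continuity of $t$, as you propose, or directly: with $a=t(-1)$ and $c=t(1)$ in $[0,1]$, the condition $a+c-2ac\ge q$ is symmetric under $(a,c)\mapsto(1-a,1-c)$, so it already forces $q\le a+c\le 2-q$.
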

    \begin{proof}
        Note that $\rho(-b, b) = -1$ for every $b \in [-1, 1]$. For every $x, y \in \mathbb{R}$, we have
        \begin{equation*}
            \Phi_{-1}(x, y) = \max(0, \Phi(x) + \Phi(y) - 1).
        \end{equation*}
        It follows that $q \leq s(-b, b) = 1 - \max(t(b) + t(-b) - 1, 0) - \max(1 - t(b) - t(-b), 0) = 1 - |t(b) + t(-b) - 1|$, which simplifies to $q \leq t(b) + t(-b) \leq 2 - q$.
    \end{proof}

    \begin{lemma}\label{lem:sound_b_1}
        Consider the edge $(b, 1)$. We must have $t(b) + t(1) - 2t(b) \cdot t(1)\geq p \cdot \frac{b+ 1}{2} + q$.
    \end{lemma}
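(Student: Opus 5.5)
The plan mirrors the proof of Lemma~\ref{lem:sound_-b_b}: evaluate the soundness formula \eqref{eq:soundness} exactly on the special edge $(b,1)$, observe that the correlation degenerates to $0$ there, and then invoke the required affine lower bound $s(b_i,b_j) \geq p\frac{b_i+b_j}{2} + q$.

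First, check that $(b,1)$ is a feasible configuration. Since $b \in [-1,1]$, we have $b + 1 \geq 0$, so the pair lies in the region where the affine lower bound must hold (Lemma~\ref{lem:affine_sound}).

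Next, compute the correlation. By Definition~\ref{def:rho}, if $(1+b)(1+1) \neq 0$, the numerator under the square root contains the factor $(1-1) = 0$, so $\rho(b,1) = 0$. In the degenerate case $b = -1$, the definition sets $\rho = 0$ directly. Either way, $\rho(b,1) = 0$.

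With $\rho = 0$ the bivariate CDF factorizes, giving $\Phi_0(x,y) = \Phi(x)\Phi(y)$. Plugging this into \eqref{eq:soundness} yields
\[
s(b,1) = 1 - t(b)t(1) - (1-t(b))(1-t(1)) = t(b) + t(1) - 2t(b)t(1).
\]

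Finally, the hypothesis $s(b,1) \geq p\cdot\frac{b+1}{2} + q$ gives exactly the claimed inequality.

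The only subtle point is $\bv_1^\perp$ when $b_j = 1$. There $\bv_j = \bv_0$, so the orthogonal component is zero and the normalization in the definition of $\THRESH^-$ is ill-defined. I will handle this by the convention implicit in Definition~\ref{def:rho}: treat $\bv^\perp$ as an arbitrary unit vector orthogonal to all the others, or equivalently treat $\br\cdot\bv^\perp$ as an independent standard Gaussian. Under this convention, vertex $j$ lands in $S$ independently with probability governed by $t(1)$. This matches $\rho = 0$ in Proposition~\ref{prop:sound}, so the formula above applies as used.

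This convention is the step to state carefully; everything else is a one-line computation.
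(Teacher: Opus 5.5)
Your proposal is correct and follows the paper's proof: it also observes that $\rho(b,1)=0$, factorizes $\Phi_0$ to obtain $s(b,1)=t(b)+t(1)-2t(b)t(1)$, and then applies the required affine lower bound $s(b_i,b_j)\geq p\frac{b_i+b_j}{2}+q$. Your remark that $\mathbf{v}^\perp$ is undefined when $b_j=1$ addresses a point the paper leaves implicit, and treating the projection as an independent Gaussian is consistent with the value $\rho=0$ that the definition of $\rho(b_i,b_j)$ assigns there.
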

    \begin{proof}
        Note that $\rho(b, 1) = 0$, and $\Phi_0(\Phi^{-1}(x), \Phi^{-1}(y)) = \Phi(\Phi^{-1}(x))\Phi(\Phi^{-1}(y)) = xy$ for every $x, y \in [-1, 1]$, so we have
        \begin{equation*}
            s(b, 1) = 1 - t(b) \cdot t(1) - (1 - t(b))\cdot(1 - t(1)) = t(b) + t(1) - 2t(b) \cdot t(1)\geq p \cdot \frac{b+ 1}{2} + q.
        \end{equation*}
    \end{proof}

We now construct a piecewise linear threshold function $t$ that adheres to the constraints in Lemmas~\ref{lem:sound_-b_b} and \ref{lem:sound_b_1}. We start with 5 control points for $t$: $-1$, $-b_0$, 0, $b_0$, 1. It is clear that we must have $t(b_0) = 1/2$. By setting $b = 1$ in Lemmas~\ref{lem:sound_-b_b} and \ref{lem:sound_b_1}, we obtain two equations:
\begin{align}
    q \leq t(1) + t(-1) \leq 2 - q \\
    t(1) - t(1)^2 \geq \frac{p + q}{2}. \label{eq:condition_t1}
\end{align}

It is desirable to make $t(1)$ as large as possible, as by Lemma~\ref{lem:sound_b_1} this helps edges of the form $(b, 1)$ with $b < b_0$. We therefore choose $t(-1) = 0$ and $t(1) = q + 0.0095$. Here, if we increase $t(1)$ further to $q + 0.01$, then \eqref{eq:condition_t1} will be violated.

For $t(-b_0)$ and $t(0)$, again by Lemma~\ref{lem:sound_-b_b}, we must have $t(-b_0) + t(b_0) = t(-b_0) + 1/2 \geq q $ and $2t(0) \geq q$, so we choose $t(-b_0) = q - 1/2 + \delta$ and $t(0) = q/2 + \delta$ for some small $\delta = 0.001$.

Finally, in order to make sure $t$ is smooth around a small neighborhood of $b_0$, we choose some very tiny $r$ such that $t$ has the same slope on $[b_0 - r, b_0 + r]$. The slope is chosen to be 0.42, so that the Hessian of the soundness in this neighborhood is positive definite.

Summarizing, the final scheme is presented in the following table showing its break points and the corresponding values:
\begin{table}[H]
\centering
\begin{tabular}{c|ccccccc}
   $b$     & $-1$ & $-b_0$ & 0 & $b_0 - r$ & $b_0$ & $b_0 + r$ & 1 \\ \hline
   $t(b)$  & 0 & $q - \frac{1}{2} + \delta$ & $q + \delta$ & $\frac{1}{2} - 0.42 r$ & $\frac{1}{2}$ & $\frac{1}{2} + 0.42 r$ & $q + 0.0095$
\end{tabular}
\caption{Specification of the threshold function $t$. $\delta = r = 0.001$.}\label{table:t}
\end{table}

\begin{figure}[H]
    \centering
    \includegraphics[width=0.5\linewidth]{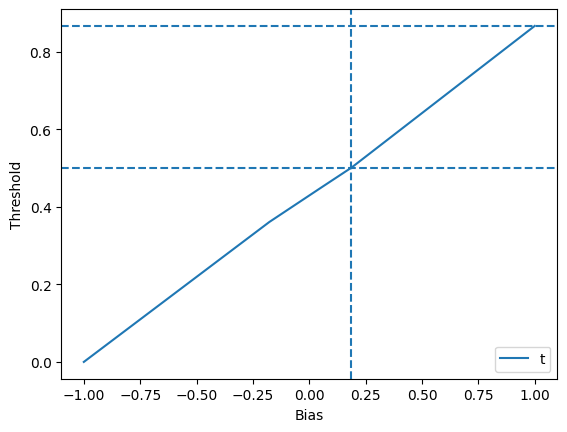}
    \caption{Plot of the threshold function $t$ as a function of the bias $b$. Plots in this paper are made with Matplotlib~\cite{Hunter:2007}.}
    \label{fig:thresh_plot}
\end{figure}

The verification of this scheme can be summarized in the following lemma.

\begin{lemma}[Interval Arithmetic]\label{lem:ia}\footnote{The interval arithmetic code will be made available in the de-anonymized version of this paper.}
    Let $\eta = 10^{-13}$ and $b_0 \in [b^\ast - \eta, b^\ast + \eta]$. Let $t$ be defined as in Table~\ref{table:t}, and let $p, q$ be as in~\eqref{eq:pq}. For any $b_i, b_j \in [-1, 1]$ with $b_i + b_j \geq 0$, we have
    \begin{equation}\label{eq:soundness_bound}
        s(b_i, b_j) - \left( p \cdot \frac{b_i + b_j}{2} + q \right) \geq 0.
    \end{equation}
\end{lemma}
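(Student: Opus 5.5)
The statement is a finite, computer-checkable inequality, so the plan is a rigorous branch-and-bound verification with interval arithmetic, treating $b_0$ as an additional interval variable ranging over $[b^\ast-\eta,b^\ast+\eta]$. Set $D(b_i,b_j;b_0) := s(b_i,b_j) - \big(p\frac{b_i+b_j}{2}+q\big)$, where $p,q$ are given by \eqref{eq:pq} and $t$ by Table~\ref{table:t}. Both depend on $b_0$, and so do the breakpoints of $t$. Since $s$ is symmetric, it suffices to check the region $\{-1\le b_i\le b_j\le 1,\ b_i+b_j\ge 0\}$.

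The verification splits into three parts.

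\textbf{Degenerate boundary configurations.} These are handled in closed form.
\begin{itemize}
\item If $(1+b_i)(1+b_j)=0$, then feasibility forces $(b_i,b_j)=(-1,1)$. By Definition~\ref{def:rho}, $\rho=0$, and since $t(-1)=0$ we get $s=t(1)$. This must dominate $q$, which holds because $t(1)=q+0.0095$ and $p<0$.
\item Near $b_i=-1$ we have $t(b_i)\to 0$, so $\Phi^{-1}(t(b_i))\to-\infty$. Here I would rewrite $\Phi_\rho(\Phi^{-1}(x),\Phi^{-1}(y))$ through the quadrant function $\Lambda_\rho(x,y)$, which extends continuously, and use the bounds $0\le\Lambda_\rho(x,y)\le\min(x,y)$ to obtain valid enclosures when $x$ is tiny.
\item When $b_i=-b_j$ we have $\rho=-1$. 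This case is Lemma~\ref{lem:sound_-b_b}, with $\Phi_{-1}(x,y)=\max(0,\Phi(x)+\Phi(y)-1)$. For $\rho$ close to $-1$, I would use the Lipschitz dependence of $\Lambda_\rho$ on $\rho$ (its $\rho$-derivative is the density $\varphi_\rho$, which is bounded away from the singular corner) to enclose $s$.
\end{itemize}

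\textbf{Bulk region.} Away from the point $(b_0,b_0)$, I would recursively subdivide the domain into boxes. On each box I evaluate an interval enclosure of $D$ and accept the box once its lower bound is positive. The ingredients are:
\begin{itemize}
\item the piecewise-linear form of $t$, with the boxes aligned to its breakpoints;
\item a rigorous enclosure of $\rho(b_i,b_j)$;
\item a certified bivariate normal CDF. For this I would use Lemma~\ref{lem:dv}-type formulas, or the one-dimensional integral $\Lambda_\rho(x,y)=\int_{-\infty}^{\Phi^{-1}(x)}\varphi(u)\,\Phi\big(\frac{\Phi^{-1}(y)-\rho u}{\sqrt{1-\rho^2}}\big)\,du$, evaluated with validated quadrature.
\end{itemize}
When a plain enclosure is too loose, monotonicity helps: $\Lambda_\rho$ is increasing in $x$, in $y$ and in $\rho$. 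One can therefore bound $s$ from below by evaluating at the box corners, which gives much tighter bounds.

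\textbf{Neighbourhood of $(b_0,b_0)$ (main obstacle).} Here $D$ vanishes exactly, so no interval enclosure can be positive, and a separate second-order argument is needed.
\begin{itemize}
\item First I would show $\nabla D(b_0,b_0)=0$. Along the diagonal, this is precisely how $p$ was chosen in Proposition~\ref{proposition:pq}. In the antisymmetric direction $(1,-1)$ the derivative vanishes by the symmetry $s(b_i,b_j)=s(b_j,b_i)$.
\item On the box $B=[b_0-r,b_0+r]^2$, $t$ is affine with slope $0.42$, so $D$ is smooth there.
\item I would then compute an interval enclosure of the Hessian of $D$ over all of $B$, uniformly over the $b_0$-interval, and verify that it is positive definite. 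Concretely, I would check that the diagonal entries are positive and that the interval determinant is positive.
\item Taylor's theorem with integral remainder then gives $D\ge 0$ on $B$, with equality only at the centre.
\end{itemize}
This Hessian certification is the step I expect to be hardest. The second derivatives involve $\partial^2\Phi_\rho$ and the second derivatives of $\rho(b_i,b_j)$. If the enclosure is too wide at radius $r=10^{-3}$, I would first split $B$ into sub-boxes with a separate Hessian check on each. Failing that, I would shrink the Taylor region to a smaller box and feed the resulting annulus back to the bulk branch-and-bound, where $D$ is bounded below by a quadratic margin of order $\lambda_{\min}\,\mathrm{dist}^2$.

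Because $b_0$ lives in a window of width $2\cdot10^{-13}$, much smaller than $r$, the breakpoints $b_0\pm r$ and $-b_0$ move negligibly. Every box is therefore evaluated with $b_0$ as an interval, and no separate argument per value of $b_0$ is needed.
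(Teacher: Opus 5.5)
Your proposal is correct and matches the paper's proof sketch. The paper also uses interval arithmetic to verify the inequality strictly outside the box $[b_0-r,b_0+r]^2$ with $r=0.001$. Inside that box it certifies that the Hessian of $s(b_i,b_j)-\bigl(p\frac{b_i+b_j}{2}+q\bigr)$ is positive definite, and uses that $(b_0,b_0)$ is a critical point (from the choice of $p$ in Proposition~\ref{proposition:pq}) to conclude the minimum there is $0$. Your explicit handling of the degenerate regions ($b_i\to -1$, $\rho\to -1$) and your symmetry argument for the antisymmetric component of the gradient are details the paper's sketch leaves implicit.
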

\begin{proof}[Proof sketch]
    Let $r = 0.001$. Using interval arithmetic we verify two cases:
    \begin{itemize}
        \item $|b_i - b_0| \geq r$ or $|b_j - b_0| \geq r$. In this case \eqref{eq:soundness_bound} is a strict inequality and can be verified directly.
        \item $|b_i - b_0| \leq r$ and $|b_j - b_0| \leq r$. In this case, interval arithmetic cannot establish \eqref{eq:soundness_bound} directly since equality can be achieved. Instead, we prove that the Hessian of the LHS of~\eqref{eq:soundness_bound} (using Lemma~\ref{lem:partial_s}) is positive definite in this region. This implies that a critical point in this region must also be a global minimum. Since $(b_0, b_0)$ is a critical point (by Proposition~\ref{proposition:pq}), it follows that 
        \begin{equation*}
            s(b_i, b_j) - \left( p \cdot \frac{b_i + b_j}{2} + q \right) \geq s(b_0, b_0) - \left( p \cdot \frac{b_0 + b_0}{2} + q \right) = 0.
        \end{equation*}
    \end{itemize}. 
\end{proof}

\begin{figure}[H]
    \centering
    \includegraphics[width=0.5\linewidth]{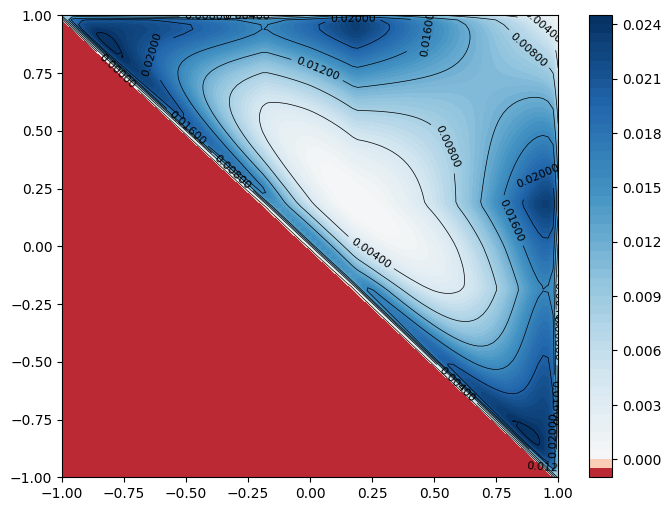}
    \caption{Contour plot of the function $s(b_i, b_j) - \left( p \cdot \frac{b_i + b_j}{2} + q \right)$. Red region is infeasible.}
    \label{fig:contour}
\end{figure}

\begin{remark}
    The verification code for the above lemma is based on the Arb library~\cite{johansson2017arb}, which is part of the FLINT C libraries~\cite{flint}. We made use of a C++ wrapper for a fragment of Arb library which was recently made for verifying approximation ratios for MAX 2-CSP~\cite{BHPZ23, BHZ24}.
\end{remark}

    We note that since in our verification we chose $b_0 \in [b^\ast - \eta, b^\ast + \eta]$ for some small $\eta > 0$, it shows that we can do slightly better than $s^\ast$ if the completeness is slightly above $\alpha^\ast$

We are now ready to prove Theorem~\ref{thm:algo}.

\begin{proof}[Proof of Theorem~\ref{thm:algo}]
    Let $\bv_0, \bv_1, \ldots, \bv_n$ be an SDP solution to $\text{SDP}(G)$ such that $\frac{1}{|E|} \cdot \sum_{\{i, j\}\in E}\frac{1 - \bv_i \cdot \bv_j}{2} = \alpha > \alpha^*$. Note that the vectors $\bv'_0 = \bv'_1 = \cdots = \bv'_n = \bv_0$ is also a feasible solution to $\text{SDP}(G)$, which has objective value 0. So by taking a suitable combination between these two SDP solutions, we may assume that the objective value of our SDP solution is at most $\alpha^\ast + 10^{-13}$. We may then apply the $\THRESH^-$ scheme described in Table~\ref{table:t}, and by Lemma~\ref{lem:affine_sound} and Lemma~\ref{lem:ia} it produces a cut $(S, V\setminus S)$ such that $\E[w(S, V\setminus S)] > \frac{\arccos(\rho^\ast)}{\pi}$.
\end{proof}

Admittedly, the above proof is not entirely satisfactory in the sense that it decreased the SDP value artificially for the sake of the argument. It is an interesting question whether there is an algorithm that takes advantage of larger SDP values. We state this as the following conjecture.

\begin{conjecture}
    For every $\alpha \geq \alpha^*$, there exists an algorithm such that, given a graph $G = (V, E)$ with an independent set $I \subseteq V$ such that $w(I,I^c) \geq \alpha$,  finds a set $S \subset V$ in polynomial time such that $w(S,S^c) > \frac{\arccos \rho(\frac{1 - \alpha}{2}, \frac{1-\alpha}{2})}{\pi}$, where $\rho(b_i, b_j)$ is defined as in Definition~\ref{def:rho}.
\end{conjecture}

\section{Proof of Theorem \ref{thm:iset-2}}

We describe the gadget for the theorem in the figure below.

\begin{figure}[ht!]
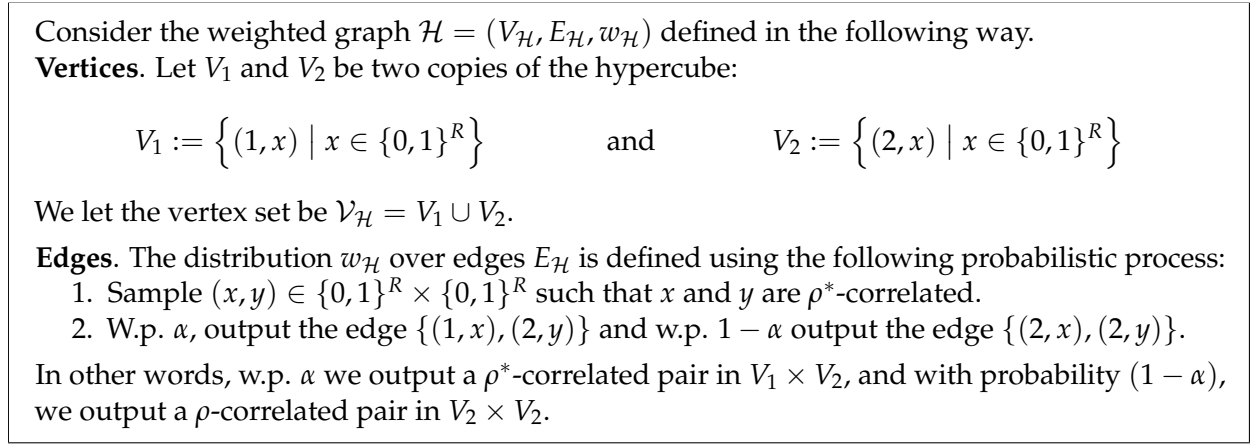

	\begin{mdframed}
	Consider the weighted graph $\cH = (V_{\cH},E_{\cH},w_{\cH})$ defined in the following way.
		
	{\bf Vertices}. Let $V_1$ and $V_2$ be two copies of the hypercube:
	\[
		V_1 := \left\{(1,x) ~\big|~ x \in \{0,1\}^R\right\}
		\qquad\qquad\text{and}\qquad\qquad
		V_2 := \left\{(2,x)~\big|~x \in \{0,1\}^R\right\}
	\]
	We let the vertex set be $\cV_{\cH} = V_1 \cup V_2$. \\[-10pt]

	{\bf Edges}. The distribution $w_{\cH}$ over edges $E_{\cH}$ is defined using the following probabilistic process:
	
	\begin{enumerate}
		\item Sample $(x,y) \in \{0,1\}^R \times \{0,1\}^R$ such that $x$ and $y$ are $\rho^*$-correlated.
		\item W.p. $\alpha$, output the edge $\{(1,x),(2,y)\}$ and w.p. $1 - \alpha$ output the edge $\{(2,x),(2,y)\}$. \\[-10pt]
	\end{enumerate}

	In other words, w.p. $\alpha$ we output a $\rho^*$-correlated pair in $V_1 \times V_2$, and with probability $(1 - \alpha)$, we output a $\rho$-correlated pair in $V_2 \times V_2$.
	\end{mdframed}
	\caption{Gadget with Known Independent Set}
	\label{fig:gadget-2}
\end{figure} 

We begin by stating the completeness properties of the gadget that are straightforward to verify.

\begin{lemma}[Completeness]			\label{lem:comp-2}
	The graph $\cH$ defined in Figure \ref{fig:gadget-2} satisfies the following properties:
	\begin{itemize}
		\item The set $V_1$ is an independent set in $\cH$. Moreover $w_{\cH}(V_1,V_2) = \alpha$.
		\item Consider the set 
		\[
		S := \left\{(i,x) : x(1) = 1, i \in \{1,2\}\right\}.
		\]
		Then $w_{\cH}(S,S^c) \geq \frac{1 - \rho^*}{2}$.
	\end{itemize}
\end{lemma}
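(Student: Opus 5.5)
Both items follow by directly unfolding the edge distribution of Figure \ref{fig:gadget-2}, so the plan is to condition on which of the two branches of step 2 produced the edge and compute each branch separately.

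\textbf{Independence of $V_1$ and the weight of $(V_1,V_2)$.} Every edge in the support of $w_{\cH}$ is either of the form $\{(1,x),(2,y)\}$ (the branch taken with probability $\alpha$) or of the form $\{(2,x),(2,y)\}$ (the branch taken with probability $1-\alpha$). Hence no edge has both endpoints in $V_1$, which shows $V_1$ is an independent set. An edge crosses the partition $(V_1,V_2)$ exactly when the first branch is taken. This holds regardless of the sampled pair $(x,y)$, so $w_{\cH}(V_1,V_2)=\alpha$ exactly.

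\textbf{The dictator cut $S$.} In both branches the output edge joins $(i,x)$ and $(i',y)$ for some $i,i'\in\{1,2\}$, where $x\underset{\rho^*}{\sim}y$. Membership in $S$ depends only on the first coordinate of the string. Therefore, in either branch, the edge is cut by $S$ if and only if $x(1)\neq y(1)$. By the definition of a $\rho^*$-correlated pair (Figure \ref{fig:corr}), $\Pr[x(1)\neq y(1)]=\frac{1-\rho^*}{2}$. Averaging over the two branches with weights $\alpha$ and $1-\alpha$ gives $w_{\cH}(S,S^c)=\frac{1-\rho^*}{2}$, which in particular satisfies the claimed inequality.

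There is no real obstacle here, since both items are direct calculations. The only point needing care is that the branch choice is made independently of $(x,y)$, so conditioning on the branch does not change the law of $(x(1),y(1))$.
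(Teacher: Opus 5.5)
Your proof is correct and is essentially the argument the paper intends: it omits the details, citing the completeness proof of Lemma \ref{lem:comp-iset}. There, as in your proposal, $V_1$ has no internal edges by construction, and the dictator set cuts an edge exactly when $x(1)\neq y(1)$, which happens with probability $\frac{1-\rho^*}{2}$ in either branch.
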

The above lemma can be proved exactly as Lemma \ref{lem:comp-iset} and we omit the details here. As before, the more technically involved step is to argue the soundness guarantee of the above gadget, stated in the lemma below.

\begin{lemma}		\label{lem:sound-2}
	Suppose $f:V_1 \cup V_2 \to \{0,1\}$ be an assignment satisfying
	\[
	\max_{a = 1,2}\max_{j \in [R]} \Inf{j}{T_{1 - \eta}f_a} \leq \tau.
	\]
	Then,
	\[
	\Ex_{(x,y) \sim \cH} \Big[{\sf cut}(f(x),f(y))\Big]
	\leq \max\left\{\frac{\arccos \rho^*}{\pi},\alpha\right\} + c(\tau),
	\]
	where $c(\tau)$ is an increasing function of $\tau$ satisfying $c(0) = 0$.
\end{lemma}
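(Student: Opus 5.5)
We follow the template of Lemma \ref{lem:main}: reduce to a two-variable Gaussian optimization and then analyze it. Write $f_1,f_2:\{0,1\}^R\to\{0,1\}$ for the restrictions of $f$ to $V_1,V_2$, with means $\mu_1,\mu_2$. The cut value is
\[
\alpha\,\Ex_{x\underset{\rho^*}{\sim}y}\big[{\sf Cut}(f_1(x),f_2(y))\big]+(1-\alpha)\,\Ex_{x\underset{\rho^*}{\sim}y}\big[{\sf Cut}(f_2(x),f_2(y))\big].
\]
Expand ${\sf Cut}(a,b)=1-ab-(1-a)(1-b)$ in each term. Next replace $f_a$ by $T_{1-\eta}f_a$: since the $\rho^*$-correlated distribution has correlation $|\rho^*|<1$, Lemma \ref{lem:noise} makes this cost only an additive error. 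The hypothesis gives low influences for the noised functions. We then apply Theorem \ref{thm:stab} to each product term, using $\rho(\{0,1\}^2;\mu)=|\rho^*|$, so the lower bound is in terms of $\Lambda_{\rho^*}$. This yields
\[
\text{value}\le \Psi(\mu_1,\mu_2)+c(\tau),\qquad \Psi(\mu_1,\mu_2):=\alpha\,{\sf Cut}_{\rho^*}(\mu_1,\mu_2)+(1-\alpha)\,{\sf Cut}_{\rho^*}(\mu_2,\mu_2).
\]
It therefore suffices to show $\max_{[0,1]^2}\Psi=\max\{\arccos\rho^*/\pi,\alpha\}$.

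\textbf{Step 1: optimize $\mu_1$ for fixed $\mu_2$.} By Lemma \ref{lem:dv-1}, $\partial_{\mu_1}{\sf Cut}_{\rho^*}(\mu_1,\mu_2)=-G(K(\mu_1,\mu_2))$. Since $\rho^*<0$, $K$ is increasing in $\mu_1$, so this partial derivative is strictly decreasing in $\mu_1$. Hence ${\sf Cut}_{\rho^*}(\cdot,\mu_2)$ is concave in $\mu_1$, and its maximizer solves $K(\mu_1,\mu_2)=0$, i.e. $t(\mu_1)=t(\mu_2)/\rho^*$, or else lies at an endpoint. Setting $s=t(\mu_2)$, the inner maximum is the probability that two $\rho^*$-correlated Gaussians are separated when the second is thresholded at $s$ and the first is thresholded optimally. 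The optimal first threshold is $\rho^*$-scaled, and the inner maximum is
\[
h(s)=\Ex_{g_2}\big[\max\{\Pr[g_1\le\cdot\mid g_2],\ \cdots\}\big]\quad\text{along the optimal halfspace.}
\]
Simpler still, we can keep the explicit form ${\sf Cut}_{\rho^*}(\Phi(s/\rho^*),\Phi(s))$ and note that it tends to $1$ as $|s|\to\infty$ (degenerate $\mu_2\in\{0,1\}$, $\mu_1=1-\mu_2$).

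\textbf{Step 2: the one-variable problem.} This reduces to maximizing
\[
\psi(\mu_2)=\alpha\max_{\mu_1}{\sf Cut}_{\rho^*}(\mu_1,\mu_2)+(1-\alpha)\,{\sf Cut}_{\rho^*}(\mu_2,\mu_2).
\]
At $\mu_2\in\{0,1\}$ we get $\psi=\alpha$, attained by the cut $(V_1,V_2)$. At $\mu_2=1/2$ we get $\mu_1=1/2$ and $\psi=\arccos\rho^*/\pi$ by Lemma \ref{lem:shep}. We expect the main obstacle to be showing that no intermediate $\mu_2$ beats $\max\{\alpha,\psi(1/2)\}$. The natural approach is a stationary-point analysis as in Lemma \ref{lem:dv-int}: at an interior critical point with $\mu_1$ interior, we have $K(\mu_1,\mu_2)=0$. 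Setting the $\mu_2$-derivative to zero gives
\[
\alpha\,G(K(\mu_2,\mu_1))+2(1-\alpha)\,G(K(\mu_2,\mu_2))=0.
\]
With $t_1=t_2/\rho^*$ we have $K(\mu_2,\mu_1)=t_2(1/\rho^*-\rho^*)/L$ and $K(\mu_2,\mu_2)=t_2(1-\rho^*)/L$, where $L=\sqrt{1-(\rho^*)^2}$. Both are odd multiples of $t_2$ with opposite signs, since $1/\rho^*-\rho^*<0$. So we must rule out nonzero roots of $\alpha G(a t)=2(1-\alpha)G(bt)$ with $a=|1/\rho^*-\rho^*|/L$ and $b=(1-\rho^*)/L$. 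If no such root exists, the only interior critical point is $t_2=0$, and the maximum is attained either there or on the boundary. Boundary cases ($\mu_2$ interior with $\mu_1\in\{0,1\}$, or $\mu_2\in\{0,1\}$) are handled as in Claim \ref{cl:dv}: with $\mu_2\in\{0,1\}$, ${\sf Cut}(\mu_2,\mu_2)=0$ and the best choice is $\mu_1=1-\mu_2$, giving $\alpha$.

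If a spurious root cannot be excluded analytically for all $\alpha$, we fall back on a convexity argument. Write $\Psi\le \alpha\cdot 1_{\text{cross}}+(1-\alpha)\,{\sf Cut}_{\rho^*}(\mu_2,\mu_2)$, and combine the bound ${\sf Cut}_{\rho^*}(\mu_2,\mu_2)\le\arccos\rho^*/\pi$ (Fact \ref{fact:convex}) with the fact that the cross term is at most $1$. We would then interpolate along the critical curve $t_1=t_2/\rho^*$, showing that $\psi$ is quasi-convex in $t_2\ge0$, so that its maximum lies at $t_2\in\{0,\infty\}$. This is the step we expect to require the most care, possibly with a numerical check at $\rho=\rho^*$ for the relevant $\alpha\le s^*$.
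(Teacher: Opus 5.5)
Your reduction to $\max_{[0,1]^2}\Psi$ and your Step 1 are sound and match the paper's first-order condition in $\mu_1$. Step 1 uses concavity of ${\sf Cut}_{\rho^*}(\cdot,\mu_2)$, so the inner optimum lies on the curve $t(\mu_1)=t(\mu_2)/\rho^*$.

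The gap is the step you flag as the main obstacle. The primary route you propose cannot work, because nonzero roots of $\alpha G(at)=2(1-\alpha)G(bt)$ genuinely exist in the regime that matters. Here $a/b=(1-|\rho^*|)/|\rho^*|\approx 0.451<1$, and $G(at)/G(bt)$ runs from $a/b$ (as $t\to 0^+$) to $1$ (as $t\to\infty$). So a positive root exists whenever $a/b<2(1-\alpha)/\alpha<1$, i.e.\ for $2/3<\alpha<\alpha^*$. This range contains $\alpha\approx s^*\approx 0.742$, the value needed for Theorem~\ref{thm:iset-2}. You therefore cannot conclude that $(1/2,1/2)$ is the only interior critical point; you must show the extra ones are not maxima.

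Your fallback does not do this. Combining ${\sf Cut}_{\rho^*}(\mu_2,\mu_2)\le s^*$ with a cross term $\le 1$ only gives $\alpha+(1-\alpha)s^*$, which exceeds $\max\{s^*,\alpha\}$. The quasi-convexity of $\psi$ is asserted rather than proved, and a numerical check over a continuum of $(\alpha,t_2)$ would not be a proof either.

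The missing ingredient is the paper's Claim~\ref{cl:Z-incr}: for $0<a<b$, the ratio $Z(t)=G(at)/G(bt)$ is strictly increasing on $(0,\infty)$. One way to see it: $h=G/G'$ satisfies $h'=1+xh$ and $h''=(1+x^2)h+x>0$ on $(0,\infty)$, with $h(0)=0$. Hence $h(at)<\frac{a}{b}h(bt)$, which is exactly $Z'(t)>0$.

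With this, your own setup closes at once. By the envelope theorem, for $t_2=t(\mu_2)>0$,
\[
\frac{d\psi}{d\mu_2}=\alpha G(at_2)-2(1-\alpha)G(bt_2)=\alpha\,G(bt_2)\Big(Z(t_2)-\frac{2(1-\alpha)}{\alpha}\Big),
\]
so $\psi'$ changes sign at most once on $t_2>0$, and only from negative to positive. Hence $\psi$ is decreasing-then-increasing on $t_2\in[0,\infty)$. The same holds for $t_2\le 0$ by the symmetry $\Psi(1-\mu_1,1-\mu_2)=\Psi(\mu_1,\mu_2)$. Its maximum is therefore $\max\{\psi(1/2),\psi(0),\psi(1)\}=\max\{s^*,\alpha\}$. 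The extra critical point is a minimum of $\psi$, i.e.\ a saddle of $\Psi$. This is the same route as the paper's, which phrases it via $L(t)=\alpha G(At)+2(1-\alpha)G(Bt)$ with $A<0$.
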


The following sections are dedicated to proving the above lemma.

\subsection{Proof of Lemma \ref{lem:sound-2}}

To begin with, let us consider an assignment $(f_1,f_2): V_1 \cup V_2 \to \{0,1\}$ such that 
\begin{equation}			\label{eqn:small-inf}
\max_{a \in \{1,2\}}\max_{j \in [R]} \Inf{j}{f_a} \leq \tau, 
\end{equation}
where $\tau \leq \tau_0$. Then, as in the proof of Lemma \ref{lem:sound-3col}, we can express the weight of the edges cut by the assignment $(f_1,f_2)$ as follows:
\[
\Pr_{e \sim \cH}\left[\text{$(f_1,f_2)$ cuts $e$}\right]
= \alpha \Ex_{x \underset{\rho^*}{\sim} y} \Big[{\sf Cut}(f_1(x),f_2(y))\Big]
+ (1 - \alpha) \Ex_{x \underset{\rho^*}{\sim} y} \Big[{\sf Cut}(f_2(x),f_2(y))\Big].
\]

Towards upper bounding the RHS, our first step is the following lemma, which bounds the objective interms of the Gaussian cut functions.

\begin{lemma}			\label{lem:iset-2}
	Suppose the assignment $(f_1,f_2)$ satisfies \eqref{eqn:small-inf}. Then we can bound,
	\begin{equation}			\label{eqn:iset-2}
	\Pr_{e \sim \cH}\left[\text{$(f_1,f_2)$ cuts $e$}\right]
	\leq \max_{\mu_1,\mu_2 \in [0,1]} \alpha {\sf Cut}_\rho(\mu_1,\mu_2) + (1 - \alpha) {\sf Cut}_\rho(\mu_2,\mu_2).
	\end{equation}
\end{lemma}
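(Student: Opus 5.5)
The plan is to follow the proof of Lemma \ref{lem:main} verbatim, applied term by term. Write $\mu_1 = \Ex[f_1]$ and $\mu_2 = \Ex[f_2]$, and set $\bar f_a := 1 - f_a$. For Boolean (or $[0,1]$-valued) arguments, ${\sf Cut}(a,b) = 1 - ab - (1-a)(1-b)$. Hence each of the two expectations in the arithmetized objective can be rewritten. The first becomes
\[
\Ex_{x \underset{\rho^*}{\sim} y}\big[{\sf Cut}(f_a(x),f_2(y))\big] = 1 - \Ex\big[f_a(x)f_2(y)\big] - \Ex\big[\bar f_a(x)\bar f_2(y)\big], \qquad a \in \{1,2\}.
\]
This expresses the whole objective as $1$ minus a nonnegative combination, with weights $\alpha$ and $1-\alpha$, of four bilinear correlation terms.

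Next I lower bound each correlation term using Theorem \ref{thm:stab}. The relevant probability space is the $\rho^*$-correlated bit pair. Its correlation is $|\rho^*|$, and the associated $\Lambda_{-|\rho^*|}$ equals $\Lambda_{\rho^*}$ because $\rho^* < 0$. All four functions $f_1, f_2, \bar f_1, \bar f_2$ are $[0,1]$-valued. They have influences at most $\tau$ by \eqref{eqn:small-inf}, since complementation preserves influences. Theorem \ref{thm:stab} therefore gives the following two bounds:
\begin{align*}
\Ex[f_a(x)f_2(y)] &\geq \Lambda_{\rho^*}(\mu_a,\mu_2) - c(\tau), \\
\Ex[\bar f_a(x)\bar f_2(y)] &\geq \Lambda_{\rho^*}(1-\mu_a,1-\mu_2) - c(\tau).
\end{align*}
Theorem \ref{thm:stab} is stated for two possibly different functions, so it applies equally to the pair $(f_2,f_2)$ on $V_2 \times V_2$. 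Substituting these bounds and applying Lemma \ref{lem:cut-def} yields
\[
\Pr_{e\sim\cH}[\text{cut}] \leq \alpha\,{\sf Cut}_{\rho^*}(\mu_1,\mu_2) + (1-\alpha)\,{\sf Cut}_{\rho^*}(\mu_2,\mu_2) + 2c(\tau).
\]
Maximizing over $(\mu_1,\mu_2) \in [0,1]^2$ then gives \eqref{eqn:iset-2}, with $\rho = \rho^*$ and with the additive $2c(\tau)$ error made explicit. The statement as displayed suppresses this error, and I would carry it along as in Lemma \ref{lem:main}.

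If one instead only assumes the hypothesis of Lemma \ref{lem:sound-2}, namely small influences of $T_{1-\eta}f_a$, I would first insert implicit noise via Lemma \ref{lem:noise}. That lemma applies because the correlation is $|\rho^*| < 1$, and it costs an additional $c'(\eta)$. The argument then runs as above on $T_{1-\eta}f_a$, using that the noise operator preserves means.

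The main point requiring care is the same-side term $(f_2,f_2)$, where there is no tripartite freedom. Since Theorem \ref{thm:stab} covers arbitrary pairs of functions, I do not expect a real obstacle here. The only other thing to check is the sign convention $\Lambda_{-|\rho|} = \Lambda_{\rho^*}$, which holds because $\rho^* < 0$. The genuinely hard part comes afterward, in optimizing the right-hand side of \eqref{eqn:iset-2}, and that is not part of this lemma.
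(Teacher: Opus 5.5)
Your proposal is correct and is exactly the argument the paper intends: the paper proves this lemma by repeating the derivation of \eqref{eqn:tri-1} from Lemma~\ref{lem:main}, namely expanding ${\sf Cut}(a,b) = 1 - ab - (1-a)(1-b)$, lower-bounding each bilinear term (here including the same-side pair $(f_2,f_2)$) by Theorem~\ref{thm:stab} with $\Lambda_{-|\rho^*|} = \Lambda_{\rho^*}$, and recombining via Lemma~\ref{lem:cut-def}. You are also right that the display should be read with $\rho = \rho^*$ and with an additive $2c(\tau)$ error term, as in Lemma~\ref{lem:sound-2}.
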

The proof of the above lemma proceeds exactly as in the derivation of \eqref{eqn:tri-1} in the proof of Lemma \ref{lem:main}, and we omit it here. Now, we state the key lemma which gives closed-form solutions to the maximization problem in the RHS of \eqref{eqn:iset-2}. 

\begin{lemma}				\label{lem:iset-3}
	Let us denote $s^*  = \frac{\arccos \rho^*}{\pi}$. Then, the following holds:
	\[
	\max_{\mu_1,\mu_2 \in [0,1]} \alpha {\sf Cut}_\rho(\mu_1,\mu_2) + (1 - \alpha) {\sf Cut}_\rho(\mu_2,\mu_2)
	= 
	\begin{cases}
	\frac{\arccos \rho^*}{\pi}	& \text{ if } \alpha \leq s^*,  \\
	\alpha & \text{ if } \alpha > s^*				 	
	\end{cases}
	\]
\end{lemma}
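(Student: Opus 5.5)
Throughout, $\rho=\rho^*$; write $H(\mu_1,\mu_2):=\alpha\,{\sf Cut}_\rho(\mu_1,\mu_2)+(1-\alpha){\sf Cut}_\rho(\mu_2,\mu_2)$ and, in the spirit of Section~\ref{sec:str-main}, minimize instead
\[
F_\alpha(\mu_1,\mu_2):=\alpha\,\ccut(\mu_1,\mu_2)+(1-\alpha)\,\ccut(\mu_2,\mu_2).
\]
The plan mirrors Lemma~\ref{lem:str-main}. Split into interior stationary points of $F_\alpha$ on $(0,1)^2$ and points on the boundary of $[0,1]^2$, then compare values. The two candidate optima are $(1/2,1/2)$, with value $\frac{\arccos\rho^*}{\pi}$ by Sheppard's formula (Lemma~\ref{lem:shep}), and $(0,1)$ or $(1,0)$, with value $\alpha\cdot 1+(1-\alpha)\cdot 0=\alpha$, since ${\sf Cut}_\rho(1,1)=0$. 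Taking the maximum of these gives exactly the claimed case split at $\alpha=s^*$.

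\textbf{Interior.} By Lemma~\ref{lem:dv-1}, $\partial_x\ccut(x,y)=G(K(x,y))$. By symmetry of $\ccut$, the diagonal derivative is $\frac{d}{d\mu}\ccut(\mu,\mu)=2G(K(\mu,\mu))$. Stationarity in $\mu_1$ gives $\alpha G(K(\mu_1,\mu_2))=0$, so $K(\mu_1,\mu_2)=0$, i.e.\ $t(\mu_2)=\rho\,t(\mu_1)$. Stationarity in $\mu_2$ gives
\[
\alpha\, G(K(\mu_2,\mu_1))+2(1-\alpha)\,G(K(\mu_2,\mu_2))=0 .
\]
Since $K(\mu_2,\mu_2)=t(\mu_2)(1-\rho)/\sqrt{1-\rho^2}$ and $K(\mu_2,\mu_1)=(t(\mu_1)-\rho t(\mu_2))/\sqrt{1-\rho^2}$, substituting $t(\mu_2)=\rho t(\mu_1)$ gives $K(\mu_2,\mu_1)=t(\mu_1)\sqrt{1-\rho^2}$ and $K(\mu_2,\mu_2)=\rho(1-\rho)t(\mu_1)/\sqrt{1-\rho^2}$. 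These have opposite signs when $\rho<0$, so the cancellation argument of Observation~\ref{obs:dv-2} no longer applies directly because the weights are unequal. The approach is to view the left side as a function $\psi(s)$ of $s=t(\mu_1)$: it is odd, $\psi(0)=0$, and I want to show $s=0$ is its only root. If $\psi$ had a second root $s\neq 0$, then $(\mu_1,\mu_2)$ would be another interior critical point. I would rule this out by analyzing $\psi$ via the concavity of $G$ on $[0,\infty)$. If that fails, I would fall back to a weaker goal: show that at any such point $F_\alpha$ is at least $\min\{F_\alpha(1/2,1/2),F_\alpha(0,1)\}$, for instance by noting that for fixed $\mu_2$, the optimal $\mu_1$ is determined by $t(\mu_1)=t(\mu_2)/\rho$, and then reducing to a one-variable inequality in $\mu_2$. \emph{This is the main obstacle}, since the clean linear-algebra trick of Lemma~\ref{lem:dv-int} does not apply verbatim.

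\textbf{Boundary.} If $\mu_2\in\{0,1\}$, say $\mu_2=0$ (the case $\mu_2=1$ follows from the symmetry $\bmu\mapsto 1-\bmu$), then ${\sf Cut}_\rho(0,0)=0$ and ${\sf Cut}_\rho(\mu_1,0)=\mu_1$. So $H=\alpha\mu_1\le\alpha$, with equality at $\mu_1=1$. If instead $\mu_1\in\{0,1\}$, say $\mu_1=0$, then $H=\alpha\mu_2+(1-\alpha){\sf Cut}_\rho(\mu_2,\mu_2)$. Using the derivative formula as in Claim~\ref{cl:dv}, I will check that the maximum of this one-variable function is either at $\mu_2=1$ (value $\alpha$) or at an interior point whose value is dominated by $\max\{s^*,\alpha\}$. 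The latter holds because ${\sf Cut}_\rho(\mu,\mu)\le{\sf Cut}_\rho(1/2,1/2)=s^*$ by Fact~\ref{fact:convex}, which gives $H\le \alpha\mu_2+(1-\alpha)s^*\le\max\{\alpha,s^*\}$ only after a convexity/endpoint argument in $\mu_2$. Concretely, I will use that $\mu\mapsto{\sf Cut}_\rho(\mu,\mu)$ is concave (Fact~\ref{fact:convex}) together with the boundary value $0$ at $\mu=1$.

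\textbf{Comparison.} Finally, comparing the values $s^*$ and $\alpha$ completes both cases.
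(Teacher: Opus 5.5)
\textbf{There is a genuine gap in the interior step.} Your primary plan, showing that $s=0$ is the only root of $\psi$, cannot succeed. It is false for every $\alpha\in(2/3,\alpha^*)$, and this window contains $\alpha=s^*$, the value actually used for Theorem~\ref{thm:iset-2}.

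Reparametrize by $s=t(\mu_2)=\rho\,t(\mu_1)$. Your $\psi$ becomes $\alpha G(As)+2(1-\alpha)G(Bs)$, with $A=\sqrt{1-\rho^2}/\rho<0$ and $B=\sqrt{(1-\rho)/(1+\rho)}$.
\begin{itemize}
\item Since $G'(0)=2\phi(0)$, as $s\to0^+$ we get $\psi(s)/s\to 2\phi(0)\bigl(2(1-\alpha)B-\alpha|A|\bigr)$. This is positive iff $\alpha<2B/(2B+|A|)=2|\rho|/(1+|\rho|)=\alpha^*$.
\item As $s\to\infty$, $\psi(s)\to 2-3\alpha$, which is negative when $\alpha>2/3$.
\end{itemize}
So for $\alpha\in(2/3,\alpha^*)$ there is a genuine second critical point in $(0,1)^2$. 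It is a saddle of the cut value. It is optimal in $\mu_1$, because $t(\mu_1)=t(\mu_2)/\rho$ is the best response to $\mu_2$. But it is a local minimum along that best-response curve. What must be shown is not that it fails to exist, but that it is never a maximizer.

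Concavity of $G$ on $[0,\infty)$ is too weak for this. Set $Z(s)=G(|A|s)/G(Bs)$. Concavity only gives $Z(s)\ge Z(0^+)=(1+\rho)/|\rho|$. Together with the trivial bound $Z<1$, this settles $\alpha\ge\alpha^*$ and $\alpha\le 2/3$, but says nothing inside $(2/3,\alpha^*)$, which is exactly where the extra critical point lives.

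\textbf{Your fallback is the right route, and essentially the paper's, but it omits the key lemma.} Let $f(\mu_2):=\max_{\mu_1}\bigl[\alpha{\sf Cut}_\rho(\mu_1,\mu_2)+(1-\alpha){\sf Cut}_\rho(\mu_2,\mu_2)\bigr]$. By the envelope theorem, for $s=t(\mu_2)>0$,
\[
f'(\mu_2)=-\psi(s)=G(Bs)\bigl[\alpha Z(s)-2(1-\alpha)\bigr].
\]
The missing ingredient is that $Z$ is strictly increasing on $(0,\infty)$. Equivalently, $x\mapsto xG'(x)/G(x)$ is strictly decreasing there. The paper derives this from convexity of $G/G'$ on $(0,\infty)$ together with $(G/G')(0)=0$. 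Given this, $f'$ changes sign at most once on $(1/2,1)$, and only from negative to positive. Hence $f$ on $[1/2,1]$ is maximized at an endpoint: $\mu_2=1/2$ gives $s^*$, and $\mu_2=1$ gives $\alpha$. The half $(0,1/2)$ follows from the symmetry $(\mu_1,\mu_2)\mapsto(1-\mu_1,1-\mu_2)$.

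\textbf{Boundary case $\mu_1\in\{0,1\}$.} This one-variable reduction also covers it, which matters because your argument there fails as written.
\begin{itemize}
\item The chain $H\le\alpha\mu_2+(1-\alpha)s^*\le\max\{\alpha,s^*\}$ is false: the middle term tends to $\alpha+(1-\alpha)s^*$ as $\mu_2\to1$.
\item Concavity of $\mu\mapsto{\sf Cut}_\rho(\mu,\mu)$ plus its value $0$ at $\mu=1$ gives only a lower (chord) bound, not the upper bound you need.
\item One fix is the tangent bound ${\sf Cut}_\rho(\mu,\mu)\le 2(1-\mu)$, which follows from the slope tending to $-2$ at $\mu=1$, combined with $s^*>2/3$.
\item A simpler fix, in the spirit of Claim~\ref{cl:dv} but in the $\mu_1$ direction: for $\rho<0$ and $\mu_2\in(0,1)$, the $\mu_1$-derivative of the objective tends to $\alpha>0$ as $\mu_1\to0^+$. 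So such boundary points are never maximizers.
\end{itemize}
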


The rest of this section proves the above lemma.

\subsection{Proof of Lemma \ref{lem:iset-3}}

For brevity, let us denote,
\[
F_{{\rm cut},\alpha}(\mu_1,\mu_2) = \alpha \ccut(\mu_1,\mu_2) + (1 - \alpha) \ccut(\mu_2,\mu_2).
\]
Again, as before, we will find it more convenient to work with 
\[
F_\alpha(\mu_1,\mu_2) = \alpha \ccut(\mu_1,\mu_2) + (1 - \alpha) \ccut(\mu_2,\mu_2).
\]
Let us first compute the expressions for the partial derivatives of $F_\alpha$.

\begin{lemma}				\label{lem:iset-dv}
	The following holds:
	\begin{itemize}
		\item 
		\[
		\frac{\partial F_{\alpha}}{\partial \mu_1}
		= \alpha G\left(\frac{t(\mu_2) - \rho t(\mu_1)}{\sqrt{1 - \rho^2}}\right).
		\]
		\item 
		\[
		\frac{\partial F_{\alpha}}{\partial \mu_2}
		= \alpha G\left(\frac{t(\mu_1) - \rho t(\mu_2)}{\sqrt{1 - \rho^2}}\right)
		+ 2(1 - \alpha)G\left(\frac{t(\mu_2) - \rho t(\mu_2)}{\sqrt{1 - \rho^2}}\right),
		\]
		where recall that $G(z) = \Phi(z) - \Phi(-z)$.
	\end{itemize}	
\end{lemma}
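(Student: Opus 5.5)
The plan is to differentiate $F_\alpha(\mu_1,\mu_2) = \alpha\,\ccut(\mu_1,\mu_2) + (1-\alpha)\,\ccut(\mu_2,\mu_2)$ term by term, using Lemma \ref{lem:dv-1}. That lemma gives, for any $x,y$,
\[
\frac{\partial \ccut(x,y)}{\partial x} = G(K(x,y)),
\qquad K(x,y) = \frac{t(y)-\rho t(x)}{\sqrt{1-\rho^2}}.
\]
I will also need the symmetry $\ccut(x,y) = \ccut(y,x)$. This holds because $\Lambda_\rho$ is symmetric in its two arguments: the pair $(g_1,g_2)$ of $\rho$-correlated Gaussians is exchangeable.

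For the first item, only the first summand depends on $\mu_1$. Applying Lemma \ref{lem:dv-1} with $x=\mu_1$ and $y=\mu_2$ gives
\[
\frac{\partial F_\alpha}{\partial \mu_1} = \alpha\, G\!\left(\frac{t(\mu_2)-\rho t(\mu_1)}{\sqrt{1-\rho^2}}\right),
\]
which is exactly the claimed formula.

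For the second item, both summands depend on $\mu_2$.
\begin{itemize}
\item In the first summand, $\mu_2$ is the second argument. By symmetry, $\ccut(\mu_1,\mu_2) = \ccut(\mu_2,\mu_1)$, and Lemma \ref{lem:dv-1} with $x=\mu_2$, $y=\mu_1$ gives $\alpha\, G\big((t(\mu_1)-\rho t(\mu_2))/\sqrt{1-\rho^2}\big)$.
\item In the second summand, $h(\mu) := \ccut(\mu,\mu)$ depends on $\mu_2$ through both arguments. By the chain rule, $h'(\mu)$ is the sum of the two partial derivatives evaluated on the diagonal. By symmetry these two partials are equal, so $h'(\mu) = 2\,G\big((t(\mu)-\rho t(\mu))/\sqrt{1-\rho^2}\big)$.
\end{itemize}
Multiplying the second contribution by $(1-\alpha)$ and adding the two gives the stated expression for $\partial F_\alpha/\partial \mu_2$.

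The only point that needs care is the diagonal term: I must justify that the symmetric partials coincide, and that the chain rule applies. Symmetry of $\ccut$ handles the first, since the partial in the second argument at $(\mu,\mu)$ equals the partial in the first argument there. Differentiability holds on $(0,1)^2$, because $\Lambda_\rho$ is smooth there by Lemma \ref{lem:dv}. At the boundary points the formulas should be read as one-sided limits, using the conventions $t(0)=-\infty$, $t(1)=+\infty$ and $G(\pm\infty)=\pm1$, as was done in the proof of Claim \ref{cl:dv}.
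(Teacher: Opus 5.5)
Your proposal is correct and follows essentially the same route as the paper: differentiate term by term using Lemma \ref{lem:dv-1}, and handle the diagonal term $\ccut(\mu_2,\mu_2)$ with the chain rule, which produces the factor of $2$. Your explicit appeal to the symmetry $\ccut(x,y)=\ccut(y,x)$, used for the partial in the second argument, makes precise a step the paper leaves implicit.
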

\begin{proof}
	
The first item is an immediate consequence of Lemma \ref{lem:dv}, and hence we will focus on proving the second item. We first note that by definition,
\begin{equation}			\label{eqn:iset-dv1}
\frac{\partial F_{\alpha}(\mu_1,\mu_2)}{\partial \mu_2} = \alpha \frac{\partial \ccut(\mu_1,\mu_2)}{\partial \mu_2} + (1 - \alpha) \frac{\partial \ccut(\mu_2,\mu_2)}{\partial \mu_2}.
\end{equation}	
Again, similar to the first item and Lemma \ref{lem:dv}, we get that
\begin{equation}			\label{eqn:iset-dv2}	
	\frac{\partial \ccut(\mu_1,\mu_2)}{\partial \mu_2 } = G\left(\frac{t(\mu_1) - \rho t(\mu_2)}{\sqrt{1 - \rho^2}}\right).
\end{equation}	
The derivative of the second term in \eqref{eqn:iset-dv1} needs to be handled slightly differently. Let us define $x_\mu = y_\mu = \mu$. Then, again using the chain rule of derivatives, we get that
\begin{align}
	\frac{\partial \ccut(x_\mu,y_\mu)}{\mu}
	&= \frac{\partial x_\mu}{\partial \mu}\cdot\frac{\partial \ccut(x_\mu,y_\mu)}{\partial x_\mu}
	+ \frac{\partial y_\mu}{\partial \mu}\cdot\frac{\partial \ccut(x_\mu,y_\mu)}{\partial y_\mu}		\non\\
	&= G\left(\frac{t(y_{\mu}) - \rho t(x_{\mu})}{\sqrt{1 - \rho^2}}\right) 
	+ G\left(\frac{t(x_\mu) - \rho t(y_\mu)}{\sqrt{1 - \rho^2}}\right) 		\non\\
	&= 2 G\left(\frac{t(\mu) - \rho t(\mu) }{\sqrt{1 - \rho^2}}\right).		\label{eqn:iset-dv3}
\end{align}	
Hence plugging in the expressions from \eqref{eqn:iset-dv2} and \eqref{eqn:iset-dv3} into \eqref{eqn:iset-dv1} we obtain the expression:
\[
\frac{\partial \ccut(\mu_1,\mu_2)}{\partial \mu_2}
= \alpha G\left(\frac{t(\mu_1) - \rho t(\mu_2)}{\sqrt{1 - \rho^2}}\right)
+ 2(1 - \alpha) G \left(\frac{t(\mu_2) - \rho t(\mu_2)}{\sqrt{1 - \rho^2}}\right).
\]
\end{proof}

\subsection{Interior Analysis}

Suppose $(\mu_1,\mu_2)$ is an interior point of $F_{\alpha}$. Then, using the first order condition w.r.t. $\mu_1$ and Lemma \ref{lem:iset-dv}, we get that
\[
0 = \frac{\partial F_{\alpha}}{\partial \mu_1} = 
\Phi\left(\frac{t(\mu_2) - \rho t(\mu_1)}{\sqrt{1 - \rho^2}}\right) - \Phi\left(- \frac{t(\mu_2) - \rho t(\mu_1)}{\sqrt{1 - \rho^2}}\right).
\]
As noted before, the function $z \mapsto \Phi(z) - \Phi(-z)$ is odd and strictly increasing, and hence, the above constraint implies that
\[
\frac{t(\mu_2) - \rho t(\mu_1)}{\sqrt{1 - \rho^2}} = 0,
\]
or equivalently, $t(\mu_1) = t(\mu_2)/\rho$. Therefore, substituting this into the first order condition w.r.t. $\mu_2$, we get the constraint 
\begin{align*}
0 
= \frac{\partial F_{\alpha}(\mu_1,\mu_2)}{\partial \mu_2}
&= \alpha G\left(\frac{t(\mu_1) - \rho t(\mu_2)}{\sqrt{1 - \rho^2}}\right) + 2(1 - \alpha)G\left(\frac{t(\mu_2) - \rho t(\mu_2)}{\sqrt{1 - \rho^2}}\right) \\
&= \alpha G\left(t(\mu_2)\frac{\sqrt{1 - \rho^2}}{\rho}\right) + 2(1 - \alpha)G\left(t(\mu_2)\sqrt{\frac{1 - \rho}{1 + \rho}}\right) \\
&= \alpha G \Big(A t(\mu)\Big) + 2(1 - \alpha) G(B t(\mu)),
\end{align*}
where 
\[
A = \frac{\sqrt{1 - \rho^2}}{\rho} \approx -1.015
\qquad\qquad\text{and}\qquad\qquad
B = \sqrt{\frac{1 - \rho}{1 + \rho}} \approx 2.33109,
\]
when $\rho = \rho^*$. Now, we have the following main lemma.
\begin{lemma}			\label{lem:iset-dv1}
	Assume $\rho \leq 0$. Then, subject to the constraint $t(\mu_2) = \rho t(\mu_1)$, the only local minima of $F_\alpha$ is at $\mu_2 = 1/2$. Hence, $(1/2,1/2)$ is the local minima of $F_{\alpha}$. 
\end{lemma}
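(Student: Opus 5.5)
Under the constraint coming from $\partial F_\alpha/\partial\mu_1=0$ (written in the text as $t(\mu_1)=t(\mu_2)/\rho$; I use that relation, and assume $\rho\in(-1,0)$ so it makes sense), the remaining stationarity condition reduces to a scalar equation in $s:=t(\mu_2)$:
\[
H(s) := \alpha\, G(As) + 2(1-\alpha)\, G(Bs) = 0,
\]
where $A=\sqrt{1-\rho^2}/\rho<0$, $B=\sqrt{(1-\rho)/(1+\rho)}>0$, and $G(z)=\Phi(z)-\Phi(-z)=2\Phi(z)-1$. The plan is to show that $s=0$ (i.e.\ $\mu_2=1/2$, hence $\mu_1=1/2$) is the only zero of $H$ that can be a local minimum of the restricted function $\psi(\mu_2):=F_\alpha(\mu_1(\mu_2),\mu_2)$. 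By the chain rule and the first-order condition in $\mu_1$, one has $\psi'(\mu_2)=\partial F_\alpha/\partial\mu_2 = H(t(\mu_2))$. Since $t$ is increasing, the sign of $\psi'$ equals the sign of $H\circ t$. The whole question therefore reduces to understanding the sign pattern of the odd function $H$ on $\mathbb{R}$.

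Because $G$ is odd, $H$ is odd with $H(0)=0$, so it suffices to study $s>0$. Write $H(s)=2(1-\alpha)G(Bs)-\alpha G(|A|s)$. Its derivative at $0$ is $2\varphi(0)\,(2(1-\alpha)B-\alpha|A|)$, which fixes the local behaviour: if this quantity is positive, $\psi$ is decreasing to the left of $1/2$ and increasing to the right, so $1/2$ is a strict local minimum. For large $s$, $H(s)\to 2(1-\alpha)-\alpha = 2-3\alpha$. I would then argue that $H$ has at most one sign change on $(0,\infty)$. Substituting $u=|A|s$, $H$ becomes a difference of two increasing bounded functions $c_1G(\beta u)-c_2G(u)$ with $\beta=B/|A|$. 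The ratio $G(\beta u)/G(u)$ is monotone in $u$: it decreases from $\beta$ toward $1$ when $\beta>1$, by log-concavity/convexity properties of $\Phi$. Hence $H(s)=0$ for $s>0$ iff $G(\beta u)/G(u)=c_2/c_1$, which has at most one solution.

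Combining these facts gives the following picture. If $\alpha\le$ the relevant threshold, then $H>0$ on $(0,\infty)$ and $H<0$ on $(-\infty,0)$, so $\psi$ has a unique critical point, which is a minimum at $\mu_2=1/2$. Otherwise, any nonzero roots $\pm s_0$ occur where $H$ passes from positive to negative, i.e.\ where $\psi'$ changes from $+$ to $-$. Such points are local maxima of $\psi$, not minima. In either case $1/2$ is the only local minimum, and since $\mu_1$ is then forced to be $1/2$, the point $(1/2,1/2)$ is the local minimum of $F_\alpha$ on the constraint curve.

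The main obstacle is the monotonicity of $G(\beta u)/G(u)$, equivalently of $\mathrm{erf}(\beta u)/\mathrm{erf}(u)$. I would first try to prove it by showing that $u\mapsto \log\mathrm{erf}(e^{v})$ is concave in $v$: this is the statement that $u\,\mathrm{erf}'(u)/\mathrm{erf}(u)$ is decreasing, which follows from a standard Chebyshev-type integral inequality, since $\mathrm{erf}(u)=\int_0^u \mathrm{erf}'$ with $\mathrm{erf}'$ decreasing. If that route stalls, I would fall back on the concrete values at $\rho=\rho^*$ ($|A|\approx1.015$, $B\approx2.331$) and verify the single sign change directly.
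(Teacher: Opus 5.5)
Your route is the paper's. You reduce to the odd function $H$ (the paper's $L$), use monotonicity of $G(|A|t)/G(Bt)$ to get at most one positive zero, and classify that zero by the direction of the sign change. Your decreasing-elasticity lemma is the property that $h(x)/x$ is increasing for $h=G/G'$, which the paper obtains from convexity of $h$ with $h(0)=0$. Your $\psi$ with $\psi'=H\circ t$ gives a cleaner classification than the paper's.

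\textbf{The gap.} Your final dichotomy tacitly assumes $H>0$ just to the right of $0$, and this fails in one regime.
Suppose $\alpha|A|\ge 2(1-\alpha)B$, i.e.\ $2(1-\alpha)/\alpha\le |A|/B=(1+\rho)/|\rho|$. At $\rho=\rho^*$ this is exactly $\alpha\ge\alpha^*$.
Your own monotonicity then gives $G(\beta u)/G(u)<\beta\le\alpha/(2(1-\alpha))$ for all $u>0$. Hence $H<0$ on $(0,\infty)$ and $H>0$ on $(-\infty,0)$, with no nonzero roots at all.
So $\psi$ increases up to $\mu_2=1/2$ and decreases afterwards; at $\alpha=1$ this is simply $\psi'=G(A\,t(\mu_2))$. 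Thus $\mu_2=1/2$ is a strict local \emph{maximum} of $\psi$.
Since $\partial F_\alpha/\partial\mu_1$ is increasing in $\mu_1$ for $\rho<0$, we have $\psi=\min_{\mu_1}F_\alpha$. So $(1/2,1/2)$ is a saddle, and $F_\alpha$ has no interior local minimum.

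Your sentence ``in either case $1/2$ is the only local minimum'' is therefore false in this regime. No argument can rescue it, because the lemma's second sentence overclaims; the paper's Case (i) does not notice this either.
What you can prove is weaker: no interior point other than $(1/2,1/2)$ is a local minimum of $F_\alpha$, and $(1/2,1/2)$ is one iff $2(1-\alpha)B>\alpha|A|$. That weaker statement is all the interior case of Lemma~\ref{lem:iset-3} uses. In the regime of Theorem~\ref{thm:iset-2} ($\alpha=s^*<\alpha^*$), your argument does give the full claim.

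\textbf{Two smaller repairs.}
First, you must assume $\rho<-1/2$, not just $\rho\in(-1,0)$. Your monotonicity direction needs $\beta=|\rho|/(1+\rho)>1$.
For $\rho\in(-1/2,0)$ the ratio increases instead, the sign change at a nonzero root goes from $-$ to $+$, and those roots are interior local minima. For example, this happens at $\rho=-0.4$ with $\alpha\in(4/7,2/3)$. So the statement with $\rho\le 0$ is false, and the paper's step $h(\tfrac{|A|}{B}x)<\tfrac{|A|}{B}h(x)$ also silently needs $|A|<B$.

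Second, ``$G'$ decreasing'' alone does not give decreasing elasticity. Use the Gaussian form directly:
\[
\frac{G(u)}{u\,G'(u)}=\int_0^1 e^{u^2(1-x^2)/2}\,dx ,
\]
which is strictly increasing in $u>0$.
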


The above lemma characterizes the local minima of $F_{\alpha}$ in the interior of $F$. We defer the proof of the above lemma to Section \ref{sec:iset-dv1}, and proceed with the analysis. To finish the proof, we consider the following cases. Let $\mu^* = (\mu^*_1,\mu^*_2)$ be the global minimizer of $F_\alpha$. Then,

{\bf Case (i)}. Suppose $\mu^*$ is in the interior of $[0,1]^2$. Then $\mu^*$ must be a local maxima of $F_{{\rm cut},\alpha}$ and hence from Lemma \ref{lem:iset-dv1}, we get that $(\mu^*_1,\mu^*_2) = (1/2,1/2)$, and the corresponding cut value is
\[
F_{{\rm cut},\alpha}(1/2,1/2) = {\sf Cut}_{\rho^*}(1/2,1/2) = \frac{\arccos \rho^*}{\pi}.
\]

{\bf Case (ii)}. Suppose $(\mu^*_1,\mu^*_2)$ is at the surface of $[0,1]^2$ i.e., $\mu^*_1 \in \{0,1\}$ or $\mu^*_2 \in \{0,1\}$. Then considering all the cases, we get that the maximizers of $F_{{\rm cut},\alpha}$ are of the form $(0,1)$ and $(1,0)$, with the cut value $F_{\rm cut,\alpha}(1,0) = F_{{\rm cut},\alpha}(0,1) = \alpha$.

Combining the two cases, we get that
\[
F_{{\rm cut},\alpha}(\mu^*_1,\mu^*_2) = \max\left\{\frac{\arccos \rho^*}{\pi}, \alpha\right\}.
\]

\subsection{Proof of Theorem \ref{thm:iset-2}}

The actual reduction for the proof of Theorem \ref{thm:iset-2} follows similarly to that of Theorem \ref{thm:3-col}, where we replace the tripartite noisy-cube gadget with the gadget in Figure \ref{fig:gadget-2}. We give describe the reduction below for the sake of completeness.

\begin{figure}[ht!]
	\begin{mdframed}
	Let $\cG = (U,V,E,[R],\{\pi_e\}_{e \in E})$ be a $(1 - \epsilon,\delta)$-\ug~instance, where $\epsilon,\delta$ can be chosen as in the proof of Theorem \ref{thm:3-col}. Let $\cH = (V_{\cH},E_{\cH},w)$ be the weighted graph described in Figure \ref{fig:gadget-2}. Then we construct an instance $\cH' = (V',E',w')$ of Max-Cut as follows:
	
	{\bf Vertex Set}. Let $V' = V \times V_{\cH}$ i.e., for every Unique Game vertex $v \in V$, we introduce a copy of the vertex set of $\cH$.
	
	{\bf Edges}. The edges in $\cH$ are given using the following distribution:
	\begin{enumerate}
		\item Sample $u \sim U$ and two neighbors $v_1,v_2 \sim N_{\cG}(u)$ uniformly at random.
		\item Sample an edge $(x,y) \sim \cH$ using the distribution from Figure \ref{fig:gadget-2}.
		\item Output the edge $\{(v_1,x),(v_2,y)\}$.
	\end{enumerate} 
	\end{mdframed}
\end{figure}

Given Lemmas \ref{lem:comp-2} and \ref{lem:sound-2}, the completeness and soundness guarantees of the above reduction can be established using arguments identical to those used in the proof of Theorem \ref{thm:i-set}. We omit the details here. 

\subsection{Proof of Lemma \ref{lem:iset-dv1}}			\label{sec:iset-dv1}

We first observe that since the map $\mu_2 \to t(\mu_2)$ is invertible, we instead work with the function
\[
L(t) := \alpha G(A t) + 2(1 - \alpha)G(Bt).
\]
Under this change of variable, our task is to characterize the solutions to the constraint $L(t) = 0$. To that end, we observe the following:
\begin{itemize}
	\item Since $G(0) = 0$, it follows that $t = 0$ is a solution to $L(t) = 0$.
	\item Furthermore, note that since $G$ is an odd function, then if $t$ is a solution to $L(t) = 0$, then so is $-t$. Hence, we focus on analyzing $L(t)$ on the positive half $t > 0$.
\end{itemize}

Now, note that the constraint $L(t) = 0$ is equivalent to the constraint:
\[
\frac{G(|A|t)}{G(Bt)} = \frac{2(1 - \alpha)}{\alpha},
\] 
and hence we study the the behavior of the ratio $Z(t) = \frac{G(|A|t)}{G(Bt)}$ instead.  To that end, we have the following claim:

\begin{claim}			\label{cl:Z-incr}
	The function $Z(t)$ is strictly increasing in $t$ for $t > 0$.
\end{claim}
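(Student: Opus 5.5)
The plan is to show that the logarithmic derivative of $Z$ is positive on $t>0$. Write $a := |A|$ and $b := B$. For $\rho = \rho^*$ we have $0 < a \approx 1.015 < b \approx 2.331$, and more generally the argument only needs $0<a<b$.

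First, $G(z) = \Phi(z)-\Phi(-z) = 2\int_0^z \varphi(s)\,ds$, where $\varphi$ is the standard Gaussian density. So $G$ is positive on $(0,\infty)$, and $G'(z) = 2\varphi(z)$. For $t>0$ both $G(at)$ and $G(bt)$ are therefore positive, and
\[
\frac{d}{dt}\log Z(t) = \frac{a\,G'(at)}{G(at)} - \frac{b\,G'(bt)}{G(bt)} = \frac{1}{t}\Big( h(at) - h(bt) \Big),
\qquad\text{where } h(z) := \frac{z\,G'(z)}{G(z)} = \frac{z\,\varphi(z)}{\int_0^z \varphi(s)\,ds}.
\]
Since $0<at<bt$, it suffices to show that $h$ is strictly decreasing on $(0,\infty)$.

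Second, substitute $s = zu$ to get $\int_0^z\varphi(s)\,ds = z\int_0^1 \varphi(zu)\,du$. Then
\[
h(z) = \frac{\varphi(z)}{\int_0^1 \varphi(zu)\,du} = \left(\int_0^1 \exp\!\Big(\frac{z^2(1-u^2)}{2}\Big)\,du\right)^{-1}.
\]
For each fixed $u\in[0,1)$, the integrand $\exp(z^2(1-u^2)/2)$ is strictly increasing in $z>0$. Hence the integral is strictly increasing in $z$, and $h$ is strictly decreasing. This gives $h(at) > h(bt)$, so $\frac{d}{dt}\log Z(t) > 0$ and $Z$ is strictly increasing for $t>0$.

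The one point needing care is confirming $a<b$, that is, $\frac{1-\rho^2}{\rho^2} < \frac{1-\rho}{1+\rho}$. This is equivalent to $(1+\rho)^2 < \rho^2$, i.e.\ $\rho < -1/2$, which holds at $\rho^*$ (and the numerical values above confirm it). The rest is the clean reparametrization of $h$ as a reciprocal of an average of increasing exponentials, which I expect to be the main conceptual step.
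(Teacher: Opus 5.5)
Your proof is correct. Its skeleton matches the paper's: both reduce the claim to the inequality $h(at) > h(bt)$ for your $h(z) = zG'(z)/G(z)$, which the paper writes equivalently as $\frac{G(|A|t)}{G'(|A|t)} < \frac{|A|}{B}\cdot\frac{G(Bt)}{G'(Bt)}$.

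The difference lies in how the underlying monotonicity is proved. The paper shows that $G/G'$ is strictly convex on $(0,\infty)$ by differentiating twice. It then uses $(G/G')(0)=0$ to get $(G/G')(\lambda y) < \lambda\,(G/G')(y)$ with $\lambda = |A|/B$. You instead rewrite $1/h(z) = \int_0^1 e^{z^2(1-u^2)/2}\,du$ via $s = zu$, which makes the monotonicity immediate. This avoids the second-derivative computation, where the paper's displayed formulas contain small slips: the correct identity is $(G/G')'(x) = 1 + x\,(G/G')(x)$, not $2+\cdots$. The sign conclusion there is unaffected.

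Your version also states explicitly the hypothesis the argument genuinely needs, namely $|A| < B$, i.e.\ $\rho < -1/2$. The paper uses this silently when it applies the convexity bound with $\lambda = |A|/B < 1$. It is not implied by the assumption $\rho \le 0$ in Lemma \ref{lem:iset-dv1}. Indeed, for $\rho \in (-1/2,0)$ your computation shows $Z$ is strictly decreasing, and at $\rho=-1/2$ it is identically $1$.
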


\begin{proof}
we first note that
\[
Z'(t) = \frac{1}{G(Bt)^2}\left(|A|G'(|A|t)G(Bt) - |B|G'(Bt)G(At)\right).
\]
and hence, showing $Z(t)$ is strictly increasing is equivalent to showing that
\[
|A|G'(|A|t)G(Bt) - |B|G'(Bt)G(|A|t) > 0
\]
or equivalently,
\begin{equation}		\label{eqn:G-eq}
	\frac{G(|A|t)}{G'(|A|t)} < \frac{|A|}{B}\cdot \frac{G(Bt)}{G'(Bt)}. 
\end{equation}
Hence, we will just focus on establishing \eqref{eqn:G-eq}. To that end, let us define the function
\[
h(x):= \frac{G(x)}{G'(x)} = \frac{\Phi(x) - \Phi(-x)}{2\phi(x)},
\]
and we claim that $h(x)$ is convex in $x$ for $x > 0$. i.e., we note:
\[
h'(x) = \frac{2 \phi(x)}{2 \phi(x)} + \frac{\Phi(x) - \Phi(-x)}{2 \phi(x)^2}\cdot \phi(x) \cdot x = 2 + x\cdot\frac{\Phi(x) - \Phi(-x)}{2\phi(x)},
\]
and 
\[
h''(x) = \frac{\Phi(x) - \Phi(-x)}{2\phi(x)} + \frac{x}{2\phi(x)} + x^2\frac{\Phi(x) - \Phi(-x)}{2\phi(x)},
\]
which is strictly positive for $x > 0$, thus implying that $h(x)$ is strictly convex. Then, observing that $h(0) = 0$ and using the convexity of $h$ we can now get that:
\[
h(|A|t) = h\left(\frac{|A|}{B}\cdot Bt\right) < \frac{|A|}{B}h(Bt),
\]
which establishes \eqref{eqn:G-eq}, which in turn implies the claim.
\end{proof}

Now, note that since $Z(t)$ is strictly increasing, it follows that there is at most one choice of $t = t_1 > 0$ for which 
\[
Z(t_1) = \frac{2 - \alpha}{\alpha}.
\]
Moreover, for any small $\epsilon > 0$, we will also have that $Z(t_1 - \epsilon) < \frac{2 - 2\alpha}{\alpha} < Z(t_1 + \epsilon)$, i.e, or 
\[
\alpha G(At_1 - \epsilon) + 2(1 - \alpha) G(Bt_1 + \epsilon) > 0 > \alpha G(At_1 + \epsilon) + 2(1 - \alpha)G(Bt + \epsilon),
\]
i.e., $t_1$ corresponds to a local maxima of $F$.

{\bf Finishing Up}. All that remains is to identify the cases when $t_1 > 0$ exists such that $Z(t_1) = \frac{2 - \alpha}{\alpha}$. To that end, note that
\[
\lim_{t \to 0+} Z(t) = \frac{|A|}{B} = \frac{1 + \rho}{|\rho|}.
\]
Hence, we may consider two cases: 

{\bf Case (i)}. Suppose $|A|/|B| > 2(1 - \alpha)/\alpha$. Then $t = 0$ is the only solution to $L(t) = 0$. 

{\bf Case (ii)}. Suppose $|A|/|B| \leq 2(1 - \alpha)/\alpha$. Then there are exactly two roots for $L(t)$ other than $t = 0$, both of which are local maxima. 

Finally, as argued before, in both cases $t = 0$ corresponds to a local maxima.

\bibliographystyle{alpha}
\bibliography{main}

\appendix
\section{Auxiliary Proofs}

\subsection{Proof of Theorem \ref{thm:stab}}			\label{sec:thm-stab}

To derive the bound, we need to recall an additional notation from \cite{Mossel10}. For any choice of $\delta_1,\delta_2,\rho \in [0,1]$, let us define
\[
\lGamma_{\rho}(\mu_1,\mu_2) := \Pr_{g_1 \underset{\rho}{\sim} g_2} \Big[g_1 \leq \Phi^{-1}(\delta_1), g_2 \geq \Phi^{-1}(1 - \delta_2)\Big].
\]
Now let $(\Omega^2,\mu)$ and $f_1,f_2 \in L_2(\Omega)$ be as in the setting of Theorem \ref{thm:stab}. Then for any functions $f_1,f_2$ satisfying the premise of Theorem \ref{thm:stab}, Theorem 1.14 from \cite{Mossel10} states that
\[
\Ex_{(\omega_1,\omega_2) \sim \mu}\left[f_1(\omega_1)f_2(\omega_2)\right]
\geq \lGamma_\rho\big(\Ex [f_1], \Ex[f_2]\big) - c(\tau),
\]
where $\rho = \rho(\Omega^2,\mu)$. Now, denoting $\delta_i := \Ex[f_i]$ for $i = 1,2$,  we further manipulate the RHS as:
\begin{align*}
	\lGamma_\rho\big(\Ex[f_1],\Ex[f_2]\big)
	&= \Pr_{g_1 \underset{\rho}{\sim} g_2} \Big[g_1 \leq \Phi^{-1}(\delta_1), g_2 \geq \Phi^{-1}(1 - \delta_2)\Big] \\
	&= \Pr_{g_1 \underset{\rho}{\sim} g_2} \Big[g_1 \leq \Phi^{-1}(\delta_1), -g_2 \leq -\Phi^{-1}(1 - \delta_2)\Big] \\
	&\overset{1}{=} \Pr_{g_1 \underset{-\rho}{\sim} g'_2} \Big[g_1 \leq \Phi^{-1}(\delta_1), g'_2 \leq \Phi^{-1}(\delta_2)\Big] \\
	&= \Lambda_{-\rho}\big(\delta_1,\delta_2\big),
\end{align*}
where in step $1$ we use the following observations:
\begin{itemize}
	\item If a pair of standard normal variables $g,h$ are $\rho$-correlated, then the pair $g,-h$ are $-\rho$ correlated.
	\item Since $\phi(\cdot)$ is symmetric around the origin, it follows that $\Phi(1 - x) = -\Phi(x)$ for all $x \in [0,1]$.
\end{itemize}

\subsection{Proof of Lemma \ref{lem:dv}}			\label{sec:proof-dv}

We begin by deriving a more convenient-to-use expression for $\Lambda_\rho(x,y)$. In particular, note that since $g_2$ is a $\rho$-correlated copy of $g_1$, we can write $g_2 = \rho\cdot g_1 + \sqrt{1 - \rho^2}\cdot a$, where $a \sim N(0,1)$ is Gaussian independent of $g_1$. Hence, denoting $t = \Phi^{-1}$, we get:
\begin{align*}
	\Lambda_\rho(x,y)
	&= \Pr_{g_1 \underset{\rho}{\sim} g_2} \Big[g_1 \leq t(x), g_2 \leq t(y)\Big] \\
	&= \Pr_{g_1,a \sim N(0,1)}\Big[g_1 \leq t(x), \rho g_1 + \sqrt{1 - \rho^2} a \leq t(y)\Big] \\
	&= \Pr_{g_1,a \sim N(0,1)}\left[g_1 \leq t(x), a \leq \frac{t(y) - \rho g_1}{\sqrt{1 - \rho^2}}\right] \\
	&= \int^{t(x)}_{-\infty} \Phi\left(\frac{t(y) - \rho a}{\sqrt{1 - \rho^2}}\right) \phi(a) da.
\end{align*}
Using the above and applying the Leibniz integral rule we get that:
\begin{align*}
	\frac{\partial  \Lambda_\rho(x,y)}{\partial x}
	&= \frac{\partial}{\partial x} \int^{t(x)}_{-\infty} \Phi\left(\frac{t(y) - \rho a}{\sqrt{1 - \rho^2}}\right) \phi(a)da \\
	&= \frac{d t(x)}{d x} \cdot \Phi\left(\frac{t(y) - \rho t(x)}{\sqrt{1 - \rho^2}}\right) \phi(t(x)) \\
	&= \frac{1}{\phi(t(x))}\cdot \Phi\left(\frac{t(y) - \rho t(x)}{\sqrt{1 - \rho^2}}\right) \phi(t(x))	 \\
	&= \Phi\left(\frac{t(y) - \rho t(x)}{\sqrt{1 - \rho^2}}\right).
\end{align*}

\subsection{Influence Decoding Lemma}       \label{sec:dec}

Lemmas \ref{lem:inf-dec} and \ref{lem:inf-dec1} are a consequence of the following more general well-known lemma.

\begin{lemma}[Folklore, e.g,. Lemma 2.11~\cite{steurer-thesis}]			\label{lem:dec}
	Suppose $\cG = (U,V,E,[R],\{\pi_e\}_{e \in E})$ is a Unique Game instance with ${\sf Opt}(\cG) \leq \delta$. Let Let $\{f_v\}_{v \in V}$ be a collection of functions $f_v:\{0,1\}^R \to [0,1]$. Furthermore, for every $u \in U$, let $g_u$ be defined as
	\[
	g_u(x) = \Ex_{v \sim N_{\cG}(u)}\left[T_{1 - \eta}f_{v}\left(\pi_{v,u}(x)\right)\right].
	\]
	Then,
	\[
	\nu:= \Pr_{u \sim U}\left[\max_{j \in [R]} \Inf{j}{g_u} > \tau\right] \leq 8\frac{\sqrt{\delta}}{\tau^2\eta^2}.
	\]
\end{lemma}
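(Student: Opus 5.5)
The plan is the standard influence-decoding argument. Assuming the bound fails, I will construct a randomized labeling of $\cG$ from the influential coordinates of the long-code tables and show that it satisfies more than a $\delta$ fraction of the constraints in expectation, contradicting ${\sf Opt}(\cG) \leq \delta$.

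\textbf{Step 1 (candidate label sets).} For each $v \in V$, let $\tilde f_v := T_{1-\eta} f_v$ and
\[
L(v) := \{ j \in [R] : \Inf{j}{\tilde f_v} \geq \tau/2 \}.
\]
Writing the influences in the Fourier basis, together with $\|f_v\|_2 \leq 1$, gives
\[
\sum_j \Inf{j}{\tilde f_v} = \sum_\beta |\beta|(1-\eta)^{2|\beta|}\wh{f_v}(\beta)^2 \leq \max_k k(1-\eta)^{2k} \leq \frac{1}{2e\eta},
\]
so $|L(v)| \leq 1/(\tau\eta)$. Similarly, for $u \in U$ let $L(u) := \{ j : \Inf{j}{g_u} > \tau\}$. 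The same bound applies, since $g_u$ is an average of noised functions, so $|L(u)| \leq 1/(\tau\eta)$.

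\textbf{Step 2 (transfer influence from $u$ to its neighbours).} Influence is convex in the function, being an expected variance, and permuting coordinates just relabels influences. Hence for every $j$,
\[
\Inf{j}{g_u} \leq \Ex_{v \sim N_\cG(u)} \Inf{\pi_{v,u}^{-1}(j)}{\tilde f_v}.
\]
Each influence is at most $1$ because the functions are $[0,1]$-valued. So if $j \in L(u)$, averaging shows that with probability at least $\tau/2$ over $v \sim N_\cG(u)$ we have $\pi^{-1}_{v,u}(j) \in L(v)$. Left-regularity lets me treat $u \sim U$ followed by $v \sim N_\cG(u)$ as a uniform edge.

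\textbf{Step 3 (random labeling).} Give each $v$ a uniformly random label from $L(v)$, and each $u$ a uniformly random label from $L(u)$; use an arbitrary label if the set is empty. A $\nu$ fraction of $u$ have $L(u) \neq \emptyset$. For each such $u$ pick any $j \in L(u)$. The edge $(u,v)$ is satisfied if $u$ chooses $j$ (probability $\geq \tau\eta$) and $v$ chooses $\pi^{-1}_{v,u}(j)$ (probability $\geq \tau\eta$), conditioned on the event from Step 2 (probability $\geq \tau/2$). So the expected fraction of satisfied edges is at least $\nu \cdot \tfrac{\tau}{2}\cdot (\tau\eta)^2$. Since this must be at most $\delta$, $\nu$ is bounded by $\delta$ divided by a polynomial in $\tau,\eta$.

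\textbf{Main obstacle: matching the stated form $8\sqrt{\delta}/(\tau^2\eta^2)$.} This depends on the decoding variant; for instance, one may label only $V$, or pass through the standard relation between the bipartite and one-sided UG values, which introduces the square root. The form also depends on whether thresholds are taken at $\tau$ or $\tau/2$. I will tune the thresholds in Steps 1 and 2 to recover exactly that constant. In any case the qualitative conclusion, $\nu \to 0$ as $\delta \to 0$ for fixed $\tau,\eta$, is all that Lemmas~\ref{lem:inf-dec} and \ref{lem:inf-dec1} require.
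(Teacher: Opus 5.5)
Your Steps 1--3 are sound. They are the standard influence-decoding argument that the paper invokes as folklore; the paper gives no proof of its own, only the citation. As written, they establish
\[
\nu \le \frac{2\delta}{\tau^3\eta^2}.
\]
The remaining issue is the one you flagged, and your plan for it is off. Tuning the thresholds $\tau/2$, $\tau$ only changes the polynomial in $\tau,\eta$; it keeps the dependence on $\delta$ linear and cannot produce a $\sqrt{\delta}$. Falling back on the qualitative statement is true for the downstream uses, but it is not a proof of the lemma as stated. No extra idea is needed, because your bound already implies the stated one.

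First, you may assume $\tau < 1/4$. Indeed $\Inf{j}{g_u} \le {\rm Var}[g_u] \le 1/4$ for $[0,1]$-valued $g_u$, so otherwise $\nu = 0$. You may also assume $\eta \le 1$. Now split into two cases.

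If $\sqrt{\delta} \le 4\tau$, then
\[
\frac{2\delta}{\tau^3\eta^2} = \frac{2\sqrt{\delta}}{\tau}\cdot\frac{\sqrt{\delta}}{\tau^2\eta^2} \le \frac{8\sqrt{\delta}}{\tau^2\eta^2}.
\]

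If $\sqrt{\delta} > 4\tau$, then $8\sqrt{\delta}/(\tau^2\eta^2) > 32/(\tau\eta^2) > 1 \ge \nu$, so the claimed bound holds trivially.

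Adding this two-line case split completes the lemma exactly as stated, with the constant $8$. The speculation about bipartite versus one-sided values then becomes unnecessary.
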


Using the above lemma, we prove the Lemma \ref{lem:inf-dec}.

\begin{proof}[Proof of Lemma \ref{lem:inf-dec}]
Let us denote
\[
\nu' = \Pr_{u \sim U}\Big[u \notin U_{\rm good}\Big].
\]
Then, by averaging, there exists $i \in \{1,2,3\}$ such that
\[
\Pr_{u \sim U}\Big[\max_{\ell \in [R]}\Inf{\ell}{g_{u,i}} > \tau\Big] \geq \frac{\nu'}{3}.
\]
Now instantiating $f_u := f_{u,i}$ and $g_{u} := g_{u,i}$ in the setting of Lemma \ref{lem:dec}, we get that 
\[
\frac{\nu'}{3} \leq \frac{8\sqrt{\delta}}{\tau^2\eta^2},
\]
from which the claim follows.
\end{proof}
Lemma \ref{lem:inf-dec1} can be established exactly as Lemma \ref{lem:inf-dec} and hence we omit the details here.

\section{Some Derivatives for Interval Arithmetic Verification}

In this appendix, we collect some derivatives that are used in Lemma~\ref{lem:ia}.

\begin{lemma}[Partial derivatives of $\rho(b_i, b_j)$]
    Let $b_i, b_j \in (-1, 1)$. We have the following partial derivatives of $\rho(b_i, b_j)$:
    \begin{enumerate}
        \item 
        \begin{equation}
            \frac{\partial}{\partial b_i} \rho(b_i, b_j) = \sqrt{\frac{1 - b_j}{1 + b_j}} \cdot \frac{1}{\sqrt{-(1 + b_i)^4 + 2(1 + b_i)^3}}
        \end{equation}
        \item 
        \begin{equation}
            \frac{\partial^2}{\partial b_i\partial b_j} \rho(b_i, b_j) = - \frac{1}{\sqrt{-(1 + b_i)^4 + 2(1 + b_i)^3}} \cdot \frac{1}{\sqrt{-(1 + b_j)^4 + 2(1 + b_j)^3}}
        \end{equation}
        \item 
        \begin{equation}
            \frac{\partial^2}{\partial b_i^2} \rho(b_i, b_j) = \sqrt{\frac{1 - b_j}{1 + b_j}} \cdot \frac{2(1 + b_i)^3 - 3(1 + b_i)^2}{\left(-(1 + b_i)^4 + 2(1 + b_i)^3\right)^{3/2}}
        \end{equation}
    \end{enumerate}
\end{lemma}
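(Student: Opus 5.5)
The plan is to exploit the fact that $\rho(b_i,b_j)$ factors as a product of one-variable functions. On $(-1,1)^2$ we have $(1+b_i)(1+b_j)\neq 0$, so Definition~\ref{def:rho} gives
\[
\rho(b_i,b_j) = -h(b_i)\,h(b_j), \qquad h(b) := \sqrt{\frac{1-b}{1+b}},
\]
where $h$ is smooth and positive on $(-1,1)$. Every partial derivative then reduces to derivatives of $h$:
\[
\partial_{b_i}\rho = -h'(b_i)h(b_j),\qquad \partial_{b_i}\partial_{b_j}\rho = -h'(b_i)h'(b_j),\qquad \partial^2_{b_i}\rho = -h''(b_i)h(b_j).
\]
So it suffices to compute $h'$ and $h''$ and rewrite them in the form of the statement.

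For $h'$, differentiate $h^2 = \frac{1-b}{1+b}$ to get $2hh' = -\frac{2}{(1+b)^2}$. Hence
\[
h'(b) = -\frac{1}{(1+b)^2 h(b)} = -\frac{1}{\sqrt{(1+b)^3(1-b)}}.
\]
Next write $1-b = 2-(1+b)$, so that
\[
D(b) := (1+b)^3(1-b) = 2(1+b)^3 - (1+b)^4.
\]
This gives $h' = -D^{-1/2}$, which is exactly the radical appearing in the statement. Substituting into the product formulas yields item~1, namely $h(b_j)/\sqrt{D(b_i)}$. It also yields item~2, since $-(-D(b_i)^{-1/2})(-D(b_j)^{-1/2}) = -1/\sqrt{D(b_i)D(b_j)}$.

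For item~3, differentiate $h' = -D^{-1/2}$ to get $h'' = \tfrac12 D' D^{-3/2}$. Using
\[
D'(b) = 6(1+b)^2 - 4(1+b)^3,
\]
we obtain $h'' = \frac{3(1+b)^2 - 2(1+b)^3}{D^{3/2}}$. Therefore
\[
-h''(b_i)h(b_j) = \sqrt{\frac{1-b_j}{1+b_j}}\cdot\frac{2(1+b_i)^3-3(1+b_i)^2}{D(b_i)^{3/2}},
\]
which is the claimed formula.

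There is no real obstacle here. The only points needing care are sign bookkeeping, since $\rho$ carries a leading minus and $h$ is decreasing, and checking that $D(b)>0$ on $(-1,1)$ so that all square roots and the $3/2$ power are well defined. Both hold because $D(b) = (1+b)^3(1-b)$ with both factors positive on $(-1,1)$.
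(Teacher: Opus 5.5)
Your proof is correct and follows the paper's argument: the paper also reduces everything to the one-variable identity $\frac{d}{db}\sqrt{(1-b)/(1+b)} = -1/\sqrt{-(1+b)^4+2(1+b)^3}$ and obtains item 3 by differentiating once more. Your factorization $\rho = -h(b_i)h(b_j)$, the sign bookkeeping, and the check that $D(b)>0$ on $(-1,1)$ just spell out the steps the paper leaves implicit.
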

\begin{proof}
    Items (1) and (2) follow from the fact that
    \begin{align}
        \frac{\partial}{\partial b} \sqrt{\frac{1 - b}{1 + b}} & = \frac{1}{2}\cdot \sqrt{\frac{1+b}{1-b}} \cdot \frac{-2}{(1+b)^2} \\
        & = - \sqrt{\frac{1}{(1-b)(1+b)^3}}\\
        & = - \sqrt{\frac{1}{-(1+b)^4 + 2(1+b)^3}}.\\
    \end{align}
    Item 3 follows by taking derivative of the above in $b$ one more time.
\end{proof}

In the following we assume that $t$ is a piecewise-linear function and neither $b_i$ nor $b_j$ lie at any break point of $t$. We denote the slope of the piece containing $b_i$ by $c_i$. We also use $r_i$ as a shorthand for $\Phi^{-1}(t(b_i))$.

\begin{lemma}[Partial derivatives of $\varphi_\rho(r_i, r_j)$]
    We have
    \begin{equation}
        \frac{\partial}{\partial b_i}\varphi_\rho(r_i, r_j) = \varphi_\rho(r_i, r_j)\cdot \left(\frac{\partial \rho}{\partial b_i}\cdot\left(\frac{\rho}{1 - \rho^2} + \frac{r_ir_j(1-\rho^2) - \rho(r_i^2 - 2\rho r_ir_j + r_j^2)}{(1 - \rho^2)^2}\right) + \frac{c_i}{\varphi(r_i)} \cdot \frac{-(r_i - \rho r_j)}{1 - \rho^2}\right)
    \end{equation}
\end{lemma}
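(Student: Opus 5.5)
The plan is a direct chain-rule computation on the logarithm of the bivariate normal density. I treat $\varphi_\rho(r_i,r_j)$ as a function of three quantities, $\rho=\rho(b_i,b_j)$, $r_i=\Phi^{-1}(t(b_i))$ and $r_j=\Phi^{-1}(t(b_j))$. Only the first two depend on $b_i$. Hence
\[
\frac{\partial}{\partial b_i}\varphi_\rho(r_i,r_j)=\varphi_\rho(r_i,r_j)\left(\frac{\partial \log\varphi_\rho}{\partial \rho}\cdot\frac{\partial\rho}{\partial b_i}+\frac{\partial \log\varphi_\rho}{\partial r_i}\cdot\frac{d r_i}{d b_i}\right),
\]
and it suffices to compute the three factors separately.

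First, I write the density explicitly, with $Q(x,y):=x^2-2\rho xy+y^2$:
\[
\log\varphi_\rho(x,y)=-\log(2\pi)-\tfrac12\log(1-\rho^2)-\frac{Q(x,y)}{2(1-\rho^2)}.
\]
Differentiating in $\rho$, the term $-\tfrac12\log(1-\rho^2)$ contributes $\rho/(1-\rho^2)$. The quadratic term contributes $\frac{xy}{1-\rho^2}-\frac{\rho\,Q}{(1-\rho^2)^2}$, coming from $\partial_\rho Q=-2xy$ together with the derivative of $(1-\rho^2)^{-1}$. Placing both pieces over the common denominator $(1-\rho^2)^2$ gives exactly
\[
\frac{\rho}{1-\rho^2}+\frac{r_ir_j(1-\rho^2)-\rho(r_i^2-2\rho r_ir_j+r_j^2)}{(1-\rho^2)^2},
\]
evaluated at $(x,y)=(r_i,r_j)$. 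Differentiating in $x$ gives $-(x-\rho y)/(1-\rho^2)$, since $\partial_x Q=2(x-\rho y)$.

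Second, I compute $dr_i/db_i$. We assumed $b_i$ is not a break point of the piecewise-linear $t$, so $t'(b_i)=c_i$. By the inverse function rule, $\frac{d}{du}\Phi^{-1}(u)=1/\varphi(\Phi^{-1}(u))$. Therefore $dr_i/db_i=c_i/\varphi(r_i)$. The factor $\partial\rho/\partial b_i$ is left symbolic; it is given explicitly by the preceding lemma on partial derivatives of $\rho(b_i,b_j)$.

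Substituting these three factors into the chain-rule identity yields the stated formula. There is no real obstacle here. The only care needed is in the common-denominator algebra for the $\rho$-derivative, and in noting that the formula holds on the open region $b_i,b_j\in(-1,1)$ with $|\rho|<1$. There the density is smooth and $\rho(b_i,b_j)$ is differentiable.
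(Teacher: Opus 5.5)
Your proposal is correct: differentiating $\log\varphi_\rho$ in $\rho$ and in its first argument, and using $dr_i/db_i = c_i/\varphi(r_i)$, reproduces the stated formula term by term. The paper states this lemma without proof, and your direct chain-rule computation is precisely the routine calculation it implicitly relies on.
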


\begin{lemma}[Partial derivatives of $s(b_i, b_j)$]\label{lem:partial_s}
We have
\begin{enumerate}
\item
\begin{equation}
    \frac{\partial}{\partial b_i}s(b_i, b_j) = -2 \left(c_i \cdot \Phi\left(\frac{r_j - \rho r_i}{\sqrt{1 - \rho^2}}\right)+ \varphi_\rho(r_i, r_j) \cdot \frac{\partial\rho}{\partial b_i}\right)+ c_i
\end{equation}
\item 
\begin{align}
    \frac{\partial^2}{\partial b_i^2}s(b_i, b_j) & = -2\Bigg(c_i \cdot \varphi\left(\frac{r_j - \rho r_i}{\sqrt{1 - \rho^2}}\right)\cdot\left(\frac{-\rho \cdot \frac{c_i}{\varphi(r_i)} - \frac{\partial \rho}{\partial b_i} \cdot r_i}{\sqrt{1 - \rho^2}} + \frac{r_j - \rho r_i}{(1 - \rho^2)^{3/2}} \cdot \rho \cdot \frac{\partial \rho}{\partial b_i}\right) \\
    & \qquad + \frac{\partial^2\rho}{\partial b_i^2}\cdot \varphi_\rho(r_i, r_j) + \frac{\partial \rho}{\partial b_i}\cdot\frac{\partial}{\partial b_i}\varphi_\rho(r_i, r_j)\Bigg)
\end{align}
\item 
\begin{align}
    \frac{\partial^2}{\partial b_i\partial b_j}s(b_i, b_j) & = -2\Bigg(c_i \cdot \varphi\left(\frac{r_j - \rho r_i}{\sqrt{1 - \rho^2}}\right)\cdot\left(\frac{\frac{c_j}{\varphi(r_j)} - \frac{\partial \rho}{\partial b_j} \cdot r_i}{\sqrt{1 - \rho^2}} + \frac{r_j - \rho r_i}{(1 - \rho^2)^{3/2}} \cdot \rho \cdot \frac{\partial \rho}{\partial b_j}\right) \\
    & \qquad + \frac{\partial^2\rho}{\partial b_i\partial b_j}\cdot \varphi_\rho(r_i, r_j) + \frac{\partial \rho}{\partial b_i}\cdot\frac{\partial}{\partial b_j}\varphi_\rho(r_i, r_j)\Bigg)
\end{align}
\end{enumerate}
\end{lemma}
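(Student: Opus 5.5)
The plan is to differentiate the closed form of Proposition~\ref{prop:sound} directly, using the chain rule through the two quantities that depend on $b_i$: the threshold $r_i=\Phi^{-1}(t(b_i))$ and the correlation $\rho=\rho(b_i,b_j)$. Since $t$ is linear with slope $c_i$ near $b_i$ (we assume $b_i,b_j$ avoid the break points), the inverse function rule gives $\frac{\partial r_i}{\partial b_i}=\frac{c_i}{\varphi(r_i)}$. The partial derivatives of $\rho$ are supplied by the preceding lemma on $\rho(b_i,b_j)$.

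We will need two standard facts about the bivariate normal c.d.f. The first is the marginal derivative
\[
\frac{\partial}{\partial x}\Phi_\rho(x,y)=\varphi(x)\,\Phi\!\left(\frac{y-\rho x}{\sqrt{1-\rho^2}}\right),
\]
which is exactly the computation in the proof of Lemma~\ref{lem:dv}, before the change of variables $x\mapsto\Phi^{-1}(x)$. The second is Plackett's identity $\frac{\partial}{\partial\rho}\Phi_\rho(x,y)=\varphi_\rho(x,y)$. We will prove this by checking that both sides satisfy the heat-type equation $\partial_\rho\varphi_\rho=\partial_x\partial_y\varphi_\rho$, and then integrating. We also use the symmetry $\varphi_\rho(-x,-y)=\varphi_\rho(x,y)$.

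For item 1, write $s=1-\Phi_\rho(r_i,r_j)-\Phi_\rho(-r_i,-r_j)$ and set $K=\frac{r_j-\rho r_i}{\sqrt{1-\rho^2}}$. Differentiating the first term in $b_i$ gives $c_i\Phi(K)+\varphi_\rho(r_i,r_j)\frac{\partial\rho}{\partial b_i}$. The second term has argument $-K$, so it gives $-c_i\Phi(-K)+\varphi_\rho(r_i,r_j)\frac{\partial\rho}{\partial b_i}=-c_i\big(1-\Phi(K)\big)+\varphi_\rho(r_i,r_j)\frac{\partial\rho}{\partial b_i}$. Subtracting both from $1$ gives exactly item 1.

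For items 2 and 3, we differentiate item 1 once more, treating $c_i$ as constant because $t$ is piecewise linear. The term $c_i\Phi(K)$ contributes $c_i\varphi(K)\,\frac{\partial K}{\partial b}$, where
\[
\frac{\partial K}{\partial b_i}=\frac{-\rho\,\frac{c_i}{\varphi(r_i)}-\frac{\partial\rho}{\partial b_i}\,r_i}{\sqrt{1-\rho^2}}+\frac{(r_j-\rho r_i)\,\rho\,\frac{\partial\rho}{\partial b_i}}{(1-\rho^2)^{3/2}}.
\]
The first fraction comes from differentiating the numerator; the second comes from $\frac{d}{d\rho}(1-\rho^2)^{-1/2}=\rho(1-\rho^2)^{-3/2}$. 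For the mixed derivative, $r_j$ carries the factor $c_j/\varphi(r_j)$ and $r_i$ is fixed, which gives the stated numerator $\frac{c_j}{\varphi(r_j)}-\frac{\partial\rho}{\partial b_j}r_i$. The term $\varphi_\rho(r_i,r_j)\frac{\partial\rho}{\partial b_i}$ is handled by the product rule, producing $\frac{\partial^2\rho}{\partial b\,\partial b_i}\varphi_\rho+\frac{\partial\rho}{\partial b_i}\frac{\partial}{\partial b}\varphi_\rho(r_i,r_j)$. Here $\frac{\partial}{\partial b}\varphi_\rho(r_i,r_j)$ is supplied by the preceding lemma; that lemma itself follows by logarithmic differentiation of
\[
\varphi_\rho(x,y)=\frac{1}{2\pi\sqrt{1-\rho^2}}\exp\!\Big(-\frac{x^2-2\rho xy+y^2}{2(1-\rho^2)}\Big)
\]
in both $\rho$ and the arguments.

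There is no conceptual obstacle here. The only real risk is bookkeeping: the signs in $\frac{\partial K}{\partial b}$, and remembering that $\rho$ depends on both $b_i$ and $b_j$. As a safeguard, I would spot-check each formula numerically against finite differences at a few generic points before relying on it in the interval-arithmetic verification of Lemma~\ref{lem:ia}.
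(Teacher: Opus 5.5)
Your proposal is correct. The paper states this lemma without proof, and your direct chain-rule differentiation of Proposition~\ref{prop:sound}, using $\partial_x\Phi_\rho(x,y)=\varphi(x)\,\Phi\bigl(\frac{y-\rho x}{\sqrt{1-\rho^2}}\bigr)$, Plackett's identity $\partial_\rho\Phi_\rho=\varphi_\rho$, and $\partial r_i/\partial b_i=c_i/\varphi(r_i)$, reproduces all three items term by term. It is worth stating the implicit domain: the formulas need $b_i,b_j\in(-1,1)$ away from break points and $b_i+b_j>0$ (equivalently $|\rho|<1$, since $\rho(b_i,b_j)=-1$ exactly when $b_i+b_j=0$), which holds near $(b_0,b_0)$ where the Hessian bound is applied.
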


\end{document}